\documentclass[11pt]{article} 
\usepackage{amsmath,amssymb}
\usepackage{amsthm}
\usepackage{amsfonts}
\usepackage{natbib}


\usepackage{graphicx}
\usepackage{relsize}
\usepackage{catchfilebetweentags} 
\usepackage{layout} 
\usepackage{bm} 

\usepackage{caption} 
\usepackage{float}

\usepackage{multirow} 
\usepackage{multicol} 

\usepackage{bigints} 

\usepackage[margin=1in]{geometry}
\allowdisplaybreaks  


\usepackage[onehalfspacing]{setspace}

\usepackage{etoolbox}
\makeatletter
\patchcmd{\CatchFBT@Fin@l}{\endlinechar\m@ne}{}
  {}{\typeout{Unsuccessful etoolbox patch!}}
\makeatother

\usepackage{mathtools}

\usepackage{verbatim} 

\usepackage[shortlabels]{enumitem} 


\newtheorem{theorem}{Theorem}

\newtheorem{remark}{Remark}

\newtheorem{lemma}{Lemma}

\newenvironment{nopgbreak}
  {\par\nobreak\vfil\penalty0\vfilneg
   \vtop\bgroup}
  {\par\xdef\tpd{\the\prevdepth}\egroup
   \prevdepth=\tpd}

\def \argmin{\mathop{\hbox{\rm argmin}}}

\def \diag{\mathop{\hbox{\rm diag}}}

\newcommand{\m}[1]{\mathcal{#1}}

\newcommand{\bo}[1]{\boldsymbol{#1}}
\newcommand{\boup}[1]{\boldsymbol{\mathrm{#1}}}

\newcommand{\ha}[1]{\widehat{#1}}
\newcommand{\ti}[1]{\widetilde{#1}}

\newcommand{\eps}{\varepsilon}

\newcommand{\mme}{\mathbb{E}}
\newcommand{\mmi}{\mathbb{I}}
\newcommand{\mmp}{\mathbb{P}}
\newcommand{\mmr}{\mathbb{R}}
\newcommand{\mmv}{\mathbb{V}}
\newcommand{\mmy}{\mathbb{Y}}
\newcommand{\mmz}{\mathbb{Z}}

\newcommand{\ba}{\boup{a}}
\newcommand{\bc}{\boup{c}}
\newcommand{\be}{\boup{e}}
\newcommand{\bmm}{\boup{m}}

\newcommand{\bu}{\boup{u}}
\newcommand{\bx}{\boup{x}}

\newcommand{\balpha}{\boup{\alpha}}
\newcommand{\beps}{\boup{\varepsilon}}
\newcommand{\bbeta}{\boup{\eta}}
\newcommand{\bphi}{\boup{\varphi}}
\newcommand{\bgam}{\boup{\gamma}}
\newcommand{\boeta}{\boup{\eta}}

\newcommand{\btheta}{\boup{\theta}}
\newcommand{\bzeta}{\boup{\zeta}}

\newcommand{\bB}{\boup{B}}
\newcommand{\bE}{\boup{E}}
\newcommand{\bG}{\boup{G}}
\newcommand{\bHt}{\ti{\boup{H}}}
\newcommand{\bH}{\boup{H}}
\newcommand{\bI}{\boup{I}}
\newcommand{\bJ}{\boup{J}}
\newcommand{\bQ}{\boup{Q}}

\newcommand{\bWt}{\widetilde{\boup{W}}}
\newcommand{\bZ}{\boup{Z}}
\newcommand{\bY}{\boup{Y}}

\newcommand{\bzero}{\boup{0}}

\newcommand{\bmB}{\boup{\m{B}}}
\newcommand{\bmV}{\boup{\m{V}}}
\newcommand{\bmW}{\boup{\m{W}}}

\newcommand{\bGam}{\boup{\Gamma}}

\newcommand{\bOmg}{\boup{\Omega}}

\newcommand{\loadaspt}[1]{ \ExecuteMetaData[assumptions.tex]{assu#1} }


\newcommand{\dto}{\overset{d}{\to}} 
\newcommand{\pto}{\overset{p}{\to}} 


\graphicspath{{results_paper/}} 

\makeatletter
\def\input@path{{results_paper/}}
\makeatother


\begin{document}

\title{
\renewcommand{\thefootnote}{\alph{footnote}}
Regression Discontinuity Design with Many Thresholds
}

\author{
\renewcommand{\thefootnote}{\alph{footnote}}
Marinho Bertanha\footnotemark[1]
}

\date{
This version: September 16, 2019
\\ 
First version: November 7, 2014
}


{
\renewcommand{\thefootnote}{\alph{footnote}}
%

\footnotetext[1]{
Gilbert F. Schaefer Assistant Professor, Department of Economics, University of Notre Dame;
  Address: 3060 Jenkins Nanovic Halls, Notre Dame, IN 46556.
  Email: mbertanha@nd.edu. 
  Website: www.nd.edu/$\sim$mbertanh.
  }
}

\maketitle


\normalsize
\begin{abstract}
Numerous empirical studies employ regression discontinuity designs with
multiple cutoffs and heterogeneous treatments.
A common practice is to normalize all
the cutoffs to zero and estimate one effect.
This procedure identifies the average treatment effect (ATE) on
the observed distribution of individuals local to existing cutoffs.
However, researchers often want to make inferences on more meaningful ATEs,
computed over general counterfactual distributions of individuals, rather than simply the observed distribution
of individuals local to existing cutoffs.
This paper proposes a consistent and asymptotically normal estimator for such ATEs
when heterogeneity follows a non-parametric function of cutoff characteristics in the sharp case.
The proposed estimator converges at the minimax optimal rate of root-$n$ for a specific choice of tuning parameters.
Identification in the fuzzy case, with multiple cutoffs, is impossible unless
heterogeneity follows a finite-dimensional function of cutoff characteristics.
Under parametric heterogeneity,
this paper proposes an ATE estimator for the fuzzy case that optimally combines
observations to maximize its precision.
\end{abstract}

\normalsize


\textbf{Keywords:}  
Regression Discontinuity, 
Multiple Cutoffs,
Average Treatment Effect,
Peer-effects

\textbf{JEL Classification:}
C14, 
C21, 
C52, 
I21. 

\newpage

\section{Introduction}

\indent

Applications of regression discontinuity design (RDD) have become increasingly popular in economics since the late
1990s (\cite{black1999better}, \cite{angrist1999}, and \cite{van2002}).
One of
RDD's main advantages is identification of a local causal effect under minimal functional form assumptions. More recently,
with  increasing availability of richer data sets, there have been many applications with multiple cutoffs and treatments
(for example, \cite{black2007}, \cite{egger2010}, \cite{delamata2012}, \cite{pop2013going}).
Existing one-cutoff RDD methods applied to each individual cutoff produce
many local effects that are estimated using only a few observations near each cutoff.
Researchers often prefer one takeaway summary effect that
is more precisely estimated by pooling all the data. 
The meaning
of a summary effect crucially depends 
on heterogeneity assumptions and weights imposed on the different
local effects.

Applied studies with multiple cutoffs often 
normalize all cutoffs to zero and use the one-cutoff estimator.
This normalization procedure
estimates an average of local treatment effects weighted by the relative density of individuals near
each of the cutoffs (\cite{cattaneo2016keele}, Proposition 3).
Such an average effect would be a meaningful summary measure only in two cases:
(i) local treatment effects are all identical and the weighting scheme does not matter; or
(ii) local treatment effects are heterogeneous but the researcher is only interested in the average effect
on the individuals near the existing cutoffs.
However, researchers are often interested in combining observed data with assumptions weaker than (i)
to make inferences on counterfactual scenarios more general than (ii).\footnote{
In a RDD setting with multiple cutoffs and treatments, it is unreasonable to expect
that different local treatment effects are always identical. For example, \cite{pop2013going} find that
the impact of going to a better high school on academic achievement is heterogeneous
across students with different ability levels. Another example is \cite{delamata2012}, who
finds that the eligibility for Medicaid benefits
decreases the probability of having private
health insurance more strongly for lower income individuals.
Although I allow for heterogeneous effects across cutoffs,
counterfactual analysis requires a pooling and a policy invariance assumption (Section \ref{sec_setup}).
}

This paper proposes a novel estimation procedure for average treatment effects (ATE).
These ATEs are more valuable summary measures than the average effect estimated by the normalization procedure described above for two reasons.
First, the researcher explicitly chooses the counterfactual distribution of the ATE, and this distribution
may include individuals at or between existing cutoffs.
Second, the researcher does not need to assume any specific functional
form for the heterogeneity of treatment effects across different cutoffs.
As an example of an application,
suppose we are interested in estimating the effect of Medicaid benefits on health care utilization.
Medicaid eligibility is triggered by income cutoffs that vary across states.
Existing one-cutoff RDD methods identify the average effect on individuals with income equal to the income
 cutoffs. 
 However, most interesting policy questions require the average effect over the entire range of income values in the data.

The framework for RDD with many thresholds is introduced here using a simple example based on the work of \cite{pop2013going}, PU from now on.
Using a wealth of variation of cutoffs from high school assignments in Romania,
PU provide rigorous evidence of the impacts of attending  a better school on students' academic performance.
The economic logic of this application is briefly summarized as follows.
A central planner assigns students to high schools based on their scores from a placement test.
High schools have limited capacities and are ranked by their qualities.
The central planner ranks students by their scores and assigns each of them to the best school available.
Each student $i$ submits her score $X_i$ (forcing variable) to the central planner who,
based on the entire distribution of scores,
determines a minimum test score $c_j$ (cutoff)
for admission to each high school $j$.
The quality of high school $j$ is denoted $d_j$ (treatment dose).

The RDD assignment is assumed sharp for now.
That is, students attend the best high school available to them based on their score and the cutoffs that apply to them.
As the test score crosses an admission threshold $c_j$, the quality
of the school the student attends changes from
$d_{j-1}$ to $d_j$.
Local average effects are denoted
by $\mme[Y_i(d_j)-Y_i(d_{j-1})|X_i=c_j] = \beta(c_j,d_{j-1},d_j)$, where
$Y_i(d)$ is the potential academic achievement student $i$ has if attending a high school of quality $d$,
and $\beta(c,d,d')$ is the treatment effect function.
Heterogeneity of local effects comes from values of cutoffs and
treatment doses that change across the different cutoffs.
PU give a particularly illustrative application,
because it exhibits sufficient variation in cutoff and treatment doses to generate ATEs with substantially greater 
economic relevance than the typical average based on normalizing all of the cutoffs to zero.

Numerous other examples of RDD with multiple cutoffs and treatments exist in different fields of economics.
For instance, \cite{egger2010} study the effect of the size of city government councils on municipal
expenditures, where council size is determined by population cutoffs. \cite{delamata2012} estimates
the effects of Medicaid benefits on health care utilization, where Medicaid eligibility is triggered by
income cutoffs that vary across states. \cite{agarwal2016} and \cite{degiorgi2017} look at
multiple cutoffs on credit scores, used by banks to make credit decisions.
Education economics also provides a variety of applications.
\cite{angrist1999} and \cite{hoxby2000} use class size rules to estimate the impact of class size on student achievement.
\cite{hoxby2000} utilizes variation in cutoff values from specific school district class size rules.
Several researchers exploit different school starting dates to estimate the impact of educational attainment on various outcomes, for example,
\cite{dobkin2010school}, and \cite{mccrary2011}.
\cite{duflo2011} analyze school cohorts that are split into low and high-achieving classes based on test scores, where each school has its own cutoff score.
\cite{garibaldi2012} look at different income cutoffs that determine tuition subsidies
to study the impact of tuition payment on the probability of late graduation from university.
In short, despite many applications with variation in cutoffs and treatment doses, 
a lack of theory on how to combine observations from all cutoffs impedes our ability to estimate economically-relevant average effects.

Whether local effects can be combined into an average effect depends on how comparable the researcher believes these effects are.
The comparability of local treatment effects essentially depends
on the heterogeneity of treatment doses and on the heterogeneity of the treatment effect function
$\beta(c,d,d')$.  
This paper considers two types of assumptions regarding these two aspects of heterogeneity.
The first heterogeneity assumption says that treatment doses
are credibly quantifiable by some variable $d$.
For example, PU
find behavioral evidence that average student performance at each school
is a good summary measure for school quality.
Another example is the case of a single treatment being triggered by varying cutoffs, 
as when each state has its own income threshold for Medicaid coverage.
The second heterogeneity assumption specifies a parametric functional form for
$\beta(c,d,d')$ guided by economic theory or \textit{a priori} knowledge of the researcher.
For example, in a class size application like  Hoxby's \citeyearpar{hoxby2000},
a functional form based on Lazear's  \citeyearpar{lazear2001} model of achievement can be derived as a function of class size.
Another example is given by \cite{bajari2017} who present a principal-agent model to study
how insurers reimburse hospitals. 
The marginal reimbursement rate is discontinuous on health expenditures.

This paper proposes a consistent and asymptotically normal estimator for the ATE
of a counterfactual distribution of treatment assignments specified by the researcher.
A counterfactual policy scenario specifies the distribution of $(c,d,d')$, and the
ATE is the integral of $\beta(c,d,d')$ weighted by such a distribution.
The ability to predict effects of counterfactual policies depends crucially
on assuming that the distribution of potential outcomes $Y_i(d)$ does not depend on the initial schedule of cutoff-dose values. 
This policy invariance assumption,
along with 
the first heterogeneity assumption,
allows the researcher to choose counterfactual distributions
with support more general than the discrete set of cutoff-dose values observed in the data.

The estimator proposed in this paper approximates the ATE integral by
averaging estimates of $\beta(c,d,d')$ at existing cutoffs using a proper weighting scheme.
Under the first heterogeneity assumption with $\beta(c,d,d')$ non-parametric,
the proposed ATE estimator is  shown to be consistent and asymptotically normal. 
This result is novel, 
because estimation of the non-parametric function $\beta(c,d,d')$ is only possible at deterministic points of the domain,
and that creates an additional source of bias.
Asymptotic normality requires both the number of observations and cutoffs to
grow to infinity, and I provide sufficient conditions on their rate of growth. 
I demonstrate that the minimax rate of ATE estimation in this setting is root-$n$,
 and that the proposed estimator attains the minimax optimal rate for a specific choice of tuning parameters. 
This extends the previous literature on minimax optimality of non-parametric estimation of regression functions at a boundary point 
to estimation of averages of these regression functions. 

Many applications of RDD with multiple cutoffs are, in fact, fuzzy rather than sharp.
In the high school assignment example,
a student may choose to attend a high school other than the school she is originally eligible to attend.
Multiple treatments result in multiple compliance behaviors, and one-cutoff identification results
do not apply. 
Building on classic definitions of compliance behaviors (\cite{imbens1997}),
I define compliance groups in terms of \textit{changes} in treatment eligibility and receipt.
``Ever-compliers'' are those whose treatment received changes if and only if it changes to the treatment dose for which they become eligible for.
I  assume that individuals never change into a treatment dose different from the dose of eligibility, a ``no-defiance'' condition.
In the high school example,
if the test score of a student currently in school B increases so as to grant her access to school A,
no-defiance implies she either chooses to attend school A or stay at school B,
and that she is not triggered to attend some other school C.

This paper shows that even local identification in fuzzy RDD with finite multiple treatments is impossible
unless the class of treatment effect functions of ever-compliers is restricted to a finite-dimensional class.
Important empirical analyses of fuzzy RDD with multiple treatments include those of
 \cite{angrist1999}, \cite{chen2008vanderklaauw},  and \cite{hoekstra2009effect};
nevertheless, this is the first paper to define compliance and
study causal identification in a general framework for multi-cutoff fuzzy RDD.
This framework lays out conditions for the interpretation of
two-stage least squares (2SLS) estimates in applications of multi-cutoff fuzzy RDD,
a common practice in applied work.
The second heterogeneity assumption states that the treatment effect function
is of a parametric class. This assumption allows for consistent and asymptotically normal
estimation of ATEs on ever-compliers.
It also results in efficiency gains, because observations are optimally combined across cutoffs to minimize the mean squared error (MSE) of the ATE estimator.

The rapid growth in the number of applications of RDD in economics in the late 1990s
was accompanied by substantial theoretical contributions for inference
in the one-cutoff case. Identification and estimation in the sharp and fuzzy cases were formalized by \cite{hahn2001id}.
 \cite{fan1996} and \cite{porter2003} demonstrated
low-order bias and rate optimality of the local polynomial estimator.
Recent theoretical contributions have addressed the optimal bandwidth choice (\cite{imbens2012optimal}),
alternative asymptotic approximations with better finite sample properties (\cite{cattaneo2014calonico}),
quantile treatment effects (\cite{frandsen2012}), kink treatment effects (\cite{dong_jumpkink}),
and the difficulty of uniform inference (\cite{bertanha2019}).

The contribution of this paper is more closely related to the study of treatment effect extrapolation of \cite{angrist2004}, \cite{bertanha_imbens},
\cite{donglewbel2015}, \cite{angrist2013away}, and \cite{rokkanen2014jmp}.
These last two authors use observations on additional covariates.
They restrict the relationship between
the heterogeneity of treatment effects after conditioning on these covariates to obtain identification away from the cutoff.
This paper differs from these other contributions,
because the variation of multiple cutoffs and doses identify ATEs over distributions of individuals 
both between and at cutoffs,
without additional covariates.

The remainder of this paper is organized as follows.
Section \ref{sec_setup} presents the notation and lays out basic assumptions.
Section \ref{sec_ate_srd} describes the ATE estimator for the sharp case and proves asymptotic normality.
It is divided into two sub-sections. 
Section \ref{sec_case1} treats ATEs of discrete counterfactual distributions, which is a straightforward generalization of one-cutoff RDD.
Section \ref{sec_case2} is novel; it studies ATEs of continuous counterfactual distributions under the first heterogeneity assumption.
Section \ref{sec_case3} analyzes the fuzzy case.
Appendix \ref{sec_appen} contains all proofs. 
Supplemental Appendix \ref{sec_supappen} collects auxiliary results to the proofs in Appendix \ref{sec_appen}.\footnote{Appendix 
\ref{sec_supappen} is available online at \textit{www.nd.edu/$\sim$mbertanh}.}

\section{Setup}\label{sec_setup}
\indent

This section sets up the framework for RDD with multiple cutoffs.
There are $P$ sub-populations of individuals indexed by $p=1,\ldots, P$.
An example of a sub-population may be a town-year in the high school application, or a state in the Medicaid example.
Each individual $i$ in sub-population $p$ is fully characterized by a vector of random variables
$(X_{i,p},U_{i,p})$ drawn iid across $i$ from each sub-population. 
The forcing variable $X_{i,p}$ is a scalar score that governs eligibility for treatment,
and it lives in a compact interval $ \m{X}=[\underline{ \m{X}},\overline{ \m{X}}]$;
$U_{i,p}$ is a vector of unobserved heterogeneity.
Individual $(i,p)$ receives a treatment dose $D_{i,p}$ from a set of possible  treatments $\m{D}$.
The outcome variable $Y_{i,p}$ is determined by a function $\mmy$ of the individual characteristics and treatment,
\begin{equation}
Y_{i,p} = \mmy(X_{i,p},D_{i,p},U_{i,p}).
\end{equation}

I start with the simpler sharp RDD setting and defer the fuzzy RDD case to Section \ref{sec_case3}.
In the sharp case, the treatment received by the individual is a deterministic function of the forcing variable.
For an individual with forcing variable $X_{i,p}$ close to a cutoff $c$,
the treatment dose is $d$ if $X_{i,p}<c$, or $d'$ if $X_{i,p}\geq c$.
\cite{hahn2001id} demonstrate that continuity of the conditional mean of outcomes is sufficient to identify 
average causal effects for individuals local to the cutoff $c$.
\begin{lemma}\label{lemma_srd_id}
Assume that $\mme[ \mmy(X_{i,p},d,U_{i,p}) | X_{i,p}=x]$ is a continuous function of $x$ 
for the treatment doses $d$ and $d'$ in the neighborhood of the cutoff $c$.
Then, the average causal effect for individuals with $X_{i,p}=c$ is identified:
\begin{align}
& \mme\left[\left. \mmy(X_{i,p},d',U_{i,p}) - \mmy(X_{i,p},d,U_{i,p}) ~\right|~ X_{i,p}=c \right] 
\notag
\\*
& \hspace{2cm}= \lim_{e \downarrow 0} \Big \{~ \mme[Y_{i,p} ~|~ X_{i,p}=c+e]-\mme[Y_{i,p} ~|~ X_{i,p}=c-e] ~ \Big\}.
\end{align}
\end{lemma}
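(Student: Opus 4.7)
The plan is to exploit the sharp RDD structure to rewrite the observed outcome $Y_{i,p}$ near the cutoff as a potential-outcome function evaluated at the relevant treatment dose, and then push limits inside the expectation via the continuity hypothesis.

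First, I would invoke the sharp design: for $X_{i,p}$ in a small right-neighborhood of $c$, the treatment received equals $d'$, so $Y_{i,p} = \mmy(X_{i,p}, d', U_{i,p})$ almost surely. Therefore, for any sufficiently small $e>0$,
\begin{equation*}
\mme[Y_{i,p} \mid X_{i,p} = c+e] = \mme[\mmy(X_{i,p}, d', U_{i,p}) \mid X_{i,p} = c+e].
\end{equation*}
Symmetrically, for $X_{i,p}$ in a small left-neighborhood of $c$, $D_{i,p}=d$, so
\begin{equation*}
\mme[Y_{i,p} \mid X_{i,p} = c-e] = \mme[\mmy(X_{i,p}, d, U_{i,p}) \mid X_{i,p} = c-e].
\end{equation*}

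Second, I would apply the continuity assumption. By hypothesis, both $x \mapsto \mme[\mmy(X_{i,p}, d', U_{i,p}) \mid X_{i,p} = x]$ and $x \mapsto \mme[\mmy(X_{i,p}, d, U_{i,p}) \mid X_{i,p} = x]$ are continuous at $x=c$. Hence the one-sided limits of these functions coincide with their values at $c$, yielding
\begin{align*}
\lim_{e \downarrow 0} \mme[Y_{i,p} \mid X_{i,p} = c+e] &= \mme[\mmy(X_{i,p}, d', U_{i,p}) \mid X_{i,p} = c], \\
\lim_{e \downarrow 0} \mme[Y_{i,p} \mid X_{i,p} = c-e] &= \mme[\mmy(X_{i,p}, d, U_{i,p}) \mid X_{i,p} = c].
\end{align*}

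Third, I would subtract and invoke linearity of the conditional expectation to combine the two potential-outcome expectations into the expectation of the treatment-effect difference, producing the claimed identification equality. No step poses a real obstacle here; the proof is a direct consequence of the sharp-assignment rule plus continuity. The only thing that deserves a line of care is noting that the equality of the observed outcome with the potential outcome at the appropriate dose holds on the conditioning event $\{X_{i,p}=c\pm e\}$ for all small $e>0$, so nothing is lost when taking the limit.
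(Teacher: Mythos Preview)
Your proof is correct and is precisely the standard sharp-RDD identification argument. The paper does not supply its own proof of this lemma; it attributes the result to \cite{hahn2001id} and states it without demonstration, so there is nothing further to compare.
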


Lemma \ref{lemma_srd_id} generalizes to the case of multiple cutoffs and treatments under the assumption of 
continuity of $\mme[ \mmy(X_{i,p},d,U_{i,p}) | X_{i,p}=x]$ as a function of $x$ for every $d\in \m{D}$.
Many cutoffs arise because data sets may have many sub-populations with few cutoffs (e.g. Medicaid benefit with one cutoff per state, many states);
or few sub-populations with many cutoffs (e.g. Romanian high schools with one town and many schools).
The ability to exploit variation in cutoff-dose values relies on the following pooling assumption.
\loadaspt{00A} 

Assumption \ref{assu_pool} does not restrict average outcomes to be the same across different sub-populations.
It is less restrictive than common specifications for pooling data in applied work, for example,  time-trends and sub-population fixed effects.
The pooling assumption says that individuals with the same forcing variable
that undergo the same change in treatment have the same average response
across different sub-populations. 
The rest of the paper builds on Assumption \ref{assu_pool},
and it becomes irrelevant to distinguish sub-populations.
Thus, I drop the subscript $p$ and focus on the case of one population with multiple cutoffs.

The cutoffs are ordered such that $c_{1}< c_{2} < \ldots < c_{K}$.
Sharp RDD means that
an individual with forcing variable $X_i$ 
is deterministically assigned to a treatment dose $D_i=D(X_i)$ according to the following 
rule: 
\begin{align}
 D(x) =
 \left\{
 \begin{array}{cc}
 d_{0} & \text{ if } c_{0} \leq x < c_{1}
 \\
 d_{1} & \text{ if } c_{1} \leq x < c_{2}
 \\
 \vdots
 \\
 d_{K} & \text{ if } c_{K} \leq x \leq c_{K+1}
 \end{array}
 \right.
 \label{def_treat_sch}
\end{align}
where
$c_{0}=\underline{ \m{X}}$, and $c_{K+1}=\overline{ \m{X}}$.
Each cutoff is characterized by three variables:
the scalar threshold $c_{j}$; the treatment dose $d_{j-1}$ the individual receives if $c_{j-1}\leq X_i <c_{j}$;
 and the treatment dose $d_{j}$ the individual receives if $c_{j}\leq X_i < c_{j+1}$.
Let $\bc_j = (c_{j},d_{j-1},d_{j})$.
The schedule of cutoffs and treatment doses is given by the non-random set
$\m{C}_K=\left\{ \boup{c}_{j} \right\}_{j=1 }^K$.
The richness of set $\m{C}_K$ increases as the researcher collects more data.\footnote{
The validity of the RDD depends crucially on  exogeneity of cutoffs and no manipulation
of the forcing variable $X$ by individuals. See \cite{mccrary2008} for a test of forcing variable manipulation.
\cite{bajari2017} present a modified RDD estimator that is consistent under forcing variable manipulation
in a class of structural models.}

The data generating process is summarized as follows.
Values for the forcing variable $X_i$ and heterogeneity $U_i$ are drawn iid $i=1,...,n$ from a joint distribution.
Given $D(x)$, these $n$ individuals are assigned to different treatment doses $D_i=D(X_i)$.
The observed outcome is determined by $Y_i = \mmy(X_{i},D_i,U_{i})$.
The econometrician observes the schedule of cutoffs and treatment doses $D(x)$
and $(Y_i,X_i,D_i)$ for $i=1,...,n$. 
Following Rubin's model of potential outcomes, let $Y_i(d) = \mmy(X_{i},d,U_{i})$,
and assume continuity of $\mme[Y_i(d)|X_i=x]$ for every $d\in \m{D}$. 
A simple extension of Lemma \ref{lemma_srd_id} identifies average effects
at every cutoff $\bc \in \m{C}_K$,
\begin{align}
\beta(\boup{c})  = & \mme[Y_i(d')-Y_i(d)|X_i=c]
\notag\\
= & \lim_{e \downarrow 0} \Big \{~ \mme[Y_{i} ~|~ X_{i}=c+e]-\mme[Y_{i} ~|~ X_{i}=c-e] ~ \Big\}.
\label{def:beta}
\end{align}

Data with multiple cutoff-dose values allow the researcher to learn the causal effect of a variety of dose changes 
applied to individuals at various levels of the forcing variables.
This fact opens the possibility of using observed data to estimate the effect of new policy changes.
The individual response function $\mmy$ may well depend on the initial assignment of treatments $D_i$, and it could potentially change under counterfactual policies.
Unless such dependence is restricted, it becomes impossible to use existing data to infer the effect of new policies.
The remainder of this paper relies on the following policy-invariance assumption.
\loadaspt{00AB} 

The methods of this paper leverage RDD variation in cutoff-dose values to make inferences on average effects of policy changes.
A policy change is a counterfactual distribution of changes in treatment doses that
are randomly applied to individuals, conditional on the forcing variable.
An individual $i$ is assigned to a change in treatment dose from $D_i^*$ to $D_i^{**}$,
where the distribution of $(D_i^* , D_i^{**})$ is independent of $U_i$ after conditioning on $X_i$.
Under Assumption \ref{assu_poli}, the average causal effect of such an experiment is 
\begin{align}
\mu & = \mme\left[~ \mmy(X_{i},D_i^{**},U_{i}) - \mmy(X_{i},D_i^{*},U_{i}) ~ \right]
\notag
\\
\notag
& = \mme\left[ ~ \mme\left( \left. \mmy(X_{i},D_i^{**},U_{i}) - \mmy(X_{i},D_i^{*},U_{i}) ~\right| D_i^{**}, D_i^* , X_i \right) ~ \right]
\\
\notag
& = \mme\left[ ~ \mme\left( \left. \mmy(X_{i},D_i^{**},U_{i}) - \mmy(X_{i},D_i^{*},U_{i}) ~\right|  X_i \right) ~ \right]
\\
& = \mme\left[ ~ \beta( X_{i}, D_i^{*}, D_i^{**} ) ~ \right]
\end{align}
where the last equality uses the definition of $\beta$ in Equation \ref{def:beta}. 
The average effect $\mu$ equals an average of the $\beta$ function
over the counterfactual distribution of $( X_{i}, D_i^{*}, D_i^{**} )$.
The inference methods of this paper first identify $\beta$ from RDD with many cutoffs, then identify the average of $\beta$
under a counterfactual distribution pre-specified by the researcher.
In a similar setting, \cite{cattaneo2016keele} study identification 
under conditions equivalent to Assumptions \ref{assu_pool} and \ref{assu_poli} (respectively, their Assumptions 5a and 5b).

The definition of $\mu$ captures both the direct effect of changing $D$,
and the composition effect of a change in the distribution of $D$ conditional on $X$.
To investigate the direct effects of $D$, 
\cite{rothe2012} proposes methods 
for inference on partial policy effects that preserve the distribution of ranks of $(D,X)$ unchanged, thus controlling  for composition effects.
Although not the focus of this paper, Rothe's methods may be combined with the RDD identification strategy 
to study partial policy effects.

\section{Average Treatment Effects in the Sharp Case}
\label{sec_ate_srd}
\indent

This section investigates estimation and inference of averages of the non-parametric function $\beta$ under sharp RDD with many cutoffs.
First, I treat the case of qualitative treatment doses. 
This is a straightforward extension of single-cutoff RDDs which identify ATEs of discrete counterfactual distributions, with support contained in $\m{C}_K$.
Second, I treat the case of quantitative treatment doses, that is, the first heterogeneity assumption. 
Substantial variation in cutoff-dose values allows for novel methods that estimate ATEs with support more general than $\m{C}_K$.

\subsection{Discrete Counterfactuals}\label{sec_case1}

\indent

Consider applications of RDD where the treatment dose variable has a qualitative nature, and is \textit{not}
credibly summarized by a real-valued metric. 
For example, \cite{hastings2013} study the assignment of students into
different degree programs in universities in Chile.
There are multiple cutoffs on a test score, but different cutoffs switch students to
completely different programs, e.g. physics, engineering, economics, etc.
This limits the ability to combine local effects across cutoffs, which restricts ATEs 
to counterfactual distributions with discrete support contained in $\m{C}_K$.
In this section, it is not possible to identify effects of policies that places weight on cutoff-dose combinations $(c,d,d')$ that
are not in $\m{C}_K$.

The focus is on discrete counterfactual distributions 
with probability mass function $\omega^d (\bc)$
where 
$\omega_j^d = \omega^d(\bc_j)$ for every $j$.
For example, in the high school assignment application, a new policy may reallocate
students with test scores marginally across the existing cutoffs.
The weight $\omega_j^d$ represents
the probability mass of students
with test score equal to $c_{j}$ that undergo
a change in school quality from $d_{j-1}$ to $ d_{j}$ in the reallocation policy.

The parameter of interest is the average effect on these students,
which is a weighted  average of local effects at the existing cutoffs: 
\begin{gather*}
\mu^d = \sum\limits_{j=1}^K \omega_{j}^d ~ \beta(\boup{c}_{j}).
\end{gather*}

Identification follows from Equation \ref{def:beta}.\footnote{The
 common practice of normalizing all cutoffs to zero and estimating only one effect
produces an estimator consistent for $\mu^d$ with weights $\omega_j^d = f(c_j) / \sum_l f(c_l)$ where $f$ is the probability density function of $X$.}
Estimation is conducted in two steps.
The first step uses local polynomial regressions (LPR) near each cutoff $c_{j}$
to non-parametrically estimate 
\begin{equation}
B_{j} = \lim_{e \downarrow 0} \left\{~ \mme[Y_i|X_i=c_{j}+e]-\mme[Y_i|X_i=c_{j}-e] ~\right\}.
\label{eq_Bpj}
\end{equation}
The researcher chooses a bandwidth parameter $h_{1j}>0$ for each cutoff, a kernel density
function $k(.)$, and the order of the polynomial regression $\rho_1 \in \mathbb{Z}_{+}$.
A polynomial in $X$ is fitted on each side of the cutoff, and the estimator $\hat B_{j}$
is the difference between the intercepts of these two polynomial regressions:
\begin{align}
\hat B_{j} & =  \hat{a}_{j}^{+} - \hat{a}_{j}^{-} \label{eq_est_Bpj} &
\\
(\hat{a}_{j}^{+}, \hat{\boup{b}}_{j}^{+} )
&= \argmin\limits_{(a, \boup{b} )}
\sum\limits_{i=1}^n  \bigg\{ k \left( \frac{X_i - c_{j}}{h_{1j} } \right) v_{i}^{j+}
\nonumber\\*
& \hspace{3cm} \big[ Y_i - a - b_1 (X_i - c_{j}) - \ldots -b_{\rho_1} (X_i - c_{j})^{\rho_1} \big]^2 \bigg\}
\label{eq_est_lpr_r}
\\
(\hat{a}_{j}^{-}, \hat{\boup{b}}_{j}^{-} )
& = \argmin\limits_{(a, \boup{b} )}
\sum\limits_{i=1}^n  \bigg\{  k \left( \frac{X_i - c_{j}}{h_{1j} } \right) v_{i}^{j-}
\nonumber\\*
& \hspace{3cm} \big[ Y_i - a - b_1 (X_i - c_{j}) - \ldots -b_{\rho_1} (X_i - c_{j})^{\rho_1} \big]^2 \bigg\}
\label{eq_est_lpr_l}
\end{align}
where 
\begin{gather}
v_{i}^{j+}=  \mmi\{ c_{j} \leq X_i < c_{j}+ h_{1j}\}
\text{, }
v_{i}^{j-}=  \mmi\{ c_{j}- h_{1j} < X_i < c_{j}  \},
\label{def_vij}
\end{gather}
and $\boup{b}=(b_1,\ldots,b_{\rho_1})$.
The estimator $\ha{B}_j$ uses observations with $X_i$ in the estimation window
$[c_j - h_{1j}, c_j + h_{1j}]$.
The choice of bandwidths may allow the windows to overlap at consecutive cutoffs. 
However, it must be the case that $c_j + h_{1j}<c_{j+1}$ and $c_j \leq c_{j+1}-h_{j+1}$ for $j=1,\ldots,K-1$.
This ensures that $Y_i=Y_i(d_j)$ for $X_i \in [c_j, c_j + h_{1j}]$, 
and $Y_i=Y_i(d_{j-1})$ for $X_i \in [c_j - h_{1j}, c_j)$.\footnote{This is the first-step estimation procedure for one sub-population with $K$ cutoffs. 
In many settings, the data have many sub-populations $p =1, \ldots, P$ with one or more cutoffs $j=1, \ldots, K(p)$ in each sub-population.
In that case, the researcher first estimates $\ha{B}_{j,p}$ for every $j$ in each sub-population $p$.
Then, Assumption \ref{assu_pool} allows for pooling of $\ha{B}_{j,p}$ across $p$ in the second step.}

In the second step, the researcher averages out $\hat B_{j}$ to obtain the estimator
$\ha{\mu}^d$:
\begin{gather}
 \ha{\mu}^d = \sum_{j =1 }^K \omega_{j}^d \hat B_{j}.
 \label{est:mu:d}
\end{gather}

For the case of one cutoff, \cite{hahn2001id} and \cite{porter2003} derive 
the asymptotic normal distribution of the LPR estimator $\hat B_{j}$.
I build on their arguments to derive the asymptotic distribution of 
$\ha{\mu}^d$ under the assumptions listed below.

\loadaspt{00B} 

\loadaspt{00C} 

\loadaspt{00D} 

\begin{theorem}\label{theo_srd_est_avg}
Suppose Assumptions  \ref{assu_srd_est_kernel}-\ref{assu_srd_est_mx} hold. 
Let $\underline{h}_1 = \min_j h_{1j}$ and
$\overline{h}_1 = \max_j h_{1j}$.
As $n\to\infty$, assume that $\overline{h}_1 \to 0$,
$\overline{h}_1/\underline{h}_1 = O(1)$,
$n \overline{h}_1 \to \infty$,
and
$(n \overline{h}_1)^{1/2} \overline{h}_1^{\rho_1 + 1} = O(1)$.
Then,
\begin{align}
\frac{ \ha{\mu}^d  - \m{B}_n^d - \mu^d }
	{ \left(\m{V}_n^d \right)^{1/2} }
\overset{d}{\rightarrow } &
N(0,1)
\notag
\end{align}
where the bias $\m{B}_n^d$ and variance $\m{V}_n^d$ terms are characterized as follows:
\begin{align}
\m{B}_{n}^d = & \frac{1}{(\rho_1 + 1 )!} \sum_{j=1}^K
			h_{1j}^{\rho_1+1}  f(c_j) \left[ 
				\nabla^{\rho_1 + 1 } R(c_j,d_j) e_1'  G_n^{j +}
				- 
				\nabla^{\rho_1 + 1 } R(c_j,d_{j-1}) e_1'  G_n^{j -} 
			\right]
			\gamma^*
\label{def:bias:d}
\\
\m{V}_n^d = &n \mme \left\{
	\eps_i^2 \left[
		\sum_{j=1}^K 
			\frac{\omega_j^d}{n h_{1j}}
			k\left( \frac{X_i - c_{j}}{h_{1j}} \right)
			e_1'
			\left(
				v_i^{j +} \mme[ G_n^{j +} ]  
				-
				v_i^{j -} \mme[ G_n^{j -} ]  
			\right)
			\ti{H}_{i}^{j}				
	\right]^2
\right\},
\label{def:var:d}
\end{align}
with $\varepsilon_i = Y_i - \mme[Y_i|X_i]$;
$H(u) = [u^0, u^1, \ldots, u^{\rho_1} ]' $
is a $(\rho_1+1) \times 1$ vector-valued function,
$\ti{H}_i^j = H(h_{1j}^{-1}(X_i-c_j))$,
and
$G_n^{j \pm} = ( n h_{1j} )^{-1} \sum_{i=1}^n v_{i}^{j \pm} k(h_{1j}^{-1}(X_i-c_j))\ti{H}_i^j  \ti{H}_i^{j'}  $
is  a $(\rho_1+1) \times (\rho_1+1)$ matrix;
$v_{i}^{j \pm}$ are defined in Equation \ref{def_vij};
$\gamma^*=[\gamma_{\rho_1 + 1} ~~ \ldots ~~\gamma_{2 \rho_1 +1}]'$,
for $\gamma_d=\int_0^1 k(u) u^d du $;
and
$e_1$  is the  $(\rho_1+1 \times 1)$ vector with one in its first coordinate and zero otherwise.
Furthermore, $\left(\m{V}_n^d \right)^{-1/2} =  O \left( \left( n \overline{h}_1 \right)^{1/2}  \right)$, and
$\left(\m{V}_n^d \right)^{-1/2} \m{B}_n^d  =  O_P \left( \left( n \overline{h}_1\right)^{1/2} \overline{h}_1^{\rho_1 + 1}  \right)$.
\end{theorem}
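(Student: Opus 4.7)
The overall strategy combines standard single-cutoff local polynomial regression (LPR) asymptotics at each $c_j$ with a cross-cutoff central limit theorem. A key structural fact I would exploit is that the bandwidth condition $c_j + h_{1j} < c_{j+1} - h_{1,j+1}$ forces the $K$ estimation windows to be disjoint, so each observation $i$ contributes to at most one cutoff. First, I write each $\ha B_j = e_1'(\ha G_n^{j+})^{-1} \ha T_n^{j+} - e_1'(\ha G_n^{j-})^{-1} \ha T_n^{j-}$ with $\ha T_n^{j\pm} = (nh_{1j})^{-1}\sum_i v_i^{j\pm} k((X_i - c_j)/h_{1j}) \ti H_i^j Y_i$, decompose $Y_i = R(X_i, D_i) + \eps_i$ where $R(x,d) = \mme[Y_i(d)\mid X_i = x]$, and split each $\ha T_n^{j\pm}$ into a signal piece (using $R$) and a noise piece (using $\eps_i$).

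For the signal piece, I Taylor expand $R(X_i, d_j)$ on $[c_j, c_j + h_{1j})$ and $R(X_i, d_{j-1})$ on $[c_j - h_{1j}, c_j)$ to order $\rho_1 + 1$ around $c_j$. Orthogonality of the LPR normal equations annihilates the first $\rho_1$ polynomial terms, so only the $(\rho_1+1)$-th derivative remainder survives. Using $(nh_{1j})^{-1}\sum_i v_i^{j+} k(\cdot) \ti H_i^j (X_i - c_j)^{\rho_1+1} \to h_{1j}^{\rho_1+1} f(c_j) \gamma^*$, weighting by $\omega_j^d$, and summing over $j$ produces exactly $\m{B}_n^d$ as in (\ref{def:bias:d}); the Taylor remainder is of smaller order by smoothness of $R$ (Assumption \ref{assu_srd_est_mx}).

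For the noise piece, I write the contribution to $\ha\mu^d$ as $\sum_i Z_{n,i}$, where $Z_{n,i} = \eps_i \phi_{n,i}$ and $\phi_{n,i}$ is the bracketed quantity inside $\m{V}_n^d$ in (\ref{def:var:d}). Because the windows are disjoint, each $\phi_{n,i}$ is supported on at most one window and the $\{Z_{n,i}\}$ are mutually independent conditional on $\{X_i\}$, mean-zero given $X_i$, with $\sum_i \mme[Z_{n,i}^2] = \m{V}_n^d$. Asymptotic normality then follows from Lyapunov's CLT at order $2+\delta$, using the kernel's compact support (Assumption \ref{assu_srd_est_kernel}), $f$ bounded away from zero on the cutoff support (Assumption \ref{assu_srd_est_mx}), and a $(2+\delta)$-moment bound on $\eps_i$ (Assumption \ref{assu_srd_est_eps}); under $n\overline h_1 \to \infty$ the Lyapunov ratio is $O((n\overline h_1)^{-\delta/2}) \to 0$.

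Finally, the order claims follow since $\mme[G_n^{j\pm}]$ is uniformly bounded above and away from singular: $\m{V}_n^d \asymp (n\overline h_1)^{-1}$ yields $(\m{V}_n^d)^{-1/2} = O((n\overline h_1)^{1/2})$, and $\m{B}_n^d = O_P(\overline h_1^{\rho_1+1})$ yields the bias-rate claim. \emph{The main obstacle} is controlling the replacement of $(\ha G_n^{j\pm})^{-1}$ by its expectation \emph{uniformly} in $j$ at the required rate: this calls for a Bernstein-type concentration inequality on each entry of $\ha G_n^{j\pm} - \mme G_n^{j\pm}$, together with a union bound over the $K$ cutoffs; the rate conditions $n\overline h_1 \to \infty$ and $\overline h_1/\underline h_1 = O(1)$ are exactly what keep this uniform remainder of smaller order than $(\m{V}_n^d)^{1/2}$, and they are also why the theorem permits $K$ to grow with $n$.
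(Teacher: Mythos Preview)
Your decomposition into signal and noise pieces, the Taylor expansion for the bias, and the Lindeberg/Lyapunov CLT for the stochastic term are essentially what the paper does: it invokes a single-cutoff lemma (its Lemma~\ref{lemma_porter}, a vector version of Porter's result) at each $c_j$ and then aggregates. So the core analytic machinery is right.

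However, you misread the setting in two places. First, the paper does \emph{not} impose $c_j + h_{1j} < c_{j+1} - h_{1,j+1}$; its conditions are only $c_j + h_{1j} < c_{j+1}$ and $c_j \leq c_{j+1} - h_{1,j+1}$, which explicitly \emph{permit} the right window at $c_j$ to overlap the left window at $c_{j+1}$ in finite samples. An observation can therefore contribute to two adjacent cutoffs, and the variance formula~(\ref{def:var:d}) is written with the sum over $j$ inside the square precisely to capture those cross terms. Your disjointness shortcut happens to become true asymptotically here, but only because $K$ is fixed and $\overline h_1 \to 0$, not because of any bandwidth constraint you cite.

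Second---and this is the real gap---in Theorem~\ref{theo_srd_est_avg} the number of cutoffs $K$ is \emph{fixed}. The paper's proof says so explicitly (``the number of cutoff remains fixed'') and simply applies the one-cutoff CLT finitely many times. Your ``main obstacle'' of controlling $(\ha G_n^{j\pm})^{-1} - (\mme G_n^{j\pm})^{-1}$ \emph{uniformly in $j$} via Bernstein inequalities and union bounds is not needed here: with $K$ fixed, pointwise $O_P((nh_{1j})^{-1/2})$ control at each $j$ is enough, and a $\max$ over finitely many $O_P(\cdot)$ terms is automatic. The uniformity machinery you describe (and the empirical-process / covering-number arguments the paper develops in its Lemmas~\ref{lemma_classf}--\ref{lemma_app}) belongs to Theorem~\ref{theo_srd_kinf_est_int}, where $K \to \infty$. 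So you have correctly anticipated a difficulty, but attached it to the wrong theorem; for Theorem~\ref{theo_srd_est_avg} the proof is much shorter than you suggest.
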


The variance of $\ha{\mu}^{d}$ is consistently estimated by
\begin{align}
\ha{\m{V}}_n^d = &\sum_{i=1}^n \left\{
	\ha{\eps}_i^2 \left[
		\sum_{j=1}^K 
			\frac{\omega_j^d}{n h_{1j}}
			k\left( \frac{X_i - c_{j}}{h_{1j}} \right)
			e_1'
			\left(
				v_i^{j +}  G_n^{j +}   
				-
				v_i^{j -} G_n^{j -}  
			\right)
			\ti{H}_{i}^{j}				
	\right]^2
\right\}.
\label{est:var:d}
\end{align} 
The squared residuals $\ha{\eps}_i^2$ are computed by a nearest-neighbor matching estimator, as suggested by \cite{cattaneo2014calonico} (CCT from now on):
\begin{align}
\ha{\eps}_i^2 = & \frac{3}{4} \left(
	Y_i  - \frac{1}{3} \sum_{l =1}^3 Y_{\ell(i,l)}
\right)^2,
\label{est:resid}
\end{align}
and $\ell(i,l)$ is the index of the $l$-th closest $X$ to $X_i$ that lies within the same cutoffs $c_j$ and $c_{j+1}$ that $X_i$ does.
CCT's Theorem A3 demonstrates that $\ha{\m{V}}_n^d / \m{V}_n^d \pto 1$ in the case of one cutoff, and a straightforward generalization yields the same conclusion for a finite number of cutoffs. 
If the bandwidth choices are such that the standardized bias term $\left(\m{V}_n^d \right)^{-1/2} \m{B}_n^d$ differs from  zero asymptotically, then inference must be done using a bias-corrected estimator. 
A practical way of doing bias correction  is to increase the order of the polynomial from $\rho_1$ to $\rho_1+1$ and compute 
$\ha{\mu}^{d'}$ and $\ha{\m{V}}_n^{d'}$ using the same bandwidth choices as $\ha{\mu}^{d}$ and $\ha{\m{V}}_n^{d}$.
It follows that $(\ha{\m{V}}_n^{d'})^{-1/2}  (\ha{\mu}^{d'} - \mu^d )\dto N(0,1)$.

The multi-cutoff setup of Theorem \ref{theo_srd_est_avg} allows for choices of bandwidths that produce overlapping estimation windows in finite samples.
For example, if $c_1 + h_{11}>c_2-h_{12}$, the estimator $\ha{B}_2$ uses some of the same observations that the estimator $\ha{B}_{1}$ does.
In theory, a finite number of cutoffs with shrinking bandwidths leads to non-overlapping estimation windows in large samples. As a consequence, the asymptotic variance of $\sqrt{n \overline{h}_1} (\ha{\mu}^d  - \m{B}_n^d - \mu^d) $
may not approximate  its finite-sample variance well in case of overlap.
Instead, the variance term in \eqref{def:var:d} takes into account overlap because its formula is constructed based on the finite-sample variance.

In practice, implementation of $\ha\mu^{d}$ requires the researcher to choose bandwidths $h_{1j}>0$, 
the polynomial order $\rho_1 \in \mmz_+$, and a kernel density function $k(\cdot)$.
In the one-cutoff case, common choices in applied work include the edge kernel $k(u)=\mmi\{|u| \leq 1 \} (1-u)$,
local linear regression $\rho_1=1$, and a bandwidth choice that minimizes the mean squared error (MSE) of estimation.
Recent work by \cite{imbens2012optimal} (IK from now on) provides a practical data-driven rule for choosing the bandwidth in the case of one cutoff.
With multiple cutoffs, an interesting aspect of the optimal bandwidth problem is the variance reduction from overlapping estimation windows.\footnote{Following Equation \eqref{eq_est_Bpj}, 
$COV(\ha{B}_j,\ha{B}_{j+1})
=COV( \ha{a}_j^+ -\ha{a}_j^- ,\ha{a}_{j+1}^+ - \ha{a}_{j+1}^-)
=COV( \ha{a}_j^+ ,- \ha{a}_{j+1}^-)<0$ because $\ha{a}_j^+$ and $\ha{a}_{j+1}^-$
use some of the same observations in the case of overlap.} 
A formal investigation on optimal  bandwidths in the multi-cutoff case is deferred to future work.

A simple recommendation to implement Theorem \ref{theo_srd_est_avg} is to use 
the IK bandwidth based on local linear regressions with the edge kernel applied to the sub-sample pertaining to each cutoff. 
These bandwidths produce asymptotic bias, and valid inference must use a bias-corrected estimator   
and its variance.
Use local quadratic regressions ($\rho_1=2)$ with the edge kernel and the same bandwidths as before
to compute the consistent bias-corrected estimator $\ha{\mu}^{d'}$ and its variance $\ha{\m{V}}_n^{d'}$.
\cite{calonico2018optimal} propose shrinking MSE-optimal bandwidths as a rule of thumb to improve finite sample coverage of confidence intervals.
As means of a robustness check, the researcher may shrink the IK bandwidths by multiplying them by $n^{-1/20}$, and examine the resulting confidence intervals (Section 4.1, \cite{calonico2018optimal}).

\subsection{Continuous Counterfactuals}\label{sec_case2}

\indent

The first heterogeneity assumption allows the researcher to identify counterfactual ATEs with support more general than $\m{C}_K$. 
An empirical application satisfies the first heterogeneity assumption if the treatment dose is credibly quantifiable in a real-valued variable $d$.
For example, in the high school assignment of PU,
the treatment dose is a quality measure for each school.
Possible measures of school quality include the average test score of peers, the average number of teachers, or funding per student.
An infinite amount of data gives rise to a countably-infinite set of cutoff-dose values $\m{C}_{\infty}$.
In terms of the high school assignment example, 
a large number of towns and years produce substantial variation in cutoff-dose values.
Define  $\m{C}$ to be the convex hull of $\m{C}_{\infty}$.
If variation in cutoff-dose values is sufficiently rich, then ATEs with counterfactual distributions supported in  $\m{C}$ are identified (Lemma \ref{lemma_srd_id_kinf}).

I focus on scalar treatment doses $d$ and counterfactual distributions 
with continuous probability density function $\omega^c (\bc)$.
Minor changes to the setup can accommodate multivariate $d$ and discrete or mixed counterfactual distributions.
The ATE is defined as
\begin{gather}
\mu^c = \int_{\m{C}} \omega^c(\boup{c}) \beta(\boup{c}) ~d(\boup{c}).
\end{gather}

\begin{lemma}\label{lemma_srd_id_kinf}
Assume that an infinite amount of data has sufficient variation such that  (i) $\m{C}_{\infty}$ is dense in its convex hull $\m{C}$;
and that (ii) $\beta(\boup{c})$ is a continuous function over $\m{C}$.
Then, $\mu^c$ is identified.
\end{lemma}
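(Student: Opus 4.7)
The plan is to bootstrap from identification of $\beta$ at the observed cutoff-dose triples to identification of $\beta$ on all of $\m{C}$, and then to observe that $\mu^c$ is a known functional of the (now identified) function $\beta$ and the researcher-chosen weight $\omega^c$.

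First, I would argue that $\beta$ restricted to $\m{C}_{\infty}$ is identified directly from Equation \ref{def:beta}. For each $\bc=(c,d,d')\in\m{C}_\infty$, sharp assignment together with continuity of $x\mapsto \mme[Y_i(d)|X_i=x]$ for each $d\in\m{D}$ (the condition underlying Lemma \ref{lemma_srd_id}) implies that $\beta(\bc)$ equals a difference of one-sided limits of observable conditional means. Since $\m{C}_\infty$ is the full collection of cutoff-dose triples realized in the (infinite-data) design, the restriction $\beta|_{\m{C}_\infty}$ is pinned down by the joint distribution of $(X_i,Y_i)$.

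Second, I would invoke a standard uniqueness-by-continuity argument to pass from $\m{C}_\infty$ to $\m{C}$. By assumption (ii), $\beta:\m{C}\to\mmr$ is continuous; by assumption (i), $\m{C}_\infty$ is dense in $\m{C}$. Two continuous functions on $\m{C}$ agreeing on a dense subset must coincide everywhere, so $\beta$ on all of $\m{C}$ is uniquely determined by its values on $\m{C}_\infty$. Consequently $\beta$ is identified on the full convex hull $\m{C}$.

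Third, since $\omega^c$ is chosen by the researcher and hence known, and $\beta$ is now identified on the entire domain of integration, the object
\begin{equation*}
\mu^c = \int_{\m{C}} \omega^c(\bc)\, \beta(\bc)\, d(\bc)
\end{equation*}
is a known functional of identified quantities. Well-definedness of the integral is routine: the $c$-coordinate of $\bc$ lies in the compact interval $\m{X}$, so (under the implicit boundedness of doses inherited from the bounded set $\m{C}_\infty$) $\m{C}$ is compact, and $\beta$ continuous on the compact set $\m{C}$ is bounded; together with $\omega^c$ being a probability density, the integral converges. Hence $\mu^c$ is identified. The only conceptual step is the density-plus-continuity extension in the second paragraph; the rest is immediate given the preceding setup, and no estimation/tuning argument is needed at the identification stage.
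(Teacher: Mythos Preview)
Your argument is correct but proceeds differently from the paper. You use a \emph{unique-extension} argument: identify $\beta$ on the dense set $\m{C}_\infty$ via Equation~\ref{def:beta}, then invoke the fact that a continuous function on $\m{C}$ is uniquely determined by its values on a dense subset, so $\beta$ is identified everywhere on $\m{C}$; finally, $\mu^c$ is a known integral of an identified function. The paper instead gives a \emph{constructive} Riemann-sum argument: it partitions a bounding box into shrinking cubes, picks in each cube a point of $\m{C}_\infty$ (possible by density), and forms a tagged Riemann sum of $\omega^c\beta$ evaluated only at points of $\m{C}_\infty$; continuity of $\beta$ (hence Riemann integrability of $\omega^c\beta$) guarantees the sums converge to $\mu^c$. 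Your route is shorter and more conceptual; the paper's route has the advantage of exhibiting an explicit sequence of functionals of identified quantities converging to $\mu^c$, which directly mirrors the two-step estimator $\ha\mu^c=\sum_j\Delta_j\ha B_j$ developed immediately afterward.
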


The researcher may impose further heterogeneity restrictions to reduce the dimension of $\beta(\bc)$
and increase the set of possible counterfactual distributions.
For instance, linear returns to school quality say that $\beta(c,d,d')$ depends on $(c,d'-d)$ instead of $(c,d,d')$.
This implies that $\beta(\bc)=\phi(c)(d'-d)$ for a smooth function $\phi(c)$,
and changes the dimension of set $\m{C}_K$. 
See Figure \ref{figure_schedule} in Section \ref{sec_appli} for an empirical illustration.
Medicaid coverage is an example of binary treatment that is triggered by various income cutoffs across states.
In the case of binary treatment, the treatment effect function depends only on the cutoff value, that is, $\beta(c,d,d')= \phi(c)$.
Identification of averages of $\beta(\bc)$ requires identification
of averages of $\phi(c)$, which relies on infinitely many cutoff values that cover a compact interval on the real line.
For example, such variation identifies the average effect of giving Medicaid benefits
to an entire neighborhood of individuals within the range of income cutoffs seen in the data.\footnote{
For the Medicaid example, \cite{delamata2012} has many income cutoffs that differ by state, age, and year.
De La Mata's Table I suggests variation between  US\$ 21,394 and  US\$36,988.
Other examples of rich variation in cutoff values include: (i) \cite{agarwal2016} who have 714 credit-score cutoffs distributed between 620  and 800 (see their Figure II(E));
and (ii) \cite{hastings2013} who have at least 1,100 cutoffs on admission scores varying between 529.15 and 695.84 (refer to their online appendix's Table A.I.I).   
Although Angrist and Lavy (1999) have few cutoff values, the pattern of their Figure I suggests variation in dose changes across grades and schools.
In such cases, non-parametric identification of $\beta$ is possible for a range of dose changes at a few cutoff values.
 }

The parameter $\mu^c$ is estimated in two steps.
The first step
is identical to the procedure described in
Equations \ref{eq_est_Bpj}-\ref{eq_est_lpr_l}.
That is, LPRs produce estimates $\widehat{B}_{j}$, $j=1,\ldots,K$.
The second step computes a weighted average of the first-step estimates, using specially designed weights $\{ \Delta_j\}_{j=1}^{K}$ that I  call
``correction weights,''
\begin{gather}
\ha{\mu}^c  = \sum_{ j =1  }^K \Delta_{j} \hat B_{j}.
\label{est:mu:c}
\end{gather}

Unlike the intuition of the discrete case, the correction weight $\Delta_j$ is not necessarily equal or proportional to 
$\omega^c_j = \omega^c(\bc_j)$.
An analytical expression for $\Delta_j$ is given below in Equation \ref{eq_est_int_corr_wei}, and constructed as follows.
The correction weight $\Delta_j$ is the contribution of estimate $\ha{B}_j$ to the integral
$\int_{\m{C}} \omega^c(\boup{c}) \ha{\beta}(\boup{c}) ~d(\boup{c})$,
where $\ha{\beta}(\boup{c})$ is a non-parametric estimate of $\beta(\boup{c})$.
A weighted regression of  $\ha{B}_j$ on polynomial functions of $\bc_j$ centered at $\bc$ produces the estimate $\ha{\beta}(\boup{c})$.
The researcher specifies the order of the polynomials $\rho_2 \in \mmz_+$, and  a bandwidth $h_2 >0$ that defines an estimation neighborhood around $\boup{c} \in \m{C}$.
The estimate $\widehat \beta(\boup{c})$ is the intercept of the following weighted least squares regression:
\begin{align}
\widehat{\boup{\eta}}
 =  & \argmin\limits_{\boup{\eta}}
\left( \widehat{\boup{B}} - \boup{E}(\boup{c}) \boup{\eta} \right)'
 \bOmg( \bc; h_2 )
 \left( \widehat{\boup{B}} - \boup{E}(\boup{c}) \boup{\eta} \right)
\\
& \text{ where } 
\nonumber
\\
\widehat{ \boup{B} }  = &  \left[\widehat B_{1}, ~ \ldots, ~ \widehat B_{K} \right]'
\text{ is a } K \times 1 \text{ vector;}
\\
 \bOmg(\bc;{h_2}) =  & \diag \left\{ \Omega_{j}(\bc;h_2) \right\}_{j=1}^K
\textit{ is a } K \times K \textit{ matrix, with}
\label{eq_est_int_Omg}
\\*
& \Omega_{j}(\bc;h_2) = 
	k \left( \frac{c_{j}  - c}{h_2 } \right)
	k \left( \frac{d_{j-1} - d}{h_2 } \right)
	k \left( \frac{d_{j-1} - d' }{h_2 } \right); 
\nonumber
\\
\bE(\bc)=  & \left[E_{1}(\bc), ~ \ldots, ~ E_{K}(\bc) \right]'
\textit{ is a } K \times J \textit{ matrix, where}
\label{eq_est_int_E}
\\*
& E_{j}(\bc) \textit{ is a } J \times 1 \textit{ vector with all polynomials of the form } \nonumber
\\*
&  p_{\bgam}(\bc_j - \bc) =  (c_j - c)^{\gamma_1}(d_{j-1} - d)^{\gamma_2}(d_{j} - d' )^{\gamma_3} \nonumber
 \\
& \textit{for }  \bgam = (\gamma_1,\gamma_2,\gamma_3) \in \mmz^3_+,
	~\gamma_1+\gamma_2+\gamma_3 \leq \rho_2,
	~\min\{ \gamma_2, \gamma_3 \} = 0, \nonumber
\\
& J=  2 \frac{(\rho_2 +2)!}{2! \rho_2!} - (\rho_2+1) \textit{, where $!$ denotes factorial,} \nonumber
\\
& \textit{and the first element of }  E_j({\bc}) \textit{ is } 1.
\nonumber
\end{align}

The formula for $\Delta_j$ comes from integrating $\omega^c(\bc) \ha{\beta}({\bc})$:
\begin{align}
 \int_{\m{C}} \omega^c(\bc) \widehat \beta(\bc) ~ d\bc = &
	\bigintsss_{\m{C}} ~ \omega^c(\bc) e_1'
 		\left( \bE(\bc)' \bOmg(\bc;h_2) \bE(\bc) \right)^{-1}
 		\sum_j \Omega_{j}(\bc;h_2) E_{j}(\bc) \ha{B}_j ~~d(\bc)
\notag
\\ 
 = & \sum_j
	\bigintsss_{\m{C}} ~ \omega^c(\bc) e_1'
 		\left( \bE(\bc)' \bOmg(\bc;h_2) \bE(\bc) \right)^{-1}
 		\Omega_{j}(\bc;h_2) E_{j}(\bc) ~d(\bc) ~ \ha{B}_j
\notag
\\ 
= & \sum_j
	\underset{\equiv \Delta_j}{\underbrace{
		\bigintsss_{\m{C}} ~ \omega^c(\bc)
 			\frac{det \left( \bE(\bc)' \bOmg(\bc;h_2) \bE_{\bo{0} \leftarrow e_j}(\bc) \right)}
   			{det \left( \bE(\bc)' \bOmg(\bc;h_2) \bE(\bc) \right)}
   		~d(\bc)
   	}}
   	~ \ha{B}_j
   	\label{eq_est_int_corr_wei}
 \\
 = & \sum_j \Delta_j \ha{B}_j,
\end{align}
where  the third equality uses the Cramer rule, 
and
$\bE_{\bo{0} \leftarrow e_j}(\bc)$ is a  $K \times J$ matrix equal to $\bE(\bc)$ except for the first column,
which is replaced by the $K \times 1$ vector $e_j$. The vector $e_j$ has one in its $j$-th entry and zero otherwise.

The main contribution of this paper concerns inference on $\mu^c$ where
$\beta(\boup{c})$ is estimated non-parametrically and then averaged across cutoffs.
This is not the first paper to study estimation of averages
of non-parametric functions; for example, see \cite{newey1994partial}. 
The novelty here is that the non-parametric estimation step only occurs at $K$ fixed boundary points $\bc_j$.
A necessary condition for consistency of $\hat \mu^{c}$ is an ``infill type of asymptotics,'' that is,   
$K$ grows large with the sample size $n$, and $\m{C}_K$ 
becomes dense in its convex hull $\m{C}$.
Assumption \ref{assu_srd_kinf_est_cut} makes the dependence of $K$, $h_{1j}$, $h_2$, and $c_{j}$ on $n$ explicit with a subscript.
The main text omits the subscript $n$ whenever possible to simplify  notation.
\loadaspt{00E} 

For large $K$,
cutoff-dose values must be uniformly distributed on the domain $\m{C}$ such that
$\bE(\bc/h_2)' \bOmg(\bx;h_2) \bE(\bc/h_2)$ is invertible and of magnitude
$Kh_2^3$, that is, $K$ times the volume of every $h_2$-neighborhood of $\bc$, for every $\bc$ in $\m{C}$.
These conditions are satisfied in a variety
of examples of triangular arrays of points. 
In Section \ref{sec_int_app} of the supplemental appendix,
these conditions are verified for one example of a triangular array.
Asymptotic normality also relies on additional smoothness conditions on the moments of the data.
\loadaspt{00F} 

Theorem \ref{theo_srd_kinf_est_int} states the rate conditions under which the estimator $\ha{\mu}^c$
has an asymptotic normal distribution.
Estimation of the ATE consists of approximating the integral of the treatment effect function by a weighted sum of the values of such function at a finite
number of points in its domain.
The approximation error converges
to zero as the number of points grows large. Function evaluations $B_j$ are estimated by $\ha{B}_j$.
The correction weights guarantee that the integral approximation error converges to zero faster than the estimation error. 
\begin{theorem}\label{theo_srd_kinf_est_int}
Suppose Assumptions \ref{assu_srd_est_kernel}-\ref{assu_srd_kinf_est_int}
hold.
As $n\to\infty$, assume that $K \to \infty$,
$\overline{h}_1 \to 0$,
$ \overline{h}_1 / \underline{h}_1 = O(1)$,
and $h_2 \to 0$ such that
\textbf{(i)} $\left( K n \overline{h}_1 \right)^{1/2} \overline{h}_1^{\rho_1 + 1} = O(1)$;
\textbf{(ii)} ${ K^{1/2} \log n}/{ \left( n \overline{h}_1 \right)^{1/2} } = o(1)$, and
$K\overline{h}_1=O(1)$;
and
\textbf{(iii)} $\left( K n \overline{h}_1 \right)^{1/2}  h_2^{ \rho_2+1 } =O(1)$, and $1 / K h_2^{3} =O(1)$.
Then, 
\begin{gather}
\frac{  \ha{\mu}^c - \m{B}_{1n}^c - \m{B}_{2n}^c - \mu^c}
{ \left(\m{V}_n^c \right)^{1/2} }
\overset{d}{\rightarrow }
N(0,1).
\end{gather}
The first-step bias $\m{B}_{1n}^c$ and variance $\m{V}_n^c$ terms are defined as in Equations \ref{def:bias:d}-\ref{def:var:d} except that $\Delta_j$ replaces $\omega_j^d$;
the second-step bias $\m{B}_{2n}^c$ is characterized as follows:
\begin{align}
\m{B}_{2n}^c = & \bigintss_{\m{C}}  \omega^c(\bc)
	\sum\limits_{
			(\gamma_1, \gamma_2, \gamma_3) 
    	}
    	\sum\limits_{ j=1 }^{K}  \Bigg\{ 
    		\frac{ (c_j -c)^{\gamma_1} 
    			(d_{j-1} - d)^{\gamma_2}
    			(d_{j} - d')^{\gamma_3}
    		}{\gamma_1 ! \gamma_2 ! \gamma_3 !}   
    		\nabla_c^{ \gamma_1 } \nabla_d^{ \gamma_2 } \nabla_{d'}^{ \gamma_3} \beta(c,d,d')
\notag      \\*
   & \hspace{4.5cm} \frac{det \left( \bE(\bc)' \bOmg(\bc;h_2) \bE_{\bo{0} \leftarrow e_j}(\bc) \right)}
   {det \left( \bE(\bc)' \bOmg(\bc;h_2) \bE(\bc) \right)} \Bigg\}
~~d\bc,		
\end{align}
where the first sum runs over all triplets $(\gamma_1,\gamma_2,\gamma_3) \in \mmz_+^3$
such that $\gamma_1+\gamma_2+\gamma_3=\rho_2+1$,
and $\min \{\gamma_2, \gamma_3 \}=0$.
Furthermore, $\left(\m{V}_n^c \right)^{-1/2} =  O \left( \left( K n \overline{h}_1 \right)^{1/2}  \right)$, 
$\left(\m{V}_n^c \right)^{-1/2} \m{B}_{1n}^c  =  O_P \left( \left( K n \overline{h}_1\right)^{1/2} \overline{h}_1^{\rho_1 + 1}  \right)$, and 
$\left(\m{V}_n^c \right)^{-1/2} \m{B}_{2n}^c  =  O \left( \left( K n \overline{h}_1\right)^{1/2} h_2^{\rho_2 + 1}  \right)$.

\end{theorem}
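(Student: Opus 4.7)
The plan is to decompose the centered estimator into three additive pieces and analyze each in turn. First, using the Porter-style linearization of the local polynomial regressions that already underlies Theorem \ref{theo_srd_est_avg}, write
\begin{equation*}
\hat B_j = \beta(\bc_j) + L_j + S_j + r_j,
\end{equation*}
where $L_j$ is the deterministic LPR bias of order $\overline{h}_1^{\rho_1+1}$ at cutoff $j$, $S_j$ is the linear-in-$\varepsilon_i$ stochastic term built from $e_1'(G_n^{j\pm})^{-1}$ and the kernel weights, and $r_j$ is a higher-order remainder. Multiplying by $\Delta_j$ and summing over $j$ yields
\begin{equation*}
\hat\mu^c - \mu^c = \Big(\sum_j \Delta_j \beta(\bc_j) - \mu^c\Big) + \sum_j \Delta_j L_j + \sum_j \Delta_j S_j + \sum_j \Delta_j r_j,
\end{equation*}
and the proof amounts to identifying the first summand with $\m{B}_{2n}^c$, the second with $\m{B}_{1n}^c$, the third as asymptotically $N(0,\m{V}_n^c)$, and the fourth as $o_P((\m{V}_n^c)^{1/2})$.

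Second, I would handle the integral-approximation bias $\m{B}_{2n}^c$ by Taylor-expanding $\beta(\bc_j)$ around $\bc$ inside the definition of $\Delta_j$ in Equation \ref{eq_est_int_corr_wei}. By construction, the second-step weights reproduce exactly any polynomial in $(\bc_j-\bc)$ of total degree $\le \rho_2$ satisfying $\min\{\gamma_2,\gamma_3\}=0$, so the leading polynomial terms collapse to $\int \omega^c(\bc)\beta(\bc)\,d\bc=\mu^c$ via the Cramer-rule identity used to derive \eqref{eq_est_int_corr_wei}. The surviving remainder is precisely the $(\rho_2+1)$-th order Taylor tail, which equals $\m{B}_{2n}^c$. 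The rate control $(Kn\overline{h}_1)^{1/2}h_2^{\rho_2+1}=O(1)$ and $1/(Kh_2^3)=O(1)$ ensure this piece is of the correct standardized order. For $\m{B}_{1n}^c$, the analysis of $L_j$ at each cutoff is identical to that in Theorem \ref{theo_srd_est_avg}, with $\Delta_j$ in place of $\omega_j^d$; the bound $|\Delta_j|=O(1/K)$ (a consequence of the invertibility and volume argument below) plus $(Kn\overline{h}_1)^{1/2}\overline{h}_1^{\rho_1+1}=O(1)$ gives the stated order.

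Third, for the asymptotically normal piece, I would rewrite $\sum_j \Delta_j S_j = \sum_{i=1}^n \varepsilon_i W_{ni}$ where
\begin{equation*}
W_{ni} = \sum_{j=1}^K \frac{\Delta_j}{nh_{1j}}\,k\!\left(\frac{X_i-c_j}{h_{1j}}\right) e_1'\!\left(v_i^{j+}\mme[G_n^{j+}]^{-1} - v_i^{j-}\mme[G_n^{j-}]^{-1}\right)\tilde H_i^j,
\end{equation*}
apply the law of total expectation and conditioning on the design, and verify a Lyapunov condition for the triangular array $\{\varepsilon_i W_{ni}\}$. The variance is exactly $\m{V}_n^c$ after replacing $\omega_j^d$ by $\Delta_j$. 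The Lyapunov bound requires moments of $\varepsilon_i$ (Assumption \ref{assu_srd_kinf_est_int}), uniform control $|\Delta_j|=O(1/K)$, and the fact that at most $O(1)$ cutoff windows overlap at any $X_i$ under condition (ii) $K\overline{h}_1=O(1)$. The remainder $\sum_j\Delta_j r_j$ is shown to be $o_P((\m{V}_n^c)^{1/2})$ using the uniform rate $\sup_j |r_j| = O_P(\log n/(n\overline{h}_1))$ propagated from a uniform-in-$j$ expansion of the LPR (requiring condition (ii) $K^{1/2}\log n/(n\overline{h}_1)^{1/2}=o(1)$).

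The main obstacle is establishing sharp uniform-in-$\bc$ control over the second-step design: showing that $\bE(\bc)'\bOmg(\bc;h_2)\bE(\bc)$, after rescaling rows by $h_2^{-|\gamma|}$, is bounded and invertible with eigenvalues of order $Kh_2^3$, uniformly in $\bc\in\m{C}$. This underpins the bound $|\Delta_j|=O(1/K)$ and the rate identifications above, and it is where Assumption \ref{assu_srd_kinf_est_cut} on the uniform density of $\m{C}_K$ in $\m{C}$ together with condition $1/(Kh_2^3)=O(1)$ does the real work. Once this uniform invertibility is in hand, the three contributions combine by Slutsky's theorem to deliver the stated CLT, and the standardized orders of $\m{B}_{1n}^c$ and $\m{B}_{2n}^c$ follow directly from the rate conditions.
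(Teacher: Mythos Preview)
Your proposal is correct and follows essentially the same route as the paper: the same four-way decomposition (stochastic linearization, first-step LPR bias, second-step integral-approximation bias, and higher-order remainders), the same polynomial-reproduction argument for $\m{B}_{2n}^c$, the same uniform-in-$j$ empirical-process bounds to control $\sup_j|r_j|$, and the same identification of $|\Delta_j|=O(K^{-1})$ via the uniform invertibility of the rescaled second-step Gram matrix (your correctly flagged ``main obstacle,'' handled in the paper by Lemma \ref{lemma_rate_int}). The only cosmetic differences are that the paper verifies Lindeberg rather than Lyapunov (trivial here since Assumption \ref{assu_srd_kinf_est_int} makes $\varepsilon_i$ a.s.\ bounded) and splits your single remainder $r_j$ into two pieces (one from replacing $G_n^{j\pm}$ by $\mme[G_n^{j\pm}]$ in the stochastic term, one in the bias term), bounding each separately via Lemma \ref{lemma_app}; your lumped rate $\sup_j|r_j|=O_P(\log n/(n\overline{h}_1))$ is exactly what those two pieces give when combined.
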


A consistent estimator for ${\m{V}}^c_n$ is 
\begin{align}
\ha{\m{V}}_n^c = &\sum_{i=1}^n \left\{
	\ha{\eps}_i^2 \left[
		\sum_{j=1}^K 
			\frac{\Delta_j}{n h_{1j}}
			k\left( \frac{X_i - c_{j}}{h_{1j}} \right)
			e_1'
			\left(
				v_i^{j +}  G_n^{j +}   
				-
				v_i^{j -} G_n^{j -}  
			\right)
			\ti{H}_{i}^{j}				
	\right]^2
\right\},
\label{est:var:c}
\end{align} 
where $\ha{\eps}_i^2$ is computed using Equation \ref{est:resid}.
Lemma \ref{lemma:est:var} in the supplemental appendix's Section \ref{sec:supp:app:est:se}  demonstrates that 
$\ha{\m{V}}_n^c / \m{V}_n^c \pto 1$ under the condition that $(K\underline{h}_1)^{-1}=O(1)$. 
If the bandwidth choices are such that the standardized bias term 
$\left(\m{V}_n^c \right)^{-1/2} \left(\m{B}_{1n}^c + \m{B}_{2n}^c \right)$ differs from  zero asymptotically, then inference must be done using a bias-corrected estimator. 
A practical way of performing bias correction  is to increase the order of the polynomials from $(\rho_1,\rho_2)$ to $(\rho_1+1, \rho_2+1)$,
and to compute 
$\ha{\mu}^{c'}$ and $\ha{\m{V}}_n^{c'}$ using the same bandwidth choices as $\ha{\mu}^{c}$ and $\ha{\m{V}}_n^{c}$.
It follows that $(\ha{\m{V}}_n^{c'})^{-1/2}  (\ha{\mu}^{c'} - \mu^c )\dto N(0,1)$.

Convexity of $\m{C}$, along with the asymptotic behavior of the schedule of cutoff-doses (Assumption \ref{assu_srd_kinf_est_cut}),
is crucial for the numerical integration error to vanish sufficiently quickly, as required by Theorem \ref{theo_srd_kinf_est_int}.
Continuity of  $\omega^c(\bc)$ implies that the boundary of $\m{C}$ has zero probability under the counterfactual distribution.
Therefore, the convergence rate of $\ha{\mu}^c$ is not affected by the value of $\omega^c(\bc)$ over the boundary of $\m{C}$.
In finite samples, local polynomial estimates of $\ha{\beta}(\bc)$ may be noisy for values of $\bc$ at the boundary of the convex-hull of $\m{C}_K$. 
Researchers should take that into account when specifying the support of the counterfactual distribution $\omega^c(\bc)$.

A simple example illustrates the three rate conditions of Theorem \ref{theo_srd_kinf_est_int}.
Suppose $h_{1j}=n^{-\lambda_1}$ for all $j$, 
$h_2=n^{-\lambda_2}$,
and
$K=n^{\theta}$. 
The first-step estimation uses local-linear regression ($\rho_1=1$),
and the second step, local cubic regression ($\rho_2=3$).
The first rate condition 
says the first-step bandwidths  have to converge to zero fast enough to control the asymptotic
bias. That is, 
$\left( K n \overline{h}_1 \right)^{1/2} \overline{h}_1^{\rho_1 + 1} = O(1)$;
in terms of the example, this condition becomes
$\lambda_1 \geq (1+\theta)/(3+2 \rho_1)$.
The second rate condition 
restricts how fast the number of cutoffs grows with $n$. 
It cannot grow too fast to ensure having enough observations around the cutoffs for uniform consistency of first-step estimates.
The second condition has two parts: (a)
${ K^{1/2} \log n}/{ \left( n \overline{h}_1 \right)^{1/2} } =o(1)$
$\Leftrightarrow$
$\lambda_1 < 1 - \theta$;
and (b)
$K\overline{h}_1=O(1)$
$\Leftrightarrow$
$\lambda_1 \geq \theta$.
The third rate condition limits how slowly $K$ grows, relative to the sample size,
to ensure that the integral
approximation error vanishes faster than the estimation variance.
Part (a) of the third condition says 
$\left( K n \overline{h}_1 \right)^{1/2}  h_2^{ \rho_2+1 } =O(1)$
$\Leftrightarrow$
$\lambda_1 \geq 1+\theta - 2 \lambda_2 (\rho_2 + 1)$;
part (b) is
$1 / K h_2^{3} =O(1)$
$\Leftrightarrow$
$\lambda_2 \leq \theta/3$.

Figure \ref{figure_rates} illustrates these conditions and the feasible set for bandwidth choices (shaded area).\footnote{
Section \ref{sec_int_app} in the supplemental appendix 
gives an example of a schedule of cutoff-dose values that 
satisfies Assumption \ref{assu_srd_kinf_est_cut} for feasible choices of $(h_1, h_2)$
in this example.
}
Panel (a) shows the conditions in terms of $(\lambda_1,\theta)$
assuming $\lambda_2=\theta/3$ so that $h_2=K^{-1/3}$, which satisfies part (b) of the third condition.
Panel (b) depicts the same conditions in terms of $(\lambda_1,\lambda_2)$,
assuming $\theta=0.4$.
The feasible set is well-defined  as long as $K$ grows no faster than $\sqrt{n}$, 
that is, $\theta < 0.5$.
In addition,  $\rho_2 \geq 3$ because line 3(a) has to be below line 2(a).
The maximum rate of convergence of the estimator is $\sqrt{n}$, 
and it is reached along the dashed line 2(b).

\begin{figure}[H]
\caption{Rate Conditions of Theorem \ref{theo_srd_kinf_est_int}}
\label{figure_rates}
  \begin{minipage}{ 0.5\textwidth }
    \centering
    (a) First-Step Bandwidth and Number of Cutoffs
    
		\includegraphics[width=3in]{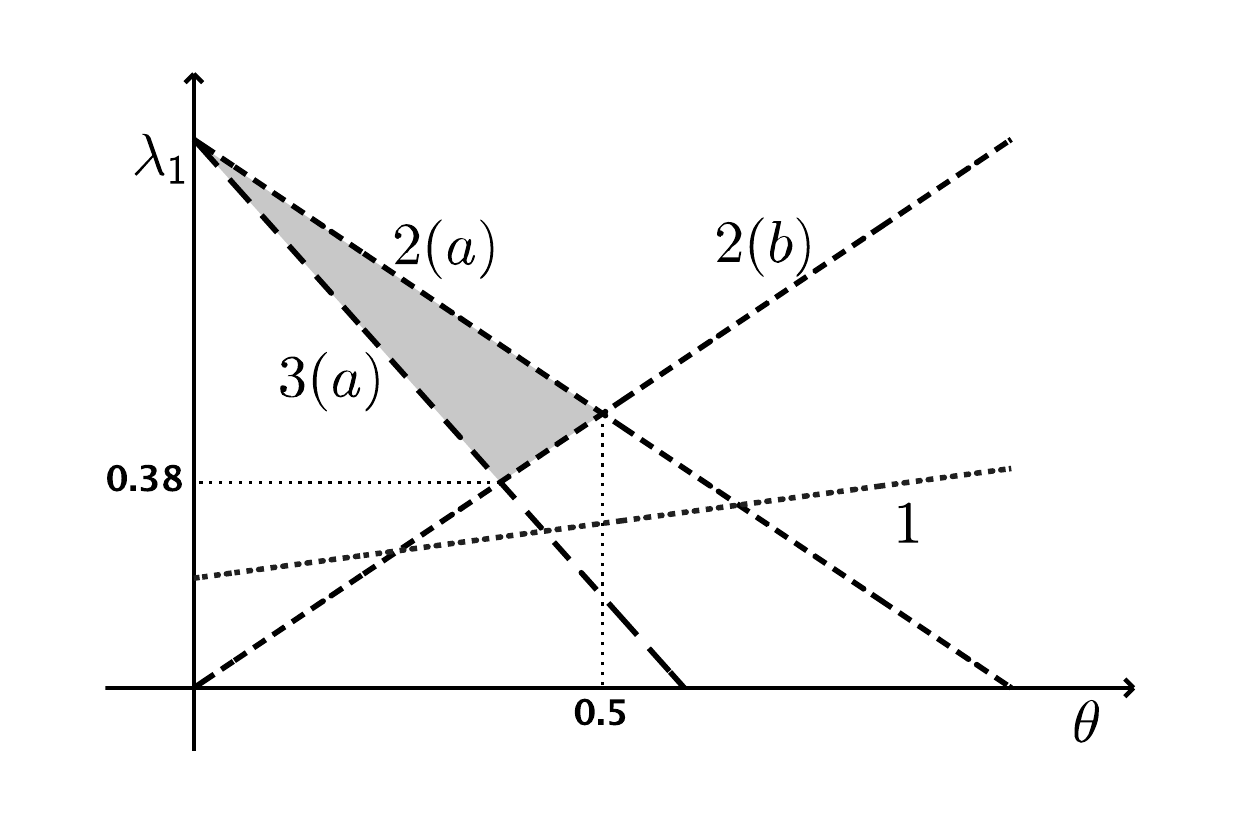}
    
  \end{minipage}
  \begin{minipage}{0.5\textwidth }
    \centering
    (b) First and Second-Step Bandwidths

	\includegraphics[width=3in]{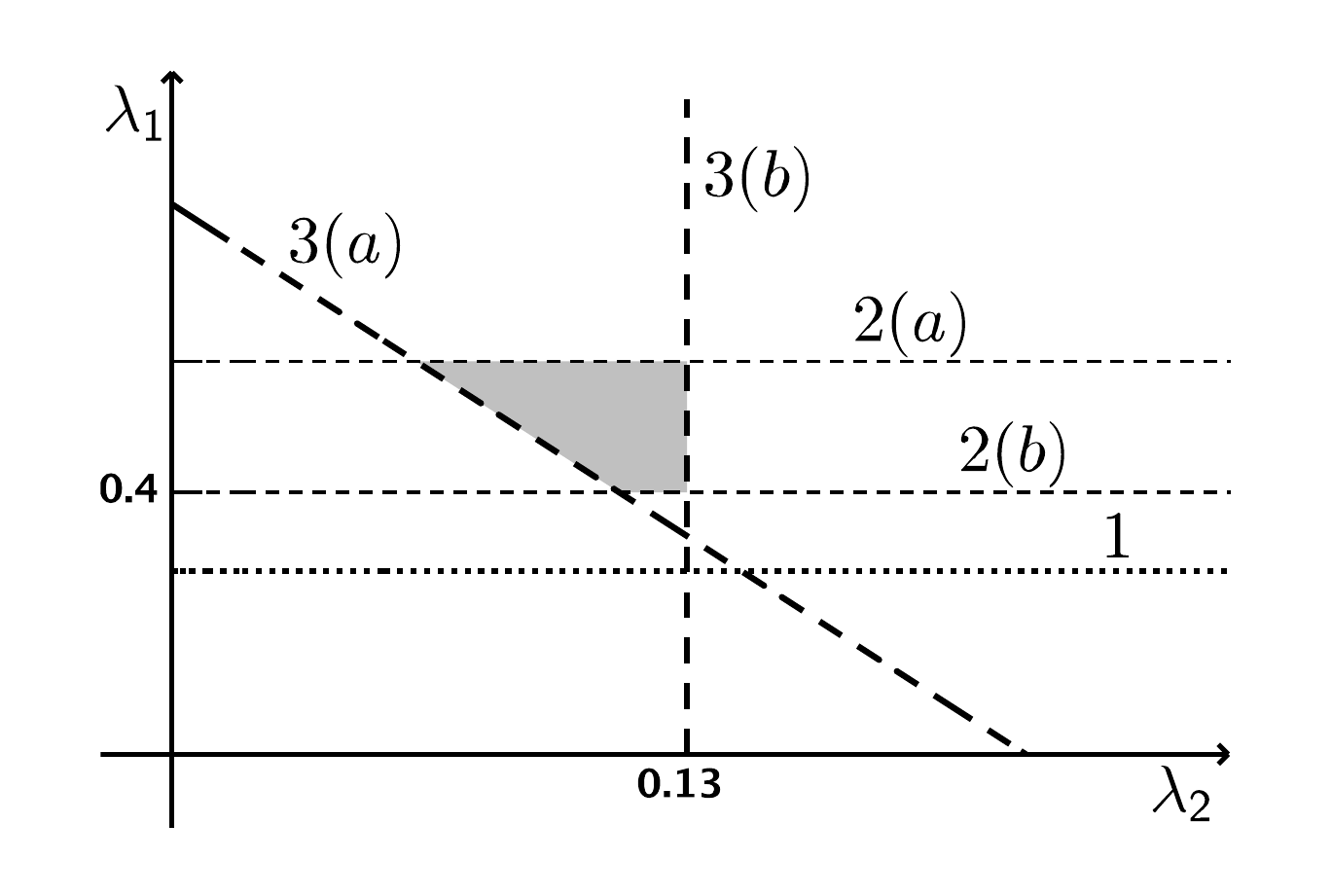}
	
  \end{minipage}%

\caption*{\footnotesize
Notes:
The diagram shows the rate conditions of Theorem \ref{theo_srd_kinf_est_int} 
applied to the case where
$h_{1j}=n^{-\lambda_1}$ for all $j$, 
$h_2=n^{-\lambda_2}$,
and
$K=n^{\theta}$. 
Condition 1, that is 
$\left( K n \overline{h}_1 \right)^{1/2} \overline{h}_1^{\rho_1 + 1} = O(1)$,
is equivalent to 
$\lambda_1 \geq (1+\theta)/(3+2 \rho_1)$;
condition 2(a):  
${ K^{1/2} \log n}/{ \left( n \overline{h}_1 \right)^{1/2} } =o(1)$
$\Leftrightarrow$
$\lambda_1 < 1 - \theta$;
condition 2(b): 
$K\overline{h}_1=O(1)$
$\Leftrightarrow$
$\lambda_1 \geq \theta$;
condition 3(a):
$\left( K n \overline{h}_1 \right)^{1/2}  h_2^{ \rho_2+1 } =O(1)$
$\Leftrightarrow$
$\lambda_1 \geq 1+\theta - 2 \lambda_2 (\rho_2 + 1)$;
and condition 3(b): 
$1 / K h_2^{3} =O(1)$
$\Leftrightarrow$
$\lambda_2 \leq \theta/3$.
Panel (a) illustrates the rate conditions on the first-step bandwidth and number of cutoffs 
$(\lambda_1, \theta)$ for  $\rho_1=1$, $\rho_2=3$,
and $\lambda_2=\theta/3$, so that $h_2=K^{-1/3}$ and condition 3(b) is satisfied.
Panel (b) displays the rate conditions on the bandwidths $(\lambda_1,\lambda_2)$ given 
$\theta=0.4$, $\rho_1=1$, and $\rho_2=3$.
}
\end{figure}

Implementation of Theorem \ref{theo_srd_kinf_est_int} requires the researcher to choose
$\rho_1 \in \mmz_+$, $h_{1j}>0 ~\forall j$, $\rho_2 \in \mmz_+$, $h_2>0$, and $k(\cdot)$.
A theory of optimal choice of these tuning parameters is beyond the goals of this paper.
Optimal choice of bandwidths is an  interesting topic for future research, because optimality in the multi-cutoff case
would account for: 
(i) the interaction between first and second-stage bandwidths;
(ii) the variance reduction from overlapping estimation windows at consecutive cutoffs; 
and
(iii) the recent advances of robust bias-corrected inference and coverage-error optimal bandwidths by \cite{calonico2018optimal}.

The IK bandwidth formula may produce first-step bandwidths with an incorrect rate of convergence.
For example, if $\rho_1=1$, these bandwidths converge to zero at $n^{-0.2}$, 
which is not fast enough if $\rho_2=3$ (Figure \ref{figure_rates}).
A simple way to correct for this is to adjust the bandwidths 
by multiplying them by $n^{0.2-\lambda_1}$ for $\lambda_1 \geq \theta$, so that their rate becomes $n^{-\lambda_1}$.
Conditions 2(a) and (b) imply that $\theta$ is never bigger than $0.5$ regardless of $\rho_1$, $\rho_2$, and $\lambda_2$.
Thus, the smallest value for $\lambda_1$ consistent with these restrictions is $0.5$. 
The same idea applies to the coverage-error optimal bandwidths by \cite{calonico2018optimal}, which converge to zero at rate $n^{-0.25}$, and need to be adjusted.

\label{parag:beta:binary} In certain cases, the $\beta$ function may depend on less than the three arguments $(c,d,d')$.
For example, in the Medicaid application, the treatment is binary and 
$\beta$ is only a function of $c$.
This is a particular case of the  theory in this section.
The only rate condition that changes is condition 3(b).
It becomes $1 / K h_2 =O(1)$,
or $\lambda_2 \leq \theta $ in terms of Figure \ref{figure_rates}.
A non-empty feasible set of bandwidth choices
requires $\rho_2 \geq 1$, as opposed to $\rho_2 \geq 3$ in the general case.

\label{parag:bandw:choice} A simple recommendation to implement Theorem \ref{theo_srd_kinf_est_int} is 
to use the edge kernel, first-step rate-adjusted IK bandwidths for each cutoff, 
and a second-step bandwidth $h_2$ that minimizes the MSE of estimation.
First, use observations pertaining to each cutoff $j$, compute the IK bandwidth $h_{1j}^{ik}$
for sharp RD and local-linear regression; 
adjust the rate of the bandwidths so that $h_{1j}=h_{1j}^{ik} \times n^{-0.3}$.
Second, create a grid of possible values for $h_2$.
For each value on the grid, compute
$\ha{\mu}^c(h_2)$ using the edge kernel, the choices of $h_{1j}$ given above,
$\rho_1=1$, and $\rho_2=3$ (or $\rho_2 = 1$ in the binary treatment case).
Similarly, compute
$\ha{\mu}^{c'}(h_2)$ using the edge kernel, the choices of $h_{1j}$ given above,
$\rho_1=2$, and $\rho_2=4$ (or $\rho_2 = 2$ in the binary treatment case).
Use Equation \ref{est:var:c} to estimate the variance of $\ha{\mu}^c(h_2)$ and call it $\ha{\m{V}}_n^c(h_2)$.
Evaluate the approximated MSE of $\ha{\mu}^{c}(h_2)$ by 
$(\ha{\mu}^{c}(h_2) - \ha{\mu}^{c'}(h_2))^2 + \ha{\m{V}}_n^c(h_2)$.
Choose the bandwidth value on the grid that minimizes the MSE and call it $h_2^*$.
The bias-corrected estimate is $\ha{\mu}^{c'}(h_2^*)$, and its variance estimate is
$\ha{\m{V}}_n^{c'}(h_2^*)$.

It may not be immediately clear that root-$n$ is the fastest estimation rate achievable in a setting where 
both $K$  and $n$ grow large.
The double asymptotic setting is conceptually different from the usual asymptotic setting where only $n \to \infty$ and 
non-parametric averages are estimable at root-$n$.
Estimation rates depend not only on bandwidth choices, but also on how fast $K$ grows, relative to $n$.
Similar examples in econometrics include panels with a large number of observations and time periods, 
and asymptotics with many instruments.
The following theorem demonstrates that the minimax optimal rate of estimation of $\mu^c$ is indeed root-$n$,
 as long as first-step bandwidths converge to zero at $1/K$ rate.

\begin{theorem}\label{theo_srd_kinf_minimax}
Let $\m{P}$ be the class of models generating potential outcomes $\{ Y_i(d) \}_{d \in \m{D}}$ and forcing variables $X_i$.
For a schedule of cutoffs and doses, $\{ \bc_j \}_{j=1}^K$, observed data $(Y_i,X_i,D_i)$ are generated iid from $P \in \m{P}$
as described in Section \ref{sec_case1}. 
Assume that 
(i) each model $P \in \m{P}$ satisfies Assumptions \ref{assu_srd_est_fx}-\ref{assu_srd_kinf_est_int};
(ii) $f(x)$ and $\sigma^2(x,d)$ are bounded away from zero uniformly in $\m{P}$;
(iii) the following functions are bounded uniformly in $\m{P}$:
$\nabla_x^{\rho}\sigma^2(x,d) ~~ \forall \rho \leq 1$ ,
$\nabla_x^{\rho} f(x)~~ \forall \rho \leq 1$,
$\nabla_x^{\rho} R(x,d) ~~ \forall \rho \leq \bar{\rho}$,
$\nabla_d^{\rho} R(x,d) ~~ \forall \rho \leq \bar{\rho}$,
where $\bar{\rho} = \max\{\rho_1+2, \rho_2+2 \}$;
and
(iv) there exists $M \in (0,\infty)$ such that 
$\mmp[|Y_i(d) - R(X_i,d)| < M]=1 ~~ \forall d \in \m{D}$ uniformly in $\m{P}$.
Then, for any $\epsilon>0$, there exists $\eta >0$ such that
\begin{gather}
\inf_{\ti{\mu}} \sup_{P \in \m{P}} \mmp_{P}  \left[ \sqrt{n} | \ti{\mu} - \mu^c(P) | > \epsilon/2 \right]  \geq \frac{1}{4 \eta} ~\text{ for large } n.
\label{theo_srd_kinf_minimax_a}
\end{gather}
The inf is taken over all estimators $\ti{\mu}$ built using the observed data $(Y_i,X_i,D_i)$, $i=1,\ldots,n$;
$\mu^c(P)=\int \omega^c(\bc) \beta(\bc;P) ~d \bc$ with $\beta(\bc;P) = \mme_P [Y_i(d')-Y_i(d) |X_i=c ]$;
and
$\mmp_{P}$ and $\mme_{P}$ denote the probability and expectation under model $P \in \m{P}$.  

Assume the conditions of Theorem \ref{theo_srd_kinf_est_int},
and that first-step bandwidths satisfy $\overline{h}_{1}  = O(K^{-1})$.
Consider the estimator $\ha{\mu}^c$ defined in Equation \ref{est:mu:c}.
For any small $\delta>0$, there exists large $\epsilon \in (0,\infty)$ such that
\begin{gather}
\sup_{P \in \m{P}} \mmp_{P}  \left[ \sqrt{n} | \ha{\mu}^c - \mu^c(P) | > \epsilon \right] < \delta  ~ \text{ for large } n.
\label{theo_srd_kinf_minimax_b}
\end{gather}
\end{theorem}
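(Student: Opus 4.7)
The proof splits into a minimax lower bound (\ref{theo_srd_kinf_minimax_a}) handled by Le Cam's two-point method, and an achievability statement (\ref{theo_srd_kinf_minimax_b}) that follows from Theorem \ref{theo_srd_kinf_est_int} once the bandwidth restriction $\overline{h}_{1} = O(K^{-1})$ is imposed.

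For the lower bound, I would fix a baseline $P_0 \in \m{P}$ with parameters strictly in the interior of the bounds of conditions (ii)--(iv)---for instance with $f$ uniform on $\m{X}$, constant conditional variance $\sigma^2$, bounded symmetric noise, and some sufficiently smooth $R_0(x,d)$---and then define a second model $P_1$ by setting $R_1(x,d) = R_0(x,d) + \delta_n g(d)$, where $g$ is a fixed smooth bounded function chosen so that $A := \int \omega^c(c,d,d')[g(d') - g(d)]\, d(c,d,d') \neq 0$, and $\delta_n = c_\ast / \sqrt{n}$ for some small $c_\ast > 0$. Because the perturbation is additive and of size $O(n^{-1/2})$ with uniformly bounded derivatives, both $P_0$ and $P_1$ lie in $\m{P}$ for all large $n$. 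The parameter gap is $|\mu^c(P_1) - \mu^c(P_0)| = |A| c_\ast n^{-1/2}$, while the Kullback--Leibler divergence between the joint laws of the $n$ observations is at most $n \delta_n^2 \|g\|_\infty^2 / (2 \sigma^2) = c_\ast^2 \|g\|_\infty^2 / (2\sigma^2)$, which remains bounded. Applying Le Cam's inequality (e.g., Tsybakov 2009, Theorem 2.2) with $\epsilon = |A| c_\ast$ then yields (\ref{theo_srd_kinf_minimax_a}) with $\eta$ determined by the KL budget.

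For the upper bound, Theorem \ref{theo_srd_kinf_est_int} delivers the decomposition $\ha{\mu}^c - \mu^c = \m{B}_{1n}^c + \m{B}_{2n}^c + O_P((\m{V}_n^c)^{1/2})$. Inspection of the variance formula (which by the paper's description is \eqref{def:var:d} with $\Delta_j$ replacing $\omega_j^d$) shows that $\m{V}_n^c \asymp (K n \overline{h}_{1})^{-1}$ under condition (ii) and the density bounds; combined with the new hypothesis $\overline{h}_{1} = O(K^{-1})$ and condition 2(b) of Theorem \ref{theo_srd_kinf_est_int}, this pins $K\overline{h}_{1}$ at a constant order and hence $\m{V}_n^c \asymp n^{-1}$. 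The bias orders $\m{B}_{1n}^c = O_P(\overline{h}_{1}^{\rho_1+1})$ and $\m{B}_{2n}^c = O(h_2^{\rho_2+1})$, combined respectively with rate conditions (i) and (iii)(a) of Theorem \ref{theo_srd_kinf_est_int}, give both biases of order $O(n^{-1/2})$. Therefore $\sqrt{n}(\ha{\mu}^c - \mu^c) = O_P(1)$, and Markov's inequality delivers (\ref{theo_srd_kinf_minimax_b}).

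The principal technical obstacle is upgrading the pointwise-in-$P$ control provided by Theorem \ref{theo_srd_kinf_est_int} to the uniform-in-$P$ tail bound required by (\ref{theo_srd_kinf_minimax_b}). To address this, I would revisit the proof of Theorem \ref{theo_srd_kinf_est_int} and verify that every implicit constant depends on $P$ only through quantities bounded uniformly in $\m{P}$ by conditions (ii)--(iv): the positive lower bounds on $f$ and $\sigma^2$, the uniform derivative bounds on $f$, $\sigma^2$, and $R$, and the almost-sure residual bound $M$. Under these uniform controls, the second moment of the stochastic error $\ha{\mu}^c - \mu^c - \m{B}_{1n}^c - \m{B}_{2n}^c$ is uniformly $O(n^{-1})$ and the two bias terms are uniformly $O(n^{-1/2})$; Chebyshev's inequality applied $P$-by-$P$ with these universal constants then produces the tail bound for any sufficiently large $\epsilon$. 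A secondary concern on the lower-bound side is ensuring $P_1 \in \m{P}$, which is automatic for small enough $c_\ast$ because the constraints in (ii)--(iv) are strict.
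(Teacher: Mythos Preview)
Your upper-bound argument for (\ref{theo_srd_kinf_minimax_b}) is essentially the paper's: revisit the decomposition behind Theorem~\ref{theo_srd_kinf_est_int}, observe that every constant depends on $P$ only through quantities bounded uniformly by (ii)--(iv), and apply Chebyshev term by term. One minor point: the hypothesis $\overline{h}_1=O(K^{-1})$ and condition~2(b) of Theorem~\ref{theo_srd_kinf_est_int} are both \emph{upper} bounds on $K\overline{h}_1$; to conclude $\m{V}_n^c\asymp n^{-1}$ you also need $K\overline{h}_1$ bounded away from zero, which those two conditions alone do not provide.

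The lower bound contains a genuine gap. You perturb via a location shift $R_1=R_0+\delta_n g(d)$ with ``bounded symmetric noise'' and claim the Kullback--Leibler bound $n\delta_n^2\|g\|_\infty^2/(2\sigma^2)$. That formula is the Gaussian location-family identity; for any noise with compact support---which condition~(iv) forces---a nonzero location shift moves the support, so the two conditional laws of $Y_i\mid X_i$ fail to be mutually absolutely continuous and $KL=+\infty$. Le~Cam's KL-based two-point inequality then yields only the trivial bound. The paper circumvents this with a monotone-transformation construction: take $\varepsilon_i\sim N(0,1)$, set $Y_i(d)=\Phi\bigl(\xi n^{-1/2}g(X_i,d)+\varepsilon_i\bigr)$ under $P$ and $Y_i(d)=\Phi(\varepsilon_i)$ under $Q$, with $\Phi$ the standard normal cdf. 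The outcomes lie in $(0,1)$ so (iv) holds, and by invariance of KL under bijections the divergence is computed on the Gaussian scale, giving $n\,KL(P,Q)=\tfrac12\xi^2\int g(x,d^\ast)^2\,dx=O(1)$. A possible repair of your route is to replace KL by Hellinger or total variation in Le~Cam's inequality and choose a noise density vanishing to sufficiently high order at its endpoints, but that requires substantially more care than the displayed formula suggests.
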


Equation $\ref{theo_srd_kinf_minimax_a}$ shows that no estimator converges faster than $\sqrt{n}$ uniformly over $\m{P}$.
Equation $\ref{theo_srd_kinf_minimax_b}$ says the estimator proposed in Theorem \ref{theo_srd_kinf_est_int} converges at root-$n$ uniformly over $\m{P}$ as long as first-step bandwidths converge to zero at $1/K$ rate.
Therefore, root-$n$ is the minimax optimal rate of convergence in the non-parametric estimation of ATE in RDD with many thresholds.
Authors have previously analyzed minimax optimality of non-parametric estimators of a regression function at a boundary point,
for example, \cite{cheng1997} and \cite{sun2005}. 
Theorem \ref{theo_srd_kinf_minimax} is novel because it combines boundary points 
to estimate averages of non-parametric regression functions.

\section{Fuzzy Case with Multiple Cutoffs}\label{sec_case3}
\indent

This section relaxes the sharp assignment mechanism of previous sections and studies the fuzzy RDD case.
The analysis focuses on multiple cutoffs,  but $K$ is finite as opposed to approaching infinity as in Section \ref{sec_case2}.
This makes the exercise more tractable, because the number of compliance cases grows super-exponentially with the number of cutoffs.
In contrast to the sharp case, non-parametric identification of local effects in the fuzzy case is impossible.
As a result, inference methods in this section rely on a second heterogeneity assumption, namely, the treatment effect function is assumed parametric.
Section \ref{sec_fuzzy_infer}, in the supplemental appendix, provides practical guidelines to compute an MSE-optimal ATE estimator, and demonstrates asymptotic normality.

In the sharp RDD case, all individuals with forcing variable equal to $x$ receive the same treatment $D(x)$ (Equation \ref{def_treat_sch}).
In the fuzzy RDD case, many of these individuals may receive treatments different
from $D(x)$.
In the high school
assignment example, students may choose to go to a school that is not the best school for which they are eligible.
For instance, a student may want to
attend the same high school as a certain friend or sibling.
Another example is given by \cite{garibaldi2012}.
In their study, a schedule of tuition subsidies applies to most students at Bocconi University, but the university reserves the right
to grant certain students different subsidies after reassessing their ability to pay.\footnote{The
source of fuzziness varies across applications.
One example is the case where
the assignment of individuals into different treatments
is made through a matching mechanism, and the econometrician does not observe
all the individual characteristics used in the matching algorithm. This is the reason
why the RDD of PU is fuzzy:
based on
the entire distribution of test scores and preferences, the central planner ranks students by their test scores
and assigns each one to her preferred school among schools with vacancies.
}

The fuzzy RDD case is modeled in terms of a potential treatment assignment framework.
A potential treatment assignment function $\m{U}:\m{X} \to \m{D}$ describes the treatment received for every value of the forcing variable $x \in \m{X}$.
For simplicity, these functions are assumed to belong to the following class:
\begin{gather}
\m{U}^* =
\Bigg\{ \m{U}: \m{X} \to \m{D} ~:~
 \m{U}(x) = \sum\limits_{j=0}^K u_{j} \mmi \left\{c_j \leq x < c_{j+1} \right\}
\nonumber
\\* 
\hspace{1cm} \text{ for some }
u_j \in \{d_0, \ldots, d_K \}, j=0,\ldots,K
\Bigg\}.
\label{eq_def_ustar}
\end{gather}

Sharp RDD is the particular case where the individual potential treatment assignment function $\m{U}_i$ is the same for every individual $i$, that is, $\m{U}_i(x)=D(x) ~ \forall i$
with $D(x)$ defined in Equation \ref{def_treat_sch}.
In the fuzzy case, $\m{U}_i$ is sampled iid from a distribution of functions with support in $\m{U}^*$.
Potential treatment functions $\m{U}_i (x)$ are unobserved, but the treatments received
are observed and given by
\begin{gather*}
D_i =
 \sum\limits_{j=0}^K \m{U}_i (c_j) \mmi \left\{c_j \leq X_i < c_{j+1} \right\}.
\end{gather*}

Using classic definitions of compliance behaviors (\cite{imbens1997}),
three types of compliance groups are defined in terms of \textit{changes} in treatment eligibility.
``Never-changers'' are those whose treatment received never changes when eligibility changes.
The treatment received by ``ever-compliers'' or ``ever-defiers'' changes at least once
when eligibility changes.
Ever-compliers are those whose treatment received changes if and only if it changes to the
treatment dose for which they become eligible.
Ever-defiers change to a treatment dose different from the one for which they become eligible.
In the case of one cutoff and two treatments, the definition of ever-complier (ever-defier) is equivalent to the classic definition of complier (defier) of \cite{imbens2008regression}.

The three compliance groups are measurable events that partition the population of individuals with
$\mathbf{G}_{nc}$ denoting never-changers, $\mathbf{G}_{ec}$  ever-compliers, and
$\mathbf{G}_{ed}$ ever-defiers.\footnote{
These definitions allow for non-monotonic treatment schedules; for example,
the average class-size varies non-monotonically across cutoffs on enrollment (\cite{angrist1999}).
Table \ref{table_compliance} in Section \ref{sec_fuzzy_table} of the supplemental appendix
illustrates these definitions of compliance groups using a simple example with
3 treatments and 2 cutoffs.
}
\begin{gather}
\mathbf{G}_{nc} = \Big\{ \m{U}_i  \in \m{U}^* : \left\{ j :~  \m{U}_i(c_{j-1}) \neq \m{U}_i(c_j) \right\} = \emptyset \Big\}
\label{def_neverchanger} \\
\mathbf{G}_{ec} =
\Big\{
 \m{U}_i  \in \m{U}^* :
  \left\{j:~  \m{U}_i(c_j) = D(c_j)  \right\}
  \supseteq
  \left\{j:~  \m{U}_i(c_{j-1}) \neq \m{U}_i(c_j) \right\} 
    \neq \emptyset 
\Big\}
\label{def_evercomplier} \\
\mathbf{G}_{ed} =
\Big\{
 \m{U}_i  \in \m{U}^* : 
 \left\{~
  \left\{j:~  \m{U}_i(c_j) \neq D(c_j)  \right\}
  \cap
  \left\{j:~  \m{U}_i(c_{j-1}) \neq \m{U}_i(c_j) \right\} 
  ~\right\}
    \neq \emptyset 
\Big\}
\label{def_everdefier}
\end{gather}
where  $\emptyset$ denotes empty set.

In the high school assignment case, an example of a never-changer is a student who strongly
prefers the high school with the lowest admission cutoff and attends that high school
even if she is admitted to better schools.
An example of an ever-complier is a student
who attends the best school into which she is admitted,
or a student who chooses the best school among the nearby schools.
Suppose a student has rational preferences and is never indifferent.
Assume her choice set is equal to those schools with admission cutoffs that are less than or equal to her test score.
Then, such a student is never an ever-defier.
In other words, as her test score increases, a new school is added to her choice-set of schools;
she either chooses to go to the new school for which she becomes eligible,
or she stays
at the school which she preferred prior to the increase in her choice-set.
Thus, it seems natural to rule out ``ever-defiers'' in this and other applications.

Never-changers do not produce changes in treatments, so there is no identification on them.
For ever-compliers, there are multiple possible changes in treatment at a given cutoff,
and ever-compliers may differ in terms of the treatments they comply with.
For example, the student who is willing to attend the best school possible complies with all
changes in treatment eligibility. 
On the other hand, the student who is willing to attend the best possible school within a certain distance from home only complies with some of the changes in treatment eligibility.
Therefore, besides no-defiance, identification also requires the heterogeneity of ever-compliers  to be restricted.

Assumption \ref{assu_frd_id_nodef} generalizes
the sufficient conditions for identification
on compliers in
the one-cutoff case (\cite{hahn2001id} and \cite{dong_alternative}).
In addition, it restricts the heterogeneity on ever-compliers.

\loadaspt{00H} 

A fuzzy assignment produces several different treatment changes at each cutoff, even after ruling out ever-defiers.
 The researcher only observes one aggregate change in $Y_i$ at each cutoff, 
 but there are several treatment effects on ever-compliers to be identified at that cutoff.
Theorem \ref{theo_frd_id} below shows that identification of these effects is not possible without further restricting the class of functions
$\beta_{ec}(\bc)$.
Economic theory or \textit{a priori} knowledge guides the
choice of a functional form that credibly summarizes the heterogeneity of treatment effects.
For example, the principal-agent model of \cite{bajari2017} yields a functional form to study reimbursement of hospitals by insurers.
The second heterogeneity assumption (Assumption \ref{assu_srd_id_param})
 restricts the treatment effect function on ever-compliers
to a finite-dimensional vector space of functions.
\loadaspt{00G} 

In this case, the ATE on ever-compliers is a linear combination of the true parameter vector $\btheta_0^{ec}$.
For a counterfactual distribution $F$ chosen by the researcher,
\begin{align}
\mu^{ec}(F) = & \int \beta(\boup{c};\btheta_0^{ec})  ~dF(\bc)
\\
= & \underset{\equiv \boup{Z}(F)}{\underbrace{ \int \left[\boup{\m{W}}(c,d')-\boup{\m{W}}(c,d) \right]' ~dF(\bc)}}  \btheta_0^{ec}
\\
= & \boup{Z}(F) \btheta_0^{ec}.
\label{eq_mu3}
\end{align}

Theorem \ref{theo_frd_id} shows that the observed change in average outcome
at a given cutoff is a weighted average
of treatment effects on ever-compliers who switch from various doses into the dose
of eligibility at that cutoff.
Assumption \ref{assu_srd_id_param} and variation in cutoff characteristics
are sufficient conditions for identification. Conversely,
identification on ever-compliers implies that
$\beta_{ec}(\bc)$ belongs to a finite-dimensional class of functions.
\begin{theorem}\label{theo_frd_id}
Under Assumption \ref{assu_frd_id_nodef}, for $j=1,\ldots,K$,
\begin{gather*}
 B_{j} = \sum\limits_{l =0, l \neq j}^{K} \omega_{j,l} \beta_{ec}(c_{j},d_{l},d_{j})
\end{gather*}
where $B_{j}$ is defined in Equation \ref{eq_Bpj},  and  
\begin{gather}
\omega_{j,l}=\lim_{e \downarrow 0} \left\{
\mmp[D_i = d_{l} | X_i=c_{j}-e] -
\mmp[D_i = d_{l} | X_i=c_{j}+e]
\right\},
\nonumber 
\end{gather}
for  $l=0,1,\ldots,K, l \neq j$.

Moreover, suppose $\beta_{ec}$ belongs to the class of functions $\m{H}$
defined in Assumption \ref{assu_srd_id_param} with $q \leq K$.
Define
\begin{gather}
\widetilde{W}_{j} =
\sum_{l =0, l \neq j}^{K} \omega_{j,l} \left[ \boup{\m{W}}(c_{j},d_{j}) - \boup{\m{W}}(c_{j},d_{l}) \right]
\label{eq_def_Wtilde}
\end{gather}
for the vector-valued function
$\boup{\m{W}}(c,d)$ of Assumption \ref{assu_srd_id_param};
build a $K \times q$ matrix
$\widetilde{\boup{W}}$ by stacking $\widetilde{W}_{j}$,
and $\boup{ B}$ by stacking $B_{j}$.
If $\widetilde{\boup{W}}' \widetilde{\boup{W}}$ is
invertible, then $\beta_{ec}(\bc)$ is identified and equal to
\[
\beta_{ec}(\bc) = \left[ \boup{\m{W}}(c,d') - \boup{\m{W}}(c,d) \right]' 
\left( \widetilde{\boup{W}}' \widetilde{\boup{W}} \right)^{-1}
\widetilde{\boup{W}}' \boup{ B}.
\]

Conversely, suppose $\beta_{ec}$ belongs to some class of functions $\ti{\m{H}}$, and treatment effects on ever-compliers are identified at the $p>K$ cutoff-dose values
$\{\tilde \bc: \tilde \bc = (c_j,d_l,d_j) \text{ with } \omega_{j,l}>0 \}$
of every possible fuzzy assignment generated from the given schedule of cutoffs $\{ \bc_j \}_{j=1}^K$. Then,
the class of functions $\ti{\m{H}}$ is ``finite dimensional'' in the sense
that
\begin{gather*}
\m{G} =
\Big\{
\Big( \beta(\tilde \bc_1),\ldots, \beta(\tilde \bc_p) \Big) : \text{ for } \beta \in \ti{\m{H}}
\Big\} \subseteq \mmr^p
\end{gather*}
has $dim \m{G} \leq K$
for every fuzzy assignment $\{\tilde \bc_j \}_{j=1}^p$
generated from $\{ \bc_j \}_{j=1}^K$.
\end{theorem}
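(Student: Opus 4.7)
The plan is to establish the three assertions in sequence: a reduced-form identity decomposing $B_j$ along the initial dose, a closed-form OLS solution for $\btheta_0^{ec}$ under the parametric restriction, and a dimension bound in the converse. For the first assertion, I would start from the definition of $B_j$ in Equation \ref{eq_Bpj}. Because $D_i = \m{U}_i(c_{j-1})$ just below $c_j$ and $D_i = \m{U}_i(c_j)$ just above, the continuity conditions collected in Assumption \ref{assu_frd_id_nodef} let me pass to the limit and write $B_j = \mme[Y_i(\m{U}_i(c_j)) - Y_i(\m{U}_i(c_{j-1})) \mid X_i = c_j]$. Conditioning on the pair $(\m{U}_i(c_{j-1}), \m{U}_i(c_j))$ and invoking no-defiance, only individuals with $\m{U}_i(c_{j-1}) = d_l$ for some $l \neq j$ and $\m{U}_i(c_j) = d_j$ contribute a nonzero difference; the conditional mean of $Y_i(d_j) - Y_i(d_l)$ on that event equals $\beta_{ec}(c_j, d_l, d_j)$ by definition of the ever-complier treatment effect. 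Because no-defiance forbids any inflow into dose $d_l$ at $c_j$ for $l\neq j$, the drop $\omega_{j,l}$ in $\mmp[D_i = d_l \mid X_i = c]$ across $c_j$ captures exactly the mass of $(d_l \to d_j)$-switchers, and summing over $l\neq j$ yields the stated identity.

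For the second assertion, substituting the parametric form $\beta_{ec}(c,d,d') = [\boup{\m{W}}(c,d') - \boup{\m{W}}(c,d)]'\btheta_0^{ec}$ into the identity collapses it to $B_j = \widetilde{W}_j' \btheta_0^{ec}$ with $\widetilde{W}_j$ as in Equation \ref{eq_def_Wtilde}. Stacking across $j$ gives the linear system $\boup{B} = \widetilde{\boup{W}}\,\btheta_0^{ec}$; invertibility of $\widetilde{\boup{W}}' \widetilde{\boup{W}}$ delivers the OLS expression $\btheta_0^{ec} = (\widetilde{\boup{W}}' \widetilde{\boup{W}})^{-1}\widetilde{\boup{W}}' \boup{B}$, and substituting back into the parametric form yields the stated formula for $\beta_{ec}(\bc)$. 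For the converse, the identity from the first step gives, for each fuzzy assignment, $\boup{B} = \Omega (\beta(\tilde\bc_1), \ldots, \beta(\tilde\bc_p))'$ for a $K \times p$ matrix $\Omega$ whose entries are the relevant $\omega_{j,l}$'s. Identification at the $p > K$ points requires this linear map to be injective on the image set $\m{G}\subseteq \mmr^p$, i.e., $(\m{G} - \m{G}) \cap \ker\Omega = \{0\}$. Since $\mathrm{rank}(\Omega) \leq K$, the kernel has dimension at least $p - K > 0$; a standard dimension count in $\mmr^p$ then forces the affine span of $\m{G}$ to have dimension at most $K$, which must hold for every fuzzy assignment.

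I expect the first step to be the main obstacle. The limit-and-conditional-expectation manipulations that pass from $B_j$ to $\mme[Y_i(\m{U}_i(c_j)) - Y_i(\m{U}_i(c_{j-1})) \mid X_i = c_j]$ require careful use of the continuity hypotheses baked into Assumption \ref{assu_frd_id_nodef}, and, more importantly, the identification of $\omega_{j,l}$ with the conditional mass of $(d_l\to d_j)$-switchers hinges on correctly translating no-defiance into the statement that no probability mass flows \emph{into} $d_l$ across $c_j$ for $l\neq j$. Once this decomposition is in place, Steps 2 and 3 are essentially linear algebra.
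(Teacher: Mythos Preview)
Your proposal is correct and mirrors the paper's argument closely: the paper likewise expands each side limit $\mme[Y_i\mid X_i=c_j\pm e]$ by conditioning on $(\m{U}_i(c_{j-1}),\m{U}_i(c_j))$, uses no-defiance to kill all transitions except $d_l\to d_j$, and then identifies $\omega_{j,l}$ with the conditional mass of those switchers; Steps~2 and~3 are the same linear algebra you outline.

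One small point on Step~3: from $(\m{G}-\m{G})\cap\ker\Omega=\{0\}$ alone you cannot conclude that the \emph{linear span} of $\m{G}-\m{G}$ avoids $\ker\Omega$, so your ``standard dimension count'' does not go through for an arbitrary image set $\m{G}$. The paper closes this by taking $\m{G}$ to be a vector subspace of $\mmr^p$ (so that $\m{G}-\m{G}=\m{G}$ and the trivial-intersection condition directly yields $\dim\m{G}+\dim\ker\Omega\le p$); you should make that structural assumption explicit rather than asserting the bound for the affine span of a generic set.
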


Theorem \ref{theo_frd_id} reveals the requirement of stronger functional form assumptions on $\beta_{ec}(\bc)$ even for identification of local effects in the fuzzy case with a finite number of multiple cutoffs.
For example, identification is not possible when $\ti{\m{H}}$ is the class of all smooth functions studied in the non-parametric case of Section \ref{sec_case2}.
The result is striking because non-parametric identification of local effects is possible both in the sharp case with a finite number of cutoffs and in the fuzzy case with a single cutoff. 
It is likely possible to obtain non-parametric identification of $\beta_{ec}(\bc)$ under a large variation of cutoff-dose values. 
The function $\beta_{ec}(\bc)$ may be approximated by a sequence of parametric functions from Assumption \ref{assu_srd_id_param},
 where $q$ grows to infinity more slowly than $K$,
so to keep $dim \m{G} \leq K$ as $K\to \infty$. 
In this paper, the number of cutoffs is kept finite for simplicity, and the case with large $K$ is deferred to future work.

Theorem \ref{theo_frd_id} also clarifies the interpretation of
two-stage least squares (2SLS) estimates in applications of fuzzy RD with multiple cutoffs,
a common practice in applied work.
The practice consists of using $D(X_i)$ as an instrument for $D_i$ in the regression
of $Y_i$ on a constant, $D_i$, and $X_i$. See \cite{angrist2008mostly} for a discussion.
In the single-cutoff case,
both the non-parametric RD estimator and 2SLS applied to a neighborhood of the cutoff are consistent to the average treatment effect on compliers (\cite{hahn2001id}).
To my knowledge, such an equivalence has never been studied in the multiple-cutoff case.
Nevertheless, many important applications have multiple-fuzzy cutoffs and use 2SLS;
for example, \cite{angrist1999}, \cite{chen2008vanderklaauw}, and \cite{hoekstra2009effect}. 
The 2SLS estimator is consistent for a data-driven
weighted average of treatment effects
on ever-compliers
as long as
a sufficiently flexible specification is used; for example, cutoff fixed-effects or varying slopes.
The economic meaning of the 2SLS estimands depends crucially on the choice of
such a weighting scheme. 
Unless a parametric functional form is imposed on $\beta_{ec}(\bc)$,  or there is large variation in cutoff-doses,
only a data-driven weighted average of $\beta_{ec}(\bc)$ is identified.
In other words,
if $\beta_{ec}(\bc)$ is non-parametric and there are only a few cutoffs,
the researcher does not have control over the weighting scheme,
and 2SLS estimates don't have a clear interpretation.

Theorem \ref{theo_frd_id} leads to a two-step estimation procedure for $\btheta^{ec}_0$ and 
$\mu^{ec}$.
The mechanics are similar to the previous sections, 
so I omit the details from the main text for brevity. 
In the first step,
the researcher estimates the jump discontinuity of the vector $[Y_i ~~ \bmW(X_i,D_i)' ]'$
using LPRs at each cutoff to obtain $[ \ha{B}_j ~~ \ha{\ti{W}}_j']'$.
In the second step,
a regression of  $\ha{B}_j$ on $\ha{\bWt}_j'$  obtains $\ha{\btheta}^{ec}$.
The ATE estimator is $\ha{\mu}^{ec} = \bZ(F) \ha{\btheta}^{ec}$.
Estimation precision varies across cutoffs, 
and the parametric form of $\beta_{ec}$
allows us to optimally combine different cutoffs to minimize the MSE of  $\ha{\btheta}^{ec}$.
The researcher can simply re-weight the second-step regression by the inverse of the MSE matrix of the first-step estimators.
Section \ref{sec_fuzzy_infer}  in the supplemental appendix
delineates the estimation and inference procedures
of $\btheta^{ec}_0$ and $\mu^{ec}$
with practical steps.

\section{Simulations}\label{sec_simul}
\indent

In this section, Monte Carlo simulations illustrate the finite sample behavior of the ATE estimator proposed in Section \ref{sec_case2}.
The analysis considers estimation precision and coverage of confidence intervals  for different choices of tuning parameters and a non-linear specification for $\beta$.
As predicted by Theorem \ref{theo_srd_kinf_est_int},
 an incorrect choice of the second-step polynomial degree 
leads to severe bias and extremely poor coverage of confidence intervals.
Moreover, first-step bandwidths that imply overlapping estimation windows 
produce lower MSE than cases with no overlap, regardless of other tuning parameters.

The DGP draws $n$ iid observations of $(X_i,\eps_i)$ where $X_i$ is uniformly distributed over $[0,1]$,
$\eps_i$ is  normally distributed with zero mean and unit variance, and these variables are independent of each other.
There are $K$ cutoffs $c_j = j/(K+1)$, $j=1, \ldots, K$, on the unit interval $[0,1]$.
The number of cutoffs is $K=\lfloor n^{0.4} \rfloor$, where $\lfloor a \rfloor$ denotes the largest integer smaller than or equal to $a$.
An individual with forcing variable $X_i$ receives a treatment dose equal to $D(X_i)$ as in Equation \ref{def_treat_sch}.
The dose increases by one unit at each cutoff, starting at $d_0 = 1$ and ending at $d_K=K+1$. 
The outcome variable is $Y_i = \phi(X_i) D(X_i) + \eps_i$ where
$\phi(X_i) = 15 X_i^3 + 7.5 X_i^2 - 18.75 X_i +2.125$. 
This implies that $\beta(\bc)=\phi(c)(d'-d)$, which falls into the binary treatment case (see discussion on page \pageref{parag:beta:binary}).
Consider a counterfactual policy that uniformly increases treatment doses by one unit.
The ATE parameter $\mu$ is the integral of $\phi(c)$ over $c \in [0,1]$, which equals $-1$ in this case.
 
Estimation follows the procedure suggested in Section \ref{sec_case2}.
For given bandwidth choices $h_1$ and $h_2$, 
I compare the ATE estimator $\ha{\mu}$ that uses $\rho_1=\rho_2=1$,
to the bias-corrected ATE estimator $\ha{\mu}^{bc}$ that uses $\rho_1=\rho_2=2$. 
To emphasize the importance of the second step, I also compute a naive ATE estimator that simply averages the first-step estimates.
The naive and bias-corrected naive estimators, respectively $\ti{\mu}$ and $\ti{\mu}^{bc}$,
are constructed as $\ha{\mu}$ and $\ha{\mu}^{bc}$ except for the tuning parameters in the second step.
Both naive estimators use $\rho_2=0$ and  $h_2=\infty$.
To examine the effect of overlapping estimation windows in the first step, 
I compare estimators for two choices of $h_1$.
The first choice is the largest possible bandwidth $h_{1}=1/(K+1)$, which leads to maximum overlap.
The second choice is the largest possible bandwidth with no overlap, that is, $h_{1}=0.5/(K+1)$.
Finally, I study the effects of ten different choices for the second-step bandwidth, $h_2 \in \{3/(K+1), \ldots, 12/(K+1)\}$.
All choices of tuning parameters satisfy the rate conditions of Theorem \ref{theo_srd_kinf_est_int}
and produce a convergence rate of root-$n$ for $\ha{\mu}$ and $\ha{\mu}^{bc}$.
The Monte Carlo experiment simulates 10,000 draws of an iid sample with $n \in \{ 1789, 10120, 27886, 57244,100000\} $ and
$K \in \{20, 40, 60, 80, 100\}$, respectively. 
Section \ref{sec:supp:datadrivenh} in the supplemental appendix repeats the experiment with data-driven bandwidth choices,
following the bandwidth rules proposed on page \pageref{parag:bandw:choice} (Section \ref{sec_case2}).

\begin{table}
\begin{center}
\caption{Precision of Estimators - Choice of $h_1$}
\label{tab:choiceh1}
\scriptsize
\begin{tabular}{cc|cccc|cccc|cccc}
\hline \hline 
& & \multicolumn{4}{|c}{Bias} & \multicolumn{4}{|c}{Variance}  & \multicolumn{4}{|c}{MSE} \\ 
$n$ & $K$ &  $\ha{\mu}$ &  $\ha{\mu}^{bc}$ & $\ti{\mu}$ &  $\ti{\mu}^{bc}$ & 
                     $\ha{\mu}$  &  $\ha{\mu}^{bc}$ & $\ti{\mu}$ &  $\ti{\mu}^{bc}$ & 
                     $\ha{\mu}$  &  $\ha{\mu}^{bc}$ & $\ti{\mu}$ &  $\ti{\mu}^{bc}$ \\ 
\multicolumn{2}{c|}{\underline{\textit{Overlap}}}   
 & & & & & & & & & & & & \\   
 1789 & 20 
& 0.0617 & -0.0015  & -0.2504  & -0.2390  
& 0.0079 & 0.0164  & 0.0073  & 0.0110  
& 0.0117 & 0.0164  & 0.0700  & 0.0681  
\\ 
10120 & 40 
& 0.0206 & 0.0003  & -0.1245  & -0.1216  
& 0.0013 & 0.0022  & 0.0012  & 0.0017  
& 0.0017 & 0.0022  & 0.0167  & 0.0165  
\\ 
27886 & 60 
& 0.0095 & -0.0002  & -0.0834  & -0.0821  
& 0.0005 & 0.0007  & 0.0004  & 0.0006  
& 0.0005 & 0.0007  & 0.0074  & 0.0074  
\\ 
57244 & 80 
& 0.0056 & -0.0001  & -0.0625  & -0.0618  
& 0.0002 & 0.0003  & 0.0002  & 0.0003  
& 0.0003 & 0.0003  & 0.0041  & 0.0041  
\\ 
100000 & 100 
& 0.0038 & -0.0001  & -0.0499  & -0.0496  
& 0.0001 & 0.0002  & 0.0001  & 0.0002  
& 0.0001 & 0.0002  & 0.0026  & 0.0026  
\\ 
\multicolumn{2}{c|}{\underline{\textit{No Overlap}}}   
 & & & & & & & & & & & & \\   
 1789 & 20 
& 0.0698 & -0.0019  & -0.2423  & -0.2402  
& 0.0139 & 0.0386  & 0.0122  & 0.0288  
& 0.0188 & 0.0386  & 0.0710  & 0.0865  
\\ 
10120 & 40 
& 0.0227 & -0.0002  & -0.1225  & -0.1221  
& 0.0021 & 0.0051  & 0.0020  & 0.0043  
& 0.0026 & 0.0051  & 0.0170  & 0.0192  
\\ 
27886 & 60 
& 0.0106 & -0.0006  & -0.0824  & -0.0825  
& 0.0007 & 0.0017  & 0.0007  & 0.0016  
& 0.0009 & 0.0017  & 0.0075  & 0.0084  
\\ 
57244 & 80 
& 0.0062 & 0.0000  & -0.0620  & -0.0617  
& 0.0004 & 0.0008  & 0.0003  & 0.0008  
& 0.0004 & 0.0008  & 0.0042  & 0.0046  
\\ 
100000 & 100 
& 0.0039 & -0.0003  & -0.0498  & -0.0497  
& 0.0002 & 0.0005  & 0.0002  & 0.0004  
& 0.0002 & 0.0005  & 0.0027  & 0.0029  
\\ 
\hline \hline 
\end{tabular}

\caption*{\scriptsize
Notes: The table reports simulated bias, variance, and mean squared error (MSE) for four estimators
$(\ha{\mu}, \ha{\mu}^{bc} , \ti{\mu} ,\ti{\mu}^{bc})$, two choices of first-step bandwidth (overlap and no overlap),
and five sample sizes $n$ and respective numbers of cutoffs $K$.
The second-step bandwidth is set to $h_2=3/(K+1)$, which minimizes MSE of $\ha{\mu}$. Refer to Table \ref{tab:choiceh2}
for different choices of $h_2$.
The number of simulations is 10,000.
}
\end{center}
\end{table}

The bias and variance of all estimators converge to zero as the sample size increases,
regardless of the choice of $h_1$ (Table \ref{tab:choiceh1}).
The bias-correction of $\ha{\mu}^{bc}$ eliminates almost all the bias of $\ha{\mu}$, at the cost of a higher variance.
The naive estimator $\ti{\mu}$ oversmooths the second step beyond the conditions of Theorem \ref{theo_srd_kinf_est_int}.
As a result, the bias of $\ti{\mu}$ is substantially larger than that of $\ha{\mu}$.
Simply correcting for bias in the first step does not solve the problem, as the difference in bias between
$\ti{\mu}$ and $\ti{\mu}^{bc}$  is small.
First-step bandwidths that produce overlap (Table \ref{tab:choiceh1}, rows 1-5) yield approximately the same bias,
but substantially smaller variance,
compared to first-step bandwidths that produce no overlap (Table \ref{tab:choiceh1}, rows 6-10).

\begin{table}
\begin{center}
\caption{Precision of Estimators - Choice of $h_2$}
\label{tab:choiceh2}
\footnotesize
\begin{tabular}{c|cc|cc|cc|cc|cc|cc} 
\hline \hline 
& \multicolumn{6}{|c}{$(n,K)=(1789, 20)$}   
& \multicolumn{6}{|c}{$(n,K)=(10120, 40)$}   
\\ 
& \multicolumn{2}{|c}{Bias} & \multicolumn{2}{|c}{Variance}  & \multicolumn{2}{|c}{MSE} 
 & \multicolumn{2}{|c}{Bias} & \multicolumn{2}{|c}{Variance}  & \multicolumn{2}{|c}{MSE} 
\\ 
$h_2 \cdot (K+1)$ & $\ha{\mu}$ &  $\ha{\mu}^{bc}$ 
        & $\ha{\mu}$ &  $\ha{\mu}^{bc}$ 
        & $\ha{\mu}$ &  $\ha{\mu}^{bc}$ 
        & $\ha{\mu}$ &  $\ha{\mu}^{bc}$ 
        & $\ha{\mu}$ &  $\ha{\mu}^{bc}$ 
        & $\ha{\mu}$ &  $\ha{\mu}^{bc}$ 
\\ 
3 
& 0.0617 & -0.0015  
& 0.0079 & 0.0164  
& 0.0117 & 0.0164  
& 0.0206 & 0.0003  
& 0.0013 & 0.0022  
& 0.0017 & 0.0022  
\\ 
4 
& 0.1128 & -0.0012  
& 0.0079 & 0.0143  
& 0.0206 & 0.0143  
& 0.0376 & 0.0004  
& 0.0013 & 0.0020  
& 0.0027 & 0.0020  
\\ 
5 
& 0.1708 & -0.0014  
& 0.0079 & 0.0133  
& 0.0370 & 0.0133  
& 0.0584 & 0.0005  
& 0.0013 & 0.0019  
& 0.0047 & 0.0019  
\\ 
6 
& 0.2322 & -0.0014  
& 0.0079 & 0.0128  
& 0.0618 & 0.0128  
& 0.0826 & 0.0004  
& 0.0013 & 0.0019  
& 0.0081 & 0.0019  
\\ 
7 
& 0.2935 & -0.0015  
& 0.0079 & 0.0126  
& 0.0940 & 0.0126  
& 0.1097 & 0.0004  
& 0.0013 & 0.0019  
& 0.0133 & 0.0019  
\\ 
8 
& 0.3513 & -0.0015  
& 0.0079 & 0.0124  
& 0.1313 & 0.0124  
& 0.1392 & 0.0004  
& 0.0013 & 0.0019  
& 0.0207 & 0.0019  
\\ 
9 
& 0.4019 & -0.0015  
& 0.0080 & 0.0122  
& 0.1695 & 0.0122  
& 0.1707 & 0.0004  
& 0.0013 & 0.0018  
& 0.0304 & 0.0018  
\\ 
10 
& 0.4420 & -0.0015  
& 0.0080 & 0.0122  
& 0.2034 & 0.0122  
& 0.2036 & 0.0004  
& 0.0013 & 0.0018  
& 0.0427 & 0.0018  
\\ 
11 
& 0.4680 & -0.0015  
& 0.0081 & 0.0122  
& 0.2272 & 0.0122  
& 0.2375 & 0.0004  
& 0.0013 & 0.0018  
& 0.0577 & 0.0018  
\\ 
12 
& 0.4773 & -0.0015  
& 0.0082 & 0.0121  
& 0.2360 & 0.0121  
& 0.2720 & 0.0004  
& 0.0013 & 0.0018  
& 0.0753 & 0.0018  
\\ 
\hline \hline 
\end{tabular}

\caption*{\scriptsize
Notes: The table reports simulated bias, variance, and mean squared error (MSE) for two estimators
$(\ha{\mu}, \ha{\mu}^{bc})$, ten choices of second-step bandwidth ($h_2 \in \{ 3/(K+1), \ldots, 12/(K+1)\})$,
and the two smallest sample sizes $n$ and respective numbers of cutoffs $K$.
The first-step bandwidth is set to $h_1 =1/(K+1)$ (overlap).
Naive estimators $(\ti{\mu}, \ti{\mu}^{bc})$ are not in this table because they are not affected by the choice of $h_2$.
The number of simulations is 10,000.
}
\end{center}
\end{table}

Next, I study how the choice of $h_2$ affects precision of $(\ha{\mu}, \ha{\mu}^{bc})$
for a fixed choice of $h_1=1/(K+1)$ (Table \ref{tab:choiceh2}).
The smallest value for $h_2$ is $3/(K+1)$. 
This defines a second-step estimation window with at least three cutoffs to ensure invertibility of  matrices in the regressions.
The bias of $\ha{\mu}$ is substantially smaller when $h_2$ is set to its smallest value. 
All other measures are practically unaffected across different $h_2$.

The significant bias of the naive ATE estimators $\ti{\mu}$ and $\ti{\mu}^{bc}$ decreases the coverage of 95\% confidence intervals
as the sample increases (Table \ref{tab:ci}). 
The naive estimators oversmooth in the second step, and Theorem \ref{theo_srd_kinf_est_int} implies the bias grows faster than root-$n$.
For each of the four estimators, the confidence intervals equal the estimator plus or minus $1.96$ times
its standard error. The variance of estimators are obtained as described in Equation \ref{est:var:c}.
The bias-corrected ATE estimator $\ha{\mu}^{bc}$ produces confidence intervals with correct coverage for all samples sizes.
Although $\ha{\mu}$ yields intervals with average length smaller than $\ha{\mu}^{bc}$, the bias of $\ha{\mu}$
leads to a slightly lower coverage.

\begin{table}
\begin{center}
\caption{Coverage of 95\% Confidence Intervals}
\label{tab:ci}
\begin{tabular}{cc|cccc|cccc} 
\hline \hline 
& & \multicolumn{4}{|c}{\% Coverage} & \multicolumn{4}{|c}{ Avg. Length}   
\\ 
$n$ & $K$  
        & $\ha{\mu}$ &  $\ha{\mu}^{bc}$ 
        & $\ti{\mu}$ &  $\ti{\mu}^{bc}$ 
        & $\ha{\mu}$ &  $\ha{\mu}^{bc}$ 
        & $\ti{\mu}$ &  $\ti{\mu}^{bc}$ 
\\ 
1789 & 20 
& 0.8931 & 0.9546  & 0.1686 & 0.3834 
& 0.3526 & 0.5135  & 0.3368 & 0.4165 
\\ 
10120 & 40 
& 0.9087 & 0.9545  & 0.0568 & 0.1788 
& 0.1403 & 0.1850  & 0.1373 & 0.1656 
\\ 
27886 & 60 
& 0.9287 & 0.9489  & 0.0203 & 0.1005 
& 0.0834 & 0.1063  & 0.0822 & 0.0988 
\\ 
57244 & 80 
& 0.9292 & 0.9502  & 0.0113 & 0.0585 
& 0.0578 & 0.0724  & 0.0572 & 0.0686 
\\ 
100000 & 100 
& 0.9359 & 0.9492  & 0.0057 & 0.0353 
& 0.0436 & 0.0540  & 0.0432 & 0.0518 
\\ 
\hline \hline 
\end{tabular}

\caption*{\scriptsize
Notes: The table reports simulated percentage of correct coverage and average length of 95\% confidence intervals.
Confidence intervals are constructed 
using four estimators $(\ha{\mu}, \ha{\mu}^{bc} , \ti{\mu} ,\ti{\mu}^{bc})$.
They equal an estimator plus or minus its estimated standard deviation multiplied by $1.96$.
Coverage and average length are computed for five sample sizes $n$ and respective numbers of cutoffs $K$.
The first-step bandwidth is set to $h_1=1/(K+1)$ (overlap),
and the second-step bandwidth is set to $h_2=3/(K+1)$, which minimizes MSE of $\ha{\mu}$. 
The number of simulations is 10,000.
}
\end{center}
\end{table}

\section{Application}\label{sec_appli}
\indent

In this section, the methods proposed in this paper are illustrated
using the data from PU on high school assignments in Romania.\footnote{The data set is available online
in the supplemental materials of PU on the website of the \textit{American Economic Review}.}
Many policy questions demand an ATE of a continuous counterfactual distribution of treatments,
and this section provides an example of such a policy question.
The estimators designed for the sharp RDD case are consistent for ``Intent-to-Treat'' (ITT) average effects
when applied to the fuzzy data of PU.
In this application, the ITT effect measures the impact of
being assigned to a better school but not necessarily attending it.
The parametric methods of Section \ref{sec_case3}  yield noticeable efficiency gains in the estimation of the ATE on ever-compliers.
Treatment effects for ever-compliers reveal a heterogeneity pattern unlike the heterogeneity of ITT effects.

The administrative data from Romania cover 3 cohorts of 9th grade students for the years 2001, 2002, and 2003,
with a total of 334,137 observations.
The essential elements of the high school assignment in Romania are described below.
The assignment to high school is nationally centralized by the Ministry of Education. 
At the end of grade 8, students submit a transition score and a complete ranking of preferences for high schools.
The transition score is an average of the student's performance on a national exam taken in grade 8
and
the student's grade point average during grades 5-8.
The Ministry of Education  ranks students by their transition score and no other criteria. The mechanism
assigns the student ranked first to her most preferred school, the student ranked second
to her most preferred school, etc.
Students cannot decline their assignment,
and they have incentives to truthfully reveal their preference rankings.

The observed variables are the town and year of student $i$, the transition score $X_i$, the school the student is assigned to, and the student's score on the 
``baccalaureate exam.''
This is an exam taken at the end of high school, and the grade on the exam is the outcome variable $Y_i$.
The quality of school $j$ (treatment dose $d_j$) is measured by the average transition score of the students attending that school $j$.
The cutoff $c_j$ for admission into a school $j$ is equal to the minimum transition score
among the students that are assigned to that school $j$.
The student's preferences in high schools are not observed in the data, which makes the RDD fuzzy.
For example, a student may have a score greater than the cutoff for the best school in her town,
but still be assigned to a different school because of her personal preferences.
For a transition score $X_i$, the treatment dose of eligibility $D(X_i)$ is equal to the largest $d$ among those schools with admission cutoff $c$ less than $X_i$.
The treatment dose received $D_i$ coincides with the treatment dose of eligibility $D(X_i)$ for 40\% of the students in the sample. 
Thus, the assignment is  fuzzy, and causal inference beyond ITT effects requires the methods of Section \ref{sec_case3}.
Following PU, I drop observations with missing values for $Y_i$.
I also drop cutoffs without enough observations around them to carry out the matrix inversions of the local polynomial regressions.
The dropping of cutoffs leaves the empirical distribution of outcomes, forcing variable, cutoffs, and treatment doses practically unchanged.
The estimation sample has 588 cutoffs with a total of 179,995 individuals from 769 schools in 121 towns and 3 years.
The variation of cutoff and dose values is displayed in Figure \ref{figure_schedule}.

\begin{figure}[H]
\caption{Variation in Cutoff and Dose Values}
\label{figure_schedule}
\begin{center}
  \begin{minipage}{3in}
    \centering
    (a)

    \includegraphics[width=3in]{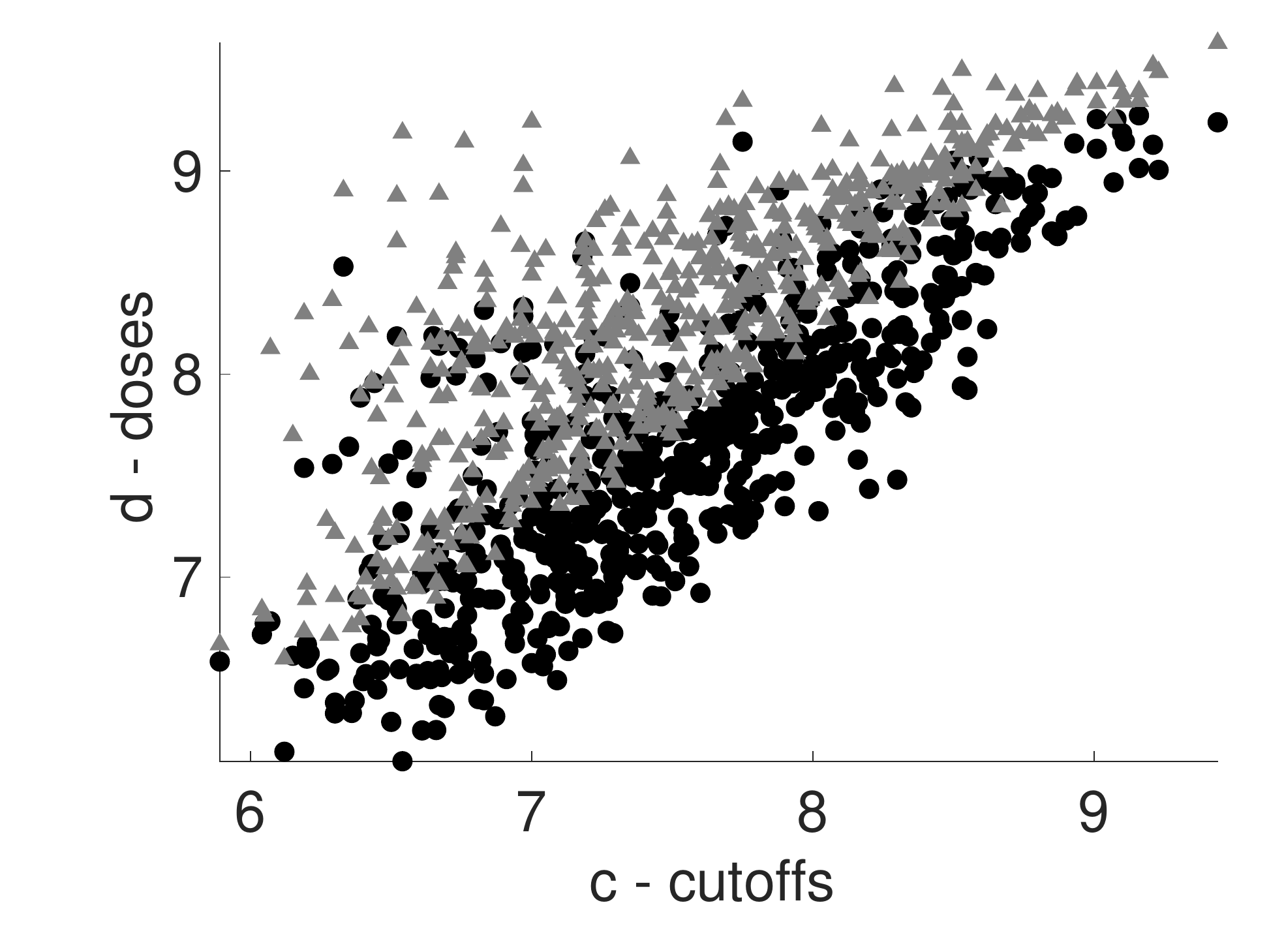}
  \end{minipage}%
  \begin{minipage}{3in}
    \centering
    (b)

    \includegraphics[width=3in]{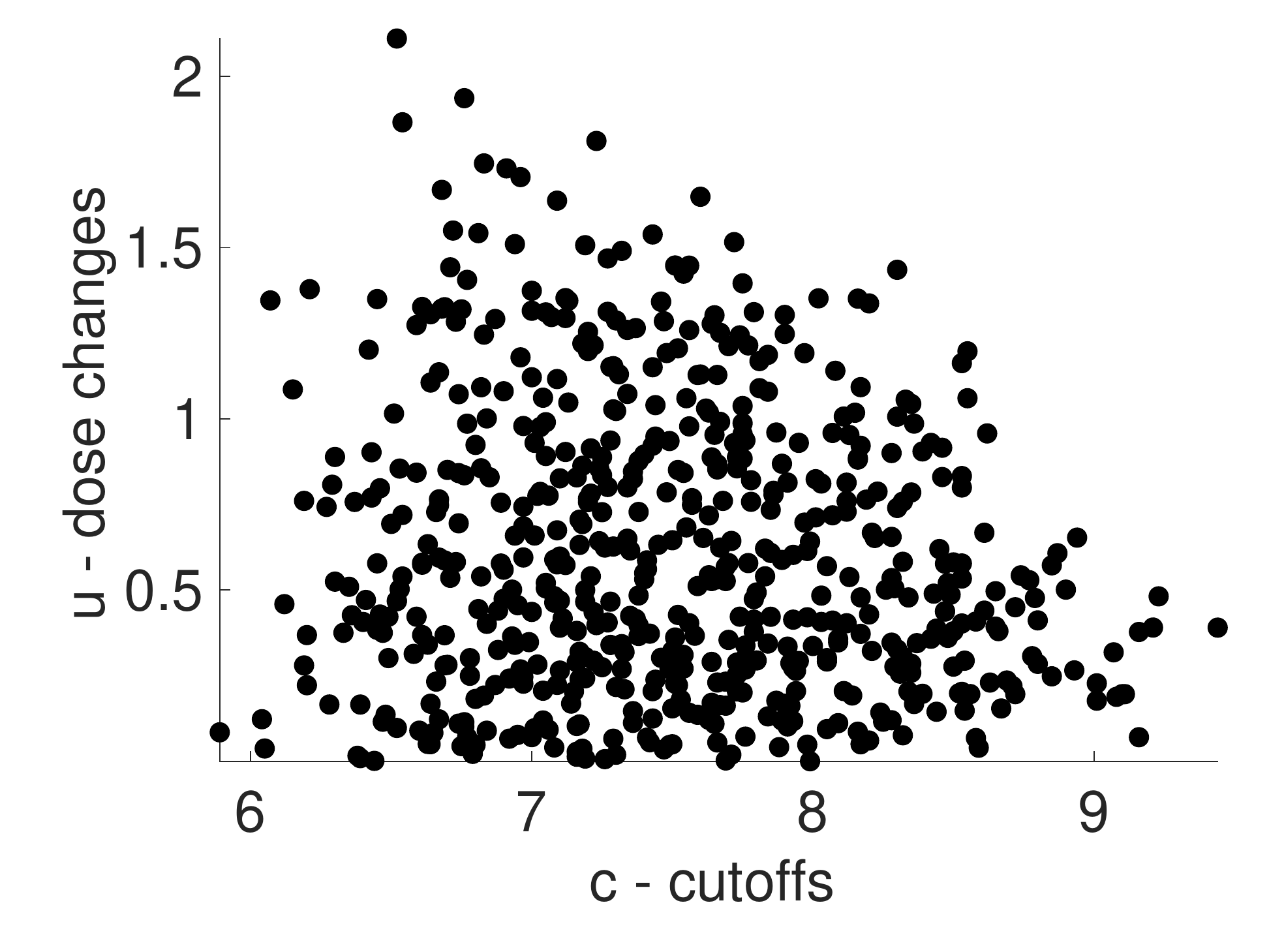}
  \end{minipage}%
\end{center}
\caption*{\footnotesize
Notes:
Scatter plot with cutoff values on the x-axis and  dose values on the y-axis 
for the $K= 588$ cutoffs in the Romanian data.
Panel (a) shows both doses before and after the cutoff, that is, $d_{j-1}$ in black and    $d_j$ in gray.
Panel (b) displays dose-change values on the y-axis, that is, $u_j = d_j - d_{j-1}$.
The peer-quality of school $j$ (treatment dose $d_j$) is measured by the average transition score of the students attending that school $j$.
The cutoff $c_j$ for admission into a school $j$ is equal to the minimum transition score among the students assigned to school $j$.
}
\end{figure}

Non-parametric identification of $\beta(\bc)$ is limited to the set $\m{C}$, which is the convex-hull of $\m{C}_{\infty}$.
The set $\m{C}_{\infty}$ is not entirely observed, and the researcher relies on the observation of $\m{C}_K$ (Figure \ref{figure_schedule}(a)).
For the sake of simplicity, I restrict $\beta$ to be a function of dose changes ($u=d'-d$) instead of doses before and after ($d$ and $d'$).
The restriction greatly simplifies the visualization and estimation of $\beta(c,d,d')$, because it implies that $\beta(c,d,d')=\phi(c)(d'-d)$, where $\phi$ is a continuously differentiable function.
Figure \ref{figure_schedule}(b) illustrates the variation of cutoff and dose-change values and  defines the limits on identification of policy counterfactuals.
For example, it is not possible to identify the effects of randomly assigning students with grades between 8 and 9 to a change in treatment dose of 2.
The support of such counterfactual distribution falls outside the observed variation of cutoff and dose-change values.
On the other hand, it is possible to identify the ATE  of randomly assigning students with grades between 6.5 and 8.5 to dose increases between 0 and 1.5.

The following policy question illustrates the ATE estimator proposed in this paper.
Suppose a new charter school is constructed in one of the towns in Romania.
The new charter school has more autonomy and better management than traditional public schools,
and admitted students experience an increase in school quality as if they were admitted to a school with better peers.
More specifically, the policy counterfactual is to give a 0.5 increase in peer-quality to a uniform distribution of scores between 6.5 and 8.5.
The ATE parameter is defined as 
\begin{gather}
\mu = \frac{1}{2} \bigintssss_{6.5}^{8.5} \phi(c) ~dc.
\label{def:mu_charter}
\end{gather}

I follow the estimation procedure suggested in Section \ref{sec_case2} and take into account the restriction $\beta(c,d,d')=\phi(c)(d'-d)$.
As in the binary treatment case, 
the restriction lowers the polynomial degree requirement on the second-step estimation 
to $\rho_2=1$. 
See Figure \ref{figure_rates} and the discussion that follows it.
The grid for $h_2$ has 32 equally-spaced points between 0.1 and 3.6, respectively, the smallest bandwidth for which the estimator is computable, 
and the maximum distance between two different cutoffs.
The MSE-optimal bandwidth choice is $h_2^* = 1.837$.
The new charter school has a bias-corrected ATE of $0.2964$ 
with standard error of $0.1296$, and it is statistically significant at 5\%.
Figure \ref{figure_beta}(a) plots $0.5 \ha{\phi}(c)$, that is, the effect of a $0.5$ increase in the treatment dose for various levels of $c$.
The graph reveals heterogeneous marginal effects of ability on returns to school quality.
Heterogeneity of treatment effects is \textit{a priori} unknown, and the ATE estimator proposed in this paper is consistent for $\mu$ regardless of the shape of $\phi(c)$.
This highlights the empirical relevance of Theorem \ref{theo_srd_kinf_est_int} and the importance of the second-step estimation.
In other words, the common strategy of normalizing all cutoffs to zero and estimating one discontinuity using the pooled data is not consistent for $\mu$
when $\phi(c)$ has such heterogeneity. 
\begin{figure}[H]
\caption{Treatment Effect Function}
\label{figure_beta}
\begin{center}
  \begin{minipage}{3in}
    \centering
    (a)

    \includegraphics[width=2.5in]{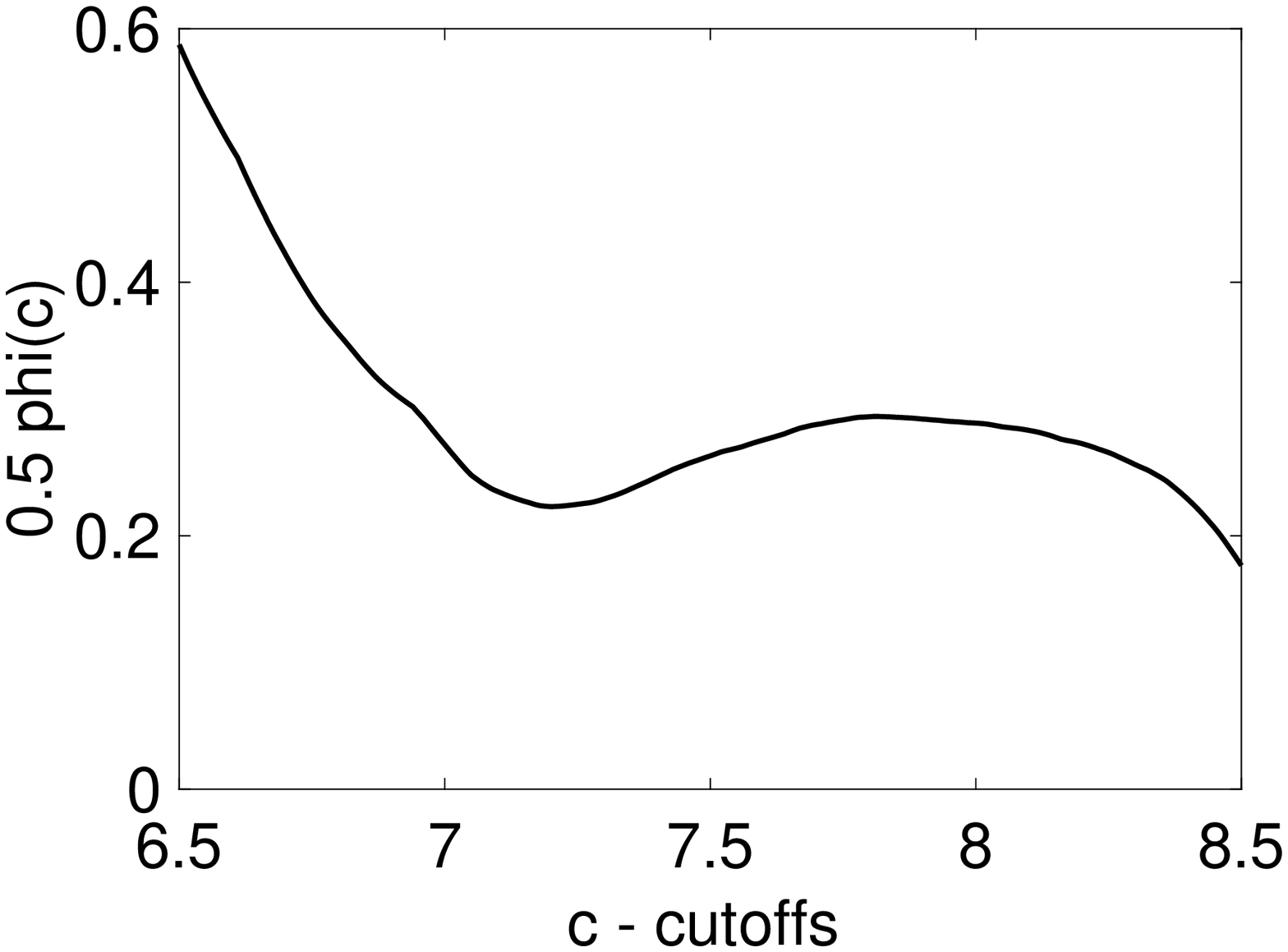}
  \end{minipage}%
  \begin{minipage}{3in}
    \centering
    (b)

    \includegraphics[width=2.5in]{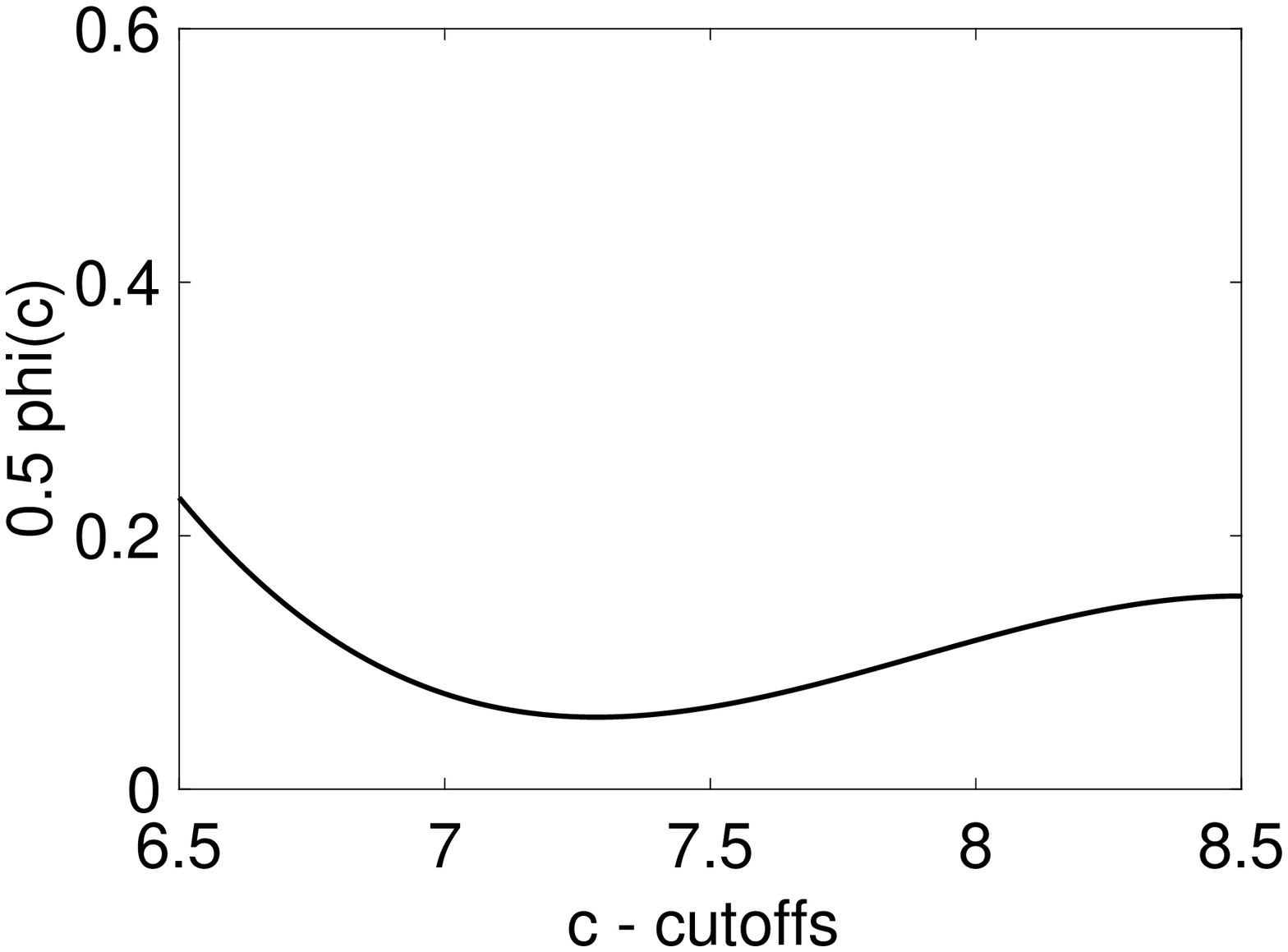}
  \end{minipage}%
\end{center}
\caption*{\footnotesize
Notes: Estimated average treatment effect function for a $0.5$ increase in school quality for students with score equal to $c$.
The figure plots $\ha{\beta}(c,d,d+0.5) = 0.5\ha{\phi}(c)$ for $c\in[6.5,8.5]$.
Panel (a) shows the ITT effect of a 0.5 increase in average peer performance for various levels of transition score.
The $\phi$ function is estimated non-parametrically with bias correction following Section \ref{sec_case2} (sharp case).
Panel (b) displays the effect on ever-compliers of the same uniform change in treatment dose.
The $\phi$ function is estimated parametrically with bias correction following the iterated
 procedure of Section \ref{sec_fuzzy_infer} in the supplemental appendix (fuzzy case).
}
\end{figure}

Estimation of treatment effects on ever-compliers requires a parametric functional form on $\beta_{ec}$ (Theorem \ref{theo_frd_id}).
I assume $\beta_{ec}(c,d,d') = \theta_1  (d'-d) + \theta_2 c (d'-d) +\theta_3 c^2 (d'-d)+  \theta_4 c^3 (d'-d)$
and carry out the iterative estimation procedure described in the supplemental appendix's 
Section \ref{sec_fuzzy_infer}.
The algorithm achieves convergence of $\theta$s within 30 iterations.
The iterated bias-corrected ATE on ever-compliers equals $0.107$ with standard error of $0.0109$.
The precision is substantially greater than the non-parametric case.
Figure \ref{figure_beta}(b) displays the treatment effect function on ever-compliers for a dose change of $0.5$.
Compared to ITT effects in Figure \ref{figure_beta}(a), the return of better schooling on ever-compliers is also positive,
but much less heterogeneous across ability levels.

\section{Conclusion}\label{sec_con}
\indent

Difficulty in gathering experimental data in many fields within the social sciences
makes quasi-experimental techniques such as RDD
extremely important to evaluate policies and social programs. 
RDD has been used in a wide range of applications in economics since the late 1990s.
More recently,
there has been an increasing number of applications with one forcing variable and multiple cutoffs,
assigning individuals to heterogeneous treatments.
The demand for multi-cutoff RDD methods is constantly growing, as richer data sets become ever more available.

This paper states conditions
under which multiple RDD effects are combined to infer ATE
over the entire range of cutoff values.
The proposed estimator is consistent and asymptotically normal for ATEs over
the entire support of variation in cutoffs and treatment doses.
Asymptotic results are derived under a large number of observations
and cutoffs in the sharp case of non-parametric treatment effect functions.
Sufficient conditions on the rate of growth of the number of cutoffs, relative to the number of observations, are given.
These rate conditions determine the feasible choice set of tuning parameters. 
This paper also shows that non-parametric identification in fuzzy RDD with multiple cutoffs is impossible unless 
the treatment effect function is finite-dimensional, or there is large variation of cutoff-dose values.  
A parametric specification provides an MSE-optimal ATE estimator for the fuzzy case that is consistent and asymptotically normal.

The relevance of the ATE estimators proposed in this paper is illustrated with the data of \cite{pop2013going} on high school assignment in Romania.
Of interest is the effect of high school quality on academic performance of students.
I find strong evidence of non-linearities in the returns to better schooling, as a function of students' ability level.
Monte Carlo simulations demonstrate that such non-linearities severely bias a naive
average of local effects that does not use the correction weighting
scheme proposed in this paper.
Applying the fuzzy RDD methods to the Romanian data
reveals causal effects on ever-compliers that are smaller and less heterogeneous than ITT effects.

The proposed estimator converges at the minimax optimal rate of root-$n$, as long as first-step bandwidths converge to zero at $1/K$ rate.
It would be interesting to learn about efficiency properties of the ATE estimator.
Theoretical tools commonly employed to derive efficiency lower-bounds may not be immediately applicable to the setting of this paper.
These tools are designed for regular estimators, and for data drawn from a population where the parameter of interest is identified.
In contrast, the fixed-cutoff RDD design relies on an ``identification at infinity argument'', 
and I wonder about the sufficient conditions that would obtain regularity of the ATE estimator.
A possibility for future work is to the generalize the uniform convergence tools from this paper to arrive at such conditions.


\section{Acknowledgements}

I am indebted to Han Hong, Caroline Hoxby, and Guido Imbens for invaluable advice.
The paper also benefited from feedback received from seminar participants at Stanford,
Boston University, Cambridge, Iowa,
Notre Dame, CORE-UcLouvain, UCSD, UCDavis,
FGV-EESP, FGV-EPGE, Insper, PUC-Rio, Toulouse, UIUC, and at various conferences.
I thank
Tim Bresnahan,
Arun Chandrasekhar,
Michael Dinerstein,
Ivan Fernandez-Val,
Ivan Korolev,
Michael Leung,
Huiyu Li,
Jessie Li,
Petra Moser,
Stephen Terry,
Xiaowei Yu,
and anonymous referees for suggestions and comments.
I gratefully acknowledge the
financial support received from the B.F. Haley and E.S. Shaw Fellowship at SIEPR-Stanford,
CORE-UcLouvain,
ISLA-Notre Dame,
and 
while visiting the Kenneth C. Griffin Department of Economics at the University of Chicago.

\begin{singlespace}
\bibliographystyle{econ}
\begingroup
    \setlength{\bibsep}{1.5pt}
    \bibliography{biblio}
\endgroup
\end{singlespace}

\appendix

\numberwithin{equation}{section}
\numberwithin{lemma}{section}

\numberwithin{theorem}{section}
\numberwithin{table}{section}

\begin{singlespace}

\section{Appendix}\label{sec_appen}

Throughout the appendices, $M$ is used as a generic finite and positive constant in the proofs.
For a $p \times q$ matrix A, the norm of A is induced by the Euclidean norm $\| \cdot \|$, i.e.
$\| A \|= \max_{x\in\mathbb{R}^q,x \neq 0}  \| Ax\| / \| x \| $.
The determinant of matrix $A$ is denoted $\det(A)$.
References to the supplemental appendix include B in the numbering; for example,
Lemma B.1, or Table B.2.

\subsection{Proof of Theorem \ref{theo_srd_est_avg} }

 Lemma \ref{lemma_porter} derives asymptotic normality of the bias-corrected jump-discontinuity estimator at one cutoff
based on local polynomial regressions of a vector $\bY_i$ on a scalar forcing variable $X_i$.
The proof of Theorem \ref{theo_srd_est_avg} is a straightforward generalization of  Lemma \ref{lemma_porter} in the particular case of a scalar $Y_i$.
As the sample size increases and the number of cutoff remains fixed, the jump-discontinuity estimators are independent across cutoffs.
First apply  Lemma \ref{lemma_porter} to each cutoff individually, and then aggregate over cutoffs.

$\square$

\subsection{Proof of Lemma \ref{lemma_srd_id_kinf}}
Define $\overline{\m{C}}=[\underline{\m{X}},\overline{\m{X}}] \times
[\underline{\m{D}},\overline{\m{D}}] \times
[\underline{\m{D}},\overline{\m{D}}]$.
Consider the partition of $\overline{\m{C}}$
made of the set of non-intersecting cubicles $T_n=\{ C_1, \ldots, C_M \}$ with $M=n^3$, $n=\{1,2,\ldots \}$.
Each $C_j$ is a half-open cubicle of the form
$[x_{l-1},x_l) \times [y_{m-1},y_m) \times [z_{o-1},z_o)$
with sides of lengths equal to $( \overline{\m{X}} - \underline{\m{X}})/n$,
$(\overline{\m{D}} - \underline{\m{D}})/n$, and $(\overline{\m{D}} - \underline{\m{D}})/n$.
Define the sub-collection
$U_n= \{C \in T_n:~ C \subset \m{C} \} = \{A_1,\ldots,A_{Q}\}$.
Since $\m{C}_{\infty}$ is dense in $\m{C}$,
for every $A_j \in U_n$, find a point $\bc_j \in \m{C}_{\infty} \cap A_j$ for which $\beta(\bc_j)$
is known.
The sum $
\mu_n = \sum_{j=1}^{Q} \omega(\bc_j) \beta(\bc_j)
\left( \overline{\m{X}} - \underline{\m{X}} \right)
\left( \overline{\m{D}} - \underline{\m{D}} \right)^2 / n^3
$ 
converges to $\mu^{c}$ as $n\to\infty$ because $\omega(\bc) \beta(\bc)$ is Riemann integrable
on $\m{C}$.

$\square$

\subsection{Proof of Theorem \ref{theo_srd_kinf_est_int} }
\label{sec:proof:theo_srd_kinf_est_int}

The proof combines arguments from the proof of  Lemma \ref{lemma_porter}
with lemmas on the uniform convergence of empirical processes from  Sections \ref{sec:supp:app:uniform}
and \ref{sec_int_app}.
Define $\mu^*$, $\ti{\mu}$, and $\mu_n$ as follows:
\begin{align}
\mu^* 
= &\sum_{j=1}^K 
	\Delta_j \Big\{
		e_1'\mme[ G_n^{j +} ] 
		\frac{1}{nh_{1j}}
		\sum_{i=1}^n
			k\left( \frac{X_i - c_{j}}{h_{1j}} \right)
			v_i^{j +} Y_i
			\ti{H}_{i}^{j}
\notag
\\*
& \hspace{1.2cm}
		-e_1'\mme[ G_n^{j -} ] 
		\frac{1}{nh_{1j}}
		\sum_{i=1}^n
			k\left( \frac{X_i - c_{j}}{h_{1j}} \right)
			v_i^{j -} Y_i
			\ti{H}_{i}^{j}
	\Big\}
\\*
= & \sum_{j=1}^K 
	\Delta_j
	\frac{1}{nh_{1j}}
	\sum_{i=1}^n
		k\left( \frac{X_i - c_{j}}{h_{1j}} \right)
		Y_i
		e_1'
		\left(
			v_i^{j +} \mme[ G_n^{j +} ]  
			-
			v_i^{j -} \mme[ G_n^{j -} ]  
		\right)
		\ti{H}_{i}^{j}
\\
\ti{\mu} = & \sum_{j=1}^K 
	\Delta_j \left\{
		e_1'  G_n^{j +}
		\mme\left[  
			\frac{1}{nh_{1j}}
			\sum_{i=1}^n
				k\left( \frac{X_i - c_{j}}{h_{1j}} \right)
				v_i^{j +} Y_i^{j+} 
				\ti{H}_{i}^{j}
		\right]
	\right.
\notag
\\*
& \hspace{1.2cm} \left.
		-e_1'  G_n^{j -}  
		\mme\left[
			\frac{1}{nh_{1j}}
			\sum_{i=1}^n
				k\left( \frac{X_i - c_{j}}{h_{1j}} \right)
				v_i^{j -} Y_i^{j-}  
				\ti{H}_{i}^{j}
		\right]
	\right\}
\notag
\\
\mu_n  &=
\sum_{j=1}^K \Delta_j B_j.
\end{align}

Write

\begin{align}
\frac{\ha{\mu} - \m{B}_{1n}^c - \m{B}_{2n}^c - \mu}{ (\m{V}_n^c)^{1/2}} = &
\frac{\mu^* - \mme[\mu^* |\m{X}_n] }{(\m{V}_n^c)^{1/2}}
\label{eq:kinf:clt}
\\
+ &
{\frac{ \ti{\mu} - \m{B}_{1n}^c  }{(\m{V}_n^c)^{1/2}} }
\label{eq:kinf:bias}
\\
+ &
\frac{ \mu_n - \m{B}_{2n}^c - \mu  }{(\m{V}_n^c)^{1/2}}
\label{eq:kinf:int}
\\
+ &
\frac{ \ha{\mu} - \mme[\ha{\mu} |\m{X}_n] - \left( \mu^* - \mme[\mu^* |\m{X}_n] \right) }{(\m{V}_n^c)^{1/2}}
\label{eq:kinf:op1:a}
\\
+ &
\frac{ \mme[\ha{\mu} - \mu_n |\m{X}_n] - \ti{\mu}  }{(\m{V}_n^c)^{1/2}}.
\label{eq:kinf:op1:b}
\end{align}

The proof in this appendix 
applies a central limit theorem (CLT) to show that
Part \eqref{eq:kinf:clt} converges in distribution to a standard normal;
it demonstrates that $\m{B}_{1n}$ approximates the first-step bias, that is, that part \eqref{eq:kinf:bias} converges in probability to zero;
and it shows that $\m{B}_{2n}$ approximates the second-step bias (integration error), that is, that part \eqref{eq:kinf:int} converges to zero.
 Lemma \ref{lemma_rest_theo} shows that 
parts \eqref{eq:kinf:op1:a} and \eqref{eq:kinf:op1:b} converge in probability to zero.

\begin{nopgbreak}
\begin{center}
\textbf{\underline{Part \eqref{eq:kinf:clt} }} 
\end{center}

First, find the rate that $(\m{V}_n^c)^{-1/2}$ grows.
Define $\phi_n$ and rewrite $\m{V}_n^c$ as follows:
\end{nopgbreak} 
\begin{align}
\phi_n(X_i) = &
		\sum_{j=1}^K 
			\frac{\Delta_j}{nh_{1j}}
			k\left( \frac{X_i - c_{j}}{h_{1j}} \right)
			e_1'
			\left(
				v_i^{j +} \mme[ G_n^{j +} ]  
				-
				v_i^{j -} \mme[ G_n^{j -} ]  
			\right)
			\ti{H}_{i}^{j}				
\\*
\m{V}_n^c = &\sum_{i=1}^n \mme \left[
	\eps_i^2 \phi_n(X_i)^2
\right].
\end{align}
Choose alternative bandwidths  $h_{1j}^*$, $j=1,\ldots,K$, such that 
(i) there exists $\delta>0$ (independent of $n$) such that  $\delta < h_{1j}^* / h_{1j} \leq 1~ \forall j$ ;
and
(ii) $[c_j - h_{1j}^*, c_j + h_{1j}^*] \cap  [c_{j'} - h_{1j'}^*, c_{j'} + h_{1j'}^*] = \emptyset$ for any $j \neq j'$.
\begin{align}
\m{V}_n^c & = n \mme \left[\zeta^2(X_i) \phi_n^2(X_i) \right] 
 \geq n \sum_{j=1}^K \bigintss_{c_j - h_{1j}^* }^{c_j + h_{1j}^*}
	\zeta^2(x) \phi_n^2(x) f(x) ~ dx
\\
& = n \sum_{j=1}^K \bigintss_{c_j - h_{1j}^* }^{c_j + h_{1j}^*}
	\zeta^2(x) 
	\Bigg(
		\frac{\Delta_j}{nh_{1j}} k\left(\frac{x-c_j}{h_{1j}}\right)
		e_1' \left( 
			\mmi\{x \geq 0\} \mme[ G_n^{j +} ]   
			-
			\mmi\{x < 0\} \mme[ G_n^{j -} ]  
		\right) 
\nonumber
\\*		
&		\hspace{4.5cm} H\left(\frac{x-c_j}{h_{1j}}\right)
	\Bigg)^2
	f(x) ~ dx
\\
& = \frac{1}{K n } \frac{1}{K} \sum_{j=1}^K \frac{K^2 \Delta_j^2}{h_{1j}} \bigintss_{ - h_{1j}^*/ h_{1j} }^{  h_{1j}^*/ h_{1j} }
	\zeta^2(c_j + u h_{1j}) 
	\Bigg(
		k\left( u \right)
		e_1' \left( 
			\mmi\{u \geq 0\} \mme[ G_n^{j +} ]   
			-
			\mmi\{u < 0\} \mme[ G_n^{j -} ]  
		\right) 
\nonumber 
\\* 
&		\hspace{7.8cm} H\left(  u  \right)
	\Bigg)^2
	f(c_j + u h_{1j}) ~ du	
~ \geq  ~ \frac{M}{K n \overline{h}_1}  
\label{eq:kinf:clt:s2:rate}
\end{align}
where the first inequality follows from the integrand being positive and $\cup_{j=1}^K [c_j - h_{1j}^*, c_j + h_{1j}^*] \subseteq \cup_{j=1}^K [c_j - h_{1j}, c_j + h_{1j}]$;
the third equality uses a change of variables $u = (x-c_j)/h_{1j}$;
and the last inequality follows because
(a) $h_{1j}\leq \overline{h}_1$;
(b) $K^2 \Delta_j^2 $ is bounded away from zero uniformly over $j$ (Lemma \ref{lemma_rate_int});
and 
(c) each integral is bounded away from zero over $j$
because the integration limits, $\zeta^2(c_j + u h_{1j}) $, 
$\mme[ G_n^{j \pm} ]$,
and
$f(c_j + u h_{1j})$
are uniformly close to quantities that are positive definite uniformly over $j$ 
(see  Lemma \ref{lemma_app} and recall that $f$ and $\zeta$ are bounded away from zero
because of 
Assumptions \ref{assu_srd_est_fx} and \ref{assu_srd_kinf_est_int}).
The inequality in \eqref{eq:kinf:clt:s2:rate} implies that $(\m{V}_n^c)^{-1} = O(Kn\overline{h}_1)$ where 
$Kn\overline{h}_1 \to \infty$.

Second, write  part \eqref{eq:kinf:clt} as a weighted sum across $i$:
\begin{align}
\mu^* &=
\sum_{i=1}^n
Y_i 
\underset{\equiv \phi_{n}(X_i)}
{
\underbrace{
\sum_{j=1}^K 
\frac{\Delta_j}{nh_{1j}}
k\left( \frac{X_i - c_{j}}{h_{1j}} \right)
e_1'
\left(
v_i^{j +} \mme[ G_n^{j +} ]  
-
v_i^{j -} \mme[ G_n^{j -} ]  
\right)
\ti{H}_{i}^{j}
}
}
=\sum_{i=1}^n Y_i \phi_n(X_i)
\label{eq:kinf:clt:sumiid}
\end{align}
so that
\begin{align}
\frac{\mu^* - \mme[\mu^* |\m{X}_n] }{(\m{V}_n^c)^{1/2}} &=
\frac{
\sum_{i=1}^n
\left( Y_i - \mme[Y_i|X_i] \right)
\phi_n(X_i)
}
{
(\m{V}_n^c)^{1/2}
}
=
\frac{
\sum_{i=1}^n
\eps_i
\phi_n(X_i)
}
{
(\m{V}_n^c)^{1/2}
}.
\label{eq:kinf:clt:sumiid:center}
\end{align}
Equation \ref{eq:kinf:clt:sumiid:center} is a sum of iid random variables with zero mean,
where $\m{V}_n^c$ is the variance of the numerator.
The Lindeberg condition is verified next.
Take an arbitrary $\delta>0$. 
\begin{align}
& \sum_{i=1}^{n} \mme\left[
	(\m{V}_n^c)^{-1} \eps_i^2 \phi_n(X_i)^2 
	\mmi\left\{
				\left| (\m{V}_n^c)^{-1/2} \eps_i \phi_n(X_i) \right| > \delta
	\right\}
\right]
\\
\leq & \sum_{i=1}^{n} \mme\left[
	M K n \overline{h}_1  \phi_n(X_i)^2 
	\mmi\left\{
				 M' \left( K n \overline{h}_1 \right)^{1/2} \left| \phi_n(X_i) \right| > \delta
	\right\}
\right]
\\
\leq & \sum_{i=1}^{n} \mme\left[
	M K n \overline{h}_1 \left(   K n \underline{h}_1 \right)^{-2}
	\mmi\left\{
				M' \left( K n \overline{h}_1 \right)^{1/2} \left(K n \underline{h}_1 \right)^{-1}  > \delta
	\right\}
\right]
\\
\leq & 
	\left(   K \underline{h}_1 \right)^{-1}
	\mmi\left\{
				M'  \left(K n \underline{h}_1 \right)^{-1/2}  > \delta
	\right\} =o(1)
\end{align}
where the first inequality relies on the fact that $\eps_i$ is a.s. bounded (Assumption \ref{assu_srd_kinf_est_int}),
and that  $(\m{V}_n^c)^{-1} = O \left( Kn\overline{h}_1 \right)$ (Equation \ref{eq:kinf:clt:s2:rate}).
The second inequality uses that $\phi_n(x) = O\left(   K n \underline{h}_1 \right)^{-1}$ 
 uniformly over $x$. 
 In fact, $\phi_n(x)$ is a sum of $K$ components of which at most two are non-zero,
  $\Delta_j = O\left( K^{-1} \right)$ uniformly over $j$ (Lemma \ref{lemma_rate_int}),
  $k(\cdot)$ is bounded (Assumption \ref{assu_srd_est_kernel}),
  $\mme[ G_n^{j \pm} ]$ is uniformly close to $G^{j \pm}$ whose norm is bounded away from zero (Lemma \ref{lemma_app}).
  The last inequality relies on the rate condition $\overline{h}_1/\underline{h}_1 = O(1)$,
  and that the indicator becomes zero for large $n$.
The Lindeberg-Feller CLT says that Equation \ref{eq:kinf:clt:sumiid:center}, 
and thus part \eqref{eq:kinf:clt}, converges in distribution to a standard normal.

\begin{center}
\textbf{\underline{Part \eqref{eq:kinf:bias} }} 
\end{center}

First consider
\begin{align}
& \mme\left[
	\frac{1}{h_{1j}} 
	k\left( \frac{X_i - c_j}{h_{1j}} \right)
	v_{i}^{j+} \ti{H}_{i}^{j}  \mme\left[ Y_i^{j+} | X_i \right]
\right] 
\\
= & \mme\left[
	\frac{1}{h_{1j}} 
	k\left( \frac{X_i - c_j}{h_{1j}} \right)
	v_{i}^{j+} \ti{H}_{i}^{j}
	\frac{ \nabla^{(\rho_1 + 1 )}R(c_j,d_j)}{ (\rho_1 + 1 )! }
	\left( \frac{X_i-c_j}{h_{1j}} \right)^{\rho_1+1} h_{1j}^{\rho_1+1}  
\right]
\label{eq:kinf:bias:expansion:fo}
\\
 & \hspace{.5cm} + 
\mme\left[
	\frac{1}{h_{1j}} 
	k\left( \frac{X_i - c_j}{h_{1j}} \right)
	v_{i}^{j+} \ti{H}_{i}^{j}
	\frac{ \nabla^{(\rho_1 + 2 )}R(c_j^*,d_j)}{ (\rho_1 + 2 )! }
	\left( \frac{X_i-c_j}{h_{1j}} \right)^{\rho_1+2} h_{1j}^{\rho_1+2}  
\right]
\label{eq:kinf:bias:expansion:so}
\\
= & 
h_{1j}^{\rho_1+1}  
\frac{ \nabla^{(\rho_1 + 1 )}R(c_j,d_j)}{ (\rho_1 + 1 )! }
f(c_j) \gamma^*
+ O\left( \overline{h}_{1}^{\rho_1+2} \right)
\label{eq:kinf:bias:expansion}
\end{align}
where $\mme\left[ Y_i^{j+} | X_i \right]$ is the difference between $\mme[Y_i|X_i]$ 
and its $\rho_1$-th order Taylor expansion around $X_i=c_j$ (see  Equations \ref{eq:lemma_porter:def:yij:plus} and \ref{eq:lemma_porter:def:yij:minus}).
The expectations in Equations \ref{eq:kinf:bias:expansion:fo} and \ref{eq:kinf:bias:expansion:so},
without the $h_{1j}^{\rho_1+1} $ and $h_{1j}^{\rho_1+2}$ terms, are bounded over $j$
because the kernel, derivatives, and polynomials are bounded functions of $u = (x-c_j)h_{1j}^{-1}$
(Assumptions \ref{assu_srd_est_kernel} and \ref{assu_srd_kinf_est_int}).  
The remainder term $O\left( \overline{h}_{1}^{\rho_1+2} \right)$ is uniform over $j$.

Next, 
\begin{align} 
\frac{\ti{\mu} - \m{B}_{1n} }{(\m{V}_n^c)^{1/2}} = & (\m{V}_n^c)^{-1/2} \sum_{j=1}^K 
	\Delta_j 
		e_1'  G_n^{j +}
		\mme\left[  
			\frac{1}{nh_{1j}}
			\sum_{i=1}^n
				k\left( \frac{X_i - c_{j}}{h_{1j}} \right)
				v_i^{j +} \mme\left[ Y_i^{j+} | X_i \right] 
				\ti{H}_{i}^{j}
		\right]
\notag\\
& - (\m{V}_n^c)^{-1/2} \m{B}_{1n}^+
\label{eq:kinf:bias:plus}
\\*
& 	- (\m{V}_n^c)^{-1/2} \sum_{j=1}^K 
	\Delta_j 	
		e_1'  G_n^{j -}  
		\mme\left[
			\frac{1}{nh_{1j}}
			\sum_{i=1}^n
				k\left( \frac{X_i - c_{j}}{h_{1j}} \right)
				v_i^{j -} \mme\left[ Y_i^{j-} | X_i \right]  
				\ti{H}_{i}^{j}
		\right]
\notag\\
& + (\m{V}_n^c)^{-1/2} \m{B}_{1n}^-
\label{eq:kinf:bias:minus}
\end{align}
 where 
\begin{align}
\m{B}_{1n} = & \m{B}_{1n}^+ -  \m{B}_{1n}^-
\\
\m{B}_{1n}^+ = & {\left( (\rho_1 + 1 )! \right)^{-1} } \sum_{j=1}^K 
		h_{1j}^{\rho_1+1} \Delta_j f(c_j) 
		{ \nabla^{ \rho_1 + 1 }_x R(c_j,d_j)}
		e_1'  G_n^{j+}  \gamma^*
\\
\m{B}_{1n}^- = & {\left( (\rho_1 + 1 )! \right)^{-1} } \sum_{j=1}^K 
		h_{1j}^{\rho_1+1} \Delta_j f(c_j)  
		{ \nabla^{ \rho_1 + 1 }_x R(c_j,d_{j-1})}
		e_1'   G_n^{j-}   \gamma^*.
\end{align}

Consider part \eqref{eq:kinf:bias:plus}. Part \eqref{eq:kinf:bias:minus} follows a symmetric argument.
\begin{align}
\eqref{eq:kinf:bias:plus} = & 
(\m{V}_n^c)^{-1/2} \sum_{j=1}^K 
	\Delta_j 
		e_1'   G_n^{j +} 
		\mme\left[  
			\frac{1}{nh_{1j}}
			\sum_{i=1}^n
				k\left( \frac{X_i - c_{j}}{h_{1j}} \right)
				v_i^{j +} \mme\left[ Y_i^{j+} | X_i \right] 
				\ti{H}_{i}^{j}
		\right] - (\m{V}_n^c)^{-1/2} \m{B}_{1n}^+
\\
= & 
(\m{V}_n^c)^{-1/2} \left[ 
	\sum_{j=1}^K 
		\Delta_j 
			e_1'  G_n^{j +}
				h_{1j}^{\rho_1+1}  
				\frac{ \nabla^{(\rho_1 + 1 )}R(c_j,d_j)}{ (\rho_1 + 1 )! }
				f(c_j) \gamma^*
	- \m{B}_{1n}^+
\right]
\\
&+
(\m{V}_n^c)^{-1/2} \left[ 
	\sum_{j=1}^K 
		\Delta_j 
			e_1'  G_n^{j +}
			O\left( \overline{h}_{1}^{\rho_1+2} \right)		
\right]
\\
= & 
0
\\
&+
O\left( \left( K n \overline{h}_1 \right)^{1/2} \right)
K O(K^{-1}) O_P(1) O\left( \overline{h}_{1}^{\rho_1+2} \right)	
=o_P(1)
\end{align}
where the second equality uses the expansion in Equation \ref{eq:kinf:bias:expansion}.
The third equality  uses the definition of $\m{B}_{1n}^+$, that $\Delta_j = O(K^{-1})$ uniformly over $j$,
that $G_n^{j+}=O_P(1)$.
These terms are $o_P(1)$
because of the rate condition
$\left( K n \overline{h}_1 \right)^{1/2}\overline{h}_{1}^{\rho_1+1} =O(1) $.

\begin{center}
\textbf{\underline{Part \eqref{eq:kinf:int}}}
\end{center}

\begin{align}
\frac{ \mu_n - \m{B}_{2n} - \mu  }{(\m{V}_n^c)^{1/2}} & = 
O\left( \left( K n \overline{h}_1 \right)^{1/2} \right) \left( 
	\sum_{j=1}^K  \Delta_j  B_j
	- \m{B}_{2n}
	 -\int_{\m{C}} \omega(\bc) \beta(\bc)  ~ d(\bc) 
\right)
\\ 
&= 
O\left( \left( K n \overline{h}_1 \right)^{1/2} \right)
O\left( h_2^{\rho_2+2} \right) 
=O(1) O\left( h_2 \right)
=o(1)
\end{align}
where the first equality uses the rate on $(\m{V}_n^c)^{-1/2}$ (Equation \ref{eq:kinf:clt:s2:rate}).
The second equality applies  Lemma \ref{lemma_rate_int} and relies
on Assumption \ref{assu_srd_kinf_est_cut} (asymptotic behavior of $\{ \bc_j \}_j$)
and 
Assumption \ref{assu_srd_kinf_est_int} (smoothness of $\beta(\bc)$).
The third equality uses the rate condition $\left( K n \overline{h}_1 \right)^{1/2} h_2^{\rho_2+1} =O(1)$.
 Lemma \ref{lemma_rate_int} also shows that 
$\m{B}_{2n}=O\left( h_2^{\rho_2+1} \right) $,
which yields 
$(\m{V}_n^c)^{-1/2} \m{B}_{2n} =  O\left( \left( K n \overline{h}_1 \right)^{1/2} h_2^{\rho_2+1} \right)
= O(1)$.

 Lemma \ref{lemma_rest_theo} shows that 
parts \eqref{eq:kinf:op1:a} and \eqref{eq:kinf:op1:b} converge in probability to zero, which concludes the proof.

$\square$

\subsection{Proof of Theorem \ref{theo_srd_kinf_minimax}}
\begin{center}
\textbf{\underline{Part \eqref{theo_srd_kinf_minimax_a}}}
\end{center}

First, consider the ideal setting where estimators $\mu^*$ are functions of data observed from $\{ Y_i(d) \}_{d \in \m{D}}$ and $X_i$.
For a choice of loss function $L(\mu,\mu')$,
the minimax risk of estimating the parameter $\mu^c(P)$ is defined
as  $\inf_{\mu^*} \sup_{P \in \m{P}} \mme_P  \left[ L(\mu^*,\mu^c(P) ) \right]$.
Here, the 0-1 loss function is used, that is, $L_n(\mu,\mu') = \mmi\{ n^{r} |\mu - \mu' |>\epsilon \}$,
for a positive rate $r$ and $\epsilon$. In this case, 
$\mme_P  \left[ L_n(\mu^*, \mu^c(P) ) \right] = \mmp_P  \left[ n^{r} |\mu^* - \mu^c(P) | > \epsilon \right]$.
The minimax risk is the supremum probability over $\m{P}$ of an estimator
being farther than $\epsilon n^{-r}$ from the truth minimized over all possible estimators $\mu^*$. 
The rate $r$ is an upper bound on the rate of convergence if for small $\epsilon>0$
there exists a lower bound $L \in (0,1)$ such that 
$\inf_{\mu^*} \sup_{P \in \m{P}} \mmp_P  \left[ n^{r} |\mu^* - \mu^c(P) | > \epsilon \right] \geq L$ 
for large $n$.
The rate $r$ is the minimax optimal rate if it is an upper bound and achievable; 
that is, if there exists an estimator $\ha\mu$ that converges at rate $r$ uniformly.
The estimator $\ha\mu$ converges at rate $r$ uniformly if, for any small $\delta>0$, there exists large $\epsilon \in (0,\infty)$ such that
$
\sup_{P \in \m{P}} \mmp_{P}  \left[ n^r | \ha{\mu} - \mu^c(P) | > \epsilon \right] < \delta  ~~ \text{for large n}.
$
See discussion in Chapter 2 of \cite{tsybakov2009}.

One common approach to compute lower bounds for the minimax risk is to use Le Cam's  method.
For $\epsilon >0$, choose two models $P,Q \in \m{P}$ such that $|\mu^c(P) - \mu^c(Q)| > \epsilon n^{-r}$.
Le Cam's method leads to the following inequality:
$\inf_{\mu^*} \sup_{P \in \m{P}} \mmp_P  \left[ n^{r} |\mu^* - \mu^c(P)| > \epsilon/2 \right] \geq e^{-n KL(P,Q)}/4$,  
where $KL(P,Q)$ is the Kullback-Leibler divergence between $P$ and $Q$. 
See Equations (2.7), (2.9), and Theorem 2.2(iii) of \cite{tsybakov2009}.
This inequality is used to prove part \eqref{theo_srd_kinf_minimax_a} with $r=1/2$.

Consider the continuous counterfactual density $\omega^c(\bc)$. 
The researcher must be choose a counterfactual density
such that its marginal densities
$\int \omega^c(c,d,d')~d(d') $
and 
$\int \omega^c(c,d,d')~d(d) $
are different functions; otherwise, $\mu^c=0$.
Construct an infinitely differentiable bounded function $g(c,d) \geq 0$
such that $\int [g(c,d') - g(c,d)] ~ \omega(\bc) ~d\bc =1 $.

Construct two models $P,Q \in \m{P}$ as follows. 
Let $\eps_i \sim N(0,1)$ and $X_i \sim U[0,1]$ iid and independent of each other.
Pick $\xi>2\sqrt{\pi} \epsilon>0$.
For model $P$, define $Y_i(d)=\Phi\left( \xi n^{-1/2} g(X_i,d) + \eps_i \right)$,
where $\Phi$ is the standard normal cdf. For model $Q$, define   $Y_i(d)=\Phi\left( \eps_i \right)$.
The expectation of $Y_i(d)$ conditional on $X_i=c$, that is, $R(c,d)$, is an infinitely differentiable function.  
The variables have bounded support, and models $P$ and $Q$ satisfy all the conditions to be in $\m{P}$.
Under model $P$, 

$\beta(\bc;P)=\mme_P[Y_i(d') - Y_i(d)~|X_i=c]$

$ = \mme_P[\Phi\left( \xi n^{-1/2} g(X_i,d') + \eps_i \right) - \Phi\left( \xi n^{-1/2} g(X_i,d) + \eps_i \right) ~|X_i=c]$

$ = \mme_P[\phi\left( \eps_i^* \right)  \xi n^{-1/2}  \left( g(c,d')-g(c,d) \right)  ~|X_i=c]$

$ = \mme_P[\phi\left( \eps_i^* \right)]  ~ \xi n^{-1/2}  \left( g(c,d')-g(c,d) \right)  $

\noindent where $\phi$ is the standard normal pdf, and $\eps_i^*$ is in between $\eps_i + \xi n^{-1/2} g(c,d')$ and $\eps_i + \xi n^{-1/2} g(c,d)$.
As $n$ grows large, $\mme_P[\phi\left( \eps_i^* \right)] = \mme_P[\phi\left( \eps_i \right)] + o(1) = \frac{1}{2 \sqrt{\pi}} + o(1)$ where
the $o(1)$ term is uniform over $(c,d,d')$. Then, 

$\mu^c(P)=\frac{1}{2 \sqrt{\pi}} \xi n^{-1/2}  \int \left( g(c,d')-g(c,d) \right) ~ \omega(\bc) ~d\bc + o\left( n^{-1/2} \right)
=\frac{1}{2 \sqrt{\pi}} \xi n^{-1/2}  + o\left( n^{-1/2} \right). $

\noindent Under model $Q$, $\beta(\bc;Q)=0$. Therefore, 

$\mu^c(P) - \mu^c(Q) =\frac{1}{2 \sqrt{\pi}} \xi n^{-1/2}  + o\left( n^{-1/2} \right) > \epsilon n^{-1/2}$ 

\noindent for large $n$, because $\frac{1}{2 \sqrt{\pi}} \xi > \epsilon$.

\bigskip

Next, we use the following inequality to  show that $r=1/2$ is an upper bound on the rate of convergence, 

$\inf_{\mu^*} \sup_{P \in \m{P}} \mmp_P  \left[ n^{1/2} |\mu^* - \mu^c(P)| > \epsilon/2 \right] \geq e^{-n KL(P,Q)}/4$.  

\noindent Let $d^*$ be such that $g(c,d^*)>0$ for some $c$. 
For simple models like $P$ and $Q$, any function of the variables $\{ Y_i(d) \}_{d \in \m{D}}$ and $X_i$ can be rewritten as functions 
of $Y_i(d^*)$ and $X_i$ because $Y_i(d)$ is a deterministic function
of $Y_i(d^*)$ and $X_i$ for any d.
 It suffices to look at the distribution of  $Y_i(d^*)$ and $X_i$ instead of the distribution of $\{Y_i(d)\}_{d \in \m{D}}$ and $X_i$.
Consider the Kullback-Leibler divergence for the distributions $P$ and $Q$ of $(Y_i(d^*),X_i)$,

$KL(P,Q) = \bigintsss \log \left[ \frac{p(y,x)}{q(y,x)} \right] p(y,x) ~dydx$,

\noindent where $p(y,x)$ and $q(y,x)$ are the pdfs of $(Y_i(d^*),X_i)$ under $P$ and $Q$ respectively.
Define $\ti{Y}_i = \xi n^{-1/2} g(X_i,d^*) + \eps_i $ under $P$, and 
$\ti{Y}_i =  \eps_i $ under $Q$. 
It follows that $(Y_i(d^*),X_i) = ( \Phi(\ti{Y}_i) ,X_i)$ under both $P$ and $Q$.
The Kullback-Leibler divergence is invariant to such a transformation of variables.

$KL(P,Q) = \bigintsss \log \left[ \frac{\ti{p}(y,x)}{\ti{q}(y,x)} \right] \ti{p}(y,x) ~dydx$,

\noindent where $\ti{p}(y,x) = \phi\left( y-\xi n^{-1/2} g(x,d^*) \right)$
 and $\ti{q}(y,x)=\phi(y)$ are the pdfs of $(\ti{Y}_i,X_i)$ under $P$ and $Q$ respectively.

$KL(P,Q)=\bigintsss \log \left[ \frac{ \exp\left\{-(1/2) \left( y-\xi n^{-1/2} g(x,d^*) \right)^2 \right\}}{\exp\left\{-(1/2)  y^2 \right\} }\right] \ti{p}(y,x) ~dydx$

$=\bigintsss \log \left[  \exp\left\{  y \xi n^{-1/2} g(x,d^*) - (1/2)\xi^2 n^{-1} g(x,d^*)^2  \right\}  \right] \ti{p}(y,x) ~dydx$

$=\bigintsss  \left[ y \xi n^{-1/2} g(x,d^*) - (1/2)\xi^2 n^{-1} g(x,d^*)^2  \right] \ti{p}(y,x) ~dydx$

$=\bigintsss  (1/2) \xi^2 n^{-1} g(x,d^*)^2  ~dx$

$=(1/2) \xi^2 n^{-1} \bigintsss   g(x,d^*)^2   ~dx>0$

\noindent Pick $\eta >1$ such that $(1/2) \xi^2 \bigintsss   g(x,d^*)^2   ~dx < \log(\eta) $. Then,
$ e^{-n KL(P,Q)} / 4 > 1/(4\eta) >0$, and

$\inf_{\mu^*} \sup_{P \in \m{P}} \mmp_P  \left[ n^{1/2} |\mu^* - \mu^c(P)| > \epsilon/2 \right] \geq \frac{1}{4\eta}$.  

\bigskip

This is a minimax lower bound for estimators $\mu^*$ that are functions of
an ideal sample of $\{ Y_i(d) \}_{d \in \m{D}}$ and $X_i$.
In practice, only part of these variables are observed according to the schedule of cutoff-doses
$\{ \bc_j \}_{j=1}^K$. The set of all estimators $\ti{\mu}$
that are functions of the observed variables $(Y_i,X_i)$ is a subset of the set of all estimators
$\mu^*$. Therefore, the lower bound above is also a minimax lower bound for all estimators 
$\ti{\mu}$:

$\inf_{\ti{\mu}} \sup_{P \in \m{P}} \mmp_P  \left[ n^{1/2} |\ti{\mu} - \mu^c(P)| > \epsilon/2 \right] \geq \frac{1}{4\eta}$.  

\bigskip

\begin{center}
\textbf{\underline{Part \eqref{theo_srd_kinf_minimax_b}}}
\end{center}

Let $\ha{\mu}$ denote $\ha{\mu}^c$ and $\mu = \mu^c(P)$ for notational ease.
The goal is to show that, for any small $\delta>0$, there exists large $\epsilon \in (0,\infty)$ such that
$
\sup_{P \in \m{P}} \mmp_{P}  \left[ n^{1/2} | \ha{\mu} - \mu | > \epsilon \right] < \delta
$ for large $n$.
The choice of $\overline{h}_1$ plus the discussion preceding Equation \ref{eq:kinf:clt:s2:rate} lead
 to $(\m{V}^c_n)^{-1/2} \geq M n^{1/2}$ for large $n$.
Thus, 
$\mmp_{P}  \left[ n^{1/2} | \ha{\mu} - \mu | > \epsilon/M \right] \leq \mmp_{P}  \left[ (\m{V}^c_n)^{-1/2} | \ha{\mu} - \mu | > \epsilon \right]$
uniformly over $\m{P}$ for large $n$.
Theorem \ref{theo_srd_kinf_est_int} breaks $(\m{V}^c_n)^{-1/2} | \ha{\mu} - \mu | $
into four components: 
the CLT component $N_n$, that converges in distribution to a standard normal (part \eqref{eq:kinf:clt}); 
the first-step bias component $B_n$, that converges in probability to zero (part \eqref{eq:kinf:bias});
the integration error component $I_n$, that converges in probability to zero (part \eqref{eq:kinf:int});
and the remainder terms $R_n$, that converge in probability to zero (parts \eqref{eq:kinf:op1:a} and \eqref{eq:kinf:op1:b}). 
It is true that

$\mmp_P\left(  (\m{V}^c_n)^{-1/2} | \ha{\mu} - \mu | > \epsilon \right) \leq
\mmp_P\left( | N_n | > \epsilon/4 \right)
+
\mmp_P\left( | B_n | > \epsilon/4 \right)
+
\mmp_P\left( | I_n | > \epsilon/4 \right)
+
\mmp_P\left( | R_n | > \epsilon/4 \right)
$.

\noindent Hence, for  each of the four components, it suffices to show that for a choice of $\delta>0$ small, there exist
large $n$ and large $\epsilon>0$ such that the supremum probability over $\m{P}$ is less than $\delta$.
The restrictions placed  in the class of models $\m{P}$ along with the proof of Theorem 
\ref{theo_srd_kinf_est_int} give the result.

$N_n$-\textbf{term}: 
part \eqref{eq:kinf:clt} has zero mean and unit variance (see Equation \ref{eq:kinf:clt:sumiid:center}).
Chebyshev's inequality implies that the supremum probability of  the absolute value of part \eqref{eq:kinf:clt} being greater than $\epsilon/4$ is smaller
than $16/\epsilon^{2}$ uniformly over $\m{P}$.

$B_n$-\textbf{term}: $B_n$ is the sum of $B_{n}^+$ (part \eqref{eq:kinf:bias:plus}),
 and $B_{n}^-$ (part \eqref{eq:kinf:bias:minus}). $B_n^+$ converges in probability to zero uniformly over $\m{P}$
 because
 the approximations of  Lemma \ref{lemma_app},
 the bounds on the derivatives of $R(x,d)$, on $f(x)$, on $\sigma^2(x,d)$, and on the rate of $(\m{V}_n^c)^{-1/2}$
 hold uniformly over $\m{P}$.
 The weights $\Delta_j$ do not depend on $P$.
 The same idea applies to $B_{n}^-$.
 Thus, for  $\epsilon>0$,
$\sup_{P \in \m{P}} \mmp_P \left( | B_n | > \epsilon/4 \right)$ converges to zero.

$I_n$-\textbf{term}: 
uniform bounds on the partial derivatives of $\beta(\bc)$ yield
a uniform bound on the approximation error of the numerical integral.
 See  Lemma \ref{lemma_rate_int}. 
The bounds on the rate of $(\m{V}_n^c)^{-1/2}$ also hold uniformly over  $\m{P}$.
 For every $\epsilon>0$
there exists a large $n$ for which $| I_n | \leq \epsilon/4$ holds uniformly over $\m{P}$.

$R_n$-\textbf{term}: $R_n$ is the sum of $R_n^a$ (part \eqref{eq:kinf:op1:a}) and  $R_n^b$ (part \eqref{eq:kinf:op1:b}).
 Lemma \ref{lemma_rest_theo} shows that both converge in probability to zero.
They also converge in probability to zero uniformly over $\m{P}$ for the same reasons
that the $B_n$-term above does.  
Therefore, for  $\epsilon>0$,
$\sup_{P \in \m{P}} \mmp_P \left( | R_n | > \epsilon/4 \right)$ converges to zero.

$\square$

\subsection{Proof of Theorem \ref{theo_frd_id}}
Define $\delta_{j,l}=\mmi\{ {\m{U}}_i(c_j)=d_l \}$.
Assumption \ref{assu_frd_id_nodef} (no ever-defiers) implies the following facts:
(i) $\mmp\left[\delta_{j-1,l}=0,~ \delta_{j,l}=1 \right]=0$  for $\forall l \neq j$;
(ii) $\mmp\left[\delta_{j-1,l}=1,~ \delta_{j,l}=0 \right]=0$  for   $ l = j$;

\noindent
(iii) $\mmp\left[\delta_{j-1,l}=1,~ \delta_{j,l}=0,~ \delta_{j,u}=1 \right]=0$  for  $\forall u \neq j$ and  $u \neq l$.

Fix a small $e>0$ and use fact (i) to obtain

$\mme[Y_i | X_i=c_{j}+e ]
=\sum_{l=0 }^{K} \mme\left[ \delta_{j,l} Y_i(d_{l}) |  X_i =c_{j}+ e  \right]$

$=\sum_{l=0 }^{K} \mme\left[ Y_i(d_{l}) |  X_i =c_{j}+ e,~ \delta_{j,l} = 1,~ \delta_{j-1,l} = 1  \right]
\mmp \left[ \delta_{j,l} = 1,~ \delta_{j-1,l} = 1 |  X_i =c_{j}+ e\right]
$

$
+
\sum_{l=0 }^{K} \mme\left[ Y_i(d_{l}) |  X_i =c_{j}+ e,~ \delta_{j,l} = 1,~ \delta_{j-1,l} = 0  \right]
\mmp \left[ \delta_{j,l} = 1,~ \delta_{j-1,l} = 0 |  X_i =c_{j}+ e\right]
$

$
=\sum_{l=0 }^{K} \mme\left[ Y_i(d_{l}) |  X_i =c_{j}+ e,~ \delta_{j,l} = 1,~ \delta_{j-1,l} = 1  \right]
\mmp \left[ \delta_{j,l} = 1,~ \delta_{j-1,l} = 1 |  X_i =c_{j}+ e\right]
$

$
+
\mme\left[ Y_i(d_{j}) |  X_i =c_{j}+ e,~ \delta_{j,j} = 1,~ \delta_{j-1,j} = 0  \right]
\mmp \left[ \delta_{j,j} = 1,~ \delta_{j-1,j} = 0 |  X_i =c_{j}+ e\right].
$

Take the limit as $e \downarrow 0$. Use that $\{ \delta_{j,l} = 1,~ \delta_{j-1,l} = 1  \}$
and $\{ \delta_{j,j} = 1,~ \delta_{j-1,j} = 0  \}$
are finite unions of measurable sets of the form $\{\m{U}_i = \bar{\m{U}}  \}$, $\bar{\m{U}} \in \m{U}^*$.
The conditional expectation and probability are continuous functions
of $x$ conditional on these sets (Assumption \ref{assu_frd_id_nodef}).

$
\lim_{e \downarrow 0}\mme[Y_i | X_i=c_{j}+e ]
$

$
=\sum_{l=0 }^{K} \mme\left[ Y_i(d_{l}) |  X_i =c_{j},~ \delta_{j,l} = 1,~ \delta_{j-1,l} = 1  \right]
\mmp \left[ \delta_{j,l} = 1,~ \delta_{j-1,l} = 1 |  X_i =c_{j} \right]
$

$
+
\mme\left[ Y_i(d_{j}) |  X_i =c_{j},~ \delta_{j,j} = 1,~ \delta_{j-1,j} = 0  \right]
\mmp \left[ \delta_{j,j} = 1,~ \delta_{j-1,j} = 0 |  X_i =c_{j} \right]
$

$
=\sum_{l=0 }^{K} \mme\left[ Y_i(d_{l}) |  X_i =c_{j},~ \delta_{j,l} = 1,~ \delta_{j-1,l} = 1  \right]
\mmp \left[ \delta_{j,l} = 1,~ \delta_{j-1,l} = 1 |  X_i =c_{j}\right]
$

$
+
\sum_{l=0, l \neq j }^{K} \mme\left[ Y_i(d_{j}) |  X_i =c_{j},~ \delta_{j,j} = 1,~\delta_{j-1,l}=1  \right]
\mmp \left[ \delta_{j,j} = 1,~\delta_{j-1,l}=1 |  X_i =c_{j} \right]
$

Similarly, use fact (ii) for the left-hand-side limit,
$
\lim_{e \downarrow 0} \mme[Y_i | X_i=c_{j}-e ]
$

$
=\sum_{l=0 }^{K} \mme\left[ Y_i(d_{l}) |  X_i =c_{j},~ \delta_{j,l} = 1,~ \delta_{j-1,l} = 1  \right]
\mmp \left[ \delta_{j,l} = 1,~ \delta_{j-1,l} = 1 |  X_i =c_{j} \right]
$

$
+
\sum_{l=0, l \neq j }^{K} \mme\left[ Y_i(d_{j}) |  X_i =c_{j},~ \delta_{j,l} = 0,~ \delta_{j-1,l} = 1  \right]
\mmp \left[ \delta_{j,l} = 0,~ \delta_{j-1,l} = 1 |  X_i =c_{j} \right]
$

Use fact (iii) to get

$
=\sum_{l=0 }^{K} \mme\left[ Y_i(d_{l}) |  X_i =c_{j},~ \delta_{j,l} = 1,~ \delta_{j-1,l} = 1  \right]
\mmp \left[ \delta_{j,l} = 1,~ \delta_{j-1,l} = 1 |  X_i =c_{j} \right]
$

$
+
\sum_{l=0, l \neq j }^{K} \mme\left[ Y_i(d_{l}) |  X_i =c_{j},~ \delta_{j,j} = 1,~ \delta_{j-1,l} = 1 \right]
\mmp \left[ \delta_{j,j} = 1,~ \delta_{j-1,l} = 1 |  X_i =c_{j} \right].
$

The difference between right and left hand side limits is $B_j$

$=
  \sum_{l=0, l \neq j }^{K} \mme\left[ Y_i(d_{j}) - Y_i(d_{l}) |  X_i =c_{j}, \delta_{j,j} = 1,\delta_{j-1,l}=1  \right]
\mmp \left[ \delta_{j,j} = 1, \delta_{j-1,l}=1 |  X_i =c_{j} \right]
$

$
=\sum\limits_{l=0, l \neq j }^{K} \beta_{ec}(c_j,d_l,d_j)
\mmp \left[ \delta_{j,j} = 1,~ \delta_{j-1,l} = 1 |  X_i =c_{j} \right].
$

Next, it is shown that $\mmp \left[ \delta_{j,j} = 1,~ \delta_{j-1,l} = 1 |  X_i =c_{j} \right] = \omega_{j,l}$,
for $l \neq j$.

$
\mmp \left[ \delta_{j,j} = 1,~ \delta_{j-1,l} = 1 |  X_i =c_{j} \right]
=\mmp[\delta_{j,l} =0,\delta_{j-1,l} =1 | X_i =c_{j} ]
$

$
=\mmp[\delta_{j,l} =0| X_i =c_{j} ] - \mmp[\delta_{j-1,l} =0 | X_i =c_{j} ]
$

$
=\lim_{e \downarrow 0}
\left\{
\mmp[ {\m{U}}_i(c_j) \neq d_l | X_i =c_{j}+e ] - \mmp[{\m{U}}_i(c_{j-1}) \neq d_l | X_i =c_{j}-e ] \right\}
$

$
=\lim_{e \downarrow 0} \left\{
\mmp[ D_i=d_l | X_i =c_{j}-e ] - \mmp[ D_i= d_l | X_i =c_{j}+e ]
\right\}
$

where facts (i) and (ii) are used. This proves the first part of the theorem.

If $\beta_{ec}$ belongs to the class of functions of Assumption $\ref{assu_srd_id_param}$,
then $ B_{j} ={\ti{W}_j} \btheta_0$. If the matrix
$\bWt'\bWt=\sum\limits_{j} \widetilde{W}_{j} \widetilde{W}_j '$
is invertible, then the second part of the theorem follows.

Conversely, suppose that the $p>K$ elements in
$\{ \beta_{ec}(c_j,d_l,d_j) \text{ for } (j,l): \omega_{j,l}>0 \}$ are identified
for every fuzzy assignment $\tilde \bc_1 = (c_1,d_0,d_1) $, ..., $\tilde \bc_p = (c_K,d_{K-1},d_K) $.
Identification means that there is an unique solution to the following constrained linear system:
\begin{gather*}
\left[
\begin{array}{c}
B_1
\\
\vdots
\\
B_K
\end{array}
\right]
=
\left[
\begin{array}{cccccccccc}
\omega_{1,0} & \ldots & \omega_{1,K} & 0                   & 0       & 0                   & \ldots & 0 & \ldots & 0
\\
0                   & \ldots & 0                   & \omega_{2,0} & \ldots & \omega_{2,K} & \ldots & 0 & \ldots & 0
\\
\vdots            & \ddots&                      &                      &          &                      &         & \vdots   & \ddots         & \vdots
\\
0                   & \ldots & 0                   & 0                   & \ldots & 0                    & \ldots & \omega_{K,0} & \ldots & \omega_{K,K-1}
\end{array}
\right]
\left[
\begin{array}{c}
\beta_{1}
\\
\beta_{2}
\\
\vdots
\\
\beta_{p}
\end{array}
\right]
\\
\text{ such that } (\beta_{1},\ldots, \beta_{p}) \in  \m{G}.
\end{gather*}

The $K \times p$ matrix of coefficients has rank equal to $K$ because the assignment is fuzzy. Since
$p>K$, the unconstrained system has infinitely many nonzero solutions of the form
$\boup{b} = \boup{b}^p + \sum_{m=1}^{p-K} \lambda_m \boup{b}^s_m$
for any $(\lambda_1, \ldots, \lambda_{p-K}) \in \mmr^{p-K}$, where $\{ \boup{b}^s_m \}_{m=1}^{p-K}$ are the basis vectors of the null-space of the unconstrained system, and $\boup{b}^p$ is a particular solution.
By assumption, the constrained system has one unique solution $\boup{b}^* \in \m{G}$, so $\boup{b}^* + \boup{b}^s_m \not\in \m{G} ~~\forall m$. This implies
that $\boup{b}^s_m \not\in \m{G} ~~\forall m$ because $\m{G}$ is a vector subspace of $\mmr^p$. This is a set of $p-K$ linearly independent vectors in $\mmr^p$ not in $\m{G}$. Therefore, the $dim \m{G} \leq p-(p-K)=K$, and the third part of the theorem follows.
$\square$

\end{singlespace}


\newpage
\setcounter{page}{1}

\bigskip

\begin{center}
 \Large ``Regression Discontinuity Design with Many Thresholds''

\normalsize Marinho Bertanha

\end{center}

\bigskip

\section{Supplemental Appendix}\label{sec_supappen}
\indent


\subsection{Local Polynomial Regressions}
\label{sec_supappen_lpr}
\indent

The first lemma is a straightforward generalization of \cite{porter2003}'s Theorem 3(a).
It derives the asymptotic distribution of the Local Polynomial Regression (LPR) estimator for the difference
in side-limits of a conditional mean.
The lemma considers the mean of the $q \times 1$ vector
$\bY_i$ rather than a scalar $Y_i$ in order to cover the CLT proof in the fuzzy case (Theorem \ref{theo_frd_est})
as a special case with $\bY_i=[Y_i ~~\boup{\m{W}}(c_{j},D_i)']'$.
At a cutoff $c_{j}$, the difference in conditional mean is $\bJ_{j}$, for
$j =1 ,\ldots, K$.
Given a choice of a bandwidth $h_{1j}>0$ for the cutoff $c_j$,
a kernel density function $k(u)$, and a polynomial
order $\rho_1 \in \mathbb{Z}_+$,
the $l$-coordinate of $\bJ_{j}$ is denoted $J_{j,l}$ and estimated as follows:
\begin{align}
\ha{J}_{j,l} = & \ha{a}_{j+,l} - \hat{a}_{j-,l}
\\
(\ha{a}_{j+,l}, \ha{\boup{b}}_{j+,l} )
= &
\argmin\limits_{(a, \boup{b} )}
	\sum\limits_{i=1}^n \Bigg\{ 
		k \left( \frac{X_i - c_{j}}{h_{1j} } \right) v_{i}^{j+}
\nonumber\\
&\hspace{2.5cm}
		\bigg[ Y_{j,l} - a - b_1 (X_i - c_{j}) - \ldots -b_{\rho_1} (X_i - c_{j})^{\rho_1} \bigg]^2
\Bigg\}
\\
\nonumber\\
(\hat{a}_{j-,l}, \hat{\boup{b}}_{j-,l} )
= & 
\argmin\limits_{(a, \boup{b} )}
	\sum\limits_{i=1}^n \Bigg\{
		k \left( \frac{X_i - c_{j}}{h_{1j} } \right) v_{i}^{j-}
\nonumber\\
& \hspace{2.5cm}
		\bigg[ Y_{j,l} - a - b_1 (X_i - c_{j}) - \ldots -b_{\rho_1} (X_i - c_{j})^{\rho_1} \bigg]^2
\Bigg\}
\end{align}
where
\begin{align}
v_i^{j+}= & \mmi\left\{c_{j} \leq X_i < c_{j}+h_{1j}  \right\}
\\
v_i^{j-} = & \mmi\left\{c_{j}-h_{1j} < X_i < c_{j} \right\}
\\
\boup{b}= & (b_1,\ldots,b_{\rho_1}).
\end{align}

\begin{lemma} \label{lemma_porter}

For each $j=1,\ldots,K$, assume the following conditions hold:

\begin{itemize}
\item[(i)] The kernel density function $k:\mathbb{R}\to\mathbb{R}$ is symmetric around zero, 
has compact support $[-M,M]$ for some $M \in (0,\infty)$,
and it is Lipschitz continuous;
\item[(ii)] The distribution of $X_i$ has probability density function $f(x)$ that is continuous
 and has bounded support  $\m{X}=[\underline{ \m{X}},\overline{ \m{X}}]$; the cutoff $c_j$
 belongs to $(\underline{ \m{X}},\overline{ \m{X}})$;
\item[(iii)]
Define
\begin{gather}
\bJ_{j}=\lim_{e \downarrow 0} \left\{ \mme[ \boup{Y}_i|X_i=c_j+e] - \mme[\boup{Y}_i|X_i=c_j-e] \right\}
\\
\bmm(x)=\mme[ \bY_i | X_i=x ] - \sum_{j=1}^K \mmi\{ c_{j} \leq x \} \bJ_{j}
\end{gather}
The function $\bmm(x)$ is at least $\rho_1 +1$ times continuously
differentiable wrt $x$ for all $x$ in a compact interval centered at $c_{j}$ except for  $x=c_{j}$;
there exists
left and right side derivatives at $x=c_{j}$ up to the same order;
its $\rho_1+1$-th partial derivative wrt x is denoted as $\nabla_x^{(\rho_1+1)}\bmm(x)$ and the side limits of the derivatives
are denoted as $\lim\limits_{x \to c_j^{\pm} } \nabla_x^{(\rho_1+1)}\bmm(x) = \nabla_x^{(\rho_1+1)}\bmm(c_j^{\pm}) $;

\item[(iv)] Define
\begin{align}
\beps_i=\bY_i - \mme[ \bY_i | X_i]
\\
\bzeta(x) = \mme\left[\beps_i \beps_i'|X_i=x \right],
\end{align}
and assume $\mme[\| \beps_i \|^3 | X_i] $ is bounded.
The matrix-valued function  $\bzeta(x)$
is continuous wrt $x$ for all $x$ in a compact interval
centered at $c_{j}$  except for  $x=c_{j}$;
there exists left and right side limits at $x=c_{j}$
denoted
$\lim\limits_{x \to c_j^{\pm} }  \bzeta(x) =  \bzeta(c_j^{\pm})$,
where $\bzeta(c_j^{\pm})$ is positive-definite;

\item[(v)] As $n\to\infty$ and  $h_{1j} \to 0$, assume $n h_{1j} \to \infty$ 
and $\sqrt{n h_{1j}} ~ h_{1j}^{p + 1} \to C \in [0,\infty)$
\end{itemize}

\bigskip

Then, for each $j$

\begin{align}
(\bmV_{nj})^{-1/2} \left( \ha{\bJ}_j - \bmB_{nj} - \bJ_j \right)  \dto N(\bzero,\bI)
\end{align}
with
$(\bmV_{nj})^{-1/2}$ being the inverse of the square root of the symmetric and positive-definite matrix $\bmV_{nj}$,
$(\bmV_{nj})^{1/2} =  O \left( \left( n h_{1j} \right)^{1/2}  \right)$, 
$(\bmV_{nj})^{1/2} \bmB_{nj}  =  O_P \left( \left( n h_{1j} \right)^{1/2} h_{1j}^{\rho_1 + 1}  \right)$,
where $\bzero$ is the $q \times 1$ vector of zeros, and 
$\bI$ is the $q\times q$ identity matrix.
The bias $\bmB_{nj}$ and variance $\bmV_{nj}$ terms are characterized as follows,
\begin{align}
\bmB_{nj} = & 
			\frac{ h_{1j}^{\rho_1+1}  f(c_j)} { (\rho_1 + 1 )! } \be_1'  \left[ 
				   \bG_n^{j +}  \bgam^* \nabla^{\rho_1 + 1 }_x \bmm (c_j^+) 
				- 
				\bG_n^{j -}  \bgam^* \nabla^{\rho_1 + 1 }_x\bmm (c_j^-)
			\right]
\label{def:veclpr:bias}
\\
\bmV_{nj} = &n \mme \Bigg\{
	\left[
		\frac{1}{n h_{1j}}
		k\left( \frac{X_i - c_{j}}{h_{1j}} \right)
	\right]^2
	\left[
		\be_1'
		\left(
			v_i^{j +} \mme[ \bG_n^{j +} ]  
			-
			v_i^{j -} \mme[ \bG_n^{j -} ]  
		\right)
		\bHt_{i}^{j}				
	\right]
	\beps_i
\notag \\
& \hspace{1cm}
	\beps_i'
	\left[
		\bHt_{i}^{j'}				
		\left(
			v_i^{j +} \mme[ \bG_n^{j + '} ]  
			-
			v_i^{j -} \mme[ \bG_n^{j - '} ]  
		\right)
		\be_1	
	\right]
\Bigg\}
\label{def:veclpr:var}
\end{align}
where $\beps_i = \bY_i - \mme[\bY_i|X_i]$;
and
\begin{align}
\gamma^*= & [\gamma_{\rho_1 + 1} ~~ \ldots ~~\gamma_{2 \rho_1 +1}]'
\\
\bgam^* =&\bI_q \otimes \gamma^*
\text{,  where }\bI_q \text{ is the }q \times q \text{ identity matrix, and}
 \\
& \otimes \text{ denotes the Kronecker product;} 
\notag \\
\be_1= & \bI_q \otimes e_1 \text{, where }
\\
& e_1  \text{ is the } (\rho_1+1 \times 1) \text{ vector } e_1=[1~0~0~\cdots~0]'
\notag
\\
\gamma_d= & \int_0^1 k(u) u^d du 
\\
H(u) = & \left[ 1 ~~ u ~~ \ldots ~~ u^{\rho_1} \right]'
\\
H_{i}^{j} = & H( X_i - c_{j}  )
\\
\ti{H}_{i}^{j} = & H\left(  \frac{X_i - c_{j}}{ h_{1j} }   \right)
\\
\bH(u) = & \bI_q \otimes H(u)
\\
\bH_i^{j} = & \bH(X_i - c_{j})
\\
\bHt_i^{j} = & \bH\left(  \frac{X_i - c_{j}}{ h_{1j} }   \right)
\\
\bG_n^{j \pm} = & \left[ \frac{1}{n h_{1j} } \sum_{i=1}^n k\left( \frac{X_i - c_{j}}{ h_{1j} } \right)
v_i^{j \pm} \bHt_{i}^{j }  \bHt_{i}^{j'}  \right]^{-1}
\\
\bG^{j \pm} = & f(c_{j}) ^{-1} \bI_{q} \otimes \Gamma_{\pm}^{-1}
\\
\Gamma_+ = & \Gamma, ~~ \Gamma_- =  \{ (-1)^{j+l} \Gamma_{j,l} \}_{j,l}
\\
\Gamma = & \left[
	\begin{array}{ccc}
		\gamma_0 & \ldots & \gamma_{\rho_1}
\\
		\vdots & \vdots & \vdots
\\
		\gamma_{\rho_1} & \ldots & \gamma_{2 \rho_1}
	\end{array}
\right]
\end{align}
where $vec(A_{m \times n})=[a_{1,1}, \ldots, a_{m,1}, a_{1,2},\ldots, a_{m,n}]'$
which makes $\bphi^{j \pm}$ a $q(\rho_1+1) \times 1$ vector.

Moreover,

\begin{align}
\left( \bG_n^{j \pm} - \bG^{j \pm} \right) = & O_P \left( \frac{1}{ \sqrt{ nh_{1j} } } \right)
\\
\mme\left[ \bG_n^{j \pm} \right] = & \bG^{j \pm} + O(h_{1j})
\end{align}

\end{lemma}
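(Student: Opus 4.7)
The plan is to extend Porter's (2003) Theorem 3(a) to vector-valued $\bY_i$ by carrying the Kronecker product structure through his scalar argument. First, I would write each coordinate-wise WLS problem in closed form: for the $l$-th coordinate of $\bY_i$, $\ha{a}_{j+,l} - \ha{a}_{j-,l}$ is the first entry of a difference of ordinary kernel-weighted least squares solutions. Stacking across $l = 1, \ldots, q$ and exploiting that $\bG_n^{j\pm}$ is block-diagonal of the form $\bI_q \otimes (\cdot)^{-1}$ yields the compact representation $\ha{\bJ}_j = \be_1' \bG_n^{j+} \boup{S}_n^{j+} - \be_1' \bG_n^{j-} \boup{S}_n^{j-}$, with $\boup{S}_n^{j\pm} = (nh_{1j})^{-1} \sum_i k((X_i-c_j)/h_{1j}) v_i^{j\pm} \bHt_i^j \bY_i$.

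Second, in a neighborhood of $c_j$ of radius $h_{1j}$, which excludes all other cutoffs once $n$ is large, I would write $\bY_i = \bmm(X_i) + \bJ_j \mmi\{X_i \geq c_j\} + \beps_i$ and Taylor expand $\bmm$ to order $\rho_1$ from each side of $c_j$. Because the WLS fits any polynomial of degree $\leq \rho_1$ exactly, the terms of orders $0$ through $\rho_1$ cancel out beyond reproducing the target $\bJ_j$. What remains is the $(\rho_1+1)$-th order remainder, whose kernel-weighted average produces $\bmB_{nj}$ as in \eqref{def:veclpr:bias}, plus the stochastic term $\be_1' \bG_n^{j+} \boup{T}_n^{j+} - \be_1' \bG_n^{j-} \boup{T}_n^{j-}$ with $\boup{T}_n^{j\pm} = (nh_{1j})^{-1} \sum_i k((X_i-c_j)/h_{1j}) v_i^{j\pm} \bHt_i^j \beps_i$. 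Continuity of $\nabla_x^{\rho_1+1}\bmm$ from each side lets me replace the intermediate Taylor point by $c_j^\pm$ at cost $o_P(h_{1j}^{\rho_1+1})$.

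Third, I would apply the Lindeberg--Feller CLT to $\be_1' \mme[\bG_n^{j+}] \boup{T}_n^{j+} - \be_1' \mme[\bG_n^{j-}] \boup{T}_n^{j-}$, a sum of $n$ i.i.d. zero-mean $q\times 1$ vectors whose variance matches $\bmV_{nj}$ in \eqref{def:veclpr:var} by direct computation. Boundedness of $k$, compactness of $\m{X}$, and the bounded third conditional moment of $\beps_i$ deliver Lyapunov's condition with the kernel scaling $(nh_{1j})^{-1/2}$. Replacing $\mme[\bG_n^{j\pm}]$ by the random $\bG_n^{j\pm}$ contributes a cross-term of order $O_P(\|\bG_n^{j\pm}-\mme[\bG_n^{j\pm}]\|\,\|\boup{T}_n^{j\pm}\|) = O_P((nh_{1j})^{-1})$, negligible relative to the CLT rate $(nh_{1j})^{-1/2}$, so asymptotic normality transfers to the formulation stated in the lemma.

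The main obstacle is establishing the final two rate claims $\bG_n^{j\pm} - \bG^{j\pm} = O_P((nh_{1j})^{-1/2})$ and $\mme[\bG_n^{j\pm}] = \bG^{j\pm} + O(h_{1j})$, since the bias/variance decomposition above builds on them. For each entry $(r,s)$ of the scalar Gram matrix whose inverse forms each Kronecker block of $\bG_n^{j\pm}$, a change of variables $u = (X_i - c_j)/h_{1j}$ together with the Lipschitz continuity of $k$ and smoothness of $f$ near $c_j$ gives expectation $f(c_j)\gamma_{r+s,\pm} + O(h_{1j})$, while its variance is $O((nh_{1j})^{-1})$ by the same change of variables and boundedness of the integrand. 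A matrix inverse perturbation bound, justified by positive-definiteness of $\Gamma_\pm$ (a standard consequence of the non-negative kernel assumption), then lifts these rates from the raw Gram matrices to their inverses, delivering both characterizations. The same uniform-rate machinery is what Lemma \ref{lemma_app} in the supplemental appendix packages for reuse in the proofs of Theorems \ref{theo_srd_est_avg} and \ref{theo_srd_kinf_est_int}.
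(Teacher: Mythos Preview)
Your proposal is correct and follows essentially the same route as the paper's own proof. The paper organizes the argument via an explicit four-term decomposition---a CLT piece $(\bmV_{nj})^{-1/2}(\bJ_j^* - \mme[\bJ_j^*|\m{X}_n])$, a bias piece $(\bmV_{nj})^{-1/2}(\ti{\bJ}_j - \bmB_{nj})$, and two $o_P(1)$ remainders separating the random-versus-expected $\bG_n^{j\pm}$ contribution from the random-versus-expected Taylor remainder---but your three steps (closed-form via Kronecker structure, Taylor expansion with polynomial reproduction yielding $\bmB_{nj}$ plus a stochastic piece, Lindeberg--Feller applied with $\mme[\bG_n^{j\pm}]$ and then a perturbation argument to pass to $\bG_n^{j\pm}$) cover exactly the same content, and your handling of the final rate claims via change of variables plus a matrix-inverse perturbation bound is precisely what the paper does.
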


\begin{proof}

Following \cite{porter2003},  the
jump estimator is equal to $\ha{\bJ}_{j} = \ha{\ba}_{j}^+ - \ha{\ba}_{j}^-$,
where $\ha{\ba}_{j}^{\pm} = [\ha{a}_{j,1}^{\pm},\ldots,\ha{a}_{j,q}^{\pm}]'$.

\begin{align}
\ha{\ba}_{j}^{\pm} = & \be_1' \left[ \frac{1}{n h_{1j} } \sum_{i=1}^n k\left( \frac{X_i - c_{j}}{ h_{1j} } \right)
v_i^{j \pm} \bH_{i}^{j } \bH_{i}^{j' } \right]^{-1}
\left[ \frac{1}{n h_{1j} } \sum_{i=1}^n k\left( \frac{X_i - c_{j}}{ h_{1j} } \right) v_i^{j \pm} \bH_{i}^{j } \bY_i \right]
\\
= &
\be_1' \bG_n^{j \pm}
\left[ \frac{1}{n h_{1j} } \sum_{i=1}^n k\left( \frac{X_i - c_{j}}{ h_{1j} } \right) v_i^{j \pm} \bHt_{i}^{j } \bY_i \right]
\\
\ha{\bJ}_{j} = &
\be_1' \bG_n^{j +}
\left[ \frac{1}{n h_{1j} } \sum_{i=1}^n k\left( \frac{X_i - c_{j}}{ h_{1j} } \right) v_i^{j +} \bHt_{i}^{j } \bY_i \right]
\notag\\
-&
\be_1' \bG_n^{j -}
\left[ \frac{1}{n h_{1j} } \sum_{i=1}^n k\left( \frac{X_i - c_{j}}{ h_{1j} } \right) v_i^{j -} \bHt_{i}^{j } \bY_i \right]
\end{align}
and note that $\bH_i^{j }$ changes to $\bHt_i^{j}$ because $\be_1$ only takes the first elements of each of the $q$
stacked $\rho_1+1$ vectors.
Define $\bJ_j^*$, $\ti{\bJ}_j$  as follows,
\begin{align}
\bJ_j^* 
= &
		\be_1'\mme[ \bG_n^{j +} ] 
		\frac{1}{nh_{1j}}
		\sum_{i=1}^n
			k\left( \frac{X_i - c_{j}}{h_{1j}} \right)
			v_i^{j +} 
			\bHt_{i}^{j}
			\bY_i
\notag
\\*
& \hspace{1.2cm}
		-\be_1'\mme[ \bG_n^{j -} ] 
		\frac{1}{nh_{1j}}
		\sum_{i=1}^n
			k\left( \frac{X_i - c_{j}}{h_{1j}} \right)
			v_i^{j -} 
			\bHt_{i}^{j}
			\bY_i
	\Big\}
\\
= & 
	\sum_{i=1}^n
		\underset{\bphi_{n}(X_i)}{
			\underbrace{
				\frac{1}{nh_{1j}}
				k\left( \frac{X_i - c_{j}}{h_{1j}} \right)
				\be_1'
				\left(
					v_i^{j +} \mme[ \bG_n^{j +} ]  
					-
					v_i^{j -} \mme[ \bG_n^{j -} ]  
				\right)
				\bHt_{i}^{j}
			}
		}
				\bY_i
\\
= & 
\sum_{i=1}^n \bphi_{n}(X_i) \bY_i
\\
\ti{\bJ}_j = & 
		\be_1'  \bG_n^{j +}
		\mme\left[  
			\frac{1}{nh_{1j}}
			\sum_{i=1}^n
				k\left( \frac{X_i - c_{j}}{h_{1j}} \right)
				v_i^{j +} 
				\bHt_{i}^{j}
				\bY_i^{j+} 
		\right]
\notag
\\*
& \hspace{1.2cm} \left.
		-\be_1'  \bG_n^{j -}  
		\mme\left[
			\frac{1}{nh_{1j}}
			\sum_{i=1}^n
				k\left( \frac{X_i - c_{j}}{h_{1j}} \right)
				v_i^{j -} 
				\bHt_{i}^{j}
				\bY_i^{j-}  
		\right]
	\right\}
\end{align}
where $\bY_i^{j \pm}$ are defined by  
\begin{align}
\bY_i^{j +} = &  \bY_i - \bH_i^{j'} \bphi^{j+} = \bmm(X_i) + \sum_{l=1}^K \mmi\{ c_{l} \leq X_i \}  \bJ_{l} + \beps_i - \bH_i^{j'} \bphi^{j+} 
\label{eq:lemma_porter:def:yij:plus}
\\
\bY_i^{j -} = & \bY_i - \bH_i^{j'} \bphi^{j-} = \bmm(X_i) + \sum_{l=1}^K \mmi\{ c_{l} \leq X_i \} \bJ_{l}  + \beps_i - \bH_i^{j'} \bphi^{j-} 
\label{eq:lemma_porter:def:yij:minus}
\\
\bphi^{j+} = & vec \left[ \bmm({c_{j}}) + \sum_{l=1}^{j} \bJ_{l}
~~ \nabla_x \bmm(c_{j}^+) ~~ \ldots ~~ \nabla_x^{\rho_1} \bmm(c_{j}^+)/\rho_1 ! \right]'
\\
\bphi^{j-} = & vec \left[ \bmm({c_{j}}) + \sum_{l=1}^{j-1} \bJ_{l} ~~ \nabla_x \bmm(c_{j}^-) ~~ \ldots ~~ \nabla_x^{\rho_1} \bmm(c_{j}^-)/\rho_1 ! \right]'.
\end{align}

Write

\begin{align}
(\bmV_{nj})^{-1/2} \left(  \ha{\bJ}_j - \bmB_{nj} - \bJ_j \right)   
= &
(\bmV_{nj})^{-1/2} \left( \bJ_j^* - \mme[\bJ_j^* |\m{X}_n] \right)
\label{eq:veclpr:clt}
\\
+ &
(\bmV_{nj})^{-1/2} \left( \ti{\bJ}_j - \bmB_{nj}  \right)
\label{eq:veclpr:bias}
\\
+ &
(\bmV_{nj})^{-1/2} \left( \ha{\bJ}_j - \mme[ \ha{\bJ}_j |\m{X}_n] - \left( \bJ_j^* - \mme[\bJ_j^* |\m{X}_n] \right) \right)
\label{eq:veclpr:op1:a}
\\
+ &
(\bmV_{nj})^{-1/2} \left( \mme[\ha{\bJ}_j - \bJ_j |\m{X}_n] - \ti{\bJ}_j  \right).
\label{eq:veclpr:op1:b}
\end{align}

The proof applies a central limit theorem (CLT) to show that
part \eqref{eq:veclpr:clt} converges in distribution to a standard normal;
it demonstrates that $\m{B}_{nj}$ approximates the first-order bias, that is, part \eqref{eq:veclpr:bias} converges in probability to zero;
and that parts \eqref{eq:veclpr:op1:a} and \eqref{eq:veclpr:op1:b} converge in probability to zero.

\begin{center}
\textbf{\underline{Part \eqref{eq:veclpr:clt} }} 
\end{center}

\indent

First, find the rate that $(\bmV_{nj})^{-1/2}$ grows.
Use the change of variables $u = (x - c_j)/h_{1j}$ to evaluate the expectation:
\begin{align}
\bmV_{nj} = &\frac{1}{n h_{1j}} \bigintss_{-1}^{1} 
	k\left( u \right)^2
	\left[
		\be_1'
		\left(
			\mmi\{u \geq 0\} \mme[ \bG_n^{j +} ]  
			-
			\mmi\{u < 0\} \mme[ \bG_n^{j -} ]  
		\right)
		\bH(u)				
	\right]
	\bzeta(c_j + u h_{1j})
\notag \\
& \hspace{1.5cm}
	\left[
		\bH(u)							
		\left(
			\mmi\{u \geq 0\} \mme[ \bG_n^{j +'} ]  
			-
			\mmi\{u < 0\} \mme[ \bG_n^{j -'} ]  
		\right)
		\be_1	
	\right]
	f(c_j + u h_{1j})  ~ du
\\
\left\| \bmV_{nj} \right\| > & \frac{M}{n h_{1j}}.
\label{eq:veclpr:clt:s2:rate}
\end{align}
because $\mme[ \bG_n^{j \pm} ]$ is approximately equal to a positive-definite matrix $\bG^{j \pm}$
so that the integral evaluates to a positive-definite matrix.

Second, 
\begin{align}
(\bmV_{nj})^{-1/2} \left( \bJ_j^* - \mme[\bJ_j^* |\m{X}_n] \right)
&=
(\bmV_{nj})^{-1/2}
\sum_{i=1}^n
\bphi_n(X_i) \left( \bY_i - \mme[\bY_i|X_i] \right)
\\
&=
(\bmV_{nj})^{-1/2}
\sum_{i=1}^n
\bphi_n(X_i) \beps_i 
\label{eq:veclpr:clt:sumiid:center}
\end{align}
Equation \ref{eq:veclpr:clt:sumiid:center} is a sum of iid random vectors with zero mean,
where $\bmV_{nj}$ is the variance of the numerator.
The Lindeberg condition is verified next.
Take an arbitrary $\delta>0$. 
\begin{align}
& \sum_{i=1}^{n} \mme\left[
	\left\| \bmV_n \right\|^{-1} \left\|  \bphi_n(X_i)\beps_i \right\|^2 
	\mmi\left\{
				\left\| \bmV_n \right\|^{-1/2} \left\|  \bphi_n(X_i) \beps_i \right\|  > \delta
	\right\}
\right]
\\
= &
 n \mme\left[
	\left\| \bmV_n \right\|^{-1} \left\|  \bphi_n(X_i)\beps_i \right\|^2 
	\mmi\left\{
		\left\|  \bphi_n(X_i) \beps_i \right\|  > \delta \left\| \bmV_n \right\|^{1/2}
	\right\}
\right]
\\
\leq &
 n \mme\left[
	\left\| \bmV_n \right\|^{-1} \left\|  \bphi_n(X_i)\beps_i \right\|^3 
	\delta^{-1} \left\| \bmV_n \right\|^{-1/2}
\right]
\\
\leq &
 M n (n h_{1j})^{3/2} \mme\left[
	\left\|  
					\frac{1}{nh_{1j}}
				k\left( \frac{X_i - c_{j}}{h_{1j}} \right)
				\be_1'
				\left(
					v_i^{j +} \mme[ \bG_n^{j +} ]  
					-
					v_i^{j -} \mme[ \bG_n^{j -} ]  
				\right)
				\bHt_{i}^{j}
				\beps_i 
	\right\|^3 
	\delta^{-1} 
\right]
\\
\leq &
 M \frac{n (n h_{1j})^{3/2}}{n^3 h_{1j}^2} \mme\Bigg[
				\frac{1}{h_{1j}}
				\left|   k\left( \frac{X_i - c_{j}}{h_{1j}} \right) \right|^3			
				\left\|  
					v_i^{j +} \mme[ \bG_n^{j +} ]  
					-
					v_i^{j -} \mme[ \bG_n^{j -} ]  
				\right\|^3
				\left\|   
					\bH\left(\frac{X_i - c_j}{h_{1j}} \right)
				\right\|^3
\notag\\
&\hspace{3cm}
				\mme[\left\|  
					\beps_i 
				\right\|^3 | X_i ] 
	~\delta^{-1} ~
\Bigg]
\\
= &
 M (n h_{1j})^{-1/2} \bigintss_{-1}^1 \Bigg[
				\left|   k\left( u \right) \right|^3			
				\left\|  
					\mmi\{u \geq 0 \} \mme[ \bG_n^{j +} ]  
					-
					\mmi\{u < 0 \} \mme[ \bG_n^{j -} ]  
				\right\|^3
				\left\|   
					\bH\left(u \right)
				\right\|^3
\notag \\*
& \hspace{3.5cm}
 f(c_j + u h_{1j}) \Bigg] ~ du
\\
\leq &  M (n h_{1j})^{-1/2}  =o(1)
\end{align}
where the inequality $x^2 \mmi\{|x|> \delta \} \leq x^3 \delta^{-1}$,
boundedness of $\mme[\left\|  \beps_i \right\|^3 | X_i ]$,
and the rate of $\bmV_{nj}^{-1}$ are used.
The multivariate Lindeberg-Feller CLT says that Equation \ref{eq:veclpr:clt:sumiid:center}, 
and thus part \eqref{eq:veclpr:clt}, converges in distribution to a standard normal.

\begin{center}
\textbf{\underline{Part \eqref{eq:veclpr:bias} }} 
\end{center} 
 
First consider,
\begin{align}
& \mme\left[
	\frac{1}{ h_{1j} } 
	k\left( \frac{X_i - c_j}{ h_{1j} } \right)
	v_{i}^{j+} \bHt_{i}^{j}  \mme\left[ \bY_i^{j+} | X_i \right]
\right] 
\\
= & \mme\left[
	\frac{1}{h_{1j}} 
	k\left( \frac{X_i - c_j}{h_{1j}} \right)
	v_{i}^{j+}  \bHt_{i}^{j}
	\frac{ \nabla^{\rho_1 + 1 }_x \bmm(c_j^+)}{ (\rho_1 + 1 )! }
	\left( \frac{X_i-c_j}{h_{1j}} \right)^{\rho_1+1} h_{1j}^{\rho_1+1}  
\right]
\label{eq:veclpr:bias:expansion:fo}
\\
 & \hspace{.5cm} + 
\mme\left[
	\frac{1}{h_{1j}} 
	k\left( \frac{X_i - c_j}{h_{1j}} \right)
	v_{i}^{j+} \bHt_{i}^{j}
	\frac{ \nabla^{\rho_1 + 2 }_x\bmm(c_j^*)}{ (\rho_1 + 2 )! }
	\left( \frac{X_i-c_j}{h_{1j}} \right)^{\rho_1+2} h_{1j}^{\rho_1+2}  
\right]
\label{eq:veclpr:bias:expansion:so}
\\
= & 
h_{1j}^{\rho_1+1}  
f(c_j) \bgam^*
\frac{ \nabla^{\rho_1 + 1 }_x\bmm (c_j^+)}{ (\rho_1 + 1 )! }
+ O\left( h_{1j}^{\rho_1+2} \right),
\label{eq:veclpr:bias:expansion}
\\
& \text{ and }
\nonumber
\\
\bmB_{nj} = &  \bmB_{nj}^+ -  \bmB_{nj}^-
\\
\bmB_{nj}^+ = & 			
			\frac{ h_{1j}^{\rho_1+1} f(c_j) }{ (\rho_1 + 1 )! }
			\be_1'  \bG_n^{j +} \bgam^*
			\nabla^{\rho_1 + 1 }_x\bmm (c_j^+)
\\
\bmB_{nj}^- = & 			
			\frac{ h_{1j}^{\rho_1+1} f(c_j) }{ (\rho_1 + 1 )! }
			\be_1'  \bG_n^{j -} \bgam^*
			\nabla^{\rho_1 + 1 }_x\bmm (c_j^-)
\end{align}
where $\mme\left[ \bY_i^{j+} | X_i \right]$ is the difference between $\mme[\bY_i|X_i]$ 
and its $\rho_1$-th order Taylor expansion around $X_i=c_j$ (see Equations \ref{eq:lemma_porter:def:yij:plus} and \ref{eq:lemma_porter:def:yij:minus}).
The expectations in Equations \ref{eq:veclpr:bias:expansion:fo} and \ref{eq:veclpr:bias:expansion:so},
without the $h_{1j}^{\rho_1+1} $ and $h_{1j}^{\rho_1+2}$ terms, are bounded over $j$
because the kernel, derivatives, and polynomials are bounded functions of $u = (x-c_j)h_{1j}^{-1}$.  

Next, 
\begin{align} 
(\bmV_{nj})^{-1/2} \left( \ti{\bJ}_j - \bmB_{nj}  \right) = &
(\bmV_{nj})^{-1/2}	\be_1'  \bG_n^{j +}
		\mme\left[  
			\frac{1}{nh_{1j}}
			\sum_{i=1}^n
				k\left( \frac{X_i - c_{j}}{h_{1j}} \right)
				v_i^{j +} 
				\bHt_{i}^{j}
				\bY_i^{j+} 
		\right]
\notag\\
-&
(\bmV_{nj})^{-1/2} \bmB_{nj}^+
\label{eq:veclpr:bias:plus}
\\
- & 
(\bmV_{nj})^{-1/2}	\be_1'  \bG_n^{j -}  
		\mme\left[
			\frac{1}{nh_{1j}}
			\sum_{i=1}^n
				k\left( \frac{X_i - c_{j}}{h_{1j}} \right)
				v_i^{j -} 
				\bHt_{i}^{j}
				\bY_i^{j-}  
		\right]
\notag\\
+&
(\bmV_{nj})^{-1/2} \bmB_{nj}^-
\label{eq:veclpr:bias:minus}
\end{align}

Consider part \eqref{eq:veclpr:bias:plus}. Part \eqref{eq:veclpr:bias:minus} follows a symmetric argument.
Use \eqref{eq:veclpr:bias:expansion} and write
\begin{align}
(\ref{eq:veclpr:bias:plus}) = & 
(\bmV_{nj})^{-1/2}	\left[ 
			\be_1'  \bG_n^{j +} \bgam^* 
			h_{1j}^{\rho_1+1}  
			\frac{ \nabla^{\rho_1 + 1 }\bmm (c_j^+)}{ (\rho_1 + 1 )! }
			f(c_j) 
			- \bmB_{nj}^+
		\right]
\\
+ & 
(\bmV_{nj})^{-1/2}	\be_1'  \bG_n^{j +}
				 O\left( h_{1j}^{\rho_1+2} \right)
\\
=& 
0 + O_P\left( \sqrt{n h_{1j}}  h_{1j}^{\rho_1+2} \right) = o_P(1)
\end{align}
where the second equality uses the definition of 
$\bmB_{nj}^+$, the fact that $\bG_n^{j +}=O_P(1)$, and the rate condition
$\left( n h_{1j} \right)^{1/2} h_{1j}^{\rho_1+1} =O(1) $.

\bigskip

\begin{center}
\textbf{\underline{Part \eqref{eq:veclpr:op1:a} }} 
\end{center} 

\begin{align}
\eqref{eq:veclpr:op1:a} = & (\bmV_{nj})^{-1/2}	\Bigg[
		\be_1' \bG_n^{j +}  
		\frac{1}{nh_{1j}}
		\sum_{i=1}^n
			k\left( \frac{X_i - c_{j}}{h_{1j}} \right)
			v_i^{j +} 
			\bHt_{i}^{j}
			\beps_i
\notag
\\*
& \hspace{2cm}
		-\be_1' \bG_n^{j -}  
		\frac{1}{nh_{1j}}
		\sum_{i=1}^n
			k\left( \frac{X_i - c_{j}}{h_{1j}} \right)
			v_i^{j -} 
			\bHt_{i}^{j}
			\beps_i
\Bigg]
\\
- &(\bmV_{nj})^{-1/2}	\Bigg[
		\be_1' \mme\left[ \bG_n^{j +}\right]
		\frac{1}{nh_{1j}}
		\sum_{i=1}^n
			k\left( \frac{X_i - c_{j}}{h_{1j}} \right)
			v_i^{j +} 
			\bHt_{i}^{j}
			\beps_i
\notag
\\*
& \hspace{2cm}
		-\be_1' \mme\left[ \bG_n^{j -}\right] 
		\frac{1}{nh_{1j}}
		\sum_{i=1}^n
			k\left( \frac{X_i - c_{j}}{h_{1j}} \right)
			v_i^{j -} 
			\bHt_{i}^{j}
			\beps_i
\Bigg]
\\
= & (\bmV_{nj})^{-1/2}	
		\be_1' \left[ \bG_n^{j +}  - \mme\left[ \bG_n^{j +}\right]  \right]
		\frac{1}{nh_{1j}}
		\sum_{i=1}^n
			k\left( \frac{X_i - c_{j}}{h_{1j}} \right)
			v_i^{j +} 
			\bHt_{i}^{j}
			\beps_i
\\*
-& (\bmV_{nj})^{-1/2}
		\be_1' \left[ \bG_n^{j -}  - \mme\left[ \bG_n^{j -}\right]   \right]  
		\frac{1}{nh_{1j}}
		\sum_{i=1}^n
			k\left( \frac{X_i - c_{j}}{h_{1j}} \right)
			v_i^{j -} 
			\bHt_{i}^{j}
			\beps_i
\\
= & O\left( (nh_{1j})^{1/2} \right) O_P\left( (nh_{1j})^{-1/2} \right) o_P(1)
\\
+&O\left( (nh_{1j})^{1/2} \right) O_P\left( (nh_{1j})^{-1/2} \right) o_P(1)
\\
=& o_P(1)
\end{align}
because of $\left[ \bG_n^{j \pm}  - \mme\left[ \bG_n^{j \pm}\right]  \right]=O_P\left( (nh_{1j})^{-1/2} \right)$,
and the fact that the zero mean terms $(nh_{1j})^{-1}$
		$\sum_{i=1}^n$
			$k\left( \frac{X_i - c_{j}}{h_{1j}} \right)$
			$v_i^{j \pm} $
			$\bHt_{i}^{j}$
			$\beps_i $
			converge in probability to zero since their variances are $O\left( (nh_{1j})^{-1}\right)$.

\begin{center}
\textbf{\underline{Part \eqref{eq:veclpr:op1:b} }} 
\end{center}

Use the definitions of  $\bphi^{j \pm}$ and $\bY_i^{j \pm}$ to write:
\begin{align}
\ha{\bJ}_{j} - \bJ_{j} = & \ha{\ba}_{j}^{+} - \ha{\ba}_{j}^{-} - \bJ_{j}
\\
=& \be_1' \left[ \frac{1}{n h_{1j} } \sum_{i=1}^n k\left( \frac{X_i - c_{j}}{ h_{1j} } \right)
v_i^{j +} \bH_{i}^{j } \bH_{i}^{j' } \right]^{-1}
\left[ \frac{1}{n h_{1j} } \sum_{i=1}^n k\left( \frac{X_i - c_{j}}{ h_{1j} } \right) v_i^{j +} \bH_{i}^{j } \bY_i \right]
\notag \\
- &
\be_1' \left[ \frac{1}{n h_{1j} } \sum_{i=1}^n k\left( \frac{X_i - c_{j}}{ h_{1j} } \right)
v_i^{j -} \bH_{i}^{j } {\bH_{i}^{j' }} \right]^{-1}
\left[ \frac{1}{n h_{1j} } \sum_{i=1}^n k\left( \frac{X_i - c_{j}}{ h_{1j} } \right) v_i^{j -} \bH_{i}^{j } \bY_i \right]
\notag \\
- & \be_1' \left( \bphi^{j+} - \bphi^{j-} \right)
\\
= & \be_1' \left[ \frac{1}{n h_{1j} } \sum_{i=1}^n k\left( \frac{X_i - c_{j}}{ h_{1j} } \right)
v_i^{j +} \bH_{i}^{j } {\bH_{i}^{j' }} \right]^{-1}
\left[ \frac{1}{n h_{1j} } \sum_{i=1}^n k\left( \frac{X_i - c_{j}}{ h_{1j} } \right) v_i^{j +} \bH_{i}^{j }
\bY_i^{j+}
\right]
\notag \\*
- &
\be_1' \left[ \frac{1}{n h_{1j} } \sum_{i=1}^n k\left( \frac{X_i - c_{j}}{ h_{1j} } \right)
v_i^{j -} \bH_{i}^{j} {\bH_{i}^{j' }} \right]^{-1}
\left[ \frac{1}{n h_{1j} } \sum_{i=1}^n k\left( \frac{X_i - c_{j}}{ h_{1j} } \right) v_i^{j -} \bH_{i}^{j }
\bY_i^{j-}
\right]
\\
= & \be_1' \bG_n^{j +}
\left[ \frac{1}{n h_{1j} } \sum_{i=1}^n k\left( \frac{X_i - c_{j}}{ h_{1j} } \right) v_i^{j +} \bHt_{i}^{j} \bY_i^{j+}  \right]
\nonumber
\\
- &
\be_1' \bG_n^{j -}
\left[ \frac{1}{n h_{1j} } \sum_{i=1}^n k\left( \frac{X_i - c_{j}}{ h_{1j} } \right) v_i^{j -} \bHt_{i}^{j} \bY_i^{j-}  \right]
\end{align}

Thus, part \eqref{eq:veclpr:op1:b} becomes
\begin{align}
\eqref{eq:veclpr:op1:b} =&
(\bmV_{nj})^{-1/2} \be_1' \bG_n^{j +}
\left[ \frac{1}{n h_{1j} } \sum_{i=1}^n k\left( \frac{X_i - c_{j}}{ h_{1j} } \right) v_i^{j +} \bHt_{i}^{j} \mme[ \bY_i^{j+} | X_i]  \right]
\nonumber
\\
-&
(\bmV_{nj})^{-1/2} \be_1' \bG_n^{j +}
\mme \left[ \frac{1}{n h_{1j} } \sum_{i=1}^n k\left( \frac{X_i - c_{j}}{ h_{1j} } \right) v_i^{j+} \bHt_{i}^{j} \mme[ \bY_i^{j+} | X_i]  \right]
\label{eq:veclpr:op1:b:plus}
\\
- &
(\bmV_{nj})^{-1/2} \be_1' \bG_n^{j -}
\left[ \frac{1}{n h_{1j} } \sum_{i=1}^n k\left( \frac{X_i - c_{j}}{ h_{1j} } \right) v_i^{j -} \bHt_{i}^{j} \mme[ \bY_i^{j-} | X_i]  \right]
\notag\\
+&
(\bmV_{nj})^{-1/2} \be_1' \bG_n^{j -}
\mme \left[ \frac{1}{n h_{1j} } \sum_{i=1}^n k\left( \frac{X_i - c_{j}}{ h_{1j} } \right) v_i^{j -} \bHt_{i}^{j} \mme[ \bY_i^{j-} | X_i]  \right]
\label{eq:veclpr:op1:b:minus}
\end{align}

The next steps show that part \eqref{eq:veclpr:op1:b:plus} converges in probability to zero.
A symmetric proof shows that part \eqref{eq:veclpr:op1:b:minus} also converges in probability to zero.
\begin{align}
\eqref{eq:veclpr:op1:b:plus} =&
(\bmV_{nj})^{-1/2} \be_1' \bG_n^{j +} 
	\Bigg\{
		\frac{1}{n h_{1j} } \sum_{i=1}^n  
		k\left( \frac{X_i - c_{j}}{ h_{1j} } \right) v_i^{j +} \bHt_{i}^{j} \mme[ \bY_i^{j+} | X_i] 
\nonumber
\\
&\hspace{3cm} 
		-\mme \left[ \frac{1}{n h_{1j} } \sum_{i=1}^n   
			k\left( \frac{X_i - c_{j}}{ h_{1j} } \right) v_i^{j +} \bHt_{i}^{j} \mme[ \bY_i^{j+} | X_i]  
		\right]
\Bigg\}
\\
=&
(\bmV_{nj})^{-1/2} \be_1' \bG_n^{j +} h_{1j}^{\rho_1+1}
	\Bigg\{
		\frac{1}{n h_{1j} } \sum_{i=1}^n  
		k\left( \frac{X_i - c_{j}}{ h_{1j} } \right) v_i^{j +} \bHt_{i}^{j} \mme[ \bY_i^{j+} | X_i] h_{1j}^{-(\rho_1+1)}
\nonumber
\\
&\hspace{4cm} 
		-\mme \left[ \frac{1}{n h_{1j} } \sum_{i=1}^n   
			k\left( \frac{X_i - c_{j}}{ h_{1j} } \right) v_i^{j +} \bHt_{i}^{j} \mme[ \bY_i^{j+} | X_i]  
		\right] h_{1j}^{-(\rho_1+1)}
\Bigg\}
\\
&= O\left( (nh_{1j})^{1/2} \right) O_P(1) h_{1j}^{ \rho_1+1 } O_P\left( (nh_{1j})^{-1/2} \right) = o_P(1)
\end{align}
where the zero mean term in curly brackets is normalized by $h_{1j}^{\rho_1+1}$ 
(see Equation \ref{eq:veclpr:bias:expansion}), and its variance after the normalization decreases at $(nh_{1j})^{-1}$. 

\end{proof}

\subsection{Uniformity with Large Number of Cutoffs}\label{sec:supp:app:uniform}
\indent

A class of sets $\mathcal{S}$ of a space $\Omega$ is said to shatter a $n$-point subset of $\Omega$, $D_n$, if for
every subset of $D_n$, $D^{(i)}_n$, there exists a set in $\mathcal{S}$, $S$, such that $S \cap D^{(i)}_n = D^{(i)}_n$. A
class of sets $\mathcal{S}$ is said to be a VC class if there exists a finite non-negative integer $v$ such that 
 no $v$-point set $D_v$ is shattered by $\mathcal{S}$. 
In this case, the index of the VC class is $v$. 
For a class of functions from $\Omega$ to $\mathbb{R}$, $\m{F}$,
call the class of graphs of $\m{F}$, $g \m{F} = \{ (x,t) \in \Omega \times \mathbb{R} : t \leq f(x) \leq 0 \text{ or }
0 \leq t \leq f(x) \text{ for }f \in \m{F} \}$.
A class of functions $\m{F}$ is called a VC-subgraph class
 if $gF$ is a VC class.\footnote{One may define VC subgraph using
 alternative definitions of class of graphs, but those lead to definitions of 
VC subgraph that are equivalent to ours. See \cite{van1996weak}'s Problem 2.6.11.}
The class $\m{F}$ is enveloped by function $F$
if $\forall f \in \m{F}$, $|f(x)|\leq F(x)$.
Let $(\Omega,\mathcal{A},Q)$ be a probability space. A covering number $N_1(\varepsilon,Q, \m{F})$
is defined to be the smallest non-negative integer $m$ for which there exists functions $f_1, \ldots, f_m$ in $ \m{F}$ such
that $\min\limits_j E_Q |f-f_j | \leq \varepsilon$ for every $f \in  \m{F}$.

It is possible to build a complex VC-subgraph class by combining basic VC-subgraph classes.
Any class of functions made of a finite union or intersection of VC-subgraph classes
is also VC subgraph (\cite{pollard1984convergence}'s Lemma 2.15).
Let $\phi:\mathbb{R} \to \mathbb{R}$ be a monotone function. Define the class of functions
which consists of translations of this monotone function $\phi$. That is,
$ \m{F}=\{ f:\mathbb{R}\to\mathbb{R}\text{ with }  f(x)=\phi \left( x-c \right)~\forall c \in \mathbb{R}\}$.
Then, $\m{F}$ is a VC-subgraph class with index equal to 2 (\cite{van1996weak}'s Lemma 2.6.16).
Moreover, if $\m{G}$ is VC subgraph, then $\phi \circ \m{G} = \{\phi(g) : g\in \m{G} \}$
is VC subgraph (\cite{van1996weak}'s Lemma 2.6.18).
A VC-subgraph class $\m{F}$ of uniformly bounded
functions has covering number $N_1(\varepsilon,Q, \m{F}) \leq A \varepsilon ^ {-W}$, where the constants $A,W$
depend only on the VC index of the class of functions and on the uniform bound
(\cite{pollard1984convergence}'s Lemma 2.25). The next lemma lists more properties. 
\begin{lemma}\label{lemma_covnum_sumprod}
Let $\m{F}$ and $\m{G}$ be VC-subgraph classes of functions uniformly bounded by a constant $0<M<\infty$. Define
$\m{H}_{+}=\{ f+g : f \in \m{F}, g \in \m{G} \}$ and $\m{H}_{\times }=\{ fg : f \in \m{F}, g \in \m{G} \}$.
For a fixed Lipschitz continuous function $\phi$ with Lipschitz constant $C$, define
$\m{H}_{\phi}=\{ \phi(f) : f \in \m{F} \}$.
Then,
\begin{enumerate}
	\item $N_1(\varepsilon,Q,\m{H}_{+}) \leq N_1(\varepsilon/2,Q,\m{F}) N_1(\varepsilon/2,Q,\m{G})$
	\item $N_1(\varepsilon,Q,\m{H}_{\times }) \leq N_1(\varepsilon/2M,Q,\m{F}) N_1(\varepsilon/2M,Q,\m{G})$
	\item $N_1(\varepsilon,Q,\m{H}_{\phi }) \leq N_1(\varepsilon/C,Q,\m{F}) $
\end{enumerate}
\end{lemma}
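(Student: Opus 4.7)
The plan is to prove each of the three parts by constructing an explicit cover of the composite class from covers of the input classes, using only the triangle inequality (for sums), the uniform bound (for products), and the Lipschitz property (for $\phi$-compositions).

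For Part 1, I would begin by fixing $\varepsilon > 0$ and choosing $\varepsilon/2$-covers $\{f_1,\ldots,f_m\}$ of $\m{F}$ and $\{g_1,\ldots,g_k\}$ of $\m{G}$ under $L_1(Q)$ of sizes $m = N_1(\varepsilon/2,Q,\m{F})$ and $k = N_1(\varepsilon/2,Q,\m{G})$. For any $h = f+g \in \m{H}_+$, pick indices $i,j$ with $\mme_Q|f-f_i|\leq \varepsilon/2$ and $\mme_Q|g-g_j|\leq \varepsilon/2$. The triangle inequality then gives $\mme_Q|h - (f_i+g_j)| \leq \varepsilon$, so the $mk$ functions $\{f_i+g_j\}$ form an $\varepsilon$-cover of $\m{H}_+$, yielding the claimed bound.

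For Part 2, I would start with $\varepsilon/(2M)$-covers $\{f_i\}$ and $\{g_j\}$ of $\m{F}$ and $\m{G}$. For $h = fg \in \m{H}_\times$, pick $f_i, g_j$ from the respective covers and write the standard splitting
\begin{equation*}
|fg - f_i g_j| = |fg - f_i g + f_i g - f_i g_j| \leq |g|\,|f-f_i| + |f_i|\,|g-g_j| \leq M|f-f_i| + M|g-g_j|,
\end{equation*}
using the uniform envelope $M$ on both classes. Taking $L_1(Q)$-expectations yields $\mme_Q|fg - f_i g_j| \leq M\cdot \varepsilon/(2M) + M\cdot \varepsilon/(2M) = \varepsilon$, giving the product bound. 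For Part 3, take an $\varepsilon/C$-cover $\{f_i\}$ of $\m{F}$; for any $f \in \m{F}$ and corresponding $f_i$, Lipschitz continuity of $\phi$ gives $|\phi(f(x)) - \phi(f_i(x))| \leq C|f(x) - f_i(x)|$ pointwise, so integrating against $Q$ yields $\mme_Q|\phi(f) - \phi(f_i)| \leq C \cdot \varepsilon/C = \varepsilon$, and $\{\phi(f_i)\}$ is the desired $\varepsilon$-cover.

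There is no real obstacle in this lemma; the bounds are essentially a bookkeeping exercise combining the triangle inequality with the appropriate Lipschitz constants (the identity for sums, $M$ for products via the envelope, and $C$ for Lipschitz compositions). The only point requiring minor care is ensuring each $|f_i|$ is bounded by the same envelope $M$ in Part 2, which follows because the cover can be chosen from within $\m{F}$ (or else from truncations at level $M$, which can only decrease $L_1$ distances). The resulting covers have cardinalities equal to the product (or single) covering numbers claimed, completing the proof of all three parts.
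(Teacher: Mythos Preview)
Your proposal is correct and matches the paper's proof essentially line for line: the same $\varepsilon/2$ triangle-inequality argument for sums, the same add-and-subtract splitting $fg - f_i g + f_i g - f_i g_j$ with the envelope $M$ for products, and the same pointwise Lipschitz bound for $\phi$-compositions. Your parenthetical concern about $|f_i|\leq M$ is already handled by the paper's definition of the covering number, which requires the centers $f_i$ to lie in $\m{F}$.
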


\begin{proof}
Slightly modified from Theorem 3 in \cite{andrews1994empirical}.

Fix $\varepsilon>0$, pick any $h \in \m{H}_+$. It is known that $h=f+g$.

Use $f_i + g_j$ to approximate $f+g$, where $E_Q|f-f_i|\leq \varepsilon/2$
and $E_Q|g-g_j|\leq \varepsilon/2$, $1 \leq i \leq N_1(\varepsilon/2,Q,\m{F}) $, $1 \leq j \leq N_1(\varepsilon/2,Q,\m{G}) $.
It is known that these two covering numbers are finite since $\m{F}$ and $\m{G}$ are VC-subgraph.
Call $h_l=f_i + g_j$, with $1 \leq l \leq N_1(\varepsilon/2,Q,\m{F}) N_1(\varepsilon/2,Q,\m{G})$.

$E_Q|h-h_l| = E_Q|f+g -(f_i+g_j)| \leq E_Q|f -f_i| + E_Q| g -g_j| \leq \varepsilon$

Therefore, $N_1(\varepsilon,Q,\m{H}_{+}) \leq N_1(\varepsilon/2,Q,\m{F}) N_1(\varepsilon/2,Q,\m{G})$.
\bigskip

Now, pick any $h \in \m{H}_{\times }$. It is known that $h=fg$.

Use $f_i  g_j$ to approximate $fg$, where $E_Q|f-f_i|\leq \varepsilon/2M$
and $E_Q|g-g_j|\leq \varepsilon/2M$, $1 \leq i \leq N_1(\varepsilon/2M,Q,\m{F}) < \infty$, $1 \leq j \leq N_1(\varepsilon/2M,Q,\m{G})< \infty $.
Call $h_l=f_i  g_j$, with $1 \leq l \leq N_1(\varepsilon/2M,Q,\m{F}) \allowbreak N_1(\varepsilon/2M,Q,\m{G})$.

\begin{gather*}
E_Q|h-h_l| = E_Q|fg -f_i g_j| = E_Q|fg -f_i g_j -f_i g +f_i g|   \\
\leq E_Q|f - f_i| |g| + E_Q|g_j - g| |f_i | \leq M \left(  E_Q|f - f_i| + E_Q|g_j - g|  \right) \leq \varepsilon
\end{gather*}

Therefore, $N_1(\varepsilon,Q,\m{H}_{\times }) \leq N_1(\varepsilon/2M,Q,\m{F}) N_1(\varepsilon/2M,Q,\m{G})$.

\bigskip

Lastly, pick $h \in \m{H}_{\phi}$, so that $h = \phi(f)$ for some $f \in \m{F}$.
Use $f_i$ to approximate $f$, where $E_Q|f-f_i|\leq \varepsilon/C$,
 $1 \leq i \leq N_1(\varepsilon/C,Q,\m{F}) < \infty$.
Call $h_i=\phi(f_i)$ for each $i$.

$E_Q|h-h_i| = E_Q|\phi(f) - \phi(f_i)| \leq C E_Q|f-f_i|\leq \varepsilon$.

Therefore, $N_1(\varepsilon,Q,\m{H}_{\phi}) \leq N_1(\varepsilon/C,Q,\m{F}) $.

\end{proof}

Consider a set of $K+2$ positive bandwidth sequences $\underline{h}_1$, $h_{1j}$, $j=1,\ldots, K$, 
$\overline{h}_1$ that depend on $n$.
Assume $\underline{h}_1 \leq h_{1j} \leq \overline{h}_1$ for every $j$,
and that both $\underline{h}_1$ and $\overline{h}_1$ converge to zero at the same rate.
Define $v_{c,h}^{+}(x)=\mmi\{ c \leq x < c+h \}$,
and $v_{c,h}^{-}(x) = \mmi\{c-h < x < c\}$
for any $c \in \mathbb{R}$ and $h>0$, so that $v_i^{j \pm}$ (used in the main text) becomes $v_{c_j,h_{1j}}^{\pm}(X_i)$.

\begin{lemma}\label{lemma_classf}
Consider the classes of functions defined below $\m{F}_{j}^{\pm}$, $j=1,\ldots, 4$.
They depend on $n$ because the bandwidth sequences 
$\underline{h}_1$, $h_{1j}$, $j=1,\ldots, K$, 
and $\overline{h}_1$
enter their definitions.
\begin{enumerate}
\item $ \m{F}_{1}^{\pm}=
\left\{ 
	f_{c,h} : \m{X}  \to \mathbb{R} \text{ st } f_{c,h}(x)= v_{c,h}^\pm(x)
		k\left(\frac{x-c}{h}\right)
		~, c \in \m{X}
		~ , h \in [\underline{h}_1, \overline{h}_1]   
\right\}$
for a kernel density function $k(\cdot)$ that satisfies Assumption \ref{assu_srd_est_kernel};
\item $\m{F}_{2}^{\pm}=
\left\{
	f_{c,h}: \m{X} \times [-M;M] \to \mathbb{R} \text{ st }
		f_{c,h}(x,y)=v_{c,h}^\pm(x)
		k\left(\frac{x-c}{h} \right) y
		~, c \in \m{X}
		~ , h \in [\underline{h}_1, \overline{h}_1]   
\right\}$
for any $M \in (0, \infty)$;

\item
$ \m{F}_{3}^{\pm}=
\left\{
	f_{c,h}:\m{X} \to \mathbb{R} \text{ st } f_{c,h}(x)= v_{c,h}^\pm(x)
	k\left(\frac{x-c}{h}\right)
	r^{\pm}(x)]
	~, c \in \m{X}
	~ , h \in [\underline{h}_1, \overline{h}_1]   
\right\}$ 
where
$r^{\pm}(x)=\sum_{j =1}^K  v_{c_{j},h_{1j}}^\pm(x) \mme[Y_i^{j \pm}|X_i=x]$
and $Y_i^{j \pm}$ is defined in Lemma \ref{lemma_porter} (scalar case);

\item
$\m{F}_{4}^{\pm}=
\left\{
	f_{c,h}:\m{X}  \to \mathbb{R} \text{ st } f_{c,h}(x)= v_{c,h}^\pm(x)
	k\left(\frac{x-c}{h}\right)
	\left(\frac{x-c}{h}\right)^l
	~, c \in \m{X}
	~ , h \in [\underline{h}_1, \overline{h}_1]
\right\}$
for any positive integer $l\in \mathbb{Z}_+$.
\end{enumerate}

If Assumptions \ref{assu_srd_est_kernel}, \ref{assu_srd_est_mx}, and \ref{assu_srd_kinf_est_int} hold,
these functions are bounded for large $n$.
 The covering number of each of these classes satisfies
$N_1(\varepsilon,Q,\m{F}_{j}) \leq A_j \varepsilon^{-W_j}$, $j=1,2,3,4$,
where the positive constants $A_j$ and $W_j$
are independent of $n$ and $Q$.
\end{lemma}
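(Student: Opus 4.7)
The plan is to establish the result for each of the four classes by showing that each is (a) VC-subgraph with index independent of $n$ and (b) uniformly bounded by a constant independent of $n$. Pollard's Lemma 2.25 then delivers the polynomial covering-number bound $N_1(\varepsilon,Q,\m{F}_j^{\pm}) \leq A_j \varepsilon^{-W_j}$ uniformly in $Q$, with constants depending only on the VC index and the uniform envelope---both independent of $n$.

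Uniform boundedness is handled case by case. On the support of $v_{c,h}^{\pm}$ one has $(x-c)/h \in [-1,1]$, so $k((x-c)/h)$ is bounded by $\|k\|_\infty$ by Assumption \ref{assu_srd_est_kernel}, which handles $\m{F}_1^{\pm}$; $\m{F}_2^{\pm}$ picks up an extra factor of $M$ from $y \in [-M,M]$; and $\m{F}_4^{\pm}$ picks up $((x-c)/h)^l$, bounded by $1$ on the support of $v_{c,h}^{\pm}$. For $\m{F}_3^{\pm}$, the separation conditions $c_j + h_{1j} < c_{j+1}$ and $c_j \leq c_{j+1} - h_{1,j+1}$ imposed in Section \ref{sec_case1} make the supports of the $v_{c_j,h_{1j}}^{\pm}$ pairwise disjoint, so at each $x$ at most one summand in $r^{\pm}(x)$ is non-zero; that summand is a Taylor remainder of $R(\cdot,d_j)$ (or $R(\cdot,d_{j-1})$) at $c_j$ of order $\rho_1+1$, hence $O(\overline{h}_1^{\rho_1+1})$ by Assumption \ref{assu_srd_kinf_est_int}, and $\|r^{\pm}\|_\infty$ is bounded uniformly in $n$.

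For VC-subgraphness, I would decompose each class as a product of simpler pieces and invoke Lemma \ref{lemma_covnum_sumprod}(2) to combine covering numbers. The indicator family $\{v_{c,h}^{\pm}\}$ is a sub-collection of the indicators of half-open intervals in $\mathbb{R}$, which form a VC class of fixed finite index. For the kernel piece $\{x\mapsto k((x-c)/h)\}$, I would write the Lipschitz, compactly supported kernel as a difference of two bounded non-decreasing Lipschitz functions $k = k_+ - k_-$; each monotone piece has subgraphs of the form $\{(x,t) : x \geq c + h \psi_{\pm}(t)\}$, which are positivity sets of the two-parameter affine family $(x,t) \mapsto x - h\psi_{\pm}(t) - c$ sitting inside the three-dimensional vector space $\mathrm{span}\{x,\psi_{\pm}(t),1\}$. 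The standard Dudley argument for positivity sets of a finite-dimensional vector space of functions then gives VC index at most $4$, independently of the range of $(c,h)$. This makes the kernel class a difference of two VC-subgraph classes, hence VC-subgraph. The same argument applies with $k$ replaced by the bounded Lipschitz function $u\mapsto k(u)\,u^l$ on $[-1,1]$ to cover $\m{F}_4^{\pm}$. Multiplication by the coordinate function $y$ (for $\m{F}_2^{\pm}$) and by the fixed function $r^{\pm}(x)$ (for $\m{F}_3^{\pm}$)---both uniformly bounded by constants independent of $n$---only rescales covering numbers by the sup norm of those factors.

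The main obstacle I foresee is the kernel-translation-scaling class: unlike pure translations of a monotone function, the dilation parameter $h$ enters nonlinearly, so the direct appeal to known VC-subgraph results for translations of monotone functions is not available. The monotone-decomposition-plus-Dudley route avoids this, but it requires verifying that the resulting constants do not inherit any $n$-dependence from the shrinking range $h \in [\underline{h}_1,\overline{h}_1]$. A secondary subtlety is that $r^{\pm}$ in $\m{F}_3^{\pm}$ depends on $n$ through the entire schedule of cutoffs; uniformity of the final constants $A_3,W_3$ hinges on the Taylor-remainder estimate combined with the disjointness of adjacent windows, both of which are guaranteed by Assumptions \ref{assu_srd_kinf_est_cut} and \ref{assu_srd_kinf_est_int}.
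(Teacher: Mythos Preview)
Your proposal is correct in outline and reaches the same destination, but the route you take for the kernel-translation-scaling piece is genuinely different from the paper's. You decompose $k$ into monotone parts and run a Dudley positivity-set argument on each. The paper instead observes that the ``truncated linear'' class
\[
\m{G}_1^{*}=\Big\{x\mapsto v_{c,h}^{+}(x)\,\tfrac{x-c}{h}\;:\;c\in\m{X},\,h\in[\underline h_1,\overline h_1]\Big\}
\]
is VC-subgraph directly (no $4$-point set is shattered), then obtains the kernel class by composing with the Lipschitz $k$ via Lemma~\ref{lemma_covnum_sumprod}(3) and the identity $v_{c,h}^{+}(x)\,k\!\big(v_{c,h}^{+}(x)\tfrac{x-c}{h}\big)=v_{c,h}^{+}(x)\,k\!\big(\tfrac{x-c}{h}\big)$. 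This is shorter because all the translation--dilation complexity is absorbed into a single elementary class for which the shattering argument is easy; your approach also works but requires tracking the monotone decomposition and the subgraph geometry on both the $t>0$ and $t\le 0$ pieces. One wording slip: a difference of two VC-subgraph classes is \emph{not} in general VC-subgraph, so ``hence VC-subgraph'' is not correct as stated. What you actually obtain (and all that is needed) is a polynomial covering-number bound for the difference, which follows from Lemma~\ref{lemma_covnum_sumprod}(1) once each monotone piece is VC-subgraph and uniformly bounded.
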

\begin{proof}

First, note that all these functions are bounded.
The functions in the first two classes are bounded because the kernel and the indicator functions are bounded.
For the third class of functions,
\begin{gather*}
\left|
r^{+}(x)\right|
=
\left|\sum_{j=1}^K  v_{c_{j},h_{1j}}^+ (x)
\mme[Y_i^{j +}|X_i=x]
\right|
\\
\leq \max_{j} \left| \mme[Y_i^{j+}|X_i=x] \right|
=\max_{j} \left|
\left[ \nabla_x^{\rho_1+1} R(c_{j}^*(x),d_{j}) /(\rho_1+1)! \right](x-c_{j})^{\rho_1+1}
\right|
\end{gather*}
where
$c_{j}^*(x)\in(c_{j},x)$.
The function $ r^{+}(x)$ is bounded because $\nabla_x^{\rho_1+1} R(\cdot)$ is bounded (Assumption \ref{assu_srd_kinf_est_int}). An analogous argument bounds $r^{-}(x)$.
For the fourth class of functions,
\begin{gather*}
0 \leq v_{c,h}^+ \left( \frac{x-c}{h} \right)^l < 1
\\
-1 < v_{c,h}^- \left( \frac{x-c}{h} \right)^l < 0
\end{gather*}

Second, note that each of these classes is  made out of the product of the following (uniformly bounded) classes of functions:
\begin{enumerate}

\item 
$\m{G}_{1}^{\pm} = \left\{f_{c,h}: \m{X} \to \mmr \text{ st } 
	f_{c,h}(x)=
	v_{c,h}^\pm(x)
	\left(\frac{x-c}{h}\right)^l, ~ c \in \m{X},
	~ h \in [\underline{h}_1, \overline{h}_1] 
\right\}$ ;

\item $\m{G}_{2}^\pm = \left\{f_{c,h}: \m{X} \to \mmr \text{ st } 
	f_{c,h}(x)=
	 v_{c,h}^\pm(x), ~ c \in \m{X}, 
	 ~ h \in [\underline{h}_1, \overline{h}_1]
\right\}$;

\item $\m{G}_{3}=\left\{f: [-M,M] \to \mmr \text{ st }
	f(y)=y
\right\}$, that is,  only one function $f$;

\item $\m{G}_{4}^\pm=\left\{f: \m{X} \to \mmr \text{ st }
	f(x) = r^\pm(x) 
\right\}$, that is, only one function $r^\pm$;

\item $\m{G}_{5}^{\pm} = \left\{f_{c,h}: \m{X} \to \mmr \text{ st } 
	f_{c,h}(x) = v_{c,h}^\pm(x) k\left(\frac{x-c}{h}\right), ~ c \in \m{X}, 
	~ h \in [\underline{h}_1, \overline{h}_1] 
\right\}$.

\end{enumerate}

Lemma \ref{lemma_covnum_sumprod} says that it suffices to show that each of these classes has a polynomial bound on the covering number with constants
that are independent of $n$ and $Q$. 

\bigskip
$\m{G}_{1}^{\pm} = \left\{f_{c,h}: \m{X} \to \mmr \text{ st } 
	f_{c,h}(x)=
	v_{c,h}^\pm(x)
	\left(\frac{x-c}{h}\right)^l, ~ c \in \m{X},
	~ h \in [\underline{h}_1, \overline{h}_1] 
\right\}$ \\*
Take $\m{G}_{1}^{+} $ WLOG.
A function $v_{c,h}^+(x) 	\left(\frac{x-c}{h}\right)$
is a line connecting the point $(c,0)$ to $(c+h,1)$ with support $[c,c+h)$.
The class of functions $\m{G}_{1}^{*} = \Big\{ f_{c,h}: \m{X} \to \mmr \text{ st } 
	f_{c,h}(x)=
	v_{c,h}^+(x)
	\left(\frac{x-c}{h}\right), ~ c \in \m{X},
	~ h \in [\underline{h}_1, \overline{h}_1] 
\Big\}$
is VC subgraph because no 4-point set is shattered. 
It has covering number $N_1(\eps,Q,\m{G}_1^*)$
bounded by a polynomial in $\eps$ whose constants do not depend on $Q$ or $n$.
The function $\phi(x)=x^l$ defined over $[0,1]$ is Lipschitz continuous with constant equal to $l$.
Since   $\m{G}_{1}^{+} = \Big\{\phi(g) : g \in \m{G}_{1}^{*} \Big\}$,
Lemma \ref{lemma_covnum_sumprod} says $\m{G}_{1}^{+}$
has covering number
bounded above by $A_1 \varepsilon ^ {-W_5}$ with $A_1,W_1$ independent of $n$ or $Q$.

\bigskip
$\m{G}_{2}^\pm = \left\{f_{c,h}: \m{X} \to \mmr \text{ st } 
	f_{c,h}(x)=
	 v_{c,h}^\pm(x), ~ c \in \m{X}, 
	 ~ h \in [\underline{h}_1, \overline{h}_1]
\right\}$\\*
For either $v_{c,h}^+$ or $v_{c,h}^-$, no 3-point set is shattered by the graphs of either
$\m{G}_2^+$ or $\m{G}_2^-$.
Hence, $\m{G}_{2}^\pm$ is VC subgraph
with covering number bounded above by $A_2^{\pm} \varepsilon ^ {-W_2^{\pm}}$ where
$A_2^\pm,W_2^\pm$ are independent of $n$ or $Q$.

\bigskip
$\m{G}_{3}=\left\{f: [-M,M] \to \mmr \text{ st }
	f(y)=y
\right\}$\\*
It is straightforward to see that the graphs of this class of functions is VC with index 2. Therefore, the covering number
of $\m{G}_{3}$ is bounded above by $A_3 \varepsilon ^ {-W_3}$ with $A_3,W_3$ independent of $n$ or $Q$.

\bigskip
$\m{G}_{4}^\pm=\left\{f: \m{X} \to \mmr \text{ st }
	f(x) = r^\pm(x) 
\right\}$\\*
Consider $\m{G}_{4}^+$ WLOG. 
For each $n$,
$r^{+}(x)=\sum_{j=1}^K v_{c_{j},h_{1j}}^+ (x) \mme[Y_i^{j +}|X_i=x]$
is a fixed function.
Similar to $\m{G}_{3}$, the covering number
of $\m{G}_{4}^+$ is bounded above by $A_4 \varepsilon ^ {-W_4}$ with $A_4,W_4$ independent of $n$ or $Q$.

\bigskip

$\m{G}_{5}^{\pm} = \left\{f_{c,h}: \m{X} \to \mmr \text{ st } 
	f_{c,h}(x) = v_{c,h}^\pm(x) k\left(\frac{x-c}{h}\right), ~ c \in \m{X}, 
	~ h \in [\underline{h}_1, \overline{h}_1] 
\right\}$\\*
Take $\m{G}_{5}^{+}$ WLOG.
Define
$\m{G}_{5}^{*} = \Big\{ 
	f = k\left(g \right), ~ g \in \m{G}_1^*
\Big\}$,
where 
$\m{G}_1^*$ is the VC subgraph class of functions defined above.
Given that $k(\cdot)$ is Lipschitz continuous (Assumption \ref{assu_srd_est_kernel}), 
Lemma \ref{lemma_covnum_sumprod} says that $\m{G}_{5}^{*}$ has covering number
$N_1(\eps,Q,\m{G}_5^*)$
bounded by a polynomial in $\eps$ whose constants do not depend on $Q$ or $n$.
Note that $\m{G}_{5} = \Big\{ g h : ~ g \in \m{G}_2^+, ~ h \in \m{G}_{5}^{*}
\Big\}$ because $v_{c,h}^+(x) k \left(v_{c,h}^+(x) \left(\frac{x-c}{h}\right) \right) 
=
v_{c,h}^+(x) k \left( \frac{x-c}{h}  \right)
$.
Therefore, $\m{G}_{5}$ has covering number
bounded above by $A_5 \varepsilon ^ {-W_5}$ with $A_5,W_5$ independent of $n$ or $Q$
(Lemma \ref{lemma_covnum_sumprod}).

\end{proof}

Lemma \ref{lemma_pollard} below is a slightly modified version of \cite{pollard1984convergence}'s Theorem 2.37.

\begin{lemma}\label{lemma_pollard}
For each $n$, let $ \m{F}_n$ be a class of uniformly bounded functions whose covering numbers satisfy
\begin{align*}
\sup_Q N_1(\varepsilon, Q,  \m{F}_n)\leq A \varepsilon^{-W} \text{ for } 0<\varepsilon<1
\end{align*}
with constants $A$ and $W$ not depending on $n$.
Let $\delta_n$ be a positive decreasing sequence
such that $\frac{\log n}{ n \delta_{n}^2} \to 0$. If $[\mme (f^2)]^{1/2} \leq \delta_n$ for $\forall f \in  \m{F}_n$, then

\begin{align*}
\sup_{f \in  \m{F}_{n}} | E_n(f) - \mme(f)| = O_P\left( \delta_n^2 \sqrt{\frac{\log n}{n \delta_n^2 }} \right)
\end{align*}
where $E_n(f)$ is the expected value of $f$ wrt the empirical distribution of the variables in the domain of $f$.
\end{lemma}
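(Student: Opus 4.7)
My plan is to follow the standard symmetrization-plus-chaining template used to prove Pollard's Theorem 2.37, adapting it to allow the class $\m{F}_n$ to depend on $n$. The uniform-in-$n$ covering number bound $\sup_Q N_1(\varepsilon,Q,\m{F}_n)\le A\varepsilon^{-W}$ and the shrinking $L_2$-diameter $\delta_n$ are exactly the two hypotheses needed to carry that template through with constants independent of $n$.

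First, I would apply a symmetrization inequality (of the type in Pollard (1984), Lemma II.8) to reduce the problem from controlling $\sup_{f\in\m{F}_n}|E_n f - \mme f|$ to controlling the Rademacher-symmetrized process
\begin{equation*}
\sup_{f\in\m{F}_n}\left|\frac{1}{n}\sum_{i=1}^n \sigma_i f(X_i)\right|,
\end{equation*}
where $\sigma_i$ are iid Rademacher signs independent of the data. This step is standard and loses only a universal multiplicative constant; it is valid because the centered empirical process has mean zero and is well approximated, for large $n$, by the symmetrized sum after a ghost-sample argument.

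Second, I would condition on the data $X_1,\dots,X_n$ and treat the symmetrized process as a sub-Gaussian process indexed by $\m{F}_n$ with the pseudometric induced by the empirical $L_2$ distance $\|f-g\|_{2,n}=(n^{-1}\sum_i (f-g)^2(X_i))^{1/2}$. Under the assumption $[\mme f^2]^{1/2}\le\delta_n$, the empirical $L_2$ diameter of $\m{F}_n$ is $O_P(\delta_n)$; and the covering number in $L_1(\mathbb{P}_n)$, hence (using uniform boundedness of the class) in $L_2(\mathbb{P}_n)$, satisfies the polynomial bound $A'\varepsilon^{-W'}$. A chaining argument — either Pollard's direct discretization combined with Hoeffding's inequality and a union bound, or Dudley's entropy-integral bound — then yields
\begin{equation*}
\sup_{f\in\m{F}_n}\left|\frac{1}{n}\sum_{i=1}^n \sigma_i f(X_i)\right| \;=\; O_P\!\left(\frac{1}{\sqrt{n}}\int_0^{\delta_n}\!\sqrt{\log N_1(\varepsilon,\mathbb{P}_n,\m{F}_n)}\,d\varepsilon\right)=O_P\!\left(\delta_n\sqrt{\frac{\log(1/\delta_n)}{n}}\right).
\end{equation*}
The condition $\log n/(n\delta_n^2)\to 0$ implies $\delta_n$ cannot shrink faster than a polynomial rate, so $\log(1/\delta_n)=O(\log n)$, and the bound matches the claimed rate $\delta_n^2\sqrt{\log n/(n\delta_n^2)}=\delta_n\sqrt{\log n/n}$.

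The main obstacle is bookkeeping the constants through the chaining step: because the class varies with $n$, one must verify that the Hoeffding tail bounds (which need the functions to be uniformly bounded, true by hypothesis) and the union bound over a discretization of size $N_1(\varepsilon,\mathbb{P}_n,\m{F}_n)$ deliver a tail probability that remains summable/controllable as $n\to\infty$. The uniform polynomial covering bound — with $A,W$ not depending on $n$ — together with the mild rate assumption $\log n/(n\delta_n^2)\to 0$ is precisely what makes this work: each discretization level contributes a factor that is at worst polynomial in $n$, and this is absorbed into the $\sqrt{\log n}$ factor in the final rate. The remainder of the proof is a direct transcription of Pollard's original argument with $\delta$ replaced by the $n$-dependent $\delta_n$.
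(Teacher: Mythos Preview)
Your approach is sound but takes a genuinely different route from the paper. The paper does not use Dudley's entropy integral or chaining at all; it transcribes Pollard's Theorem~2.37 almost verbatim, making two modifications: (i) it fixes the auxiliary sequence as $\alpha_n^2=\log n/(n\delta_n^2)$ rather than an arbitrary slowly varying sequence, and (ii) it aims only for $O_P$ rather than almost-sure convergence. Concretely, the paper sets $\varepsilon_n=\varepsilon\delta_n^2\alpha_n$, applies Pollard's symmetrization inequality (his Equation~(30)) to pass to the signed measure $E_n^\circ$, then splits on the event $\{\sup_{f\in\m F_n}E_n(f^2)\le 64\delta_n^2\}$. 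On the good event a single-level discretization plus an exponential tail bound gives a term $\le 8A\varepsilon^{-W}\exp[(W-\varepsilon^2/128)\log n]$; on the bad event Pollard's Lemmas~33 and~36 give $\le 16\min\{A2^W\exp[-(W\log\delta_n^2+n\delta_n^2)],1\}$. Both bounds tend to zero under $\log n/(n\delta_n^2)\to 0$ for $\varepsilon$ large enough, and that is the whole proof.

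Your modern approach (Rademacher symmetrization followed by Dudley's integral) is cleaner conceptually and would also yield the stated rate, but you have glossed over precisely the step that the paper's event-splitting handles: the assertion that ``the empirical $L_2$ diameter of $\m F_n$ is $O_P(\delta_n)$.'' Moving from $[\mme f^2]^{1/2}\le\delta_n$ for all $f$ to $\sup_f E_n(f^2)=O_P(\delta_n^2)$ is itself a uniform-in-$f$ statement that needs an argument (e.g., Pollard's Lemmas~33/36, or a ratio-type inequality for $\{f^2:f\in\m F_n\}$, which has polynomial covering since products of bounded VC-type classes do). Without this, the upper limit $\delta_n$ in your entropy integral is not justified conditionally on the data. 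If you add that step, your proof goes through and delivers the same $\delta_n\sqrt{\log n/n}$ rate; your observation that $\log(1/\delta_n)=O(\log n)$ follows from $\delta_n^2>\log n/n$ eventually, which is implied by $\log n/(n\delta_n^2)\to 0$.
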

\begin{proof}
The proof is almost the same as \cite{pollard1984convergence}'s Theorem 2.37.
There are two main differences.
The first, he has an arbitrary sequence $\alpha_n$ that weakly decreases to zero such that
$\frac{n \delta_n^2 \alpha_n^2}{\log n} \to \infty$, and I take this sequence to be $\alpha_n^2 = \frac{\log n}{n \delta_n ^2} \to 0$. Note that this sequence of $\alpha_n$ does not satisfy
$\frac{n \delta_n^2 \alpha_n^2}{\log n} \to \infty$, but this is not needed here.  The second difference,
he shows almost sure convergence, and I only show the expression to be bounded in probability.

That said, it is to be shown that for $\forall \gamma>0$, there exists $M_{\gamma}>0$ and $n_{\gamma}$ such that
\begin{align*}
\mmp \left\{ \sup_{f \in  \m{F}_{n}}  | E_n(f) - \mme(f) | > M_{\gamma} \delta_n^2 \alpha_n \right\} < \gamma \text{ for } n \geq n_{\gamma}
\end{align*}

Taking $\varepsilon_n=\varepsilon \delta_n^2 \alpha_n$,

\begin{gather*}
\frac{\mmv(E_n(f))}{(4 \varepsilon_n)^2} \leq \frac{\mme (f^2)}{16n \varepsilon^2 \delta_n^4 \alpha_n^2} \\
\leq \frac{M}{16 \varepsilon^2 n \delta_n^2 \alpha_n^2} = \frac{1}{16 \varepsilon \log n}
\end{gather*}

For large $n$, this is smaller than $1/2$, so that Equation (30) on page 31 of \cite{pollard1984convergence} is used to get:

\begin{gather}
\mmp \left\{ \sup_{f \in  \m{F}_{n}}  | E_n(f) - \mme(f) | > 8 \varepsilon \delta_n^2 \alpha_n  \right\} \leq
4 \mmp \left\{ \sup_{f \in  \m{F}_{n}}  | E_n^\circ(f) | > 2 \varepsilon_n  \right\} \label{P2.37_0}
\end{gather}

where $E_n^\circ(f)$ is the signed measure defined there. Using the same approximation argument that led to Equation (31)
on page 31 for functions $g_j \in  \m{F}_n$:
\begin{gather*}
\mmp \left\{ \sup_{f \in  \m{F}_{n}}  | E_n^\circ(f) | > 2 \varepsilon_n  \right\}
\leq 2 N_1(\varepsilon_n, P_n,  \m{F}_n) \exp \left[ \frac{(-1/2) n \varepsilon^2_n}{\max\limits_j E_n (g_j^2)} \right] \\
\end{gather*}
where $P_n$ is the probability measure that weights each observation by $1/n$.
This inequality is used to rewrite the right-hand side of Equation \ref{P2.37_0}):

\begin{align}
4\mmp \left\{ \sup_{f \in  \m{F}_{n}}  | E_n^\circ(f) | > 2 \varepsilon_n  \right\} =&
4\mmp \left\{ \sup_{f \in  \m{F}_{n}}  | E_n^\circ(f) | > 2 \varepsilon_n ,
\sup_{f \in  \m{F}_{n}}  | E_n(f^2) | \leq 64 \delta_n^2 \right\} \nonumber \\
&+4\mmp \left\{ \sup_{f \in  \m{F}_{n}}  | E_n^\circ(f) | > 2 \varepsilon_n,
\sup_{f \in  \m{F}_{n}}  | E_n(f^2) | > 64 \delta_n^2 \right\} \nonumber \\
\leq &~8 N_1(\varepsilon_n, P_n,  \m{F}_n) \exp \left[ \frac{(-1/2) n \varepsilon^2_n}{64 \delta_n^2 } \right] \label{P2.37_1}\\
&+4\mmp \left\{ \sup_{f \in  \m{F}_{n}}  | E_n(f^2) | > 64 \delta_n^2  \label{P2.37_2} \right\}
\end{align}

For part \eqref{P2.37_1}), use the fact that $N_1(\varepsilon_n, P_n,  \m{F}_n) \leq A \varepsilon^{-W}$, and rearrange it into
\begin{gather*}
\ref{P2.37_1}) \leq 8A \varepsilon^{-W} \exp [ W \log (1/{\delta_n^2 \alpha_n}) - n \varepsilon^2 \delta_n^2 \alpha_n^2 /128 ]
\end{gather*}

For part \eqref{P2.37_2}) use Lemmas 33 and 36 in chapter 2 of \cite{pollard1984convergence} to get:
\begin{gather*}
\ref{P2.37_2}) \leq 16 \mme\left[ \min \left\{ N_2(\delta_n,P_n, \m{F}_n) \exp(-n \delta_n^2) ~;~ 1\right\} \right] \\
\leq 16 \mme\left[ \min \left\{ N_1(\delta_n^2/2,P_n, \m{F}_n) \exp(-n \delta_n^2) ~;~ 1\right\} \right] \\
\leq 16  \min \left\{ A \left( \frac{\delta_n^2}{2} \right)^{-W} \exp(-n \delta_n^2) ~;~ 1\right\}  \\
= 16  \min \left\{ A 2^{W} \exp \left[- (W \log \delta_n^2 + n \delta_n^2) \right] ~;~ 1\right\}
\end{gather*}

Hence,

\begin{gather}
\mmp \left\{ \sup_{f \in  \m{F}_{n}}  | E_n(f) - \mme(f)| > 8 \varepsilon \delta_n^2 \alpha_n  \right\} \nonumber \\
< 8A \varepsilon^{-W} \exp [ W \log (1/{\delta_n^2 \alpha_n}) - n \varepsilon^2 \delta_n^2 \alpha_n^2 /128 ]\label{P2.37_A}\\
+16  \min \left\{ A 2^{W} \exp \left[- (W \log \delta_n^2 + n \delta_n^2) \right] ~;~ 1\right\}  \label{P2.37_B}
\end{gather}

For the case here, it suffices to show that there is a $\varepsilon$ such that the sum of the two
bounds above (\ref{P2.37_A}) and (\ref{P2.37_B}) converge to zero as $n \to \infty$. For part (\ref{P2.37_A}),
note that for large $n$, $n \log n \geq n \alpha_n$, since $\alpha_n$ is decreasing. Then,
\begin{gather*}
\log \left( \frac{1}{\delta_n^2 \alpha_n} \right) = \log \left( \frac{n \alpha_n}{\log n} \right) \leq \log \left( \frac{n \log n}{\log n} \right) = \log n
\end{gather*}
Using this, $(\ref{P2.37_A}) \leq 8A \varepsilon^{-W} \exp \left[ \left( W-\frac{\varepsilon^2}{128} \right) \log n \right]$. If $\varepsilon$ is made small enough, this expression goes to zero.

For part \eqref{P2.37_B}, $\frac{\log n }{n \delta_n^2} \to 0$ leads to $\frac{\delta_n^2}{n^{-1}} = n \delta_n^2  \to \infty$. For big $n$, these imply: (i) $\log (\delta_n^2) \geq \log n^{-1} = -\log n$ and (ii) $n \delta_n^2 \geq (W+1) \log n$. Hence,

\begin{gather*}
(\ref{P2.37_B}) \leq 16  \min \left\{ A 2^{W} \exp (- \log n ) ~;~ 1\right\} \to 0
\end{gather*}

\end{proof}

In what follows, I use the Euclidean norm $\| \cdot \|$ with real-valued vectors.
 For matrices, the norm is  induced by the Euclidean norm.
That is, for a $p \times q$ matrix $A$, $\| A \| = \sup_{x \in \mathbb{R}^q, \|x\|=1 } \| Ax \|$.
Such a matrix norm has the following properties:
(i) for a matrix $A$ and a vector $x$, $\| A x \| \leq \| A \| \| x \|$;
(ii) for matrices $A$ and $B$ such that $AB$ is defined,
$\| A B \| \leq \| A \| \| B \|$;
(iii) for $A$ invertible, $\| A  \|^{-1} \leq \| A^{-1} \| $.
The determinant of matrix $A$ is denoted $\det(A)$.
Another useful result is that (iv) convergence
in the matrix norm is equivalent to convergence of all elements of the matrix.

\begin{lemma}\label{lemma_mat_inv}
Consider a random process $X_n(c)$ in $\mathbb{R}^{q \times q}$, and a fixed (non-random)
function $X(c)$ also in $\mathbb{R}^{q \times q}$. Suppose $\sup\limits_{c} \| X(c)\| \leq L_0 < \infty$
and $\inf\limits_{c} | \det(X(c)) | \geq L_1 > 0$.

If for some sequence $\alpha_n \downarrow 0$
\begin{gather*}
\sup\limits_{c} \left\| X_n(c) - X(c) \right\| = O_P (\alpha_n)
\end{gather*}
then
\begin{gather*}
\sup\limits_{c} \left\| X_n(c)^{-1} - X(c)^{-1} \right\| = O_P (\alpha_n)
\end{gather*}
\end{lemma}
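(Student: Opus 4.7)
The plan is to use the standard matrix perturbation identity
\[
X_n(c)^{-1} - X(c)^{-1} = X(c)^{-1} \bigl( X(c) - X_n(c) \bigr) X_n(c)^{-1},
\]
and then to uniformly bound each of the three factors on the right-hand side. Taking the induced Euclidean norm on both sides and the supremum over $c$, this identity gives
\[
\sup_c \| X_n(c)^{-1} - X(c)^{-1} \| \leq \Bigl( \sup_c \| X(c)^{-1} \| \Bigr) \Bigl( \sup_c \| X_n(c) - X(c) \| \Bigr) \Bigl( \sup_c \| X_n(c)^{-1} \| \Bigr),
\]
so it remains to control the first and third factors uniformly in $c$.

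First I would handle $\sup_c \|X(c)^{-1}\|$. Since $X(c)$ is deterministic with $\|X(c)\| \leq L_0$ and $|\det X(c)| \geq L_1$ uniformly in $c$, Cramer's rule writes $X(c)^{-1}$ as $\mathrm{adj}(X(c))/\det(X(c))$, whose cofactor entries are polynomials in the entries of $X(c)$ and hence uniformly bounded in absolute value. Dividing by the uniformly bounded-below determinant yields a deterministic constant $M_0 \in (0,\infty)$ with $\sup_c \|X(c)^{-1}\| \leq M_0$.

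Next I would show $\sup_c \| X_n(c)^{-1} \| = O_P(1)$ via a Neumann-series argument. Factor $X_n(c) = X(c)\bigl(I + X(c)^{-1}(X_n(c) - X(c))\bigr)$. The hypothesis and the bound $\sup_c \|X(c)^{-1}\|\leq M_0$ give $\sup_c \| X(c)^{-1}(X_n(c)-X(c)) \| \leq M_0 \sup_c \|X_n(c) - X(c)\| = O_P(\alpha_n) = o_P(1)$. Therefore, with probability tending to one, $\sup_c \| X(c)^{-1}(X_n(c)-X(c)) \| < 1/2$, on which event the Neumann series shows $I + X(c)^{-1}(X_n(c)-X(c))$ is invertible uniformly in $c$ with inverse bounded in norm by $2$. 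Hence $X_n(c)^{-1} = \bigl(I + X(c)^{-1}(X_n(c)-X(c))\bigr)^{-1} X(c)^{-1}$ exists uniformly with $\sup_c \| X_n(c)^{-1} \| \leq 2 M_0$ on that high-probability event, giving $\sup_c \| X_n(c)^{-1} \| = O_P(1)$.

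Combining the three bounds in the displayed perturbation inequality yields $\sup_c \|X_n(c)^{-1} - X(c)^{-1}\| = M_0 \cdot O_P(\alpha_n) \cdot O_P(1) = O_P(\alpha_n)$, as claimed. The main subtlety (rather than obstacle) is ensuring uniformity in $c$ throughout: the Neumann-series bound must be invoked on the single event $\{\sup_c \|X(c)^{-1}(X_n(c)-X(c))\| < 1/2\}$, which works precisely because the rate $\alpha_n$ in the hypothesis is already a uniform-in-$c$ rate. All other ingredients are deterministic bounds coming from the Cramer's rule argument on $X(c)$.
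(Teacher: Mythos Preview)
Your proof is correct and takes a genuinely different route from the paper's. The paper argues via compactness: it defines a compact set $A\subset\mathbb{R}^{q\times q}$ of matrices with bounded norm and determinant bounded away from zero, notes that $X\mapsto X^{-1}$ is uniformly continuous (in fact Lipschitz) on $A$, and then shows that with high probability $X_n(c)\in A$ for all $c$, so the uniform modulus of continuity transfers the rate from $X_n-X$ to $X_n^{-1}-X^{-1}$. Your approach is the direct perturbation-theoretic one: the resolvent identity $X_n^{-1}-X^{-1}=X^{-1}(X-X_n)X_n^{-1}$ plus a Neumann-series bound on $\sup_c\|X_n(c)^{-1}\|$. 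Your argument is more explicit and self-contained; it never needs to identify a compact set or invoke a modulus of continuity, and it makes the Lipschitz nature of the bound transparent. The paper's route is more topological in flavor and could in principle extend to other continuous functionals of $X(c)$, but for matrix inversion specifically your identity-based proof is the standard and cleaner one.
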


\begin{proof}
Consider the compact subset of $\mathbb{R}^{q \times q}$:
\[
A= \left\{X \in \mathbb{R}^{q \times q} :~  \|X \| \leq 2 L_0,~ |\det(X)| \geq L_1 \right\}
\]
Note that $X(c) \in A ~~\forall (c)$, and that any continuous function on $A$ is uniformly continuous
because $A$ is a compact set.
The function $f: A \to \mathbb{R}^{q \times q}$, $f(X)=X^{-1}$ is uniformly continuous.

For any $\gamma>0$, find $M_{\gamma}>0$ such that
\[
\mmp \left\{ \sup\limits_{c}
\alpha_n^{-1} \left\|  X_n(c)^{-1} - X(c)^{-1} \right\| >   M_{\gamma}  \right\}
< \gamma
\]

\begin{gather}
\mmp \left\{  \alpha_n^{-1} \sup\limits_{c}
\left\|  X_n(c)^{-1} - X(c)^{-1} \right\| > M_{\gamma}  \right\}
\nonumber
\\
\leq
\mmp \left\{  \sup\limits_{c}
\left\|  X_n(c)^{-1} - X(c)^{-1} \right\| > \alpha_n M_{\gamma},~ X_n(c) \in A ~\forall c  \right\}
\label{eq_mat_inv_1}
\\
+
\mmp \left\{ X_n(c) \notin A \text{ for some } c  \right\}
\label{eq_mat_inv_2}
\end{gather}

\textbf{Part \eqref{eq_mat_inv_1}}

Since $f(X)=X^{-1}$ is uniformly continuous in $A$, for any choice of $M_{\gamma}>0$, and for
a given sample size, there exists a
$\delta(\alpha_n M_{\gamma})>0$ such that

\[
\forall X_n(t), X(t) \in A, ~  \|X_n(t)^{-1} - X(t)^{-1} \| > \alpha_n M_\gamma
\Rightarrow \|X_n(t) - X(t) \| >  \delta( \alpha_n M_{\gamma})
~~ \forall n
\]


\begin{gather}
(\ref{eq_mat_inv_1}) \leq
\mmp \left\{ \sup\limits_{c,p}
\left\|  X_n(c) - X(c) \right\| > \delta(\alpha_n M_{\gamma}),~ X_n(c) \in A ~\forall (c)  \right\}
\nonumber
\\
\leq
\mmp \left\{ \sup\limits_{c,p}
\left\|  X_n(c) - X(c) \right\| > \delta(\alpha_n M_{\gamma})  \right\}
\label{eq_mat_inv_1A}
\end{gather}

By assumption, it is possible to find $M^*$ such that

\[
\mmp \left\{ \sup\limits_{c,p} \left\| X_n(c) - X(c) \right\| > \alpha_n M^* \right\} < \gamma/2
\]
for large $n$.
So pick $M_\gamma$ to be such that $\delta( \alpha_n M_{\gamma}) \geq \alpha_n M^*$ which makes
$ (\ref{eq_mat_inv_1A}) \leq \gamma/2$.

\textbf{Part \eqref{eq_mat_inv_2}}

\begin{gather*}
(\ref{eq_mat_inv_2}) \leq
\mmp \left\{ \| X_n(c) \|  > 2 L_0 \text{ for some } c \right\}
\\
+
\mmp \left\{ | \det (X_n(c)) |  <  L_1 \text{ for some } c \right\}
\\
\leq
\mmp \left\{ \| X_n(c) - X(c) \|  > L_0 \text{ for some } c \right\}
\\
+ \mmp \left\{ \| X(c) \|  > L_0 \text{ for some } c \right\}
\\
+
\mmp \left\{ | \det (X_n(c)) - \det (X(c)) |  <  L_1/2 \text{ for some } c \right\}
\\
+
\mmp \left\{ | \det (X(c)) |  <  L_1/2 \text{ for some } c \right\}
\\
=
\mmp \left\{ \| X_n(c) - X(c) \|  > L_0 \text{ for some } c \right\}
\\
+
\mmp \left\{ | \det (X_n(c)) - \det (X(c)) |  <  L_1/2 \text{ for some } c \right\}
\end{gather*}
which is made smaller than $\gamma/2$ for large $n$ since $X_n(c)$ converges in probability
to $X(c)$, and so does  $\det( X_n(c))$ to $\det( X(c))$.
Therefore, $(\ref{eq_mat_inv_1})+(\ref{eq_mat_inv_2}) \leq \gamma$.

\end{proof}

An application of Lemma \ref{lemma_pollard} to the classes of functions in
Lemma \ref{lemma_classf} gives the rates at which certain terms
in the proof of Theorem \ref{theo_srd_kinf_est_int} are uniformly bounded
in probability.

\begin{lemma}\label{lemma_app}
Consider the definitions of $G^{j \pm}$, $G_n^{j \pm}$, $\ti{H}_i^{j }$, $H$, and $Y_i^{j \pm}$
from Lemma \ref{lemma_porter} (scalar case).
Suppose Assumptions
\ref{assu_srd_est_kernel},
\ref{assu_srd_est_fx},
\ref{assu_srd_est_mx}, and 
\ref{assu_srd_kinf_est_int} hold.
Assume the rate conditions of Theorem \ref{theo_srd_kinf_est_int}.
Then,

\begin{align}
&\max\limits_{j} \left\| G^{j \pm} - \mme\left[ G_n^{j \pm} \right] \right| = O(\overline{h}_1)
\label{eq:lemma_app:1}
\\
&\max\limits_{j} \left\|
	G_n^{j \pm} - \mme\left[ G_n^{j \pm} \right]
\right\|
= O_P\left( \sqrt{\frac{\log n}{n \overline{h}_1}} \right)
\label{eq:lemma_app:2}
\\
&\max \limits_{j } \left\|
	{\frac{1}{n h_{1j}}\sum_{i=1}^n v_i^{j\pm}
		k\left(\frac{X_i - c_j}{h_{1j}}\right) \ti{H}_i^{j }  \varepsilon_i}
\right\| = O_P\left( \sqrt{\frac{\log n}{n \overline{h}_1}} \right)
\label{eq:lemma_app:3}
\\
&\max \limits_{j }  \left\|
	\frac{1}{n h_{1j} } \sum_{i=1}^n \left\{
		v_i^{j\pm} 
		k\left(\frac{X_i - c_j}{ h_{1j} }\right) \ti{H}_i^{j } \mme[Y_i^{j \pm} | X_i]
	\right.
\right.
\notag
\\
&\hspace{3.5cm} \left.
	\left.
	 -\mme\left[ v_i^{j\pm}  k\left(\frac{X_i - c_j}{ h_{1j} }\right) \ti{H}_i^{j} Y_i^{j \pm} \right]
	~ \right\}
~ \right\|= O_P\left( \sqrt{\frac{\log n}{n \overline{h}_1}} \right)
\label{eq:lemma_app:4}
\end{align}

\end{lemma}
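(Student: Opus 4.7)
Parts (2)--(4) are uniform maximal inequalities over the $K$ cutoffs; I obtain them by feeding the VC-subgraph covering bounds of Lemma \ref{lemma_classf} into the maximal inequality of Lemma \ref{lemma_pollard}. Part (1) is a deterministic statement about $\mme[G_n^{j\pm}]$; I attack it by Taylor expanding the density $f$ inside $\mme[M_n^{j\pm}]$ (with $M_n^{j\pm} \equiv (G_n^{j\pm})^{-1}$) and transferring the bound to the inverse via Lemma \ref{lemma_mat_inv} together with a second-order matrix-inverse perturbation.

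\textbf{Parts (3) and (4).} Arguing entry-wise on the $(\rho_1+1)$-vector, the $\ell$-th coordinate of the quantity in (3) equals $h_{1j}^{-1}\, E_n(g_{c_j,h_{1j}})$, where
\[
g_{c,h}(x,\eps) = v_{c,h}^\pm(x)\, k\!\left(\tfrac{x-c}{h}\right) \left(\tfrac{x-c}{h}\right)^{\ell-1} \eps.
\]
Since $\eps_i$ is a.s.\ bounded under Assumption \ref{assu_srd_kinf_est_int}, the class $\{g_{c,h}\}$ sits inside the product of $\m F_4^\pm$ and $\m G_3$ from Lemma \ref{lemma_classf}, which by Lemma \ref{lemma_covnum_sumprod} retains polynomial covering numbers uniform in $n$ and $Q$. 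A change of variables $u=(x-c)/h$ gives $\mme[g_{c_j,h_{1j}}^2] = O(\overline{h}_1)$ uniformly in $j$, so Lemma \ref{lemma_pollard} with $\delta_n^2 = O(\overline{h}_1)$ yields $\max_j |E_n(g)| = O_P\bigl(\sqrt{\overline{h}_1 \log n/n}\bigr)$; dividing by $h_{1j} \asymp \overline{h}_1$ gives (3). Part (4) is identical with $\eps_i$ replaced by $r^\pm$ from $\m F_3^\pm$: because the $\pm$-bands are pairwise non-overlapping by construction of the first-step windows, $v_i^{j\pm}\mme[Y_i^{j\pm}|X_i] = v_i^{j\pm} r^\pm(X_i)$, and $r^\pm$ is uniformly bounded as a $\rho_1$-th order Taylor remainder.

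\textbf{Parts (2) and (1).} The same empirical-process argument applied entry-wise to $M_n^{j\pm}$ (no $\eps$ factor, class in $\m F_4^\pm$) yields $\max_j \|M_n^{j\pm} - \mme[M_n^{j\pm}]\| = O_P\bigl(\sqrt{\log n/(n\overline{h}_1)}\bigr)$. A change of variables and first-order Taylor expansion of $f(c_j + uh_{1j})$ around $c_j$ produce the deterministic bound $\mme[M_n^{j\pm}] = f(c_j)\Gamma_\pm + O(\overline{h}_1)$ uniformly in $j$. Lemma \ref{lemma_mat_inv} then gives $\max_j \|G_n^{j\pm} - G^{j\pm}\| = O_P\bigl(\sqrt{\log n/(n\overline{h}_1)} + \overline{h}_1\bigr)$. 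Writing
\[
G_n^{j\pm} - G^{j\pm} = -G^{j\pm}\bigl(M_n^{j\pm} - (G^{j\pm})^{-1}\bigr)G^{j\pm} + R_n, \qquad \|R_n\| = O\bigl(\|M_n^{j\pm} - (G^{j\pm})^{-1}\|^2\bigr),
\]
and subtracting expectations, the leading random term inherits the $O_P(\sqrt{\log n/(n\overline{h}_1)})$ rate from $M_n^{j\pm}$ while $R_n - \mme[R_n]$ is $o_P$ of that rate under the rate conditions; this yields (2). Part (1) follows by taking expectations in the same expansion: the leading term equals $-G^{j\pm}(\mme[M_n^{j\pm}] - (G^{j\pm})^{-1})G^{j\pm} = O(\overline{h}_1)$, and $\|\mme[R_n]\| \leq M\,\mme\|M_n^{j\pm} - (G^{j\pm})^{-1}\|^2 = O\bigl(1/(n\overline{h}_1) + \overline{h}_1^2\bigr) = O(\overline{h}_1)$ under the rate conditions of Theorem \ref{theo_srd_kinf_est_int}.

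\textbf{Main obstacle.} The most delicate step is justifying the matrix-inverse Taylor expansion when passing to expectations in Part (1): because $M_n^{j\pm}$ is only invertible on a high-probability event, I will have to truncate $G_n^{j\pm}$ to $\{\|M_n^{j\pm} - (G^{j\pm})^{-1}\| \leq \epsilon_0\}$ and show its complement contributes negligibly, using an exponential tail coming out of Lemma \ref{lemma_pollard}. The companion check that $1/(n\overline{h}_1) = O(\overline{h}_1)$ requires combining Conditions 1 and 2 of Theorem \ref{theo_srd_kinf_est_int}, which together bound $n\overline{h}_1^2$ below.
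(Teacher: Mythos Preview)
Your treatment of Parts \eqref{eq:lemma_app:3} and \eqref{eq:lemma_app:4} is essentially the paper's: feed the classes of Lemma~\ref{lemma_classf} into Lemma~\ref{lemma_pollard} entrywise, with $\delta_n^2 = O(\overline{h}_1)$ from a change of variables, then divide by $h_{1j}\asymp\overline{h}_1$. One small clarification: the identity $v_i^{j\pm}\mme[Y_i^{j\pm}\mid X_i]=v_i^{j\pm}r^\pm(X_i)$ only needs the same-sign windows $\{[c_j,c_j+h_{1j})\}_j$ (respectively $\{(c_j-h_{1j},c_j)\}_j$) to be pairwise disjoint, which is guaranteed by the paper's restriction $c_j+h_{1j}<c_{j+1}$; overlap between a ``$+$'' window and the next ``$-$'' window is allowed and irrelevant here.

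For Parts \eqref{eq:lemma_app:1}--\eqref{eq:lemma_app:2} the paper takes a shorter route than your Neumann expansion. It effectively reads $\mme[G_n^{j\pm}]$ as $(\mme[M_n^{j\pm}])^{-1}$ and applies Lemma~\ref{lemma_mat_inv} directly in both directions: a Taylor expansion of $f(c_j+uh_{1j})$ gives $\max_j\|\mme[M_n^{j\pm}]-f(c_j)\Gamma_\pm\|=O(\overline{h}_1)$ for Part \eqref{eq:lemma_app:1}, and Lemma~\ref{lemma_pollard} gives $\max_j\|M_n^{j\pm}-\mme[M_n^{j\pm}]\|=O_P\bigl(\sqrt{\log n/(n\overline{h}_1)}\bigr)$ for Part \eqref{eq:lemma_app:2}; Lemma~\ref{lemma_mat_inv} transfers both bounds to the inverses without any second-order remainder to track. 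Your second-order expansion is more scrupulous about the distinction between $\mme[(M_n)^{-1}]$ and $(\mme M_n)^{-1}$, which the paper glosses over, but the price is the remainder $\mme[R_n]$.

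There is a genuine gap in your control of that remainder. You claim that $1/(n\overline{h}_1)=O(\overline{h}_1)$, equivalently $n\overline{h}_1^2$ bounded below, follows from Conditions (i)--(ii) of Theorem~\ref{theo_srd_kinf_est_int}. It does not. Take $\rho_1=1$, $\rho_2=3$, $\overline{h}_1=n^{-0.55}$, $K=n^{0.4}$, $h_2=K^{-1/3}$: all three rate conditions of the theorem hold, yet $n\overline{h}_1^2=n^{-0.1}\to 0$, so $1/(n\overline{h}_1)=n^{-0.45}\gg n^{-0.55}=\overline{h}_1$. Hence your bound $\|\mme[R_n]\|=O\bigl(1/(n\overline{h}_1)+\overline{h}_1^2\bigr)$ does not reduce to $O(\overline{h}_1)$, and your proof of Part \eqref{eq:lemma_app:1} fails as written. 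The fix is either to adopt the paper's reading of $\mme[G_n^{j\pm}]$ and use Lemma~\ref{lemma_mat_inv} directly (which sidesteps the remainder and the truncation issue you flag), or to weaken Part \eqref{eq:lemma_app:1} to $O\bigl(\overline{h}_1+(n\overline{h}_1)^{-1}\bigr)$ and verify that every downstream use in the proof of Theorem~\ref{theo_srd_kinf_est_int} only needs $o(1)$ there---which is the case.
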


\begin{proof}
Consider the positive parts with $v_i^{j+} = v_{c_j,h_{1j}}^+(X_i)$,  $G_n^{j +}$, and $G^{j +}$
WLOG.

\bigskip
\textbf{Part \eqref{eq:lemma_app:1}}

First, we show that  
$\max\limits_{j} \left\|  \mme\left[ G_n^{j +} \right] - G^{j +}   \right\| = O(\overline{h}_1)$ using Lemma \ref{lemma_mat_inv}.

We have that $G^{j +} = f(c_{j})^{-1} \Gamma^{-1}$ is a bounded function of $j$ and has a determinant uniformly bounded away from zero. Using Lemma \ref{lemma_mat_inv}, it suffices to show

$\max\limits_{j} \left\|   \left[ \mme\left( G_n^{j +} \right) \right]^{-1} - \left( G^{j +} \right)^{-1}   \right\| = O(\overline{h}_1)$.

Also, convergence in
the matrix norm is equivalent to convergence in each element of the matrix. Hence, it suffices to show that

$\max\limits_{j} \left| 
	\frac{1}{h_{1j}} \mme\left[ 
		v_i^{j+}  k\left(\frac{X_i - c_j}{h_{1j}}\right) \left(\frac{X_i - c_j}{h_{1j}}\right)^l
	\right]
	- { f(c_j) \gamma_l }  
\right| = O(\overline{h}_1)$.

The LHS above is bounded by

$
\sup \limits_{c \in \m{X}, h \in [\underline{h}_1, \overline{h}_1]}
\left| \frac{1}{h} \mme\left[ v_{c,h}(X_i)  k\left(\frac{X_i - c}{h}\right) \left(\frac{X_i - c}{h}\right)^l \right]
- { f(c) \gamma_l }  \right| 
$,

\noindent
which we show to be $O(\overline{h}_1)$.

Take an arbitrary sequence $h \in [\underline{h}_1, \overline{h}_1]$,

$
\left| \frac{1}{h} \mme\left[ v_{c,h}(X_i)  k\left(\frac{X_i - c}{h}\right) \left(\frac{X_i - c}{h}\right)^l \right]
- {f(c)  \gamma_l}  \right|
$

$
=  \left| \int_0^1 k\left( u \right) u^l f(c+u h) du  - {f(c) \gamma_l } \right| 
$

$
=  h \left| \int_0^1 k\left( u \right) u^{l+1} \nabla_x f(c_{u h}^*) du  \right| \leq M h  = O(\overline{h}_1)
$

where Assumption \ref{assu_srd_est_fx} bounds the derivative of $f$. Therefore,
the supremum above is $O(\overline{h}_1)$, and the result follows.

\bigskip
\textbf{Part \eqref{eq:lemma_app:2}}

The goal is to show that:

$\max\limits_{j}
\left\| G_n^{j \pm} - \mme\left[ G_n^{j \pm} \right] \right\|
= O_P\left( \sqrt{\frac{\log n}{n \overline{h}_1}} \right)$.

Note that part \eqref{eq:lemma_app:1} implies that
$\mme\left[ G_n^{j +} \right]$ is a bounded function of $j$ and has a determinant uniformly bounded away from zero for large $n$. Using Lemma
\ref{lemma_mat_inv}, it suffices to show that
$
\max\limits_{j}
\left\| {G_n^{j +}}^{-1} - {\mme\left[ G_n^{j +} \right]}^{-1} \right\|
=O_P\left( \sqrt{\frac{\log n}{n \overline{h}_1}} \right)
$. In fact, it suffices to show uniform convergence of each point 
of the matrix:
\begin{gather*}
\max\limits_{j}
\left|
\frac{1}{nh_{1j}} \sum_{i=1}^n
\left\{~ 
v_i^{j +}
 k\left( \frac{X_i - c_{j}}{h_{1j}} \right)
\left( \frac{X_i - c_{j}}{h_{1j}} \right)^l
-
\mme\left[ v_i^{j +}  k\left(\frac{X_i - c}{h_{1j}}\right) \left(\frac{X_i - c}{h_{1j}}\right)^l \right]~ 
\right\}~ 
\right|
\\*
= O_P\left( \sqrt{\frac{\log n}{n \overline{h}_1}} \right)
\end{gather*}
for an arbitrary $l$. The LHS is bounded by
\begin{gather}
\mathsmaller{
\frac{1}{\underline{h}_1}
	\sup \limits_{c \in \m{X}, ~ h \in [\underline{h}_1, \overline{h}_1] }
	\left| \frac{1}{n }\sum_{i=1}^n 
		\left\{ v_{c,h}^+(X_i)  k\left(\frac{X_i - c}{h}\right) \left(\frac{X_i - c}{h}\right)^l
			- \mme\left[ v_{c,h}^+(X_i)  k\left(\frac{X_i - c}{h}\right) \left(\frac{X_i - c}{h}\right)^l \right] ~ 
		\right\}~
	\right|
}
\label{lemma_app_p1_1}
\end{gather}
and we apply Lemma \ref{lemma_pollard} to this part.
Lemma \ref{lemma_classf} says that the class of functions (over which the sup is being taken)
satisfies the conditions of Lemma \ref{lemma_pollard}.
For the second moment bound $\delta_n^2$, take an arbitrary sequence 
$h \in [\underline{h}_1, \overline{h}_1]$ and note that
\begin{gather*}
\mme \left\{
\left[ v_{c,h}^+(X_i)
k\left(\frac{X_i - c}{h}\right) \left(\frac{X_i - c}{h}\right)^l  \right]^2
\right\}
= h \int_0^1 k(u)^2 u^{2l} f(c+u h) du \leq M h  \leq M \overline{h}_1
\end{gather*}
where $f(\cdot)$ and $k(\cdot)$ are uniformly bounded (Assumptions \ref{assu_srd_est_kernel} and \ref{assu_srd_est_fx}).
Hence, for the purposes of Lemma \ref{lemma_pollard}, $\delta_n^2=M \overline{h}_1$, which satisfies $\frac{\log n}{n \delta_n^2} \to 0$
because of the rate condition $\frac{\sqrt{K} \log n}{\sqrt{n \overline{h}_1}} \to 0$.
Therefore, applying  Lemma \ref{lemma_pollard} to Equation \ref{lemma_app_p1_1}
makes it 
$\underline{h}_1^{-1} O_P\left(\overline{h}_1 \sqrt{\frac{\log n}{n \overline{h}_1 }}\right) =
O_P\left( \sqrt{\frac{\log n}{n \overline{h}_1 }}\right) $, because $\underline{h}_1^{-1} \overline{h}_1 =O(1)$.

\bigskip

\textbf{Part \eqref{eq:lemma_app:3}}

Similar to above, convergence in
the matrix norm is equivalent to convergence in each element of the matrix,
so it suffices
to show that

\[
\frac{1}{\underline{h}_1}
\sup \limits_{c \in \m{X}, h \in [\underline{h}_1, \overline{h}_1]}
	\left| {\frac{1}{n}\sum_{i=1}^n v_{c,h}^+(X_i)
			k\left(\frac{X_i - c}{h}\right) \left(\frac{X_i - c}{h}\right)^l \varepsilon_i}
	\right|
	= O_P\left(\sqrt{\frac{\log n}{n \overline{h}_1}} \right)
\]
for any positive integer $l$.
Take an arbitrary sequence 
$h \in [\underline{h}_1, \overline{h}_1]$
\begin{gather*}
\mme \left\{
\left[ v_{c,h}^+(X_i)  k\left(\frac{X_i - c}{h}\right) \left(\frac{X_i - c}{h}\right)^l
\varepsilon_i \right]^2
\right\}
\\
\leq
M \mme\left[ v_{c,h}^+(X_i)  k\left(\frac{X_i - c}{h}\right)^2 \left(\frac{X_i - c}{h}\right)^{2l} \right]
\\
= M h \int_0^1 k(u)^2 u^{2l} f(c+u h)  du \leq M \overline{h}_1
\end{gather*}
where it is used that $\varepsilon_i=Y_i-R(X_i,D_i)$ is a.s. uniformly bounded (Assumption \ref{assu_srd_kinf_est_int}).
Hence,  $\delta_n^2= M  \overline{h}_{1}$.
The expectation
$\mme\left[ v_{c,h}^+(X_i)  k\left(\frac{X_i - c}{h}\right)
\left(\frac{X_i - c}{h}\right)^l \varepsilon_i \right] = 0$, and the sup is over a
class of functions that satisfies the conditions of Lemma \ref{lemma_pollard}, which  gives the result.

\bigskip

\textbf{Part \eqref{eq:lemma_app:4}}

It suffices to show that
\begin{align*}
&\frac{1}{ \underline{h}_1} \sup \limits_{c \in \m{X}, ~ h \in [\underline{h}_1, \overline{h}_1] }
	\left|
		\frac{1}{n } \sum_{i=1}^n \left\{
			v_{c, h}^+ (X_i)
			k\left(\frac{X_i - c}{h}\right) \left(\frac{X_i - c}{h}\right)^l \mme[Y_i^{j +} | X_i]
		\right.
	\right.
\\
&	\left.
		\left.
			\hspace{4cm} - \mme\left[
				v_{c,h}^+(X_i)
				k\left(\frac{X_i - c}{h}\right) \left(\frac{X_i - c}{h}\right)^l Y_i^{j \pm}
		\right]~
	\right\}~
\right|
\\
=&
\frac{1}{ \underline{h}_1} \sup \limits_{c \in \m{X}, ~ h \in [\underline{h}_1, \overline{h}_1] }
\left|
	\frac{1}{n } \sum_{i=1}^n \left\{
		v_{c,h}^+ (X_i)
		k\left(\frac{X_i - c}{h}\right) \left(\frac{X_i - c}{h}\right)^l r^+(X_i)
	\right.
\right.
\\
&\left.
	\left.
		\hspace{4cm} - \mme\left[
			v_{c,h}^+(X_i)
			k\left(\frac{X_i - c}{h}\right) \left(\frac{X_i - c}{h}\right)^l r^+(X_i)
		\right]~
	\right\}~
\right|
\\
= & O_P\left( \sqrt{\frac{\log n}{n \overline{h}_1}} \right)
\end{align*}
for any positive integer $l$. Choose $\delta_n^2$ similarly as before. The sup is over a
class of functions that satisfies the conditions of Lemma \ref{lemma_pollard}, which gives the result.

\end{proof}

\bigskip

\begin{lemma}
\label{lemma_rest_theo}
Assume the conditions of Theorem \ref{theo_srd_kinf_est_int} hold.
Then, parts \eqref{eq:kinf:op1:a} and \eqref{eq:kinf:op1:b} in the proof of Theorem \ref{theo_srd_kinf_est_int}
(Section \ref{sec:proof:theo_srd_kinf_est_int}) converge in probability to zero.
\end{lemma}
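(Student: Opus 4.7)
The plan is to decompose each of parts \eqref{eq:kinf:op1:a} and \eqref{eq:kinf:op1:b} as a weighted sum $\sum_{j=1}^K \Delta_j \{\cdots\}$ in which each summand factors as a random matrix quantity times a random vector quantity whose uniform rates in $j$ are controlled by Lemma \ref{lemma_app}. I will combine these uniform rates via the triangle inequality with $\sum_j |\Delta_j| = O(1)$, which follows from $\Delta_j = O(K^{-1})$ uniformly in $j$ (Lemma \ref{lemma_rate_int}). Multiplying by $(\m{V}_n^c)^{-1/2} = O((Kn\overline{h}_1)^{1/2})$ and invoking the rate conditions of Theorem \ref{theo_srd_kinf_est_int}, especially condition 2(a) stating $K^{1/2}\log n/(n\overline{h}_1)^{1/2} = o(1)$, should yield $o_P(1)$ in each case.

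For part \eqref{eq:kinf:op1:a}, a direct manipulation using the definitions of $\ha\mu$ and $\mu^*$ rewrites
\begin{align*}
(\ha\mu - \mme[\ha\mu \mid \m{X}_n]) - (\mu^* - \mme[\mu^* \mid \m{X}_n]) = \sum_{j=1}^K \Delta_j \, e_1' \Big\{ (G_n^{j+} - \mme[G_n^{j+}]) T_n^{j+} - (G_n^{j-} - \mme[G_n^{j-}]) T_n^{j-} \Big\},
\end{align*}
where $T_n^{j\pm} = (nh_{1j})^{-1}\sum_i k((X_i-c_j)/h_{1j}) v_i^{j\pm} \ti{H}_i^j \eps_i$. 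Lemma \ref{lemma_app} gives both $\max_j \|G_n^{j\pm} - \mme[G_n^{j\pm}]\| = O_P(\sqrt{\log n/(n\overline{h}_1)})$ and $\max_j \|T_n^{j\pm}\| = O_P(\sqrt{\log n/(n\overline{h}_1)})$. Each summand is therefore $O_P(\log n/(n\overline{h}_1))$ uniformly in $j$, and after summing and standardizing the whole expression becomes $O_P(K^{1/2}\log n/\sqrt{n\overline{h}_1}) = o_P(1)$ by condition 2(a).

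For part \eqref{eq:kinf:op1:b}, I will use the identity for $\ha{B}_j - B_j$ obtained in the proof of Lemma \ref{lemma_porter}, then take the conditional expectation given $\m{X}_n$, to write
\begin{align*}
\mme[\ha\mu - \mu_n \mid \m{X}_n] - \ti\mu = \sum_{j=1}^K \Delta_j \, e_1' \Big\{ G_n^{j+} (U_n^{j+} - \mme[U_n^{j+}]) - G_n^{j-} (U_n^{j-} - \mme[U_n^{j-}]) \Big\},
\end{align*}
where $U_n^{j\pm} = (nh_{1j})^{-1}\sum_i k((X_i-c_j)/h_{1j}) v_i^{j\pm} \ti{H}_i^j \mme[Y_i^{j\pm}\mid X_i]$. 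Lemma \ref{lemma_app} implies $\max_j\|G_n^{j\pm}\| = O_P(1)$. The Taylor-remainder bound $|\mme[Y_i^{j\pm}\mid X_i]| v_i^{j\pm} \leq M h_{1j}^{\rho_1+1}$, which follows from the smoothness assumed in Assumption \ref{assu_srd_kinf_est_int}, allows me to re-run Lemma \ref{lemma_pollard} on the VC-subgraph class of Lemma \ref{lemma_classf}(3) with the sharper envelope $\delta_n^2 = O(\overline{h}_1^{2(\rho_1+1)+1})$, yielding the tightened uniform rate $\max_j \|U_n^{j\pm} - \mme[U_n^{j\pm}]\| = O_P(\overline{h}_1^{\rho_1+1}\sqrt{\log n/(n\overline{h}_1)})$. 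Combining, summing, and standardizing then produces $O_P((Kn\overline{h}_1)^{1/2}\overline{h}_1^{\rho_1+1}\sqrt{\log n/(n\overline{h}_1)})$; condition~1 of Theorem \ref{theo_srd_kinf_est_int} makes the prefactor $(Kn\overline{h}_1)^{1/2}\overline{h}_1^{\rho_1+1} = O(1)$, and condition 2(a) makes $\sqrt{\log n/(n\overline{h}_1)} = o(1)$, so the whole expression is $o_P(1)$.

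The hard part will be justifying the sharper uniform bound on $U_n^{j\pm}$ used in part \eqref{eq:kinf:op1:b}: the rate stated in Lemma \ref{lemma_app} for the analogous quantity does not incorporate the Taylor-residual factor $\overline{h}_1^{\rho_1+1}$ hidden inside $\mme[Y_i^{j\pm}\mid X_i]$, so I must redo the Pollard-style uniform convergence argument with a tighter second-moment envelope. The VC-subgraph structure from Lemma \ref{lemma_classf} carries over unchanged, so the remainder of the argument is routine rate bookkeeping.
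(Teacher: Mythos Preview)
Your treatment of part \eqref{eq:kinf:op1:a} is essentially identical to the paper's: decompose as $\sum_j \Delta_j\, e_1'\{(G_n^{j+}-\mme[G_n^{j+}])T_n^{j+} - \cdots\}$, use Lemma~\ref{lemma_app} for the two $O_P(\sqrt{\log n/(n\overline{h}_1)})$ factors, and close with condition~2(a).

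For part \eqref{eq:kinf:op1:b} your route differs from the paper's. The paper further splits $G_n^{j\pm}$ into $\mme[G_n^{j\pm}] + (G_n^{j\pm}-\mme[G_n^{j\pm}])$. The first piece is handled by a direct variance calculation (using that the right windows $[c_j,c_j+h_{1j})$ are pairwise disjoint, and likewise the left windows), yielding variance $O(\overline{h}_1^{2\rho_1+2})=o(1)$; the second piece uses the two uniform rates from Lemma~\ref{lemma_app} exactly as in part \eqref{eq:kinf:op1:a}. Your approach instead keeps $G_n^{j\pm}$ whole (bounded $O_P(1)$ uniformly) and compensates by extracting a sharpened uniform rate $O_P\big(\overline{h}_1^{\rho_1+1}\sqrt{\log n/(n\overline{h}_1)}\big)$ for $U_n^{j\pm}-\mme[U_n^{j\pm}]$. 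This is more uniform in spirit and the target rate is correct, but your justification has a real gap.

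You propose to ``re-run Lemma~\ref{lemma_pollard} \ldots\ with the sharper envelope $\delta_n^2 = O(\overline{h}_1^{2(\rho_1+1)+1})$.'' Lemma~\ref{lemma_pollard} requires $\log n/(n\delta_n^2)\to 0$, i.e.\ $n\overline{h}_1^{2\rho_1+3}\to\infty$. But condition~1 of Theorem~\ref{theo_srd_kinf_est_int} gives $Kn\overline{h}_1^{2\rho_1+3}=O(1)$, so $n\overline{h}_1^{2\rho_1+3}=O(K^{-1})\to 0$; the hypothesis of Lemma~\ref{lemma_pollard} is violated for \emph{every} feasible bandwidth choice. The fix is to factor the Taylor smallness out before invoking Pollard: set $\tilde r^{\pm}(x)=\overline{h}_1^{-(\rho_1+1)}r^{\pm}(x)$, which is uniformly bounded by a constant independent of $n$ because the right and left windows are disjoint and each Taylor remainder is $O(h_{1j}^{\rho_1+1})\leq O(\overline{h}_1^{\rho_1+1})$. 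The class built from $\tilde r^{\pm}$ has the same covering-number bound as $\m F_3^{\pm}$ in Lemma~\ref{lemma_classf}, and its second moment is $\leq M\overline{h}_1$, so Lemma~\ref{lemma_pollard} applies with $\delta_n^2=M\overline{h}_1$ (condition~2(a) gives $\log n/(n\overline{h}_1)\to 0$). Multiplying back by $\overline{h}_1^{\rho_1+1}$ yields your claimed sharpened rate, after which the rest of your argument goes through. Alternatively, adopt the paper's split and avoid the issue entirely.
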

\begin{proof}

 \indent
 
\begin{center}
\textbf{\underline{Part \eqref{eq:kinf:op1:a} }} 
\end{center}

\begin{align}
& \left| \frac{ \ha{\mu} - \mme[\ha{\mu} |\m{X}_n] - \left( \mu^* - \mme[\mu^* |\m{X}_n] \right) }{(\m{V}_n^c)^{1/2}} \right|
\\*
\leq & O\left( \left( K n \overline{h}_1 \right)^{1/2} \right)
\Bigg| \sum_{j=1}^K 
	\Delta_j \Big\{
		e_1' \left( G_n^{j +}  - \mme\left[ G_n^{j +} \right] \right)
		\frac{1}{nh_{1j}}
		\sum_{i=1}^n
			k\left( \frac{X_i - c_{j}}{h_{1j}} \right)
			v_i^{j +} \eps_i
			\ti{H}_{i}^{j}
\notag
\\*
& \hspace{4cm}
		-e_1' \left( G_n^{j -}  - \mme\left[ G_n^{j -} \right] \right)
		\frac{1}{nh_{1j}}
		\sum_{i=1}^n
			k\left( \frac{X_i - c_{j}}{h_{1j}} \right)
			v_i^{j -} \eps_i
			\ti{H}_{i}^{j}
	\Big\}
\Bigg|
\\
\leq & O\left( \left( K n \overline{h}_1 \right)^{1/2} \right)
\sum_{j=1}^K 
	\left| \Delta_j \right| 
	\left\|
		G_n^{j +}  - \mme\left[ G_n^{j +} \right] 
	\right\|
	\left\|
		\frac{1}{nh_{1j}}
		\sum_{i=1}^n
			k\left( \frac{X_i - c_{j}}{h_{1j}} \right)
			v_i^{j +} \eps_i
			\ti{H}_{i}^{j}
	\right\|
\notag
\\
& \hspace{4cm}
	+\left\|
		G_n^{j -}  - \mme\left[ G_n^{j -} \right] 
	\right\|
	\left\|
		\frac{1}{nh_{1j}}
		\sum_{i=1}^n
			k\left( \frac{X_i - c_{j}}{h_{1j}} \right)
			v_i^{j -} \eps_i
			\ti{H}_{i}^{j}
	\right\|
\\
 \leq & O\left( \left( K n \overline{h}_1 \right)^{1/2} \right)
 K
O\left( K^{-1} \right)
O_P\left( \left(\frac{ \log n }{ n \overline{h}_1 }\right)^{1/2} \right)
O_P\left( \left(\frac{ \log n }{ n \overline{h}_1 }\right)^{1/2} \right)
 \\
 = &
 O_P \left( K^{1/2} \frac{ \log n }{ \left( n \overline{h}_1 \right)^{1/2} }  \right) = o_P(1)
\end{align} 
 where the first inequality uses the rate on $(\m{V}_n^c)^{-1/2}$ (Equation \ref{eq:kinf:clt:s2:rate}); 
the third inequality relies on the uniform convergence rates of Lemma \ref{lemma_app}, and that $\Delta_j = O(K^{-1})$ uniformly over $j$ 
(Lemma \ref{lemma_rate_int});
the last equality uses the  rate condition ${K}^{1/2} \log n \left( n \overline{h}_1 \right)^{-1/2} =o(1)$.

\begin{center}
\textbf{\underline{Part \eqref{eq:kinf:op1:b} }} 
\end{center} 

\begin{align}
\frac{ \mme[\ha{\mu} - \mu_n |\m{X}_n] - \ti{\mu}  }{(\m{V}_n^c)^{1/2}} &
\\
= & (\m{V}_n^c)^{-1/2} \sum_{j=1}^K 
	\Delta_j 
	e_1'  G_n^{j +} \left\{ 
		\frac{1}{nh_{1j}}
		\sum_{i=1}^n
			k\left( \frac{X_i - c_{j}}{h_{1j}} \right)
			v_i^{j +} \mme[Y_i^{j+} | X_i ] 
			\ti{H}_{i}^{j}
\right. \notag \\*
& \hspace{3cm} \left.
	-
	\mme\left[  
		\frac{1}{nh_{1j}}
		\sum_{i=1}^n
			k\left( \frac{X_i - c_{j}}{h_{1j}} \right)
			v_i^{j +} Y_i^{j+} 
			\ti{H}_{i}^{j}
	\right]
\right\}
\label{eq:kinf:op1:b:a}
\\
- & (\m{V}_n^c)^{-1/2} \sum_{j=1}^K 
	\Delta_j 
	e_1'  G_n^{j -} \left\{ 
		\frac{1}{nh_{1j}}
		\sum_{i=1}^n
			k\left( \frac{X_i - c_{j}}{h_{1j}} \right)
			v_i^{j -} \mme[Y_i^{j-} | X_i ] 
			\ti{H}_{i}^{j}
\right. \notag \\*
& \hspace{3cm} \left.
	-
	\mme\left[  
		\frac{1}{nh_{1j}}
		\sum_{i=1}^n
			k\left( \frac{X_i - c_{j}}{h_{1j}} \right)
			v_i^{j -} Y_i^{j-} 
			\ti{H}_{i}^{j}
	\right]
\right\}
\label{eq:kinf:op1:b:b}
\end{align}

where

\begin{align}
(\ref{eq:kinf:op1:b:a}) 
= & (\m{V}_n^c)^{-1/2} \sum_{j=1}^K 
	\Delta_j 
	e_1'  \mme\left[ G_n^{j +} \right] \left\{ 
		\frac{1}{nh_{1j}}
		\sum_{i=1}^n
			k\left( \frac{X_i - c_{j}}{h_{1j}} \right)
			v_i^{j +} \mme[Y_i^{j+} | X_i ] 
			\ti{H}_{i}^{j}
\right. \notag \\*
& \hspace{3.7cm} \left.
	-
	\mme\left[  
		\frac{1}{nh_{1j}}
		\sum_{i=1}^n
			k\left( \frac{X_i - c_{j}}{h_{1j}} \right)
			v_i^{j +} Y_i^{j+} 
			\ti{H}_{i}^{j}
	\right]
\right\}
\label{eq:kinf:op1:b:a:a}
\\
= & (\m{V}_n^c)^{-1/2} \sum_{j=1}^K 
	\Delta_j 
	e_1'  \left( G_n^{j +} - \mme\left[ G_n^{j +} \right]    \right) \left\{ 
		\frac{1}{nh_{1j}}
		\sum_{i=1}^n
			k\left( \frac{X_i - c_{j}}{h_{1j}} \right)
			v_i^{j +} \mme[Y_i^{j+} | X_i ] 
			\ti{H}_{i}^{j}
\right. \notag \\*
& \hspace{5.2cm} \left.
	-
	\mme\left[  
		\frac{1}{nh_{1j}}
		\sum_{i=1}^n
			k\left( \frac{X_i - c_{j}}{h_{1j}} \right)
			v_i^{j +} Y_i^{j+} 
			\ti{H}_{i}^{j}
	\right]
\right\}
\label{eq:kinf:op1:b:a:b}
\end{align}

Part \eqref{eq:kinf:op1:b:a:a} is $o_P(1)$ because it has zero mean and zero limiting variance,
\begin{align}
\mmv[(\ref{eq:kinf:op1:b:a:a})]
= & (\m{V}_n^c)^{-1} \sum_{j=1}^K 
	\Delta_j^2 
	\frac{1}{n^2 }
	n \mmv \left\{ 
		\frac{1}{h_{1j}} k\left( \frac{X_i - c_{j}}{h_{1j}} \right)
		v_i^{j +} \mme[Y_i^{j+} | X_i ] 
		\left( e_1'  \mme\left[ G_n^{j +} \right] \ti{H}_{i}^{j} \right)
\right. \notag \\*
& \hspace{3.7cm} \left.
		-
		\mme\left[  
			\frac{1}{h_{1j}}
			k\left( \frac{X_i - c_{j}}{h_{1j}} \right)
			v_i^{j +} Y_i^{j+} 
			\left( e_1'  \mme\left[ G_n^{j +} \right] \ti{H}_{i}^{j} \right)
		\right]
\right\}
\\
\leq  & O(K n \overline{h}_1) \sum_{j=1}^K 
	\Delta_j^2 
	\frac{1}{n}
	\mme\left[ 
		\frac{1}{h_{1j}^2} k\left( \frac{X_i - c_{j}}{h_{1j}} \right)^2
		v_i^{j +} \mme[Y_i^{j+} | X_i ]^2 
		\left( e_1'  \mme\left[ G_n^{j +} \right] \ti{H}_{i}^{j} \right)^2
	\right]
\\
= & O(K ) \sum_{j=1}^K 
	O(K^{-2} ) 
	\mme\left[ 
		\frac{1}{h_{1j}} k\left( \frac{X_i - c_{j}}{h_{1j}} \right)^2
		v_i^{j +}  O \left( h_{1j}^{\rho_1 +  1}  \right)^2 
		\left( e_1'  \left[ G^{j +} + O(h_{1j}) \right] \ti{H}_{i}^{j} \right)^2
	\right]
	\label{eq:kinf:op1:b:a:a:expectbounded}
\\
= &O \left( \overline{h}_{1}^{2 \rho_1 +  2}  \right) 	= o(1)
\end{align}
where it is used the rate on $(\m{V}_n^c)^{-1}$ (Equation \ref{eq:kinf:clt:s2:rate});
that $\Delta_j^2 = O(K^{-2})$ holds uniformly over $j$ (Lemma \ref{lemma_rate_int});
expansion \eqref{eq:kinf:bias:expansion}; 
$\mme\left[ G_n^{j +} \right] $ is uniformly close to $G^{j +}$ (Lemma \ref{lemma_app});
and that the expected value in \eqref{eq:kinf:op1:b:a:a:expectbounded} without the $O\left( h_{1j}^{\rho_1 +  1}  \right)^2$ term 
is a bounded quantity.

Part \eqref{eq:kinf:op1:b:a:b} is $o_P(1)$ because 
\begin{align}
\left| \left( \ref{eq:kinf:op1:b:a:b} \right) \right| \leq  & 
O\left( \left(K n \overline{h}_1 \right)^{1/2} \right) \sum_{j=1}^K 
	| \Delta_j |
	\left\|  G_n^{j +} - \mme\left[ G_n^{j +} \right]    \right\| 
	\left\| 
		\frac{1}{nh_{1j}}
		\sum_{i=1}^n
			k\left( \frac{X_i - c_{j}}{h_{1j}} \right)
			v_i^{j +} \mme[Y_i^{j+} | X_i ] 
			\ti{H}_{i}^{j}
\right. \notag \\*
& \hspace{8cm} \left.
	-
	\mme\left[  
		\frac{1}{h_{1j}}
		k\left( \frac{X_i - c_{j}}{h_{1j}} \right)
		v_i^{j +} Y_i^{j+} 
		\ti{H}_{i}^{j}
	\right]
\right\|
\\
= & O\left( \left(K n \overline{h}_1 \right)^{1/2} \right)
K
O\left( K^{-1} \right)
O_P\left( \left( \log n \right)^{1/2} \left( n \overline{h}_1\right)^{-1/2} \right)
O_P\left( \left( \log n \right)^{1/2} \left( n \overline{h}_1\right)^{-1/2} \right)
\\
=& 
O_P\left( K^{1/2} \left( \log n \right) \left( n \overline{h}_1 \right)^{-1/2}  \right)
=o_P(1)
\end{align}
which relies on the rate conditions of $(\m{V}_n^c)^{-1/2}$ (Equation \ref{eq:kinf:clt:s2:rate}),
that $\Delta_j = O(K^{-1})$ uniformly over $j$ (Lemma \ref{lemma_rate_int}),
and on the rate conditions of Lemma \ref{lemma_app}'s  parts \eqref{eq:lemma_app:2} and \eqref{eq:lemma_app:4}.
Therefore, (\ref{eq:kinf:op1:b:a}) is $o_P(1)$, and a symmetric proof shows that
(\ref{eq:kinf:op1:b:b}) is $o_P(1)$. Hence, part \eqref{eq:kinf:op1:b} is $o_P(1)$.

\end{proof}

\subsection{Integral Approximation}
\label{sec_int_app}
\indent

This section proves results on the error of approximated integrals. Let
$R:\mmr^2 \to \mmr$ be a Riemann integrable function; for an open and convex
 set $\m{C} \subset \mmr^3$, define
$\beta: \m{C} \to \mmr$ such that
$\beta(\bo{x}) = R(x_1,x_3) - R(x_1,x_2)$
(i.e. treatment effect function on the main text).
There are observations of the value of the $\beta(.)$ function for $K$ points $\bc_1,\ldots,\bc_K$, that is,
$\beta_1=\beta(\bc_1), \ldots, \beta_K=\beta(\bc_K)$,
for $\bc_j=(c_{1,j},c_{2,j},c_{3,j})$. Interest lies on the integral $\mu = \int_{\m{C}}\beta(\bx) d(\bx)$
which is approximated by a finite weighted sum $\widehat \mu =  \sum_j \Delta_j \beta_j$. A procedure to compute
the integral approximation is given below. More importantly, there is a result that gives the rate of decay of the approximation error of this procedure as the number of points $K \to \infty$.
The procedure consists of using a multivariate local polynomial regression in a first step to obtain an
approximated function $\widehat \beta(\bx)$. The second step integrates $\widehat \beta(\bx)$
over set $\m{C}$ to obtain an approximated integral $\widehat \mu$.

For the first step, run a weighted regression of $\beta_j$s on $J \times 1$ vectors $E_j(\bx)$.
Each $E_j(\bx)$ is made of polynomials evaluated at $(\bx - \bc_j)$ of order $\rho_2$
at most.
To define $E_{j}(\bx)$ and $J$, first consider the multi-index notation for vectors:
for 
$\bx=(x_1,x_2,x_3) \in \mathbb{R}^3$ and
$\bgam=(\gamma_1,\gamma_2,\gamma_3) \in \mathbb{Z}^3_{+}$, let
\begin{gather*}
|\bgam|=\sum_{i=1}^{3} \gamma_i
\\
\bgam! = \prod_{i=1}^3 {\gamma_i} !
\\
\bx^{\bgam} = \prod_{i=1}^3 x_i^{\gamma_i}
\\
\nabla^{|\bgam|} \beta(\bx) =
  \frac{\partial^{|\bgam|}}{\partial {x_{1}}^{\gamma_1} \partial {x_{2}}^{\gamma_2} \partial {x_{3}}^{\gamma_3}}
  \beta(\bx)
\end{gather*}
Each entry in $E_{j}(\bx)$ is a polynomial of the form $p_{\bgam}(\bx-\bc_j)=\prod_{i=1}^3 (x_{i}-c_{i,j})^{\gamma_i} $
with $\bgam$ such that
$|\bgam| \leq \rho_2$
and
$\min \{\gamma_2, \gamma_3 \} = 0$. There is no  $\gamma_2, \gamma_3 >0$
because $\beta(\bx)$ is the difference $R(x_1,x_3)-R(x_1,x_2)$
whose polynomial approximation does not include interactions between $x_2$ and $x_3$.
The dimension of $E_j(\bx)$
is $J \times 1$ with $J=2 {\rho_2 +2 \choose 2} - (\rho_2+1)$, and the first entry in $E_{j}(\bx)$
is the polynomial of degree zero (i.e. $p_{\bo{0}}(\bx-\bc_j)=1$).
Next, stack $E_{1}(\bx)' \ldots E_{K}(\bx)'$
into the $K \times J$ matrix $\bE(\bx)$, and
$\beta_1, \ldots, \beta_K$ into the $K \times 1$ vector $\bB$.
The regression of $\bB$ on $\bE$ is kernel weighted depending on the distance
between a fixed point $\bx \in \m{C}$ and $\bc_j$.
For a choice of bandwidth $h_2>0$,
and a kernel density function that satisfies Assumption \ref{assu_srd_est_kernel},
the $K \times K$ matrix $\bOmg(\bx;h_2)$ is the diagonal matrix of kernel weights:

\[
\bOmg(\bx;{h_2}) = \diag \left\{ \Omega_{j}(\bx;h_2) \right\}_{j}
= \diag \left\{
\prod_{i=1}^3
k \left( \frac{x_{i} - c_{i,j}}{h_2 } \right)
\right\}_{j}
\]
The first-step regression consists of solving the following problem.
\begin{gather*}
\ha{ \bbeta}
= \argmin\limits_{\bbeta} \left(  \bB - \bE(\bx) \bbeta \right)' \bOmg(\bx;h_2) \left( \bB - \bE(\bx) \bbeta \right)
\\
\widehat \beta(\bx) = e_1' \widehat \bbeta = \widehat\eta_1
\end{gather*}
where $\bbeta$ is a $J \times 1$ vector of parameters, and
$\eta_1$ is the first coordinate of the vector $\bbeta$ (intercept coefficient).

In the second step, integrate the estimated function $\widehat \beta(\bx)$
over $\m{C}$. Note that the approximated integral $\widehat \mu$ is written as a weighted sum of $\beta_j$.
\begin{gather*}
 \int_{\m{C}} \widehat \beta(\bx) ~~ d\bx =
\bigintsss_{\m{C}} ~ e_1'
 \left( \bE(\bx)' \bOmg(\bx;h_2) \bE(\bx) \right)^{-1}
 \sum_j \Omega_{j}(\bx;h_2) E_{j}(\bx) \beta_j ~~d(\bx)
 \\
 =
 \sum_j
 \bigintsss_{\m{C}} ~ e_1'
 \left( \bE(\bx)' \bOmg(\bx;h_2) \bE(\bx) \right)^{-1}
 \Omega_{j}(\bx;h_2) E_{j}(\bx) ~~d(\bx) ~~ \beta_j
 \\
 = \sum_j \Delta_j \beta_j
\end{gather*}
The expression for the correction weight $\Delta_j$ is
\begin{gather*}
 \Delta_j = \bigintsss_{\m{C}} ~ e_1'
 \left( \bE(\bx)' \bOmg(\bx;h_2) \bE(\bx) \right)^{-1}
 \Omega_{j}(\bx;h_2) E_{j}(\bx) ~~d(\bx)
 \\
 = \bigintsss_{\m{C}} ~
 \frac{det \left( \bE(\bx)' \bOmg(\bx;h_2) \bE_{\bo{0} \leftarrow e_j}(\bx) \right)}
   {det \left( \bE(\bx)' \bOmg(\bx;h_2) \bE(\bx) \right)}
   ~~d(\bx)
\end{gather*}
where the Cramer rule is used in the second equality, and $\bE_{\bo{0} \leftarrow e_j}(\bx)$
is the matrix valued function $\bE(\bx)$ except for the first column which is replaced by the $K \times 1 $ vector $e_j$
that is zero everywhere except for the $j$-th entry which is equal to 1.

The approximation error of such a procedure is well-behaved if $R(x,y)$
is a continuously differentiable function of order up to $\rho_2+1$ on  $\m{C}$.
This implies that,
$\nabla^{|\bgam|} \beta(\bx)$
is a continuous function for every $\bgam$ such that $|\bgam|=\rho_2+1$.
Lemma
\ref{lemma_mls} below states the approximation error of using a multivariate local polynomial regression
on a finite number of points to obtain $\widehat \beta(\bx)$.  This result is Theorem 3.1 of \cite{lipman2006},
and here account is given to the fact that $\beta$ is the difference of two functions.

\begin{lemma}\label{lemma_mls}
Let $\m{C} \subset \mmr^3$ be open and convex. Let $R: \mmr^2 \to \mmr$ be a $\rho_2+1$ times continuously differentiable function on $\m{C}$,
and define $\beta(\bx)=R(x_1,x_3)-R(x_1,x_2)$.
 For $\bx \in \m{C}$, assume $\widehat \beta(\bx)$ is constructed as above, and that the matrix
 $\bE(\bx)' \bOmg(\bx;h_2) \bE(\bx)$ is invertible for some choice of $h_2>0$. Then, there exists
 $\xi_{j} \in (0,1)$ $j=1,\ldots,K$, such that

\begin{align}
   \widehat \beta(\bx) - \beta(\bx) & =
   \sum\limits_{
   \substack{
    |\bgam|=\rho_2+1
    \\
    \min \{\gamma_2,\gamma_3 \}=0}
    }
    \sum\limits_{ j=1 }^{K}  \Bigg\{  \frac{1}{\bgam!}   (\bc_j-\bx)^{\bgam}
    \nabla^{|\bgam|} \beta\bigg(
     \xi_{j}(\bc_{j}- \bx) + \bx
      \bigg)
      \notag
      \\
    & \hspace{4cm} \frac{det \left( \bE(\bx)' \bOmg(\bx;h_2) \bE_{\bo{0} \leftarrow e_j}(\bx) \right)}
   {det \left( \bE(\bx)' \bOmg(\bx;h_2) \bE(\bx) \right)} \Bigg\}
   \label{lemma_mls:1}
\end{align}
where the $\bE(\bx)$, $\bE_{\bo{0} \leftarrow e_j}(\bx)$,  and $\bOmg(\bx;h_2)$ matrices have been described above.
\end{lemma}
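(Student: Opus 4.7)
The plan is to combine a Taylor expansion of $\beta$ at the evaluation point $\bx$ with an exact-recovery property of weighted least squares on the polynomial basis $\bE(\bx)$, then read off the $j$-th component of the hat vector using Cramer's rule.

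First I would exploit the structure of $\beta$. Since $\beta(\bx) = R(x_1,x_3) - R(x_1,x_2)$ and $R$ is $\rho_2+1$ times continuously differentiable, any mixed partial derivative $\nabla^{|\bgam|}\beta$ with both $\gamma_2>0$ and $\gamma_3>0$ vanishes identically. Therefore Taylor's theorem with Lagrange remainder at $\bx$ gives, for each $j$,
\begin{align*}
\beta(\bc_j) &= \sum_{\substack{|\bgam|\leq \rho_2 \\ \min\{\gamma_2,\gamma_3\}=0}} \frac{1}{\bgam!}(\bc_j-\bx)^{\bgam} \nabla^{|\bgam|}\beta(\bx) \\
&\quad + \sum_{\substack{|\bgam|=\rho_2+1 \\ \min\{\gamma_2,\gamma_3\}=0}} \frac{1}{\bgam!}(\bc_j-\bx)^{\bgam} \nabla^{|\bgam|}\beta\bigl(\xi_j(\bc_j-\bx)+\bx\bigr),
\end{align*}
for some $\xi_j \in (0,1)$. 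The indexing set in both sums matches exactly the set of polynomials appearing in $E_j(\bx)$, with the first coordinate of $E_j(\bx)$ being the $\bgam=\bo{0}$ term.

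Next I would package this as a vector identity. Define $\bbeta^* \in \mmr^J$ by $\bbeta^*_{\bgam} = \nabla^{|\bgam|}\beta(\bx)/\bgam!$ (in the same ordering used to build $E_j$), so that the first coordinate of $\bbeta^*$ is $\beta(\bx)$ and $E_j(\bx)'\bbeta^*$ is the order-$\rho_2$ Taylor polynomial of $\beta$ at $\bx$ evaluated at $\bc_j$. Let $\bR$ be the $K\times 1$ vector whose $j$-th entry $R_j$ is the Lagrange remainder displayed above. Then $\bB = \bE(\bx)\bbeta^* + \bR$, and since $\bE(\bx)'\bOmg(\bx;h_2)\bE(\bx)$ is assumed invertible, the WLS solution is
\begin{equation*}
\widehat{\bbeta} = \bigl(\bE(\bx)'\bOmg(\bx;h_2)\bE(\bx)\bigr)^{-1}\bE(\bx)'\bOmg(\bx;h_2)\bB = \bbeta^* + \bigl(\bE(\bx)'\bOmg(\bx;h_2)\bE(\bx)\bigr)^{-1}\bE(\bx)'\bOmg(\bx;h_2)\bR.
\end{equation*}
Taking $e_1'$ on both sides and using that $e_1'\bbeta^* = \beta(\bx)$ yields
\begin{equation*}
\widehat{\beta}(\bx) - \beta(\bx) = e_1'\bigl(\bE(\bx)'\bOmg(\bx;h_2)\bE(\bx)\bigr)^{-1}\bE(\bx)'\bOmg(\bx;h_2)\bR = \sum_{j=1}^K w_j(\bx)\, R_j,
\end{equation*}
where $w_j(\bx) = e_1'(\bE'\bOmg\bE)^{-1}\Omega_j(\bx;h_2)E_j(\bx)$ is the $j$-th entry of the hat-row vector.

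Finally I would identify $w_j(\bx)$ with the determinant ratio by Cramer's rule. The first entry of the vector $(\bE'\bOmg\bE)^{-1}(\bE'\bOmg \, e_j\Omega_j)$ is, by Cramer's rule applied to the linear system $(\bE'\bOmg\bE)\eta = \Omega_j E_j$, equal to $\det(\bE'\bOmg\bE_{\bo{0}\leftarrow e_j})/\det(\bE'\bOmg\bE)$, once one observes that replacing the first column of $\bE'\bOmg\bE$ by $\Omega_j E_j$ is the same as forming $\bE'\bOmg\bE_{\bo{0}\leftarrow e_j}$. Substituting the Lagrange remainder for $R_j$ then delivers \eqref{lemma_mls:1} exactly. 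The only real subtlety is the justification that the Taylor sum only needs multi-indices with $\min\{\gamma_2,\gamma_3\}=0$; this follows from the product structure of $\beta$ noted at the start, and matches the basis used to construct $E_j(\bx)$, which is what keeps the algebra closed and makes $\bE \bbeta^*$ exactly reproduce the truncated Taylor expansion.
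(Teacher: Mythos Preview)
Your argument is correct and is precisely the standard moving-least-squares error representation that the paper invokes by citing Theorem~3.1 of \cite{lipman2006}: Taylor expand each $\beta(\bc_j)$ about $\bx$ with Lagrange remainder, use that the WLS regression on the basis $\bE(\bx)$ reproduces exactly any function in its span (so the truncated Taylor part is recovered and only the remainder survives), and apply Cramer's rule to rewrite the hat-row entry $e_1'(\bE'\bOmg\bE)^{-1}\Omega_j E_j$ as the stated determinant ratio. The one ingredient you add beyond the cited reference is the observation that $\nabla^{|\bgam|}\beta\equiv 0$ whenever $\gamma_2,\gamma_3>0$, which is exactly the adaptation the paper highlights; this both justifies restricting the Taylor sums to $\min\{\gamma_2,\gamma_3\}=0$ and ensures the truncated expansion lies in the column space of $\bE(\bx)$.
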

\begin{proof}
See the proof of Theorem 3.1 in \cite{lipman2006} and use the fact
that  $\nabla^{|\bgam |} \beta(\bx) = 0$ if $\gamma_2 >0$ and $\gamma_3 >0$.
\end{proof}

Suppose there is an increasing number of function evaluation points.
The values of $\bc_j$
are thought as coming from a triangular array indexed by $K$: $\left\{ \bc_{j,K} \right\}_j$.
The approximation error $\widehat \beta(\bx) - \beta(\bx)$ decreases to zero as $K$ grows large.
Lemma \ref{lemma_rate_int} below uses regularity conditions on the function $\beta$ and 
on the triangular array of points to determine the rate at which the approximation error converges to zero.
\begin{lemma}\label{lemma_rate_int}
Assume the conditions of Lemma \ref{lemma_mls} hold.
Furthermore, assume that 
\begin{itemize}
\item[(i)] $K \to \infty $, $h_2 \to 0$, $1/(K h_2^3) = O(1)$; 
\item[(ii)] there exists a positive definite $J \times J$ matrix $\bQ$ such that

$ \sup\limits_{  \bx \in \m{C}   }
\left\| {K h_2^3}
 \left[ \bE(\bx/h_2)' \bOmg(\bx;h_2) \bE(\bx/h_2) \right]^{-1}
- \bQ \right\|
=o\left( {1} \right)$; and

\item[(iii)] the function $\beta(\bx)$ has bounded derivatives on $\m{C}$
of order up to $\rho_2+2$.
\end{itemize}
Then,
\begin{align}
& \int_{\m{C}} \widehat \beta(\bx) - \beta(\bx) ~~d\bx  - \m{B}_{K} =  O\left( h_2^{\rho_2+2} \right)
\label{lemma_rate_int:1}
\\
& \text{ where } \notag
\\
& \m{B}_{K} =  \bigints_{\m{C}}  
	\sum\limits_{
		\substack{
			|\bgam|=\rho_2+1\\\min \{\gamma_2,\gamma_3 \}=0}
    	}
    	\sum\limits_{ j=1 }^{K}  \Bigg\{ 
    		\frac{1}{\bgam!}   (\bc_j-\bx)^{\bgam}
    		\nabla^{|\bgam|} \beta(	\bx )
\notag      \\
    & \hspace{4.5cm} \frac{det \left( \bE(\bx)' \bOmg(\bx;h_2) \bE_{\bo{0} \leftarrow e_j}(\bx) \right)}
   {det \left( \bE(\bx)' \bOmg(\bx;h_2) \bE(\bx) \right)} \Bigg\}
~~d\bx.
\label{lemma_rate_int:2}
\end{align}
and $\m{B}_{K} = O\left( h_2^{\rho_2+1} \right)$.

Moreover, there exists a $J \times 1$  vector $\Theta$ such that $e_1' \bQ \Theta > 0$, and 
\begin{align}
& \max_{1\leq j \leq K} \left\|
 h_2^{-3} \int_{\m{C}} \Omega_j(\bx;h_2) E_j(\bx / h_2) ~d \bx - \Theta
\right\| = o(1)
\label{lemma_rate_int:3}
\\
& \max_{1 \leq j \leq K} | K \Delta_j - e_1' \bQ \Theta | = o\left(1 \right).
\label{lemma_rate_int:4}
\end{align}

\end{lemma}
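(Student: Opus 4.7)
The plan is to apply Lemma \ref{lemma_mls} pointwise in $\bx$, integrate the resulting expansion, and then split the integrated error into the declared bias term $\m{B}_K$ plus a Taylor remainder. The integrand of $\widehat\beta(\bx)-\beta(\bx)$ given by Lemma \ref{lemma_mls} differs from the integrand of $\m{B}_K$ only in that $\nabla^{|\bgam|}\beta$ is evaluated at $\xi_j(\bc_j-\bx)+\bx$ rather than at $\bx$. By the mean value theorem and hypothesis (iii), the difference is bounded by $M\|\bc_j-\bx\|$. Since the kernel $k$ has compact support, $\Omega_j(\bx;h_2)\neq 0$ forces $\|\bc_j-\bx\|_\infty\leq M h_2$, so each nonzero summand contributes an extra factor $O(h_2)$ relative to $\m{B}_K$.

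Next I would quantify the Cramer ratio. Rescaling by $Kh_2^3$, condition (ii) gives uniform invertibility and a uniform approximation of $(Kh_2^3)\left[\bE(\bx/h_2)'\bOmg(\bx;h_2)\bE(\bx/h_2)\right]^{-1}$ by $\bQ$. The vector $\sum_j \Omega_j(\bx;h_2)E_j(\bx/h_2)\beta_j$ transforms accordingly under the same rescaling of the polynomial basis (powers of $h_2$ absorbed into the $E_j$ vs $(\bc_j-\bx)^{\bgam}$ factors), so that for every $\bx$,
\begin{align*}
\sum_j \Delta_j\,\mathbf{1}\{\Omega_j(\bx;h_2)>0\}\;=\;O(1)
\quad\text{and}\quad \sum_j \left|\tfrac{\det(\bE'\bOmg \bE_{\bo 0\leftarrow e_j})}{\det(\bE'\bOmg\bE)}\right|=O(1)
\end{align*}
uniformly in $\bx\in\m{C}$. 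Combining this with the extra $O(h_2)$ factor from the Taylor remainder and integrating over the bounded set $\m{C}$ delivers (\ref{lemma_rate_int:1}). The same argument without the Taylor step, but noting that $(\bc_j-\bx)^{\bgam}$ contributes $O(h_2^{|\bgam|})=O(h_2^{\rho_2+1})$ on the effective support, yields $\m{B}_K=O(h_2^{\rho_2+1})$.

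For (\ref{lemma_rate_int:3}), I would perform the change of variables $\bu=(\bx-\bc_j)/h_2$ in the integral $\int_{\m{C}}\Omega_j(\bx;h_2)E_j(\bx/h_2)\,d\bx$. The Jacobian gives a factor $h_2^3$, the kernel factors become $\prod k(u_i)$ on the unit cube (since $\bc_j$ is interior for large $K$ by the density hypothesis in Assumption \ref{assu_srd_kinf_est_cut}, so boundary effects are negligible), and the polynomial vector reduces to a fixed vector of moments of $k$, which I call $\Theta$. Uniformity in $j$ follows because the limiting integral does not depend on $j$ and the boundary-correction error is $o(1)$ uniformly, using $h_2\to 0$ together with the interiorness hypothesis.

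Finally, (\ref{lemma_rate_int:4}) follows by combining the previous two steps with Cramer's rule. Writing
\begin{align*}
K\Delta_j \;=\; K\int_{\m{C}} e_1'\left[\bE(\bx)'\bOmg(\bx;h_2)\bE(\bx)\right]^{-1}\Omega_j(\bx;h_2)E_j(\bx)\,d\bx,
\end{align*}
and inserting the rescaling of the matrix inverse supplied by condition (ii) together with (\ref{lemma_rate_int:3}), the integrand is uniformly close to $e_1'\bQ\Theta/(Kh_2^3)\cdot h_2^3$ times the indicator that $\bc_j$ is in the effective region, and integrating then dividing shows $K\Delta_j\to e_1'\bQ\Theta$ uniformly. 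Positivity $e_1'\bQ\Theta>0$ follows because the constant function $\beta\equiv 1$ must be reproduced exactly by the local polynomial fit, forcing $\sum_j \Delta_j$ to approximate the volume of $\m{C}$; since all $K$ weights share the same limit, that limit must be strictly positive. The main obstacle is verifying the uniformity in $j$ of the boundary approximation in the change-of-variables step: this requires that the density hypothesis in Assumption \ref{assu_srd_kinf_est_cut} produces enough neighbors around every $\bc_j$, including those close to $\partial\m{C}$, which is precisely the role of condition $1/(Kh_2^3)=O(1)$.
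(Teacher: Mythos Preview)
Your proposal is correct and follows essentially the same route as the paper: Taylor-expand the derivative term in Lemma~\ref{lemma_mls} around $\bx$ to split off $\m{B}_K$ plus an $O(h_2)$ remainder, bound the sum of Cramer ratios via the rescaling $Kh_2^3[\bE'\bOmg\bE]^{-1}\to\bQ$ together with $\sum_j\|\Omega_j(\bx;h_2)E_j(\bx/h_2)\|=O(Kh_2^3)$, use the change of variables $\bu=(\bx-\bc_j)/h_2$ for (\ref{lemma_rate_int:3}), and deduce (\ref{lemma_rate_int:4}) and positivity of $e_1'\bQ\Theta$ from exact reproduction of constants, $\sum_j\Delta_j=\int_{\m{C}}1\,d\bx$. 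One small correction: the boundary uniformity in (\ref{lemma_rate_int:3}) is handled in the paper not through $1/(Kh_2^3)=O(1)$ but simply through $h_2\to 0$, by restricting to the set $\m{C}_{h_2}=\{\bx:\prod_i(x_i\pm h_2)\subset\m{C}\}$ where the change of variables gives $\Theta$ exactly, and then using $\m{C}_{h_2}\uparrow\m{C}$.
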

\begin{proof}

\textbf{Parts \eqref{lemma_rate_int:1} and  \eqref{lemma_rate_int:2} : } 

Start with Equation \ref{lemma_mls:1}.
Do a first-order Taylor expansion  of $\nabla^{|\bgam|} \beta(\xi_{j}(\bc_{j}- \bx) + \bx)$
around $\bx$ and substitute in Equation \ref{lemma_mls:1} to obtain
\begin{align}
\widehat \beta(\bx) - \beta(\bx) & =
	\sum\limits_{
		\substack{
    		|\bgam|=\rho_2+1
			\\
    		\min \{\gamma_2,\gamma_3 \}=0
    		}
    }
    \sum\limits_{ j=1 }^{K}  \Bigg\{
    	\frac{1}{\bgam!}   (\bc_j-\bx)^{\bgam}
    	\nabla^{|\bgam|} \beta(\bx)
      \notag
\\
& \hspace{3.5cm} 
	\frac{det \left( \bE(\bx)' \bOmg(\bx;h_2) \bE_{\bo{0} \leftarrow e_j}(\bx) \right)}
   		{det \left( \bE(\bx)' \bOmg(\bx;h_2) \bE(\bx) \right)} \Bigg\}
\\
+&\sum\limits_{
		\substack{
    		|\bgam|=\rho_2+1
			\\
    		\min \{\gamma_2,\gamma_3 \}=0
    		}
    }
    \sum\limits_{ j=1 }^{K}  \Bigg\{
    	\frac{1}{\bgam!}   (\bc_j-\bx)^{\bgam}
    	\sum\limits_{ | \boeta | =1 }
    		\nabla^{| \bgam + \boeta |} \beta\bigg(
    			\delta_{j}(\bc_{j}- \bx) + \bx
      		\bigg) (\bc_j-\bx)^{\boeta}
      \notag
\\
& \hspace{3.5cm} 
	\frac{det \left( \bE(\bx)' \bOmg(\bx;h_2) \bE_{\bo{0} \leftarrow e_j}(\bx) \right)}
   		{det \left( \bE(\bx)' \bOmg(\bx;h_2) \bE(\bx) \right)} \Bigg\}.
\end{align}
      
The integral over set $\m{C}$ is
\begin{align}
\int_{\m{C}} \widehat \beta(\bx) - \beta(\bx) ~d \bx& =
\m{B}_K
\\
+& \bigintss_{\m{C}} 
		\sum\limits_{
		\substack{
    		|\bgam|=\rho_2+1
			\\
    		\min \{\gamma_2,\gamma_3 \}=0
    		\\
    		| \boeta | =1
    		}
    }
    \sum\limits_{ j=1 }^{K}  \Bigg\{
    	\frac{1}{\bgam!}   (\bc_j-\bx)^{\bgam + \boeta}
   		\nabla^{| \bgam + \boeta |} \beta\bigg(
   			\delta_{j}(\bc_{j}- \bx) + \bx
       	\bigg) 
      \notag
\\
& \hspace{3.5cm} 
	\frac{det \left( \bE(\bx)' \bOmg(\bx;h_2) \bE_{\bo{0} \leftarrow e_j}(\bx) \right)}
   		{det \left( \bE(\bx)' \bOmg(\bx;h_2) \bE(\bx) \right)} \Bigg\} 
   		~d \bx.
\label{lemma_rate_int:2order}
\end{align}

The absolute value of the expression inside the integral in Equation \ref{lemma_rate_int:2order} is bounded by
\begin{align}
& \sum\limits_{
	\substack{
		|\bgam|=\rho_2+1
	\\
	\min \{\gamma_2,\gamma_3 \}=0
	\\
	| \boeta | =1}
}
	\sum\limits_{ j=1 }^{K}  \Bigg\{
		\frac{1}{\bgam!}   \left| \bc_j-\bx \right|^{\bgam + \boeta}
		\left| \nabla^{ | \bgam + \boeta | } 
		\beta\bigg(
			\delta_{j}(\bc_{j}- \bx) + \bx
  		\bigg) \right|
\\
& \hspace{3cm} 
		\left| e_1' \left( \bE(\bx)' \bOmg(\bx;h_2) \bE(\bx) \right)^{-1}
		\Omega_{j}(\bx;h_2) E_{j}(\bx) \right| ~
	\Bigg\}
\\
= & \sum\limits_{
	\substack{
		|\bgam|=\rho_2+1
	\\
	\min \{\gamma_2,\gamma_3 \}=0
	\\
	| \boeta | =1}
}
	\sum\limits_{ j=1 }^{K}  \Bigg\{
		\frac{1}{\bgam!}   \left| \bc_j-\bx \right|^{\bgam + \boeta}
		\left| \nabla^{ | \bgam + \boeta | } 
		\beta\bigg(
			\delta_{j}(\bc_{j}- \bx) + \bx
  		\bigg) \right|
\\
& \hspace{3cm} 
		\left| e_1' \left( \bE(\bx/h_2)' \bOmg(\bx;h_2) \bE(\bx/h_2) \right)^{-1}
		\Omega_{j}(\bx;h_2) E_{j}(\bx/h_2) \right| ~
	\Bigg\}
\\
\leq & 
M
h_2^{\rho_2+2}
\left\|
	Kh_2^3 \left( \bE(\bx/h_2)' \bOmg(\bx;h_2) \bE(\bx/h_2) \right)^{-1} 
\right\|
\frac{1}{K}\sum\limits_{ j=1 }^{K}
	\left\| 
		\frac{1}{h_2^3} \Omega_{j}(\bx;h_2) E_{j}(\bx/h_2)    
	\right\|
\\
\leq &
M  h_2^{\rho_2+2} O(1) O(1) = O\left( h_2^{\rho_2+2} \right).
\end{align}
where it is used that the derivatives of $\beta$ are bounded;
that $| (\bc_j-\bx )^{\bgam+\boeta}| \leq h_2^{\rho_2 +2}$;
that the norm of the inverse of $Kh_2^3 \left( \bE(\bx/h_2)' \bOmg(\bx;h_2) \bE(\bx/h_2) \right)$
is bounded over $x$ and $n$ (Assumption (ii));
and the fact that
$\sum_j \left\| \Omega_{j}(\bx;h_2) E_{j}(\bx/h_2)    \right\| \leq  M K h_2^3$.
It follows that Equation \ref{lemma_rate_int:2order} is $O\left( h_2^{\rho_2+2} \right)$.
A similar argument yields $\m{B}_K = O\left( h_2^{\rho_2+1} \right)$.

\textbf{Part \eqref{lemma_rate_int:3} : }

Assume WLOG the support of the kernel is $[-1,1]$ (Assumption \ref{assu_srd_est_kernel}).
Define:

$F(\bx) = E_j(\bx+\bc_j)$;

$\m{C}_{h_2} = \{\bx \in \mmr^3 : \prod_{i=1}^3 (x_i \pm  h_2) \subseteq \m{C} \}$,
where $\prod$ is used to denote the Cartesian product;

$\Theta = \int_{[-1,1]^3}
  k(u_1)k(u_2) k(u_3) F(\bu) ~ d\bu$, where
$\bu=(u_1,u_2,u_3)$.

\begin{gather*}
0 \leq \max_{j: \bc_j \in \m{C}_{h_2}}
\left\| {h_2^{-3}} \int_{\m{C}}
\Omega_j(\bx;h_2) E_j(\bx/h_2) ~ d\bx - \Theta \right\|
\\
\leq
\sup_{\bc \in \m{C}_{h_2}}
\left\| {h_2^{-3}} \int_{\bc \pm h_2}
\prod_{i=1}^3 k((x_i - c_{i})/h_2) F((\bx - \bc)/h_2) ~ d\bx - \Theta \right\|
\\
=
\sup_{\bc \in \m{C}_{h_2}}
\left\| \int_{[-1,1]^3}
 k(u_1)k(u_2) k(u_3) F(\bu) ~ d\bu - \Theta \right\| = 0
\end{gather*}
where the transformation $\bu=(\bx - \bc)/h_2$ is used.
The result follows from the fact that $\m{C}_{h_2} \uparrow \m{C}$.

\textbf{Part \eqref{lemma_rate_int:4} : }

Using the formula for the correction weights $\Delta_j$
\begin{gather*}
\left| K \Delta_j - e_1' \bQ \Theta \right|
\\
= \left| K \int_{\m{C}} ~ e_1'
 \left( \bE(\bx/h_2)' \bOmg(\bx;h_2) \bE(\bx/h_2) \right)^{-1}
 \Omega_{j}(\bx;h_2) E_{j}(\bx/h_2) ~~d(\bx) - e_1' \bQ \Theta \right|
\\
= \left|  \int_{\m{C}} ~ e_1'
 \left[Kh_2^3 \left( \bE(\bx/h_2)' \bOmg(\bx;h_2) \bE(\bx/h_2) \right)^{-1} \right] h_2^{-3}
 \Omega_{j}(\bx;h_2) E_{j}(\bx/h_2) ~~d(\bx) - e_1' \bQ \Theta \right|
\\
\leq \left|  \int_{\m{C}} ~ e_1'
 \left[Kh_2^3 \left( \bE(\bx/h_2)' \bOmg(\bx;h_2) \bE(\bx/h_2) \right)^{-1} - \bQ \right] h_2^{-3}
 \Omega_{j}(\bx;h_2) E_{j}(\bx/h_2) ~~d(\bx)  \right|
\\
+
\left|  \int_{\m{C}} ~ e_1'\bQ
   h_2^{-3}
 \Omega_{j}(\bx;h_2) E_{j}(\bx/h_2) ~~d(\bx) - e_1' \bQ \Theta \right|
\\
\leq
\int_{\m{C}} ~
\left\| Kh_2^3
\left( \bE(\bx/h_2)' \bOmg(\bx;h_2) \bE(\bx/h_2) \right)^{-1} - \bQ
\right\|
h_2^{-3}
 \left\| \Omega_{j}(\bx;h_2) E_{j}(\bx/h_2) \right\| ~~d(\bx)
\\
+
\left| e_1' \bQ  \left[
\int_{\m{C}} ~
h_2^{-3}
\Omega_{j}(\bx;h_2) E_{j}(\bx/h_2) ~~d(\bx)
- \Theta
\right]
\right|
\\
= o(1) O(1) + o(1) = o(1)
\end{gather*}

Next, 
\begin{gather*}
\left| 
	\int_{\m{C}} 1 ~d\bc - e_1' \bQ \Theta
\right|
\leq
\left| 
	\int_{\m{C}} 1 ~d\bc - \sum_j \Delta_j  
\right|
+
\left| 
	\sum_j \Delta_j - e_1' \bQ \Theta
\right|
\\
\leq o(1) +  \frac{1}{K} \sum_j \left| K\Delta_j - e_1' \bQ \Theta \right|
\\
\leq o(1) + \max_j \left| K\Delta_j - e_1' \bQ \Theta \right|
=o(1)
\end{gather*}
shows that $e_1' \bQ \Theta = \int_{\m{C}} 1 ~d\bc > 0 $.
\end{proof}

\begin{remark}
Lemma \ref{lemma_rate_int} also applies to weighted integrals of the form
\[
 \mu = \int_{\m{C}} \omega(\bx) \beta(\bx) ~~ d(\bx)
\]
where $\omega(\bx)$ is a probability density function that is continuous, bounded and bounded away from zero.
There are three main differences between unweighted integrals (treated above) and weighted integrals (considered in the main text) : (i) the formula for the weights $\Delta_j$ changes to
\begin{gather*}
 \Delta_j = \bigintsss_{\m{C}} ~ \omega(\bx) e_1'
 \left( \bE(\bx)' \bOmg(\bx;h_2) \bE(\bx) \right)^{-1}
 \Omega_{j}(\bx;h_2) E_{j}(\bx) ~~d(\bx)
 \\
 = \bigintsss_{\m{C}} ~ \omega(\bx)
 \frac{det \left( \bE(\bx)' \bOmg(\bx;h_2) \bE_{\bo{0} \leftarrow e_j}(\bx) \right)}
   {det \left( \bE(\bx)' \bOmg(\bx;h_2) \bE(\bx) \right)}
   ~~d(\bx),
\end{gather*}
(ii) the formula for the bias changes to
\begin{align*}
 \m{B}_{K} = & \bigints_{\m{C}}  \omega(\bx)
	\sum\limits_{
		\substack{
			|\bgam|=\rho_2+1\\\min \{\gamma_2,\gamma_3 \}=0}
    	}
    	\sum\limits_{ j=1 }^{K}  \Bigg\{ 
    		\frac{1}{\bgam!}   (\bc_j-\bx)^{\bgam}
    		\nabla^{|\bgam|} \beta(	\bx )
\notag      \\*
   & \hspace{4.5cm} \frac{det \left( \bE(\bx)' \bOmg(\bx;h_2) \bE_{\bo{0} \leftarrow e_j}(\bx) \right)}
   {det \left( \bE(\bx)' \bOmg(\bx;h_2) \bE(\bx) \right)} \Bigg\}
~~d\bx,
\end{align*}
and (iii) conclusion \ref{lemma_rate_int:4} of Lemma \ref{lemma_rate_int} changes to
\[
\max_{1 \leq j \leq K} | K \Delta_j / {\omega(\bc_j)} - e_1' \bQ \Theta | = o\left(1 \right).
\]

\end{remark}

Lemma \ref{lemma_rate_int} states a condition on the asymptotic behavior
of the triangular array of points $\{ \bc_j \}_{j=1}^K$.
For large $K$,
the observations must be uniformly distributed on the domain $\m{C}$ such that
$\bE(\bx)' \bOmg(\bx;h_2) \bE(\bx)$ is invertible and of magnitude
$Kh_2^3$, that is, $K$ times the volume of every $h_2$-neighborhood of $\bx$, for every $\bx$ in $\m{C}$.
These conditions are satisfied in a variety
of examples of triangular arrays of points that cover $\m{C}$ uniformly well for large $K$.

To be clearer, this assumption is illustrated in a simple example.
In the main text, the conditions of Lemma \ref{lemma_rate_int}
are restated in Assumption \ref{assu_srd_kinf_est_cut}(c)
and in the rate conditions of Theorem \ref{theo_srd_kinf_est_int}.
The choice of $h_1,h_2,\rho_2$ needs to satisfy both the conditions in Assumption \ref{assu_srd_kinf_est_cut}
and the rate conditions of Theorem \ref{theo_srd_kinf_est_int}.

Pick the choices given in the example of Figure \ref{figure_rates}, for which,
$h_1=K^{-\lambda_1/\lambda_2}$, $h_2=K^{-3/10}$, and $\rho_2=3$.
Let $\bc \in \mmr^3$, $\bx\in \mmr^3$, $k(u)=.5\mmi\{|u| \leq 1\}$,
and $\m{C}=(0,1)^3$.
Define $N$ points for each $l$th coordinate of $\bc=(c_1,c_2,c_3)$ as
$c_{l,j,N}=j/(N+1)$, $j=1,\ldots,N$, $l=1,2,3$. In this case, $K=N^3$,
and $h_2=1/N^{9/10}$. Assumption \ref{assu_srd_kinf_est_cut}(b)
requires the distance $c_{1,j+1,K} - c_{1,j,K}=1/(N+1) = 1 / (K^{1/3}+1)$,
to be greater than the order of $h_1 = K^{-\lambda_1/\lambda_2}$. This is equivalent to
$\lambda_1 > \lambda_2 / 3$ which is always satisfied for the choices in the feasibility
set depicted in Figure \ref{figure_rates}. Next, condition (c) in Assumption
\ref{assu_srd_kinf_est_cut} is shown.

Define $\tilde K = \sum_{(l_1,l_2,l_3)}^{K} \mmi \left\{ \Omega_{(l_1,l_2,l_3)}(\bx;h_2)>0 \right\}$,
where $(l_1,l_2,l_3)$ indexes point $\bc_{(l_1,l_2,l_3)}=(c_{l_1}, c_{l_2}, c_{l_3}) $.
For each $\bx$, the number of
$\bc_{(l_1,l_2,l_3)}$ in the $h_2$-neighborhood of $\bx$
grows to infinity at $ K^{1/10} = K h_2^3$ rate, so $\tilde K = O(K h_2^3)$.
This rate of growth is uniform over $\bx \in \m{C}$.
The vector of polynomials $E_j(\bx)$ is written as $E_j(\bx) = F\left(\bx-\bc_{(l_1,l_2,l_3)}\right)$ where
\begin{gather*}
F(\bu)=\Big[\bu^{(0,0,0)}, \bu^{(1,0,0)}, \bu^{(0,1,0)}, \ldots, \bu^{(2,0,1)}
\Big]'
\end{gather*}
that is, all polynomials $\bu^{(\gamma_1,\gamma_2,\gamma_3)}$
such that $\gamma_i \in \mmz_+$ $\forall i$, $0\leq \gamma_1+\gamma_2+\gamma_3 \leq 3$,
and $\min\{\gamma_2,\gamma_3 \}=0$. Then, $F(\bu)$ is a $J \times 1$ vector where $J=16.$

Now, fix $\bx \in (0,1)^3$ and a large $K$. Consider an uniform discrete random vector $\tilde \bu$
taking values on $(-1,1)^3$ according to
$\bu_{(l_1,l_2,l_3)}  = \left( \bx - \bc_{(l_1,l_2,l_3)} \right) / h_2 $
for all $\bc_{(l_1,l_2,l_3)} \in (\bx \pm h_2)$.
It turns out that
\begin{gather*}
 \frac{1}{\tilde K} \bE(\bx/h_2)' \bOmg(\bx;h_2) \bE(\bx/h_2)
\\
=
\frac{1}{2 \tilde K} \sum_{(l_1,l_2,l_3)} \mmi \left\{ \Omega_{(l_1,l_2,l_3)}(\bx;h_2)>0 \right\}
F\left( (\bx - \bc_{(l_1,l_2,l_3)})/h_2 \right) F\left( (\bx - \bc_{(l_1,l_2,l_3)})/h_2 \right)'
\\
= \frac{1}{2 } \mme\left[ F(\tilde \bu) F(\tilde \bu)' \right]
\end{gather*}
This is approximately equal to $\int_{\bu \in [-1,1]^3} F(\bu) F(\bu)'~d \bu$
uniformly in $\bx$. Simply call $\bQ$ the inverse of this integral, a positive definite
matrix.
Finally,
\[
\sup\limits_{  \bx \in \m{C}  }
\left\|
 \left[ \frac{1}{K h_2^3} \bE(\bx/h_2)' \bOmg(\bx;h_2) \bE(\bx/h_2) \right]^{-1}
 - \bQ
\right\|
=o( 1 )
\]

\subsection{Consistent Estimation of Standard Errors}\label{sec:supp:app:est:se}

\indent

This section demonstrates that the estimator for the variance of $\ha{\mu}^c$ proposed in Section \ref{sec_case2} is a consistent estimator.
For the nearest-neighbor matching, the distribution of $X_i$ is continuous, so I assume $X_1 < \ldots < X_n$ WLOG.
For a fixed number of neighbors $N \in \mmz_+$, define
\begin{align}
c : & \m{X} \to \{0, 1, \ldots, K \} \text{, where}
\notag
\\
     & \hspace{1cm} c(x) = \max_{0 \leq j \leq K} \left\{ 
	c_j : c_j \leq x
\right\}
\\
\ell :&  \{1, \ldots, n \} \times \mmz_+ \to \{1, \ldots, n \} \text{, where } \ell(i,N) \text{ is such that }
\notag
\\
& \hspace{1cm}  	 \sum_{ \substack{v=1 \\ v\neq i, v\neq \ell(i,N) } }^n 
	\mmi\left\{
		|X_v-X_i| \leq |X_{\ell(i,N)} - X_i |, c(X_v)=c(X_i)
	\right\}
	=N
\\
\ha{\eps}_i^2 = & \frac{N}{N+1} \left(
	Y_i  - \frac{1}{N} \sum_{l = 1 }^N Y_{\ell(i,l)}
\right)^2
\end{align}
The expression for the variance estimator in the continuous case is given in Equation \ref{est:var:c}.
\begin{lemma}\label{lemma:est:var}
Assume the conditions of Theorem \ref{theo_srd_kinf_est_int} and $(K \underline{h}_{1})^{-1}=O(1) $ hold.
Then, $\ha{\m{V}}_n^c / \m{V}_n^c \pto 1$.
\end{lemma}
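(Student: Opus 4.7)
The plan is to establish $\ha{\m{V}}_n^c / \m{V}_n^c \pto 1$ by decomposing $\ha{\m{V}}_n^c - \m{V}_n^c$ into three pieces and showing each is $o_P(\m{V}_n^c)$. Let
\[
\phi_n(X_i) = \sum_{j=1}^K \frac{\Delta_j}{n h_{1j}} k\!\left(\frac{X_i - c_j}{h_{1j}}\right) e_1' \bigl(v_i^{j+}\mme[G_n^{j+}] - v_i^{j-}\mme[G_n^{j-}]\bigr) \ti{H}_i^j,
\]
and let $\ha{\phi}_n(X_i)$ denote the analogous quantity with $\mme[G_n^{j\pm}]$ replaced by $G_n^{j\pm}$, so that $\m{V}_n^c = n\,\mme[\eps_i^2 \phi_n^2(X_i)]$ and $\ha{\m{V}}_n^c = \sum_i \ha{\eps}_i^2 \ha{\phi}_n^2(X_i)$. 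From the proof of Theorem \ref{theo_srd_kinf_est_int} I already have $(\m{V}_n^c)^{-1} = O(Kn\overline{h}_1)$, which sets the target rate for every remainder.

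The first piece is $\sum_i \ha{\eps}_i^2[\ha{\phi}_n^2 - \phi_n^2](X_i)$. I would factor $\ha{\phi}_n^2 - \phi_n^2 = (\ha{\phi}_n - \phi_n)(\ha{\phi}_n + \phi_n)$, invoke the uniform control $\max_j \|G_n^{j\pm} - \mme[G_n^{j\pm}]\| = O_P(\sqrt{\log n/(n\overline{h}_1)})$ from Lemma \ref{lemma_app}, the weight rate $|\Delta_j| = O(K^{-1})$ from Lemma \ref{lemma_rate_int}, the pointwise bound $\phi_n(X_i) = O((Kn\underline{h}_1)^{-1})$, and the a.s.\ boundedness of $Y_i$ (which renders $\ha{\eps}_i^2$ uniformly bounded) to obtain a remainder of order $(Kn\overline{h}_1)^{-1}\sqrt{K\log n / (n\overline{h}_1)}$; this is $o_P((Kn\overline{h}_1)^{-1})$ by rate condition (ii), $K^{1/2}\log n/(n\overline{h}_1)^{1/2} = o(1)$.

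The second piece is $\sum_i [\ha{\eps}_i^2 - \sigma^2(X_i,D_i)]\phi_n^2(X_i)$, where $\sigma^2(x,d) = \Var(Y_i(d)\mid X_i = x)$. Here I would extend the nearest-neighbor argument of CCT's Theorem A3 to multiple cutoffs. Writing $Y_i = R(X_i,D_i) + \eps_i$ and using that $\ell(i,l)$ matches within the same cutoff region so $D_{\ell(i,l)} = D_i$, a standard expansion of $\ha{\eps}_i^2 = (3/4)(Y_i - (1/3)\sum_l Y_{\ell(i,l)})^2$ yields
\[
\mme[\ha{\eps}_i^2 \mid X_i, D_i] = \sigma^2(X_i, D_i) + o(1),
\]
with the error driven by the smoothness of $R$ and $\sigma^2$ together with the matching distance $|X_{\ell(i,l)} - X_i|$. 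The hypothesis $(K\underline{h}_1)^{-1} = O(1)$ combined with $n\underline{h}_1\to\infty$ forces $n/K \to \infty$, so this distance is $O_P(K/n)$ uniformly in $i$, and the conditional bias vanishes. Cross-observation dependence is benign because, in one dimension, each observation serves as a match for only $O(1)$ other observations, so the variance of the centered sum is of the same order as under independence, and a Markov bound delivers $o_P(\m{V}_n^c)$.

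The third piece, $\sum_i \sigma^2(X_i,D_i)\phi_n^2(X_i) - n\mme[\sigma^2(X_i,D_i)\phi_n^2(X_i)]$, is a centered iid sum whose variance is at most $n\mme[\sigma^4(X_i,D_i)\phi_n^4(X_i)] \leq O((Kn\underline{h}_1)^{-2})\cdot \m{V}_n^c = O((Kn\overline{h}_1)^{-3})$, using $\phi_n^2(X_i) \leq \sup_x\phi_n^2(x) = O((Kn\underline{h}_1)^{-2})$ and $\overline{h}_1/\underline{h}_1 = O(1)$; dividing by $(\m{V}_n^c)^2 = O((Kn\overline{h}_1)^{-2})$ yields $O((Kn\overline{h}_1)^{-1}) = o(1)$, and Chebyshev closes the loop. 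The main obstacle will be the second piece: carefully extending CCT's single-cutoff matching argument to infill asymptotics with $K \to \infty$, in particular controlling the conditional bias and the cross-observation dependence uniformly when only $O(n/K)$ observations are available for matching inside each cutoff region.
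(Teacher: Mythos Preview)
Your proposal is correct and follows the same overall architecture as the paper's proof: normalize by $(Kn\overline{h}_1)$, decompose, and control each piece using the uniform rates from Lemma~\ref{lemma_app}, the weight bound $|\Delta_j|=O(K^{-1})$ from Lemma~\ref{lemma_rate_int}, and the a.s.\ boundedness of $\eps_i$. The differences are in the bookkeeping, not the ideas.

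Two points of comparison are worth flagging. First, you insert $\sigma^2(X_i,D_i)$ as the intermediate between $\ha{\eps}_i^2$ and the target, which produces an explicit LLN piece (your third piece). The paper instead inserts $\eps_i^2$ and writes the decomposition as $\sum_i\ha{\eps}_i^2\ha{\phi}_n^2-\sum_i\eps_i^2\phi_n^2$; strictly speaking its first displayed equality is not literally an equality because $\m{V}_n^c$ is an expectation, so your decomposition is actually the more careful one. Second, for the nearest-neighbor piece the paper does not appeal to the CCT bounded-degree fact (``each observation is a match for $O(1)$ others''). Instead it writes out an explicit six-term algebraic expansion of $\ha{\eps}_i^2-\eps_i^2$ and kills the cross-observation covariances by noting that terms from different cutoff regions are conditionally independent while $\mmp[c(X_i)=c(X_j)]=O(K^{-1})$, together with the crude almost-sure bound $|X_{\ell(i,l)}-X_i|\le\max_j|c_{j+1}-c_j|=O(K^{-1})$ for the matching distance. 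Both devices work; the paper's is more elementary and avoids any uniform-in-$i$ statement about matching distances, while yours is tidier but requires the conditioning to be on the full design $\m{X}_n$ (not just $(X_i,D_i)$, since $\ha{\eps}_i^2$ depends on neighboring observations).
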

\begin{proof}
The proof extends the arguments of Theorem A3 by CCT to the case where the number of cutoffs grows to infinity.

Define $\phi_n$ and $\ha{\phi}_n$ by 

\begin{align}
\phi_n(X_i) = &
		\sum_{j=1}^K 
			\frac{\Delta_j}{nh_{1j}}
			k\left( \frac{X_i - c_{j}}{h_{1j}} \right)
			e_1'
			\left(
				v_i^{j +} \mme[ G_n^{j +} ]  
				-
				v_i^{j -} \mme[ G_n^{j -} ]  
			\right)
			\ti{H}_{i}^{j}		
\\
\ha{\phi}_n(X_i) = &
		\sum_{j=1}^K 
			\frac{\Delta_j}{nh_{1j}}
			k\left( \frac{X_i - c_{j}}{h_{1j}} \right)
			e_1'
			\left(
				v_i^{j +}  G_n^{j +}  
				-
				v_i^{j -}  G_n^{j -}  
			\right)
			\ti{H}_{i}^{j}.
\end{align}

Use them to rewrite $\m{V}_n^c$ and $\ha{\m{V}}_n^c$ as
\begin{align}
\m{V}_n^c = &n \mme \left[
	\eps_i^2 \phi_n(X_i)^2
\right]
\\
\ha{\m{V}}_n^c = &\sum_{i=1}^n 
	\ha{\eps}_i^2 \ha{\phi}_n(X_i)^2.
\end{align}

In order to show $\ha{\m{V}}_n^c / {\m{V}}_n^c \pto 1$, it suffices to show that 
$(Kn\overline{h}_1) ( \ha{\m{V}}_n^c - {\m{V}}_n^c ) \pto 0$ because Theorem \ref{theo_srd_kinf_est_int} shows that 
$({\m{V}}_n^c)^{-1} = O(Kn\overline{h}_1) $.

\begin{align}
(Kn\overline{h}_1) ( \ha{\m{V}}_n^c - {\m{V}}_n^c )  = & 
(Kn\overline{h}_1)  \sum_{i=1}^n 
	\ha{\eps}_i^2 \ha{\phi}_n(X_i)^2
-
(Kn\overline{h}_1)  \sum_{i=1}^n 
	\eps_i^2 \phi_n(X_i)^2
\\
= &
(Kn\overline{h}_1)  \sum_{i=1}^n 
	\ha{\eps}_i^2 \left( \ha{\phi}_n(X_i) -\phi_n(X_i) + \phi_n(X_i) \right)^2
-
(Kn\overline{h}_1)  \sum_{i=1}^n 
	\eps_i^2 \phi_n(X_i)^2
\\
= &
(Kn\overline{h}_1)  \sum_{i=1}^n 
	\ha{\eps}_i^2 \left( \ha{\phi}_n(X_i) -\phi_n(X_i) \right)^2
\label{eq:lemma:est:var:c:part1}
\\
&\hspace{1cm} 
+2
(Kn\overline{h}_1)  \sum_{i=1}^n 
	\ha{\eps}_i^2 \left( \ha{\phi}_n(X_i) -\phi_n(X_i) \right) \phi_n(X_i)
\label{eq:lemma:est:var:c:part2}
\\
& \hspace{1cm} 
+
(Kn\overline{h}_1)  \sum_{i=1}^n 
	\left( \ha{\eps}_i^2 - \eps_i^2 \right) \phi_n(X_i)^2
\label{eq:lemma:est:var:c:part3}
\end{align}

The rest of the proof shows that parts \eqref{eq:lemma:est:var:c:part1} - \eqref{eq:lemma:est:var:c:part3}
converge in probability to zero. 

\bigskip

\begin{center}
\textbf{\underline{Part \eqref{eq:lemma:est:var:c:part1} }} 
\end{center}

First, for arbitrary $x \in \m{X}$ 
\begin{align}
\left| 
	\ha{\phi}_n(x) -\phi_n(x)
\right| 
 & 
\notag \\
		& \hspace{-2.5cm} \leq
		\sum_{j=1}^K \Bigg\{
			\frac{| \Delta_j | }{nh_{1j}}
			\left| k\left( \frac{x - c_{j}}{h_{1j}} \right) \right| 
\notag \\
		& \hspace{-.5cm} 
			\left| e_1'
			\left(
				v_{c_j,h_{1j}}^{+}(x) \left( G_n^{j +} - \mme[ G_n^{j +} ]  \right)
				-
				v_{c_j,h_{1j}}^{-}(x) \left( G_n^{j -} - \mme[ G_n^{j -} ]  \right)
			\right)
			H\left( \frac{x-c_j}{h_{1j}} \right)	\right|
		\Bigg\}
\\
	& \hspace{-2.5cm} \leq	
		2 \max_{1 \leq j \leq K} \Bigg\{
			\frac{| \Delta_j | }{nh_{1j}}
			\left| k\left( \frac{x - c_{j}}{h_{1j}} \right) \right| 
\notag \\
	& \hspace{-.5cm} 
			\left| e_1'
			\left(
				v_{c_j,h_{1j}}^{+}(x) \left( G_n^{j +} - \mme[ G_n^{j +} ]  \right)
				-
				v_{c_j,h_{1j}}^{-}(x) \left( G_n^{j -} - \mme[ G_n^{j -} ]  \right)
			\right)
			H\left( \frac{x-c_j}{h_{1j}} \right)	\right|		
		\Bigg\}
\\
	& \hspace{-2.5cm} =	
	O \left( \frac{1}{  K n \underline{h}_1 } \right)
	O_P \left( \sqrt{ \frac{\log n}{n \overline{h}_1 } }  \right)
=
	O_P \left( \frac{1}{  K n \overline{h}_1 } \sqrt{ \frac{\log n}{n \overline{h}_1 } }  \right)	
\label{eq:lemma:est:var:c:phidelta}
\end{align}
where the second inequality uses the fact that at most two elements of the sum over 
$j$ are non-zero for each value of $x$;
the first equality relies on $\max_j | \Delta_j | = O(K^{-1})$ (Lemma \ref{lemma_rate_int}), 
on $h_{1j}^{-1}\leq \underline{h}_1^{-1}$,
on the fact that the kernel is bounded (Assumption \ref{assu_srd_est_kernel}),
that $v_{c_j,h_{1j}}^{\pm}(x)H(h_{1j}^{-1}(x-c_j))$ is bounded,
and that
$\max_j \left\|
G_n^{j \pm} - \mme[ G_n^{j \pm} ]
\right\| = O_P( ( \log n / K \overline{h_1})^{1/2}  )$ (Lemma \ref{lemma_app});
the last equality uses the rate condition  $ \overline{h}_1 / \underline{h}_1 = O(1)$.
The rate in \eqref{eq:lemma:est:var:c:phidelta} is uniform over $x \in \m{X}$.
 
 Then, it follows that
\begin{align}
\left| \eqref{eq:lemma:est:var:c:part1} \right| 
\leq &
(Kn\overline{h}_1) n \frac{1}{n} \sum_{i=1}^n 
	\ha{\eps}_i^2 \max_x \left| \ha{\phi}_n(x) -\phi_n(x) \right|^2
\\
=& (Kn\overline{h}_1)  n O_P(1) O_P \left( \frac{1}{  (K n \overline{h}_1)^2 }  \frac{\log n}{n \overline{h}_1  }  \right)
=\frac{1}{K \overline{h}_1} 	\frac{\log n}{n \overline{h}_1  } O_P(1) = o_P(1)
\end{align} 
 where the first equality used the rate derived in \eqref{eq:lemma:est:var:c:phidelta}, and the fact that $\ha{\eps}_i$ is a.s. bounded  because 
 $\eps_i$ is a.s. bounded 
 (Assumption \ref{assu_srd_kinf_est_int});
 the last equality relied on the rate conditions $( K \overline{h}_1)^{-1}=O(1)$
 and $\log n (n \overline{h}_1)^{-1}=o(1)$.

\bigskip

\begin{center}
\textbf{\underline{Part \eqref{eq:lemma:est:var:c:part2} }} 
\end{center}

First, for arbitrary $x \in \m{X}$ 

\begin{align}
\left| 
	\phi_n(x) 
\right| 
\leq &  
		\sum_{j=1}^K \Bigg\{
			\frac{| \Delta_j | }{nh_{1j}}
			\left| k\left( \frac{x - c_{j}}{h_{1j}} \right) \right| 
\notag \\
		& \hspace{1cm} 
			\left| e_1'
			\left(
				v_{c_j,h_{1j}}^{+}(x)  \mme[ G_n^{j +} ]  
				-
				v_{c_j,h_{1j}}^{-}(x)  \mme[ G_n^{j -} ]  
			\right)
			H\left( \frac{x-c_j}{h_{1j}} \right)	\right|
		\Bigg\}
\\
\leq &  
		2 \max_{1 \leq j \leq K} \Bigg\{
			\frac{| \Delta_j | }{nh_{1j}}
			\left| k\left( \frac{x - c_{j}}{h_{1j}} \right) \right| 
\notag \\
		& \hspace{1.5cm} 
			\left| e_1'
			\left(
				v_{c_j,h_{1j}}^{+}(x)  \mme[ G_n^{j +} ]  
				-
				v_{c_j,h_{1j}}^{-}(x)  \mme[ G_n^{j -} ]  
			\right)
			H\left( \frac{x-c_j}{h_{1j}} \right)	\right|
		\Bigg\}
\\
= & 
	O \left( \frac{1}{  K n \underline{h}_1 } \right)
=
	O \left( \frac{1}{  K n \overline{h}_1 }   \right)	
\label{eq:lemma:est:var:c:phi}
\end{align}
where the second inequality uses the fact that at most two elements of the sum over 
$j$ are non-zero for each value of $x$;
the first equality relies on $\max_j | \Delta_j | = O(K^{-1})$ (Lemma \ref{lemma_rate_int}), 
on $h_{1j}^{-1}\leq \underline{h}_1^{-1}$,
on the fact that the kernel is bounded (Assumption \ref{assu_srd_est_kernel}),
that $v_{c_j,h_{1j}}^{\pm}(x)H(h_{1j}^{-1}(x-c_j))$ is bounded,
and that $\mme[ G_n^{j \pm } ]$ is approximately equal to a positive definite matrix with determinant bounded away from zero (Lemma \ref{lemma_app});
the last equality uses the rate condition  $ \overline{h}_1 / \underline{h}_1 = O(1)$.
The rate in \eqref{eq:lemma:est:var:c:phi} is uniform over $x \in \m{X}$.

 Then, it follows that
\begin{align}
\left| \eqref{eq:lemma:est:var:c:part2} \right| 
\leq &
(Kn\overline{h}_1) n \frac{1}{n} \sum_{i=1}^n 
	\ha{\eps}_i^2 \max_x \left| \ha{\phi}_n(x) -\phi_n(x) \right| \max_x \left| \phi_n(x) \right| 
\\
=& (Kn\overline{h}_1)  n O_P(1) O_P \left( \frac{1}{  K n \overline{h}_1 }  \sqrt{ \frac{ \log n}{n \overline{h}_1  } }  \right)
O \left( \frac{1}{  K n \overline{h}_1 } \right)
=\frac{1}{K \overline{h}_1} 	\sqrt{ \frac{\log n}{n \overline{h}_1  } } O_P(1) = o_P(1)
\end{align} 
 where the first equality used the rate derived in \eqref{eq:lemma:est:var:c:phidelta}
and  \eqref{eq:lemma:est:var:c:phi}, and the fact that $\ha{\eps}_i$ is a.s. bounded because $\eps_i$ is a.s. bounded 
 (Assumption \ref{assu_srd_kinf_est_int});
 the last equality relied on the rate conditions $( K \overline{h}_1)^{-1}=O(1)$
 and $\log n (n \overline{h}_1)^{-1}=o(1)$.

\bigskip

\begin{nopgbreak}
\begin{center}
\textbf{\underline{Part \eqref{eq:lemma:est:var:c:part3} }} 
\end{center}

First,  expand $\ha{\eps}_i^2$ around $\eps_i^2$.
To simplify notation,  abbreviate 
$\mme[Y_i | X_i]=R(X_i,D_i) $ to $R_i$. 
\end{nopgbreak}

\begin{align}
\ha{\eps}_i^2 = & \frac{N}{N+1} \left(
	Y_i  - \frac{1}{N} \sum_{l = 1 }^N Y_{\ell(i,l)}
\right)^2
\\
 = & \frac{N}{N+1} \left(
	R_i + \eps_i   - \frac{1}{N} \sum_{l = 1 }^N (R_{\ell(i,l)} + \eps_{\ell(i,l)})
\right)^2
\\
 = & \frac{N}{N+1} \left(
	 \eps_i - \frac{1}{N} \eps_{\ell(i,l)} 
\right)^2
+ \frac{N}{N+1} \left(
		\frac{1}{N} \sum_{l = 1 }^N (R_i - R_{\ell(i,l)} )
\right)^2
\notag\\
& \hspace{1cm} +\frac{2N}{N+1} \left(
	 \eps_i - \frac{1}{N} \eps_{\ell(i,l)} 
\right)
\left(
		\frac{1}{N} \sum_{l = 1 }^N (R_i - R_{\ell(i,l)} )
\right)
\\
= & \eps_i^2  
- \frac{2 \eps_i}{N+1} \sum_{l=1}^N \eps_{\ell(i,l)} 
+ \frac{1}{N(N+1)} \sum_{l=1}^N ( \eps_{\ell(i,l)}^2 - \eps_i^2 )
\notag\\
& \hspace{1cm} + \frac{2}{N(N+1)} \sum_{l=1}^N \sum_{  v=1,   v > l }^N \eps_{\ell(i,l)} \eps_{\ell(i,v)}
+ \frac{1}{N(N+1)} \left( \sum_{l=1}^N  R_i - R_{\ell(i,l)} \right)^2
\notag\\
& \hspace{1cm} + \frac{2 \eps_i }{N+1} \sum_{l=1}^N \left( R_i - R_{\ell(i,l)} \right) 
- \frac{2 }{N(N+1)} \sum_{l=1}^N \eps_{\ell(i,l)} \sum_{v=1}^N (R_i - R_{\ell(i,v)})
\end{align}

Then, substitute the expression for $\ha{\eps_i}^2 - \eps_i^2$ derived above in part \eqref{eq:lemma:est:var:c:part3}:
\begin{align}
(\ref{eq:lemma:est:var:c:part3}) = 
& (K n \overline{h}_1 ) \sum_{i=1}^n \left\{ \frac{-2 \eps_i}{N+1} \sum_{l=1}^N \eps_{\ell(i,l)} \right\} \phi_n(X_i)^2 
\label{eq:lemma:est:var:c:part3:1}\\
& \hspace{1cm} + (K n \overline{h}_1 ) \sum_{i=1}^n \left\{ 
	\frac{1}{N(N+1)} \sum_{l=1}^N ( \eps_{\ell(i,l)}^2 - \eps_i^2 )
\right\} \phi_n(X_i)^2
\label{eq:lemma:est:var:c:part3:2}
\\
& \hspace{1cm} + (K n \overline{h}_1 ) \sum_{i=1}^n \left\{ 
	\frac{2}{N(N+1)} \sum_{l=1}^N \sum_{  v=1,   v > l }^N \eps_{\ell(i,l)} \eps_{\ell(i,v)}
\right\} \phi_n(X_i)^2
\label{eq:lemma:est:var:c:part3:3}
\\
& \hspace{1cm} + (K n \overline{h}_1 ) \sum_{i=1}^n \left\{
	\frac{1}{N(N+1)} \left( \sum_{l=1}^N  R_i - R_{\ell(i,l)} \right)^2
\right\} \phi_n(X_i)^2
\label{eq:lemma:est:var:c:part3:4}
\\
& \hspace{1cm} + (K n \overline{h}_1 ) \sum_{i=1}^n \left\{ 
	\frac{2 \eps_i }{N+1} \sum_{l=1}^N \left( R_i - R_{\ell(i,l)} \right) 
\right\} \phi_n(X_i)^2
\label{eq:lemma:est:var:c:part3:5}
\\
& \hspace{1cm} + (K n \overline{h}_1 ) \sum_{i=1}^n \left\{
	\frac{-2 }{N(N+1)} \sum_{l=1}^N \eps_{\ell(i,l)} \sum_{v=1}^N (R_i - R_{\ell(i,v)})
\right\} \phi_n(X_i)^2
\label{eq:lemma:est:var:c:part3:6}
\end{align}

The steps below demonstrate that parts \eqref{eq:lemma:est:var:c:part3:1} - \eqref{eq:lemma:est:var:c:part3:6}
converge in probability to zero. 

\textbf{Part \eqref{eq:lemma:est:var:c:part3:1}:} the expected value $\mme[\eqref{eq:lemma:est:var:c:part3:1} | \m{X}_n]=0$.
To compute the variance of \eqref{eq:lemma:est:var:c:part3:1} centered at $\mme[\eqref{eq:lemma:est:var:c:part3:1} | \m{X}_n]=0$, abbreviate
$N^{-1}\sum_{l=1}^N \eps_{\ell(i,l)}$ to $\overline{\eps}_i$,
$\phi_n(X_i)$ to $\phi_{ni}$,
$\mmi\{ c(X_i) = c(X_j) \}$ to $\mmi_{ij}^{=}$,
and 
$\mmi\{ c(X_i) \neq c(X_j) \}$ to $\mmi_{ij}^{\neq}$. Then,
\begin{align}
\mme\left[
	\left( \eqref{eq:lemma:est:var:c:part3:1} -  \mme[\eqref{eq:lemma:est:var:c:part3:1} | \m{X}_n] \right)^2
\right]
= &
M
( K n \overline{h}_1 )^2
\mme\left[
	\sum_{i=1}^n \sum_{j=1}^n (\mmi_{ij}^{=} + \mmi_{ij}^{\neq}) \eps_i \overline{\eps}_i \phi_{ni}^2 \eps_j \overline{\eps}_j \phi_{nj}^2 		
\right]
\\
= & 
M
( K n \overline{h}_1 )^2
\mme\left[
	\sum_{i=1}^n \sum_{j=1}^n \mmi_{ij}^{=}  \eps_i \overline{\eps}_i \phi_{ni}^2 \eps_j \overline{\eps}_j \phi_{nj}^2 		
\right]
\\
= & 
M
( K n \overline{h}_1 )^2
\sum_{i=1}^n \sum_{j=1}^n 
	\mme\left[ \mmi_{ij}^{=} \right]
	O_P\left(( K n \overline{h}_1 )^{-4}  \right)
\\
= & 
M
 ( K n \overline{h}_1 )^{-2} 
n^2
O_P\left( K^{-1}\right)
=  o_P(1)
\end{align}
where 
the second equality uses that 
the expected value of $\mmi_{ij}^{\neq} \eps_i \overline{\eps}_i \phi_{ni}^2 \eps_j \overline{\eps}_j \phi_{nj}^2$
conditional on $\m{X}_n$ is zero because
if $\mmi_{ij}^{\neq} =1$, then $\eps_i \overline{\eps}_i$ is independent of 
$\eps_j \overline{\eps}_j$,
and  
$
\mme[ \eps_i \overline{\eps}_i | \m{X}_n ] = \mme[ \eps_i  | \m{X}_n ] \mme[ \overline{\eps}_i | \m{X}_n ] =0
$;
the third equality relies on the fact that 
$\eps_i$ and $\overline{\eps}_i$ are a.s. bounded  (Assumption \ref{assu_srd_kinf_est_int}),
and that 
$\phi_{ni}$ is $O\left(( K n \overline{h}_1 )^{-1}  \right)$ (Equation \ref{eq:lemma:est:var:c:phi});
the fourth equality uses that 
$\mme[\mmi^{=}_{ij}] = \mme[\mmp(c_j \leq X_j < c_{j+1} | X_i ) ] = O(K^{-1})$
(for some $j$ as function of $X_i$) because the derivative of the pdf of $X_i$ is bounded (Assumption
\ref{assu_srd_est_fx}) and
$\max_j |c_{j+1} - c_{j}| =O(K^{-1})$ (Assumption \ref{assu_srd_kinf_est_cut}).
The Chebyshev's inequality yields that $\eqref{eq:lemma:est:var:c:part3:1} = o_P(1)$.

\bigskip

\textbf{Part \eqref{eq:lemma:est:var:c:part3:2}:} for $x^*_{il}$ between $X_i$ and $X_{\ell(i,l)}$:
\begin{align}
\mme[\eqref{eq:lemma:est:var:c:part3:2} | \m{X}_n] =& M 
(K n \overline{h}_1 ) \sum_{i=1}^n \left\{ 
	\sum_{l=1}^N ( \sigma^2(X_{\ell(i,l)},D_i) - \sigma^2(X_i,D_i) )
\right\} \phi_{ni}^2
\\
= & 
M 
(K n \overline{h}_1 ) \sum_{i=1}^n \left\{ 
	\sum_{l=1}^N  \nabla_x \sigma^2(x^*_{il},D_i) (X_{\ell(i,l)}  - X_i  )
\right\} \phi_{ni}^2
\\
= & (K n \overline{h}_1 ) n O\left( K^{-1} \right) O\left( ( K n \overline{h}_1 )^{-2}  \right)
= O\left( K^{-1} ( K  \overline{h}_1 )^{-1}  \right) = o(1)
\end{align}
which uses that $\nabla_x \sigma^2(x^*_{il}, D_i)$ is bounded (Assumption \ref{assu_srd_est_mx}),
that $| X_{\ell(i,l)}  - X_i  | \leq \max_j |c_{j+1} - c_{j}| =O(K^{-1})$,
and that $\phi_{ni}$ is $O\left(( K n \overline{h}_1 )^{-1}  \right)$ (Equation \ref{eq:lemma:est:var:c:phi}).
To compute the variance, let $\nu_i =  \eps^2_i - \sigma^2(X_i,D_i)$
and use the abbreviations from part \eqref{eq:lemma:est:var:c:part3:1}.
 Then,
\begin{align}
\mme\left[
	\left( \eqref{eq:lemma:est:var:c:part3:2} -  \mme[\eqref{eq:lemma:est:var:c:part3:2} | \m{X}_n] \right)^2
\right] &
\notag\\
 & \hspace{-3.5cm}= 
M \mme\left\{
	(K n \overline{h}_1 ) \sum_{i=1}^n \left[ 
		\sum_{l=1}^N \left( (\eps^2_{\ell(i,l)} -  \sigma^2(X_{\ell(i,l)},D_i)) - ( \eps^2_i - \sigma^2(X_i,D_i) ) \right)
	\right] \phi_{ni}^2
\right\}^2
\\
 & \hspace{-3.5cm}= 
M \mme\left\{
	(K n \overline{h}_1 ) \sum_{i=1}^n \left( 
		\overline{\nu}_i - \nu_i \right)
		\phi_{ni}^2
\right\}^2
\\
 & \hspace{-3.5cm}= 
M ( K n \overline{h}_1 )^2
\mme\left[
	\sum_{i=1}^n \sum_{j=1}^n (\mmi_{ij}^{=} + \mmi_{ij}^{\neq}) \left( \overline{\nu}_i - \nu_i \right) \phi_{ni}^2
		\left( \overline{\nu}_j - \nu_j \right)  \phi_{nj}^2 		
\right]=o_P(1)
\end{align}
where the
expected value of $\mmi_{ij}^{\neq} \left( \overline{\nu}_i - \nu_i \right) \phi_{ni}^2 \left( \overline{\nu}_j - \nu_j \right) \phi_{nj}^2$
conditional on $\m{X}_n$ is zero because
if $\mmi_{ij}^{\neq} =1$, then $\left( \overline{\nu}_i - \nu_i \right) $ is independent of 
$\left( \overline{\nu}_j - \nu_j \right) $,
and  
$
\mme[ \overline{\nu}_i - \nu_i   | \m{X}_n ] = 0$;
and the rest follows arguments similar to the ones used in part \eqref{eq:lemma:est:var:c:part3:1}.
The Chebyshev's inequality yields that $\eqref{eq:lemma:est:var:c:part3:2} = o_P(1)$.

\bigskip

\textbf{Part \eqref{eq:lemma:est:var:c:part3:3}:} the expected value $\mme[\eqref{eq:lemma:est:var:c:part3:3} | \m{X}_n]=0$
because $\mme [ \eps_{\ell(i,l)} \eps_{\ell(i,v)}  | \m{X}_n ]$ $=\mme [ \eps_{\ell(i,l)} | \m{X}_n ]$ $ \mme [ \eps_{\ell(i,v)}  | \m{X}_n ]$ $= 0$.
\begin{align}
\mme\left[
	\left( \eqref{eq:lemma:est:var:c:part3:3} -  \mme[\eqref{eq:lemma:est:var:c:part3:3} | \m{X}_n] \right)^2
\right] &
\notag \\
  & \hspace{-4.5cm} = M \mme
\left\{
	(K n \overline{h}_1 ) \sum_{i=1}^n \left\{ 
		\sum_{v > l}^N \eps_{\ell(i,l)} \eps_{\ell(i,v)}
	\right\} \phi_{ni}^2
\right\}^2
\\
  &\hspace{-4.5cm} = M \mme
\left\{
	(K n \overline{h}_1 ) \sum_{i=1}^n \sum_{j=1}^n (\mmi_{ij}^{=} + \mmi_{ij}^{\neq})
	\left\{ 
		\sum_{v > l}^N \eps_{\ell(i,l)} \eps_{\ell(i,v)}
	\right\} \phi_{ni}^2
	\left\{ 
		\sum_{v > l}^N \eps_{\ell(j,l)} \eps_{\ell(j,v)}
	\right\} \phi_{nj}^2
\right\} = o_P(1)
\end{align}
where the
expected value of 
$\mmi_{ij}^{\neq} \sum_{v > l} \eps_{\ell(i,l)} \eps_{\ell(i,v)} \phi_{ni}^2
										 \sum_{v > l} \eps_{\ell(j,l)} \eps_{\ell(j,v)} \phi_{nj}^2$
conditional on $\m{X}_n$ is zero because
if $\mmi_{ij}^{\neq} =1$, then $\eps_{\ell(i,l)} \eps_{\ell(i,v)} $ is independent of 
$\eps_{\ell(j,l)} \eps_{\ell(j,v)}$,
and  
$
\mme[ \eps_{\ell(i,l)} \eps_{\ell(i,v)}   | \m{X}_n ] = 0$; the rest follows arguments similar to the ones used in part \eqref{eq:lemma:est:var:c:part3:1}.
The Chebyshev's inequality yields that $\eqref{eq:lemma:est:var:c:part3:3} = o_P(1)$.

\bigskip

\textbf{Part \eqref{eq:lemma:est:var:c:part3:4}:} for $x^*_{il}$ between $X_i$ and $X_{\ell(i,l)}$:
\begin{align}
\eqref{eq:lemma:est:var:c:part3:4} = &
M (K n \overline{h}_1 ) \sum_{i=1}^n 
	\left( \sum_{l=1}^N  \nabla_x R(x^*_{il},D_i) (X_i - X_{\ell(i,l)}) \right)^2
\phi_{ni}^2
\\
= & (K n \overline{h}_1 ) n O(K^{-2}) O\left(( K n \overline{h}_1 )^{-2}  \right)
= O(K^{-2}) O\left(( K  \overline{h}_1 )^{-1}\right) = o(1)
\end{align}
which uses the fact that $\nabla_x R(x^*_{il},D_i) $ is bounded (Assumption \ref{assu_srd_est_mx}),
that $(X_i - X_{\ell(i,l)}) = O(K^{-1})$,
that 
$\phi_{ni}$ is $O\left(( K n \overline{h}_1 )^{-1}  \right)$ (Equation \ref{eq:lemma:est:var:c:phi}),
and the rate condition $( K  \overline{h}_1 )^{-1}  = O(1)$.

\bigskip

\textbf{Part \eqref{eq:lemma:est:var:c:part3:5}:} for $x^*_{il}$ between $X_i$ and $X_{\ell(i,l)}$:
\begin{align}
\eqref{eq:lemma:est:var:c:part3:5} = & M (K n \overline{h}_1 ) \sum_{i=1}^n \left\{ 
	\eps_i  \sum_{l=1}^N \nabla_x R(x^*_{il},D_i) (X_i - X_{\ell(i,l)})
\right\} \phi_{ni}^2
\\
= & (K n \overline{h}_1 ) n O_P(K^{-1})  O\left(( K  \overline{h}_1 )^{-1}\right) = o_P(1)
\end{align}
which follows from the same arguments as the ones in part \eqref{eq:lemma:est:var:c:part3:4}
plus the fact that $\eps_i$ is a.s. bounded (Assumption \ref{assu_srd_kinf_est_int}).

\bigskip

\textbf{Part \eqref{eq:lemma:est:var:c:part3:6}:} for $x^*_{iv}$ between $X_i$ and $X_{\ell(i,v)}$:
\begin{align}
\eqref{eq:lemma:est:var:c:part3:6} = & 
M (K n \overline{h}_1 ) \sum_{i=1}^n \left\{
	\sum_{l=1}^N \eps_{\ell(i,l)} \sum_{v=1}^N \nabla_x R(x^*_{iv},D_i) (X_i - X_{\ell(i,v)})
\right\} \phi_{ni}^2
\\
= & (K n \overline{h}_1 ) n O_P(K^{-1})  O\left(( K  \overline{h}_1 )^{-1}\right) = o_P(1)
\end{align}
as seen in part  \eqref{eq:lemma:est:var:c:part3:5}.

\bigskip

Therefore, $\eqref{eq:lemma:est:var:c:part3}=o_P(1)$, which concludes the proof.

\end{proof}

\subsection{Fuzzy RDD with Multiple Cutoffs}

\subsubsection{Example of Compliance Behaviors}\label{sec_fuzzy_table}

\indent

Here is a simple example
with three different treatments and two cutoffs (that is, 3 schools, $K=2$)
to illustrate the different compliance behaviors.
Table \ref{table_compliance} below
lists all possible combinations of treatment eligibility and assignment
produced by ${\m{U}}_i(x)$.
\begin{center}
\begin{table}[H]
    \caption{Different Compliance Behaviors}
    \label{table_compliance}
    \begin{center}
        \begin{multicols}{2}
                        \begin{tabular}{|ccc|c|}
              \hline
                \multicolumn{3}{|c|}{\textbf{Eligibility}} & \multirow{2}{*}{\textbf{Type}}  \\
                $\bm{d_0}$ & $\bm{d_1}$ & $\bm{d_2}$ &  \\
                \hline
                     $d_0$   &   $d_0$   &   $d_1$   &  \multirow{13}{*}{ever-defiers} \\
                      $d_0$   &   $d_1$   &   $d_0$   & \\
                      $d_0$   &   $d_2$   &   $d_0$   & \\
                      $d_0$   &   $d_2$   &   $d_1$   & \\
                      $d_0$   &   $d_2$   &   $d_2$   & \\
                      $d_1$   &   $d_0$   &   $d_0$   & \\
                      $d_1$   &   $d_0$   &   $d_1$   & \\
                      $d_1$   &   $d_0$   &   $d_2$   & \\
                      $d_1$   &   $d_1$   &   $d_0$   & \\
                      $d_1$   &   $d_2$   &   $d_0$   & \\
                      $d_1$   &   $d_2$   &   $d_1$   & \\
                      $d_1$   &   $d_2$   &   $d_2$   & \\
                      $d_2$   &   $d_0$   &   $d_0$   & \\
                      $d_2$   &   $d_0$   &   $d_1$   & \\
                      $d_2$   &   $d_0$   &   $d_2$   & \\
                      $d_2$   &   $d_1$   &   $d_0$   & \\
                      $d_2$   &   $d_2$   &   $d_0$   & \\
                      $d_2$   &   $d_2$   &   $d_1$   & \\
                \hline
                \end{tabular}

                \begin{tabular}{|ccc|c|}
                \hline
                \multicolumn{3}{|c|}{\textbf{Eligibility}} & \multirow{2}{*}{\textbf{Type}}  \\
                $\bm{d_0}$ & $\bm{d_1}$ & $\bm{d_2}$ &  \\
                \hline
                      $d_0$   &   $d_0$   &   $d_0$   & \multirow{3}{*}{never-changers}\\
                      $d_1$   &   $d_1$   &   $d_1$   & \\
                      $d_2$   &   $d_2$   &   $d_2$   & \\
                \hline
                      $d_0$   &   $d_0$   &   $d_2$   & \multirow{6}{*}{ever-compliers}\\
                      $d_0$   &   $d_1$   &   $d_1$   & \\
                      $d_0$   &   $d_1$   &   $d_2$   & \\
                      $d_1$   &   $d_1$   &   $d_2$   & \\
                      $d_2$   &   $d_1$   &   $d_1$   & \\
                      $d_2$   &   $d_1$   &   $d_2$   & \\
                \hline
                \end{tabular}

        \end{multicols}
        \caption*{
            \footnotesize
            Notes: All possible realizations of the random function ${\m{U}}_i(x)$ for values of $x$
            such that $D(x) \in \{ d_0, d_1, d_2 \}$.
        }

    \end{center}
\end{table}
\end{center}

\subsubsection{Estimation and Inference}\label{sec_fuzzy_infer}

\indent

Theorem \ref{theo_frd_id} in the main text suggests a two-step estimation procedure for $\boup{\theta}_0^{ec}$.
In the first step, obtain $\widehat{B}_{j}$ as in Section \ref{sec_case1},
and compute estimates $\widehat{\widetilde{W}}_{j}$ using
LPRs of $\boup{\m{W}}(X_i,D_i)$ on $X_i$ at each side of the cutoff
$c_{j}$.
For each  $j=1, \ldots, K$, and $l$-th coordinate of  the vector $\ti{W}_{j}$, $l=1,\ldots,q$, 
the researcher computes
\begin{align}
\widehat{\widetilde{W}}_{j,l} = & \hat{a}_{j,l}^{+} - \hat{a}_{j,l}^{-}
\label{eq_est_omega_pjl}
\\
(\hat{a}_{j,l}^{+}, \hat{\boup{b}}_{j,l}^{+} )
= & \argmin\limits_{(a, \boup{b} )}
\sum\limits_{i=1}^n  \Bigg\{ k \left( \frac{X_i - c_{j}}{h_{1j} } \right) v_{i}^{j+} 
\nonumber
\\
& \hspace{2.5cm} \bigg[ e_l' \boup{ \m{W}}(X_i,D_i) - a - b_1 (X_i - c_{j}) - \ldots -b_{\rho_1} (X_i - c_{j})^{\rho_1} \bigg]^2
\Bigg\}
\label{eq_est_lprd_r}
\\
(\hat{a}_{j,l}^{-}, \hat{\boup{b}}_{j,l}^{-} )
= & \argmin\limits_{(a, \boup{b} )}
\sum\limits_{i=1}^n \Bigg\{ k \left( \frac{X_i - c_{j}}{h_{1j} } \right) v_{i}^{j-} 
\nonumber
\\
& \hspace{2.5cm}
\bigg[ e_l' \boup{\m{W}}(X_i,D_i) - a - b_1 (X_i - c_{j}) - \ldots -b_{\rho_1} (X_i - c_{j})^{\rho_1} \bigg]^2 \Bigg\}
\label{eq_est_lprd_l}
\end{align}
where $e_l$ is the $q \times 1$ vector of zeros except for 1 in its $l$-th coordinate.
The $q \times 1$ vector $\widehat{\widetilde{W}}_{j}$ is constructed by stacking the $q$ estimates
$\widehat{\widetilde{W}}_{j} = \left[\widehat{\widetilde{W}}_{j,1}, \ldots, \widehat{\widetilde{W}}_{j,q} \right]' $.

In the second step, regress $\ha B_{j}$ on $\widehat{\widetilde{W}}_{j}$ to obtain an estimate for
$\boup{\theta}_0^{ec}$.
More specifically, stack all $q \times 1 $ vectors
$\widehat{\widetilde{W}}_{j}$ into the $K \times q$ matrix $\widehat{\widetilde{\boup{W}}}$, and $\ha B_{j}$ into the
$K \times 1 $ vector $\widehat{\boup{B}}$.
Choose a $K \times K$ symmetric and
positive-definite weighting matrix $\Omega$.
The estimator $\ha{\boup{\theta}}^{ec}$ is the solution to the following weighted least-squares problem:
\begin{align}
\widehat{\boup{\theta}}^{ec} = & \argmin_{\boup{\theta}}
\left( \widehat{\boup{B}} - \widehat{\widetilde{\boup{W}}} \boup{\theta} \right)'
\Omega
\left( \widehat{\boup{B}} - \widehat{\widetilde{\boup{W}}} \boup{\theta} \right).
\label{eq_est_frd_argmin}
\end{align}
The estimator for the ATE on ever-compliers $\mu^{ec}$ is a linear combination of $\widehat{\boup{\theta}}^{ec}$,
\begin{gather}
\widehat{\mu}^{ec} = \boup{Z}(F) \widehat{\boup{\theta}}^{ec}
\label{est:mu:ec}
\end{gather}
where $\boup{Z}(F)$ is defined in Equation \ref{eq_mu3}.

Asymptotic normality of $\ha{\btheta}^{ec}$ relies on smoothness assumptions
on the conditional moments of $Y_i$ and the probabilities of treatment
for different compliance behaviors.
The sample size grows large, while the number of cutoffs remains fixed.
\loadaspt{00I}
\begin{theorem}\label{theo_frd_est}
Suppose Assumptions  \ref{assu_srd_est_kernel}-\ref{assu_srd_est_fx} and \ref{assu_frd_id_nodef}-\ref{assu_frd_est_mx} hold, and 
that the number of cutoffs $K$ is fixed.
Let $\underline{h}_1 = \min_j\{  h_{1j}  \}$ and
$\overline{h}_1 = \max_j\{  h_{1j} \}$.
As $n\to\infty$, assume that $\overline{h}_1 \to 0$,
$\overline{h}_1/\underline{h}_1 = O(1)$,
$n \overline{h}_1 \to \infty$,
and
$(n \overline{h}_1)^{1/2} \overline{h}_1^{\rho_1 + 1} = O(1)$.
Then,
\begin{align}
(\bmV_n^{\btheta^{ec}})^{-1/2}\left( \ha{\btheta}^{ec} - \bmB_n^{\btheta^{ec}} - \btheta_{0}^{ec} \right)
&\dto N(\bzero,\bI)
\\
\frac{ \ha{\mu}^{ec} - \m{B}_n^{\mu^{ec}} - \mu^{ec} }{ (\m{V}_n^{\mu^{ec}})^{1/2} }
&\dto N(0,1),
\end{align}
where $\bzero$ denotes the $q \times 1 $ vector of zeros, 
and $\bI$ is the $q \times q$ identity matrix.

The bias and variance terms are characterized as follows,
\begin{align}
\bmB_n^{\btheta^{ec}} = & \left(  \ha{\bWt}' \Omega \ha{\bWt} \right)^{-1}
 \ha{\bWt}' \Omega  \bmB_n^{ec} 
 \text{, a } q \times 1 \text{ vector;}
\label{def:bias:theta:ec}
\\
\m{B}_n^{\mu^{ec}} = & \bZ (F) \left( \ha{\bWt}' \Omega \ha{\bWt} \right)^{-1}
 \ha{\bWt}' \Omega  \bmB_n^{ec} 
\text{, a scalar; }
\label{def:bias:mu:ec}
\\
\bmV_n^{\btheta^{ec}} =& 
\left( \ha{\bWt}' \Omega \ha{\bWt} \right)^{-1}
\ha{\bWt}' \Omega  \bmV_n^{ec}
\Omega \ha{\bWt} 
\left( \ha{\bWt}' \Omega \ha{\bWt} \right)^{-1} 
\text{, a } q \times q \text{ matrix;}
\label{def:var:theta:ec}
\\
\m{V}_n^{\mu^{ec}} =&
  \bZ (F) \left( \ha{\bWt}' \Omega \ha{\bWt} \right)^{-1}
   \ha{\bWt}' \Omega  \bmV_n^{ec}
   \Omega \ha{\bWt} \left( \ha{\bWt}' \Omega \ha{\bWt} \right)^{-1} \bZ(F)'
   \text{, a scalar.}
\label{def:var:mu:ec}
\end{align}
These terms depend on $\bmB_n^{ec}$ ($K \times 1$ vector) 
and $\bmV_n^{ec}$ ($K \times K$ matrix) that are defined below
\begin{align}
\bmB_n^{ec} = & \left[ \m{B}_{n1}^{ec},\ldots, \m{B}_{nK}^{ec} \right]' \text{, where for each }j
\\
& \m{B}_{nj}^{ec} =  
			\frac{ h_{1j}^{\rho_1+1}  f(c_j)} { (\rho_1 + 1 )! } 
			\left[1 ~~ -{\btheta_0^{ec}}^{'} \right]
\notag\\*
& \hspace{3cm}
			\be_1'  \Bigg\{ 
				   \bG_n^{j +}  \bgam^* \nabla^{\rho_1 + 1 }_x 
				   \left[
					\begin{array}{c}
						R(c_j,d_{j})
					\\
					\bmW(c_j,d_{j})
					\end{array}
					\right]
				- 
				\bG_n^{j -}  \bgam^* \nabla^{\rho_1 + 1 }_x
				\left[
				\begin{array}{c}
					R(c_j,d_{j-1})
				\\
					\bmW(c_j,d_{j-1})
				\end{array}
				\right]
			\Bigg\};
\label{def:bias:ec:j}	
\\
\bmV_n^{ec} = &  \left[
	\begin{array}{ccc}
		\m{V}_{n11}^{ec} & \ldots & \m{V}_{n1K}^{ec}
		\\
		\vdots & \ddots & \vdots
		\\
		\m{V}_{nK1}^{ec} & \ldots & \m{V}_{nKK}^{ec}
	\end{array}
\right]
\text{, where } \m{V}_{njl}^{ec}=0 \text{ if } |j-l|> 1, otherwise
\\
\notag\\
& 
\m{V}_{njl}^{ec} = 
n
\mme \Bigg\{
	\frac{1}{n h_{1j}}
	k\left( \frac{X_i - c_{j}}{h_{1j}} \right)
	\frac{1}{n h_{1l}}
	k\left( \frac{X_i - c_{l}}{h_{1l}} \right)
\notag \\
& \hspace{2cm}	
	\left[1 ~~ -{\btheta_0^{ec}}^{'} \right]  \be_1'
	\left(
		v_i^{j +} \mme[ \bG_n^{j +} ]  
		-
		v_i^{j -} \mme[ \bG_n^{j -} ]  
	\right)
	\bHt_{i}^{j}
	\beps_i
\notag \\
& \hspace{2cm}
	\beps_i'
	\bHt_{i}^{l'}
	\left(
		v_i^{l +} \mme[ \bG_n^{l +} ]'  
		-
		v_i^{l -} \mme[ \bG_n^{l -} ]'  
	\right)
	\be_1 \left[1 ~~ -{\btheta_0^{ec}}^{'} \right]'   
\Bigg\},
\label{def:var:ec:jl}	
\end{align}
where
$\beps_i= \left[ Y_i ~~ \bmW(X_i,D_i)' \right]'
-
\mme \left\{ \left[ Y_i ~~ \bmW(X_i,D_i)' \right]' ~|~X_i \right\}
$, a $(q+1) \times 1$ vector;
$\be_1=  \bI_q \otimes e_1$ where
$\bI_q$  is the $q \times q$ identity matrix,
$ \otimes$  denotes the Kronecker product,
and $e_1$ is a $(\rho_1+1)\times 1$ vector of zeros except for the first coordinate that equals
$1$;
$\bgam^* = \bI_q \otimes \gamma^*$ for  $\gamma^*$ defined in Theorem \ref{theo_srd_est_avg};
$\bHt_i^{j} =  \bI_q \otimes H( ( X_i - c_{j} ) / h_{1j}   )$ for $H(u)$
defined in Theorem \ref{theo_srd_est_avg};
and
$\bG_n^{j \pm} = [ (n h_{1j} )^{-1} \sum_{i=1}^n k\left( \frac{X_i - c_{j}}{ h_{1j} } \right)
v_i^{j \pm} \bHt_{i}^{j }  \bHt_{i}^{j'}  ]^{-1}$
for  $v_i^{j \pm}$ defined in Equation \ref{def_vij}.

Furthermore, 
$(\bmV_n^{\btheta^{ec}})^{-1/2} =  O_P \left( \left( n \overline{h}_1 \right)^{1/2}  \right)$, and
$(\bmV_n^{\btheta^{ec}})^{-1/2} \bmB_n^{\btheta^{ec}}  =  O_P\left( \left( n \overline{h}_1\right)^{1/2} \overline{h}_1^{\rho_1 + 1}  \right)$,
where $A^{-1/2}$ denotes the inverse of the square root of a positive-definite matrix $A$.
Similarly,
$(\m{V}_n^{\mu^{ec}})^{-1/2} =  O_P \left( \left( n \overline{h}_1 \right)^{1/2}  \right)$, and
$(\m{V}_n^{\mu^{ec}})^{-1/2} \m{B}_n^{\mu^{ec}}  =  O_P \left( \left( n \overline{h}_1\right)^{1/2} \overline{h}_1^{\rho_1 + 1}  \right)$.

The approximate MSE of either $ \ha{\btheta }^{ec}$
or
$\ha{\mu}^{ec}$
is minimized by setting $\Omega$ $= \Big( \bmB_n^{{ec}} \bmB_n^{{ec'} } + \bmV_n^{{ec}} \Big)^{-1}$.
\end{theorem}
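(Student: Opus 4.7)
The plan is to prove Theorem \ref{theo_frd_est} by combining (i) a joint asymptotic normality result across all $K$ cutoffs for the first-step LPR estimators of the $(q+1)$-vector $[Y_i,\bmW(X_i,D_i)']'$ and (ii) the explicit linear structure of the second-step weighted least squares defined in Equation \ref{eq_est_frd_argmin}. Because $K$ is held fixed, the asymptotics resemble standard M-estimation rather than the infill setting of Theorem \ref{theo_srd_kinf_est_int}. First I would invoke Lemma \ref{lemma_porter} with $\bY_i = [Y_i,\bmW(X_i,D_i)']'$ at each cutoff $c_j$. Assumption \ref{assu_frd_est_mx} furnishes the required smoothness of the conditional mean and covariance at every $c_j$, and the bandwidth rate conditions in the hypothesis of Theorem \ref{theo_frd_est} match those required by Lemma \ref{lemma_porter}. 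The lemma therefore yields the bias $\bmB_{nj}$ of form (\ref{def:veclpr:bias}) and the variance $\bmV_{nj}$ of form (\ref{def:veclpr:var}) for $\ha{\bJ}_j := [\ha{B}_j,\ha{\ti{W}}_j']'$ at each $j$.

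Next I would stack across cutoffs. Because $\overline{h}_1 \to 0$ while the cutoffs $c_1<\cdots<c_K$ are fixed and distinct, the kernel supports at cutoffs $j$ and $l$ with $|j-l|>1$ are eventually disjoint, so the first-step sums at such pairs share no observations and the resulting covariance vanishes exactly; this produces the block-banded structure $\m{V}^{ec}_{njl}=0$ for $|j-l|>1$. For $|j-l|\leq 1$ the cross-covariance is computed directly from the iid representation (\ref{eq:veclpr:clt:sumiid:center}) after left-multiplying each block by the row vector $[1,-\btheta_0^{ec\prime}]$, giving the formula (\ref{def:var:ec:jl}) in which the product of kernels at $c_j$ and $c_l$ restricts the expectation to observations in the overlap region. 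A multivariate Lindeberg-Feller CLT applied to the $K$-dimensional stacked sum, with the Lindeberg condition verified exactly as in the proof of Lemma \ref{lemma_porter} via uniform boundedness of $\beps_i$ from Assumption \ref{assu_frd_est_mx}, delivers joint asymptotic normality of $\{[1,-\btheta_0^{ec\prime}]\ha{\bJ}_j\}_{j=1}^K$ with mean vector $\bmB_n^{ec}$ and covariance matrix $\bmV_n^{ec}$.

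The second step is then an almost mechanical linear transformation. Using $B_j=\ti{W}_j'\btheta_0^{ec}$ from Theorem \ref{theo_frd_id} and the WLS formula in (\ref{eq_est_frd_argmin}),
\begin{equation*}
\ha{\btheta}^{ec}-\btheta_0^{ec} = \bigl(\ha{\bWt}'\Omega\ha{\bWt}\bigr)^{-1}\ha{\bWt}'\Omega\bigl(\ha{\bB}-\ha{\bWt}\btheta_0^{ec}\bigr),
\end{equation*}
and the $j$-th coordinate of $\ha{\bB}-\ha{\bWt}\btheta_0^{ec}$ is precisely $[1,-\btheta_0^{ec\prime}]\ha{\bJ}_j$. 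Substituting gives the bias (\ref{def:bias:theta:ec}) and variance (\ref{def:var:theta:ec}). Consistency of $\ha{\bWt}$ for a full-rank limit (under the invertibility condition of Theorem \ref{theo_frd_id} combined with consistency of each $\ha{\ti{W}}_j$) and Slutsky's theorem yield the stated normal limit; the rate $(\bmV_n^{\btheta^{ec}})^{-1/2}=O_P((n\overline{h}_1)^{1/2})$ and the standardized bias bound descend directly from Lemma \ref{lemma_porter} because $\ha{\bWt}'\Omega\ha{\bWt}$ converges in probability to a fixed positive-definite matrix. The results for $\ha{\mu}^{ec}$ follow by continuous mapping applied to the linear map $\bZ(F)\,\cdot\,$. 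For MSE-optimality of $\Omega$, I would write the approximate MSE as $\bmB_n^{\btheta^{ec}}\bmB_n^{\btheta^{ec\prime}}+\bmV_n^{\btheta^{ec}}$ and invoke the standard generalized-least-squares (GLS) argument: for symmetric positive-definite $M=\bmB_n^{ec}\bmB_n^{ec\prime}+\bmV_n^{ec}$, the choice $\Omega=M^{-1}$ minimizes $(\ha{\bWt}'\Omega\ha{\bWt})^{-1}\ha{\bWt}'\Omega M\Omega\ha{\bWt}(\ha{\bWt}'\Omega\ha{\bWt})^{-1}$ in the Loewner order, hence simultaneously minimizes every scalar quadratic functional of it, including the scalar MSE of $\bZ(F)\ha{\btheta}^{ec}$.

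The main obstacle I anticipate is the joint CLT across the $K$ cutoffs with the banded covariance: verifying Lindeberg's condition for the stacked sum when adjacent cutoffs overlap requires carefully handling the cross-kernel products, and establishing that $\bmV_n^{ec}$ remains positive-definite at rate $(n\overline{h}_1)^{-1}$ uniformly in $n$ (so that $(\bmV_n^{ec})^{-1/2}$ is well-defined and bounded). A secondary but minor issue is that the MSE-optimal $\Omega$ depends on the unknowns $\btheta_0^{ec}$ (through $\bmV_n^{ec}$) and on the bias $\bmB_n^{ec}$; a routine two-step iterated argument shows that plugging in consistent preliminary estimates produces a feasible estimator with the same first-order asymptotic MSE as the infeasible optimum.
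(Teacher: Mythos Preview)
Your proposal is correct and follows essentially the same route as the paper: apply Lemma \ref{lemma_porter} to the stacked vector $\bY_i=[Y_i,\bmW(X_i,D_i)']'$ at each cutoff, premultiply by $[1,-\btheta_0^{ec\prime}]$ so that the population jump vanishes (since $B_j=\ti W_j'\btheta_0^{ec}$), stack over $j$ to obtain the joint limit with covariance $\bmV_n^{ec}$, and then push the linear WLS map through with Slutsky for $\ha{\bWt}\to\bWt$. The only cosmetic difference is that the paper first derives the marginal CLTs and then passes to the joint statement by noting that with $K$ fixed and $\overline{h}_1\to 0$ the off-diagonal covariances vanish (so $(\bmV_n^{ec})^{1/2}(\diag\{\m V_{njj}^{ec}\})^{-1/2}\to\bI$), whereas you go directly to a stacked Lindeberg--Feller CLT; both are valid and lead to the same conclusion.
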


The proof of Theorem \ref{theo_frd_est} is in Section \ref{sec_fuzzy_proof} of the supplemental appendix. 
The variance terms in Equations \ref{def:var:theta:ec} and \ref{def:var:mu:ec}
contain the matrix $\bmV^{ec}_{n}$ that needs to be estimated.
The elements  $\m{V}_{njl}^{ec}$ of such matrix are consistently estimated by
\begin{align}
\ha{\m{V}}_{njl}^{ec} =& 
\sum_{i=1}^n \Bigg\{
	\frac{1}{n h_{1j}}
	k\left( \frac{X_i - c_{j}}{h_{1j}} \right)
	\frac{1}{n h_{1l}}
	k\left( \frac{X_i - c_{l}}{h_{1l}} \right)
\notag \\
& \hspace{2cm}	
	\left[1 ~~ -{\ha{\btheta}^{ec'}} \right]  \be_1'
	\left(
		v_i^{j +} \bG_n^{j +}  
		-
		v_i^{j -} \bG_n^{j -}  
	\right)
	\bHt_{i}^{j}
	\ha{\beps}_i
\notag \\
& \hspace{2cm}
	\ha{\beps}_i'
	\bHt_{i}^{l'}
	\left(
		v_i^{l +} \bG_n^{l +'}  
		-
		v_i^{l -} \bG_n^{l -'}  
	\right)
	\be_1 \left[1 ~~ -{\ha{\btheta}^{ec'}} \right]'   
\Bigg\}.
\label{est:var:ec:jl}
\end{align} 
where $\ha{\btheta}^{ec}$ is a consistent estimator of $\btheta^{ec}_0$,
and the vector of residuals is estimated by a nearest-neighbor matching estimator 
analogously to Section \ref{sec_case1}'s Equation \ref{est:resid} : 
\begin{align}
\ha{\beps}_i \ha{\beps}_i'  = & \frac{3}{4} 
\left(
	\bY_i  - \frac{1}{3} \sum_{l =1}^3 \bY_{\ell(i,l)}
\right)
\left(
	\bY_i  - \frac{1}{3} \sum_{l =1}^3 \bY_{\ell(i,l)}
\right)'.
\label{est:resid:fuzzy}
\end{align}
If the bandwidth choices are such that the standardized bias term $\left(\bmV^{\btheta^{ec}}_{n} \right)^{-1/2} \bmB^{\theta^{ec}}_{n}$ differs from  zero asymptotically, then inference must be done using a bias-corrected estimator. 
A practical way of doing bias correction  is to increase the order of the polynomial to $\rho_1+1$ and compute 
$\ha{\btheta}^{ec'}$
and
$\ha{\bmV}^{ec'}_n$
using the same bandwidth choices.
It follows that 
$(\ha{\bmV}_n^{\btheta^{ec'}})^{-1/2}\left( \ha{\btheta}^{ec'} - \btheta_{0}^{ec} \right)
\dto N(\bzero,\bI)$.
Similar to Theorems \ref{theo_srd_est_avg} and \ref{theo_srd_kinf_est_int},
Theorem \ref{theo_frd_est} allows for bandwidth choices that produce overlapping estimation windows across cutoffs.
The variance estimator in \eqref{est:var:ec:jl}
takes account of overlap by allowing $\m{V}_{njl}$ to be non-zero for $j \neq l$.

The following steps are a practical recommendation to implement 
MSE-optimal and bias-corrected estimates.
The source of MSE in estimation is 
$\m{B}_{nj}^{ec}$ and $\m{V}_{njl}^{ec}$, which come
from the regression of
$Y_i - \bmW(X_i,D_i)' {\btheta}^{ec} $ on $X_i$ 
at each cutoff $c_j$; thus, it makes sense to choose MSE-optimal bandwidths for these regressions.
\begin{enumerate}
   \item[0.] Take initial values $\ha{\btheta}^{ec(0)}$ and $\Omega^{(0)}$;
	\item Compute first-step IK bandwidths $h_{1j}^{(0)}$ 
	for sharp RD of 
	$Y_i - \bmW(X_i,D_i)' {\btheta}^{ec(0)} $
	on $X_i$ at each cutoff $c_j$.
	Use local-linear regression ($\rho_1=1$) and edge kernel;
	\item Obtain bias-corrected estimates $\ha{B}_j^{(0)}$ 
	for each cutoff $j$ using sharp RD of $Y_i$ on $X_i$
	using local-quadratic regression, edge kernel, and bandwidth $h_{1j}^{(0)}$;
	do the same for each coordinate of $\bmW(X_i,D_i)$ to compute
	$\ha{\ti{W}}_j^{(0)}$ for each $j$;
	stack estimates into $\ha{\bB}^{(0)}$ and  $\ha{\bWt}^{(0)}$;
	\item Update $\ha{\btheta}^{ec}$:
	compute $\ha{\btheta}^{ec(1)}$ using Equation \ref{eq_est_frd_argmin}
	with $\Omega^{(0)}$, $\ha{\bB}^{(0)}$, and  $\ha{\bWt}^{(0)}$;
	\item  Estimate the variance of 
	 $\ha{\bB}^{(0)} - \ha{\bWt}^{(0)} \theta^{ec}_0$  
	 using Equation \ref{est:var:ec:jl}
	 with  $\rho_1=2$, $\ha{\btheta}^{ec(1)}$, and
	 bandwidths $h_{1j}^{(0)}$;
	 call the estimated variance  $\ha{\bmV}^{ec(1)}_n$.

\item Update $\Omega$:
compute $\Omega^{(1)} = \Big( \ha{\bmV}^{ec(1)}_n \Big)^{-1}$;

\item Update $\ha{\btheta}^{ec}$:
	 compute $\ha{\btheta}^{ec(2)}$
	  using Equation \ref{eq_est_frd_argmin} 
	    with $\Omega^{(1)}$, $\ha{\bB}^{(0)}$,  and $\ha{\bWt}^{(0)}$;	

\item Repeat Steps 4-6 starting with $\ha{\btheta}^{ec(2)}$ in the place of
$\ha{\btheta}^{ec(1)}$. Iterate these three steps until convergence of $\ha{\btheta}^{ec}$.
Call $\ha{\btheta}^{ec(3)}$  and $\Omega^{(3)}$,
the iterated values of, respectively, 
$\ha{\btheta}^{ec}$ and $\Omega$;

\item Repeat Steps 1-7 starting with $\ha{\btheta}^{ec(3)}$ in the place of
$\ha{\btheta}^{ec(0)}$, and with $\Omega^{(3)}$ in the place of $\Omega^{(0)}$.
Iterate these 7 steps until the difference between the $\btheta$s of Step 3 and Step 7
converges to zero.
Call $\ha{\btheta}^{ec(4)}$,  
 $\Omega^{(4)}$,
 and  $\ha{\bWt}^{(4)}$
the iterated values of, respectively, 
$\ha{\btheta}^{ec}$, 
 $\Omega^{}$,
 and  $\ha{\bWt}$;

\item Estimate the variance of $\ha{\btheta}^{ec}$
using 
$\ha{\bmV}_n^{\btheta^{ec}} = 
\left( \ha{\bWt}^{(4)'} \Omega^{(4)} \ha{\bWt}^{(4)} \right)^{-1}$;
compute $\widehat{\mu}^{ec}$ using Equation \ref{est:mu:ec}
with $\widehat{\boup{\theta}}^{ec(4)}$
and $\bZ (F)$ given by the counterfactual policy of interest;
estimate the variance of $\widehat{\mu}^{ec}$
using 
$\ha{\bmV}_n^{\mu^{ec}} =\bZ (F) \ha{\bmV}_n^{\btheta^{ec}} \bZ (F)' $.

\end{enumerate}

\subsubsection{Proof of Theorem \ref{theo_frd_est}}\label{sec_fuzzy_proof}

\indent 

This proof relies heavily on Lemma \ref{lemma_porter}, which is a CLT for the LPR estimator of the
difference in side-limits of a conditional mean function of the
vector $\bY_i$ given $X_i$ at $X_i=c_j$.
Such lemma is applied to $\bY_i = [ Y_i ~~ \boup{\m{W}}(X_i,D_i)']'$ to arrive at 
\begin{align}
(\bmV_{nj} )^{-1/2} \left(\ha{\bJ}_{j} - \bmB_{nj} - \bJ_{j} \right) \dto N\left(\bzero ; \bI  \right)
\label{eq:theo_frd_est:cltj}
\end{align}
for each $j$.
 
The assumptions of Theorem \ref{theo_frd_est} satisfy the assumptions of Lemma \ref{lemma_porter}.
In fact, the conditions on the  rates, on the distribution of $X_i$, and on the kernel density in Lemma \ref{lemma_porter}
are simply restated in the conditions of Theorem \ref{theo_frd_est}.
It remains to verify the other two sufficient conditions of Lemma \ref{lemma_porter}:
(a) $\bmm(x)$ has continuous derivatives
wrt $x$ of order $\rho_1+1$ in a compact interval centered at $c_{j}$ but excluding $c_{j}$, and existence of side limits
at $c_{j}$; and
(b) continuity of $\bzeta(x)$
wrt $x$ in a compact interval centered at $c_{j}$ but excluding $c_{j}$, existence of side limits
at $c_{j}$, and boundedness of the third moment conditional on $X_i$.

For (a), note that, in the fuzzy case, the mean of $Y_i$ and $\bmW(X_i,D_i)$ conditional on $X_i$
is a sum of the means of potential outcomes $Y_i(d)$ and
$\bmW(c_{j},d)$ for various dosages $d$ conditional on sets of the form $\{\m{U}_i(c_j)=d \} $
weighted by conditional probabilities of the same sets  (see proof of Theorem \ref{theo_frd_id}).
Assumption \ref{assu_frd_est_mx}
implies that such conditional means and conditional probabilities are
smooth functions of $x$ and side-limits exist at $x=c_j$.
Similarly, for (b), the conditional covariance of $[ Y_i,~~ \bmW(X_i,D_i)' ]'$
is a function of sums of the first and second moments
of potential outcomes $Y_i(d)$ for various dosages
conditional on sets of the form $\{\m{U}_i(c_j)=d \} $
weighted by conditional probabilities of the same sets.
Assumption \ref{assu_frd_est_mx}
ensures continuity of $\bzeta(x)$ wrt $x$ and existence of side-limits at $c_{j}$.
A similar argument bounds the third centered moment of  $Y_i(d)^{3}$,
and Lemma \ref{lemma_porter} applies.

Next, note that
\begin{align} 
\lim_{e \downarrow 0} & \Bigg\{
	\mme \left[\boup{\m{W}}(c_{j},D_i) | X_i=c_{j}+e \right]
	-
	\mme \left[\boup{\m{W}}(c_{j},D_i) | X_i=c_{j}-e \right]
\bigg\}
\\
=& \sum_{l=0,l \neq j}^{K} \Big\{
	\boup{\m{W}}(c_{j},d_{j})
	-
	\boup{\m{W}}(c_{j},d_{l})
\Big\}
\omega_{j,l}
 = \ti{W}_{j}
\end{align}
which means that $\bJ_{j} = [B_{j} ~~ \ti{W}_{j}']'$.
Call $\balpha = [1 ~~ -{\btheta_0^{ec}}']'$, a $( (q+1)\times 1)$ vector. 
Then, \eqref{eq:theo_frd_est:cltj} implies
\begin{align}
&\left( \balpha' \bmV_{nj} \balpha \right)^{-1/2}
\left( \balpha' \ha{\bJ}_{j} - \balpha' \bmB_{nj} - \balpha' \bJ_{j} \right)
\dto N(0,1)
\\
&
\left( \m{V}_{njj}^{ec} \right)^{-1/2}
\left(  \ha{B}_{j} - {\btheta_0^{ec}}' \ha{\ti{W}}_{j} - \m{B}_{nj}^{ec} \right)
\dto N(0,1)
\end{align}
where $\balpha' \bJ_j=0$ by Assumption \ref{assu_srd_id_param},
and the definitions \eqref{def:bias:ec:j} and \eqref{def:var:ec:jl} are used.
Stacking across cutoffs gives
\begin{align}
\left( \diag\{ \m{V}_{njj}^{ec} \}_j\right)^{-1/2}
\left(  \ha{\bB} -  \ha{\bWt} \btheta_0^{ec} - \bmB_{n}^{ec} \right)
\dto N(\bzero,\bI)
\\
\left(  \bmV_{n}^{ec} \right)^{-1/2}
\left(  \ha{\bB} -  \ha{\bWt} \btheta_0^{ec} - \bmB_{n}^{ec} \right)
\dto N(\bzero,\bI)
\end{align}
where $\left(  \bmV_{n}^{ec} \right)^{1/2}\left( \diag\{ \m{V}_{njj}^{ec} \}_j\right)^{-1/2} \to \bI$
because the covariances (off-diagonal terms) converge to zero since the estimation windows do not overlap in the limit.
Define $\bGam = \left( {\bWt}' \Omega {\bWt} \right)^{-1} {\bWt}' \Omega$. 
Then,
\begin{align}
\left( \bGam \bmV_{n}^{ec} \bGam' \right)^{-1/2}
\left(  \bGam \ha{\bB} -  \bGam \ha{\bWt} \btheta_0^{ec} - \bGam \bmB_{n}^{ec} \right)
\dto N(\bzero,\bI)
\label{eq:theo_frd_est:cltj:vec}
\end{align}

Define $\ha{\bGam} = \left( \ha{\bWt}' \Omega \ha{\bWt} \right)^{-1} \ha{\bWt}' \Omega$,
and write
\begin{align}
(\bmV_n^{\btheta^{ec}})^{-1/2}
\left( \ha{\btheta}^{ec} - \bmB_n^{\btheta^{ec}} - \btheta_{0}^{ec} \right)
&= 
\left( \ha{\bGam} \bmV_{n}^{ec} \ha{\bGam}' \right)^{-1/2}
\left(  \ha{\bGam} \ha{\bB} -  \ha{\bGam} \ha{\bWt} \btheta_0^{ec} - \ha{\bGam} \bmB_{n}^{ec} \right) 
\\
&\hspace{-2cm}= 
\left( \ha{\bGam} \bmV_{n}^{ec} \ha{\bGam}' \right)^{-1/2}
 \left(  \bGam \ha{\bB} -  \bGam \ha{\bWt} \btheta_0^{ec}  - {\bGam} \bmB_{n}^{ec} \right)
\\
& \hspace{-2cm} +  \left( \ha{\bGam} \bmV_{n}^{ec} \ha{\bGam}' \right)^{-1/2}
\left( \ha{\bGam} - \bGam \right) \left( \ha{\bB} - \ha{\bWt} \btheta_0^{ec} -  \bmB_{n}^{ec} \right)
\\
&\hspace{-2cm}= 
\left( \ha{\bGam} \bmV_{n}^{ec} \ha{\bGam}' \right)^{-1/2}
\left( \bGam \bmV_{n}^{ec} \bGam' \right)^{1/2}
\left( \bGam \bmV_{n}^{ec} \bGam' \right)^{-1/2}
 \left(  \bGam \ha{\bB} -  \bGam \ha{\bWt} \btheta_0^{ec}  - \bGam \bmB_{n}^{{ec} } \right)
\\
& \hspace{-2cm} + O_P\left( \left( n h_1 \right)^{1/2} \right) 
o_P(1)
O_P\left( \left( n h_1 \right)^{-1/2} \right) 
\end{align}
which converges in distribution to $N\left(\bzero, \bI \right)$
because of \eqref{eq:theo_frd_est:cltj:vec},
the fact that
$\left( \ha{\bGam} \bmV_{n}^{ec} \ha{\bGam}' \right)^{-1/2}
\left( \bGam \bmV_{n}^{ec} \bGam' \right)^{1/2} \pto \bI$,
and that $\ha{\bGam} \pto {\bGam}$.

$\square$

\subsection{Estimation of Counterfactual Distributions}

\indent 

This section considers applications where the counterfactual distribution is estimated as opposed to being known by the researcher.
For brevity, I focus on the setting of Section \ref{sec_case1}, that is, sharp RD with discrete counterfactual and fixed $K$.
The analysis for the other settings of the paper follows similar arguments. 
In what follows, I derive the limiting distribution for $\ha{\mu}^d$ and propose a consistent variance estimator.

In the first step, the researcher estimates the counterfactual probability mass function $\omega^d(\bc)$ for every $\bc \in \m{C}_K$
using iid observations $Z_i=(Y_i,X_i)$, $i=1,\ldots, n$.
There is a variety of ways to obtain estimates  for $\ha{\omega}^d(\bc)$. 
For example, one may estimate the distribution of $X_i$ non-parametrically, obtain $\ha{f}_X(c_j)$ for every $j$, and construct
$\ha{\omega}^d_j = \ha{f}_X(c_j) / \sum_{l=1}^K \ha{f}_X(c_l)$.
Another way is to specify a parametric distribution and estimate its parameters. To keep the analysis general, assume 
\begin{gather}
\ha{\omega}^d_j - \omega^d_j = \sum_{i=1}^n \eta_{nj}(Z_i) + o_P(r_n^{-1/2})
\label{eq:estw:formw}
\end{gather}
 for every $j$, where  $\eta_{nj}(Z_i)$ has zero mean and finite variance for each $n$ and $j$, and $r_n$ is a sequence that converges to infinity.
The exact forms of the function $\eta_{nj}(Z_i)$ and $r_n$ depend on the type of estimator used to obtain $\ha{\omega}^d_j$.
The sequence $r_n$ represents the rate at which the inverse of the variance of $\ha{\omega}^d_j$ grows.
Namely, $1/VAR[\sum_{i=1}^n \eta_{nj}(Z_i)] = O(r_n)$.
For example, if $\omega^d(\bc)$ is estimated parametrically by maximum likelihood, then $\eta_{nj}(Z_i)$ will be a function of the Hessian matrix times the score function,
and $r_n=n$;
if $\ha{\omega}^d_j$ is based of a kernel estimator for the density of $X_i$ with bandwidth $h_{\omega}$,
then $\eta_{nj}(Z_i)= (nh_{\omega})^{-1} k((X_i-c_j)/h_{\omega} )/ \sum_{l=1}^K f_X(c_l)$ and $r_n=n h_{\omega}$.

The second step consists of  estimating  $\mu^d$,
\begin{gather}
\ha{\mu}^d  = \sum_{j=1}^K \ha{\omega}^d_j  \ha{B}_j.
\end{gather}

Rewrite $\ha{\mu}^d$ as
\begin{align}
\ha{\mu}^d  - {\mu}^d =  \sum_{j=1}^K \omega^d_j ( \ha{B}_j - B_j )
+  \sum_{j=1}^K B_j ( \ha{\omega}^d_j  - \omega^d_j)
+  \sum_{j=1}^K ( \ha{B}_j - B_j) ( \ha{\omega}^d_j  - \omega^d_j).
\label{eq:estw:eq1}
\end{align}

Suppose $\ha{B}_j$ has no first-order asymptotic bias (i.e. bias-corrected).
The proofs of  Lemma \ref{lemma_porter} and Theorem \ref{theo_srd_kinf_est_int}  imply that
\begin{gather}
\sum_{j=1}^K \omega^d_j ( \ha{B}_j - B_j ) = \sum_{i=1}^n \varphi_{n}(Z_i) + o_P\left( (n \overline{h}_1 )^{-1/2} \right),
\label{eq:estw:eq2}
\end{gather} 
where $1/VAR[\sum_{i=1}^n \varphi_{n}(Z_i) ] = O \left( n \overline{h}_1  \right) $, and $ n \overline{h}_1 \to \infty$.
Similarly, the sum across $j$ of   \eqref{eq:estw:formw}  times $B_j$ gives
\begin{align}
\sum_{j=1}^K B_j ( \ha{\omega}^d_j  - \omega^d_j) = &  \sum_{i=1}^n \underset{\equiv  \eta_{n}(Z_i) }{\underbrace{ \sum_{j=1}^K B_j \eta_{nj}(Z_i) }}+ o_P(r_n^{-1/2})
\\
= & \sum_{i=1}^n  \eta_{n}(Z_i) + o_P(r_n^{-1/2}),
\label{eq:estw:formw2}
\end{align}
where $1/VAR[\sum_{i=1}^n \eta_{n}(Z_i)] = O(r_n)$.

Next, substitute \eqref{eq:estw:eq2} and \eqref{eq:estw:formw2} into Equation \ref{eq:estw:eq1}, 
\begin{align}
\ha{\mu}^d  - {\mu}^d = &  \sum_{i=1}^n \varphi_{n}(Z_i) 
 + \sum_{i=1}^n \eta_{n}(Z_i) 
 + o_P(r_n^{-1/2})
+ o_P\left( (n \overline{h}_1 )^{-1/2} \right)
 + O_P\left( (n \overline{h}_1 )^{-1/2} r_n^{-1/2} \right)
 \\
 = & \sum_{i=1}^n  \left\{ \varphi_{n}(Z_i)  + \eta_{n}(Z_i)  \right\} + o_P(r_n^{-1/2})
+ o_P\left( (n \overline{h}_1 )^{-1/2} \right)
\label{eq:estw_eq3}
\end{align}
where the second equality relies on $\ha{B}_j - B_j = O_P\left( (n \overline{h}_1 )^{-1/2} \right)$, $\ha{\omega}^d_j - \omega^d_j  = O_P\left( r_n^{-1/2} \right)$,
and on the fact that 
$(n \overline{h}_1 )^{-1/2} r_n^{-1/2} $ converges to zero faster than each of $(n \overline{h}_1 )^{-1/2}$ and $r_n^{-1/2}$.

Define $\m{V}^{\omega}_{n} = VAR[\sum_{i=1}^n \varphi_{n}(Z_i) + \eta_{n}(Z_i) ] $,
and note that
$\left( \m{V}^{\omega}_{n} \right)^{-1/2}  = O \left( \max\{ r_n^{-1}, (n \overline{h}_1 )^{-1} \}^{-1/2} \right) = 
O \left( \min\{ r_n^{1/2}, (n \overline{h}_1 )^{1/2} \} \right)$.

Then,
\begin{align}
\left( \m{V}^{\omega}_{n} \right)^{-1/2} \left( \ha{\mu}^d  - {\mu}^d \right)= &  
\left( \m{V}^{\omega}_{n} \right)^{-1/2} \sum_{i=1}^n  \left\{ \varphi_{n}(Z_i) +  \eta_{n}(Z_i)  \right\}  
\\
+ & O \left( \min\{ r_n^{1/2}, (n \overline{h}_1 )^{1/2} \} \right) o_P\left( r_n^{-1/2} \right) 
\\
+ & O \left( \min\{ r_n^{1/2}, (n \overline{h}_1 )^{1/2} \} \right) o_P\left(  (n \overline{h}_1 )^{-1/2} \right)
\\
= & \left( \m{V}^{\omega}_{n} \right)^{-1/2} \sum_{i=1}^n  \left\{ \varphi_{n}(Z_i) +  \eta_{n}(Z_i)  \right\}   + o_P(1)
\\
\dto & N(0,1).
\end{align}

A consistent estimator for the variance is:
\begin{align}
\ha{\m{V}}^{\omega}_{n} = \sum_{i=1}^n  \left\{ \ha{\varphi}_{n}(Z_i) +  \ha{\eta}_{n}(Z_i) \right\}^2,
\end{align}
with $\ha{\varphi}_{n}(Z_i)$ constructed as in  Equation \eqref{est:var:d}, Section \ref{sec_case1},
\begin{align}
\ha{\varphi}_{n}(Z_i) = &
	\ha{\eps}_i 
		\sum_{j=1}^K 
			\frac{\ha{\omega}_j^d}{n h_{1j}}
			k\left( \frac{X_i - c_{j}}{h_{1j}} \right)
			e_1'
			\left(
				v_i^{j +}  G_n^{j +}   
				-
				v_i^{j -} G_n^{j -}  
			\right)
			\ti{H}_{i}^{j},				
\end{align}
and the formula for $\ha{\eta}_{n}(Z_i)$ depends on the form of the estimator $\omega^d$.
In the kernel density example,
\begin{align}
\ha{\eta}_{n}(Z_i) = \frac{ \sum_{j=1}^K \ha{B_j} ( nh_{\omega} )^{-1} k( (X_i-c_j)/h_{\omega} ) }
{ \sum_{l=1}^K ( nh_{\omega} )^{-1} \sum_{m=1}^n k( (X_m-c_l)/h_{\omega} ) }.
\end{align}

An interesting particular case occurs when 
$ r_n$ grows faster than $ n \overline{h}_1 $.
This is the case if ${\omega}^d(\bc)$ is assumed to be in a parametric class; 
or 
if $\ha{\omega}^d(\bc)$ is based of a kernel density estimator with a bandwidth that converges to zero more slowly than $\overline{h}_1$.
Let $\m{V}^{d}_{n} = VAR[\sum_{i=1}^n \varphi_{n}(Z_i)  ] $ as defined in Theorem \ref{theo_srd_est_avg}.
It follows that,
\begin{align}
 \left( \m{V}^{d}_{n} \right)^{-1/2} \left( \ha{\mu}^d  - {\mu}^d \right)= &  
\left( \m{V}^{d}_{n} \right)^{-1/2} \sum_{i=1}^n  \left\{ \varphi_{n}(Z_i) +  \eta_{n}(Z_i)  \right\}  
\\
+ & o_P\left((n \overline{h}_1)^{1/2} r_n^{-1/2} \right) + o_P\left((n \overline{h}_1)^{1/2} (n \overline{h}_1 )^{-1/2} \right)
\\
= & \left( \m{V}^{d}_{n} \right)^{-1/2} \sum_{i=1}^n  \left\{ \varphi_{n}(Z_i) +  \eta_{n}(Z_i)  \right\}   + o_P(1)
\\
\dto & N(0,1),
\end{align}
where $\left( \m{V}^{d}_{n} \right)^{-1/2} = O\left( (n \overline{h}_1)^{1/2} \right)$ from Theorem \ref{theo_srd_est_avg},
and 
  $\left( \m{V}^{d}_{n} \right)^{-1/2} VAR[\sum_{i=1}^n \varphi_{n}(Z_i) + \eta_{n}(Z_i) ] \to 1$.
Therefore, when $\omega^d(\bc)$ is estimated at a faster rate than
$\beta(\bc)$, 
the asymptotic distribution and variance estimator provided in Section \ref{sec_case1} remain valid.

\subsection{Monte Carlo Simulations with Data-driven Bandwidths}
\label{sec:supp:datadrivenh}

\indent 

This section revisits the simulations in Section \ref{sec_simul} with data-driven bandwidth choices.
Both first and second-step bandwidths follow the rules for practical implementation suggested in Section \ref{sec_case2} 
(refer to page \pageref{parag:bandw:choice}, paragraph starting with 
``\textit{A simple recommendation to implement Theorem \ref{theo_srd_kinf_est_int}} '').
The rest of the simulation design remains the same as that of Section \ref{sec_simul}.

Table \ref{tab:datadrivenh:precision} compares the estimation precision of $\ha\mu$ and $\ha\mu^{bc}$
across five sample sizes $n$, with respective numbers of cutoffs $K$.
Table  \ref{tab:datadrivenh:intervals} analyzes coverage of confidence intervals.
Overall, the finite sample properties are consistent with those of Section \ref{sec_simul}, when bandwidths are non-random.
The randomness of bandwidths increases the variance and bias, but they decrease with $n$ at approximately the same rate as before.
Bias correction eliminates most of the bias, and produces confidence intervals with correct finite sample coverage.

\begin{table}[H]
\begin{center}
\caption{Precision of Estimators}
\label{tab:datadrivenh:precision}
\begin{tabular}{cc|cc|cc|cc}
\hline \hline 
& & \multicolumn{2}{|c}{Bias} & \multicolumn{2}{|c}{Variance}  & \multicolumn{2}{|c}{MSE} \\ 
$n$ & $K$ &  $\ha{\mu}$ &  $\ha{\mu}^{bc}$ &  
                     $\ha{\mu}$  &  $\ha{\mu}^{bc}$ &  
                     $\ha{\mu}$  &  $\ha{\mu}^{bc}$   \\ 
1789 & 20 
& 0.1056 & 0.0042  
& 0.0266 & 0.0551  
& 0.0377 & 0.0551  
\\ 
10120 & 40 
& 0.0364 & 0.0003  
& 0.0048 & 0.0084  
& 0.0061 & 0.0084  
\\ 
27886 & 60 
& 0.0190 & -0.0008  
& 0.0019 & 0.0032  
& 0.0023 & 0.0032  
\\ 
57244 & 80 
& 0.0132 & 0.0001  
& 0.0010 & 0.0017  
& 0.0012 & 0.0017  
\\ 
100000 & 100 
& 0.0092 & -0.0004  
& 0.0006 & 0.0010  
& 0.0007 & 0.0010  
\\ 
\hline \hline 
\end{tabular}

\caption*{\small
Notes: The table reports simulated bias, variance, and mean squared error (MSE) for two estimators
$(\ha{\mu}, \ha{\mu}^{bc})$, and five sample sizes $n$, with respective numbers of cutoffs $K$.
Following Section \ref{sec_case2}, the first-step bandwidths are picked by the IK algorithm, and adjusted to be of order $1/K$.
The second-step bandwidth is chosen on the grid $h_2 \in \{3/(K+1), \ldots, 12/(K+1)\}$
to minimize the estimated MSE of $\ha{\mu}$.
The number of simulations is 10,000.
}
\end{center}
\end{table}

\begin{table}[H]
\begin{center}
\caption{Coverage of 95\% Confidence Intervals}
\label{tab:datadrivenh:intervals}
\begin{tabular}{cc|cc|cc} 
\hline \hline 
& & \multicolumn{2}{|c}{\% Coverage} & \multicolumn{2}{|c}{ Avg. Length}   
\\ 
$n$ & $K$  
        & $\ha{\mu}$ &  $\ha{\mu}^{bc}$ 
        & $\ha{\mu}$ &  $\ha{\mu}^{bc}$ 
\\ 
1789 & 20 
& 0.8542 & 0.9576  
& 0.5528 & 0.9453   
\\ 
10120 & 40 
& 0.8750 & 0.9544  
& 0.2383 & 0.3697   
\\ 
27886 & 60 
& 0.8834 & 0.9528  
& 0.1499 & 0.2272   
\\ 
57244 & 80 
& 0.8821 & 0.9488  
& 0.1084 & 0.1625   
\\ 
100000 & 100 
& 0.8906 & 0.9543  
& 0.0845 & 0.1259   
\\ 
\hline \hline 
\end{tabular}

\caption*{\small
Notes: The table reports simulated percentage of correct coverage, and average length of 95\% confidence intervals.
Confidence intervals are constructed 
using two estimators $(\ha{\mu}, \ha{\mu}^{bc})$.
They equal an estimator plus or minus its estimated standard deviation multiplied by $1.96$.
Coverage and average length are computed for five sample sizes $n$ and respective numbers of cutoffs $K$.
Following Section \ref{sec_case2}, the first-step bandwidths are picked by the IK algorithm, and adjusted to be of order $1/K$.
The second-step bandwidth is chosen on the grid $h_2 \in \{3/(K+1), \ldots, 12/(K+1)\}$
to minimize the estimated MSE of $\ha{\mu}$.
The number of simulations is 10,000.
}
\end{center}
\end{table}


\end{document}